\documentclass{amsart}
\usepackage{mathrsfs}
\usepackage{amssymb}
\usepackage{amsmath}
\usepackage{amsfonts}
\usepackage{enumerate}
\usepackage{pifont}
\usepackage{enumerate}
\usepackage[pdftex]{graphicx}
\usepackage{verbatim}
\usepackage{amsthm}
\usepackage[all]{xy}

\addtolength{\textwidth}{1.1in}
\addtolength{\hoffset}{-0.65in}
\addtolength{\voffset}{-0.5in}
\addtolength{\textheight}{1.1in}




\newtheorem{thm}{Theorem}[section]
\newtheorem{prop}[thm]{Proposition}
\newtheorem{cor}[thm]{Corollary}
\newtheorem{lem}[thm]{Lemma}
\newtheorem{defi}[thm]{Definition}
\newtheorem{rk}[thm]{Remark}

\numberwithin{equation}{section}

\newcommand{\real}{{\mathbb R}}

\newcommand{\8}{\infty}
\newcommand{\el}{\ell}

\newcommand{\be}{\begin{eqnarray*}}
\newcommand{\ee}{\end{eqnarray*}}
\newcommand{\beq}{\begin{equation}}
\newcommand{\eeq}{\end{equation}}
\newcommand{\beqn}{\begin{equation*}}
\newcommand{\eeqn}{\end{equation*}}

\begin{document}

\title{An infinite linear hierarchy for the incompressible Navier-Stokes equation and application}

\thanks{{\it Key words:} Navier-Stokes equation, Navier-Stokes hierarchy, Cauchy problem, binary tree, solution formula.}

\author{Zeqian Chen}

\address{Wuhan Institute of Physics and Mathematics, Chinese
Academy of Sciences, 30 West District, Xiao-Hong-Shan, Wuhan 430071, China}


\date{}
\maketitle
\markboth{Zeqian Chen}%
{Navier-Stokes equation}

\begin{abstract}
This paper introduces an infinite linear hierarchy for the homogeneous, incompressible three-dimensional Navier-Stokes equation. The Cauchy problem of the hierarchy with a factorized divergence-free initial datum is shown to be equivalent to that of the incompressible Navier-Stokes equation in $\mathcal{H}^1.$ This allows us to present an explicit formula for solutions to the incompressible Navier-Stokes equation under consideration. The obtained formula is an expansion in terms of binary trees encoding the collision histories of the ``particles" in a concise form. Precisely, each term in the summation of $n$ ``particles" collision is expressed by a $n$-parameter singular integral operator with an explicit kernel in Fourier space, describing a kind of processes of two-body interaction of $n$ ``particles". Therefore, this formula is a physical expression for the solutions of the incompressible Navier-Stokes equation.
\end{abstract}

\tableofcontents


\section{Introduction}\label{Intro}

We are concerned with the homogeneous, incompressible Navier-Stokes equation in $\mathbb{R}^3$
\beq\label{eq:NSE}
\left \{\begin{split}
& \partial_t u + (u \cdot \bigtriangledown ) u = \triangle u - \bigtriangledown p,\\
& \bigtriangledown \cdot u = 0, \end{split} \right.
\eeq
with the initial data $u (0) = u_0$ satisfying $\bigtriangledown \cdot u_0 =0.$ Recall that $u = u (t,x) \in \real^3$ is the velocity of the fluid at position $x \in \real^3$ and time $t > 0,$ and $p = p(t,x)$ is a scalar field called the pressure of the fluid, while $u_0 = u_0 (x) \in \real^3, x \in \real^3,$ is a given initial velocity vector. By eliminating the pressure $p,$ the equation \eqref{eq:NSE} is reformulated as
\beq\label{eq:NSEpressure-free}
\left \{\begin{split}
& \partial_t u = \triangle u - W (u \otimes u), \\
& \bigtriangledown \cdot u = 0, \end{split} \right.
\eeq
where $W (u \otimes u) = P \bigtriangledown \cdot (u \otimes u)$ with $P$ being the Leray projection on $[ L^2 (\mathbb{R}^3)]^3.$ This formulation was traced back to Leray \cite{Leray1934} (see also \cite[Chapter 11]{LR2002} for details).

From the quantum-mechanical point of view, the nonlinear term of the first equation of \eqref{eq:NSEpressure-free} that involves a two-fold tensor function should indicate the on-site effect of many-body pair interaction in some sense. This observation allows us to introduce a sequence of symmetric tensor functions $u^{(k)} (t, \vec{x}_k): = \otimes^k_{j=1} u (t, x_j)$ with $\vec{x}_k = (x_1, \ldots, x_k) \in (\mathbb{R}^3 )^k$ for all $k \ge 1.$ It then follows that $\mathfrak{U} = ( u^{(k)} )_{k \ge 1}$ satisfies an infinite hierarchy of linear equations that follows
\beq\label{eq:NSEhierarchy}
\begin{split}
\partial_t u^{(k)} (t, \vec{x}_k) = \sum^k_{j=1} \triangle_j u^{(k)} (t, \vec{x}_k) - \sum^k_{j=1} W_j u^{(k+1)} (t, \vec{x}_k, x_j)
\end{split}
\eeq
for all $k \ge 1,$ where $\triangle_j$ and $W_j$ denote respectively the operators $\triangle$ and $W$ acting on $x_j \in \mathbb{R}^3$ for every $j \ge 1.$ Conversely, a symmetric tensor solution $\mathfrak{U} = ( u^{(k)} )_{k \ge 1}$ to \eqref{eq:NSEhierarchy} with a factorized divergence-free initial data leads necessarily to a solution to \eqref{eq:NSEpressure-free}, thanks to the uniqueness of solutions to the initial problem for the hierarchy \eqref{eq:NSEhierarchy} (see Section \ref{Unique} below). We thereby can investigate the Cauchy problem of \eqref{eq:NSE} through using the infinite linear hierarchy \eqref{eq:NSEhierarchy}. In what follows, we will call this hierarchy the {\it Navier-Stokes hierarchy}, since it can be obtained from the Navier-Stokes equation \eqref{eq:NSE}.

By the linearity of the Navier-Stokes hierarchy \eqref{eq:NSEhierarchy}, its solution with an initial datum $( u^{(k)}_0)_{k \ge 1}$ can be formally expanded in a Duhamel-type series, i.e., for any $k \ge 1,$
\beq\label{eq:DuhamelExpression}
\begin{split}
u^{(k)} (t) = e^{t \triangle^{(k)}} u^{(k)}_0 + \sum^n_{j=1} \int^t_0 d t_1 \int^{t_1}_0 & d t_2 \cdots  \int^{t_{j-1}}_0 d t_j e^{( t- t_1)\triangle^{(k)}}  W^{(k)}\cdots \\
& \times e^{( t_{j-1}-t_j) \triangle^{(k+j-1)}} W^{(k+j-1)} e^{ t_j \triangle^{(k+j)}} u^{(k+j)}_0\\
+ \int^t_0 d t_1 \int^{t_1}_0 & d t_2 \cdots \int^{t_n}_0 d t_{n+1} e^{ ( t-t_1) \triangle^{(k)}} W^{(k)} \cdots \\
& \times e^{ ( t_n - t_{n+1} ) \triangle^{(k+n)}} W^{(k + n)} u^{(k + n +1)} (t_{n+1}) \end{split}
\eeq
for every $n \ge 1,$ with the convention $t_0 =t,$ where $\triangle^{(m)} = \sum^m_{j=1} \triangle_j$ and $W^{(m)} = - \sum^m_{j=1} W_j$ for $m \ge 1.$ Given a fixed $k \ge 1,$ from the definition of $W^{(m)}$ there are about $k (k+1) \cdots (k+n) \sim n!$ terms in the summation and remainder expressions on the right hand of \eqref{eq:DuhamelExpression}. For handling the integration terms in this expression, a natural method is to perform an iterative estimate involving subsequent one-parameter space-time dispersive bounds. Unfortunately, the present author was unable to prove {\it a prior} space-time estimates for $W^{(m)}$'s cancelling the factor $n!$ at the moment of this writing. For this reason, instead we manage to present an expansion in terms of binary trees as follows
\beq\label{eq:NSEhierarchyTreeExpanssionIntro}
\begin{split}
u^{(k)} (t) = e^{t \triangle^{(k)}} u^{(k)}_0 + \sum^n_{j=1} \sum_{\mathbb{T} \in \mathfrak{T}_{j,k}} C_{\mathbb{T}, t} u^{(k+j)}_0 - \sum_{\mathbb{T} \in \mathfrak{T}_{n+1,k}} \int^t_0 d s R_{\mathbb{T}, t-s} u^{(k+n+1)} (s)
\end{split}
\eeq
for any $k \ge 1$ and for every $n \ge 1.$ Here, $\mathfrak{T}_{m,k}$ is the set of $k$-rooted binary trees (see Section \ref{App} below) encoding the collision ways of $k+m$ ``particles" with $|\mathfrak{T}_{m,k}| \lesssim C^m,$ where $C$ is a constant depending only on $k;$ and both $ C_{\mathbb{T}, t}$ and $R_{\mathbb{T}, t-s}$ are multi-parameter singular integral operators on tensor product spaces, indicating the two-body interaction of $k+m$ ``particles" in a concise form. The key novelty of \eqref{eq:NSEhierarchyTreeExpanssionIntro} is to reduce the number of $n!$ terms in the expansion \eqref{eq:DuhamelExpression} to $C^n.$

A suitable strategy for using the Navier-Stokes hierarchy \eqref{eq:NSEhierarchy} to investigate the Navier-Stokes equation \eqref{eq:NSEpressure-free} is to establish {\it a prior} space-time estimates for the interaction operators $C_{\mathbb{T}, t}$ and $R_{\mathbb{T}, t},$ which are singular integral operators from multi-parameter product spaces to one-parameter spaces. Although $C_{\mathbb{T}, t}$ and $R_{\mathbb{T}, t}$ are expressed by explicit kernels in Fourier space, to the best of my knowledge, both seem not to fall into an existing theory for either multi-parameter singular integral operators on tensor product spaces or multi-linear singular integral operators on Cartesian product spaces. In fact, the argument we proceed, following \cite{ESY2007} in spirit, is elementary and very involved. The proofs are quite technical and complicated, but essentially everything is based on two main ideas: integrate $\delta$-functions and estimate integration for rational functions with parameters.

In this paper, we will prove the equivalence between the Cauchy problem of \eqref{eq:NSEpressure-free} and that of \eqref{eq:NSEhierarchy} with a factorized divergence-free initial datum in $\mathcal{H}^1.$ As an application, we obtain a solution formula for \eqref{eq:NSE} with an initial datum $u_0 \in \mathcal{H}^1 (\mathbb{R}^3)$ that follows
\beq\label{eq:NSESolutionFormulaIntr}
u (t) = e^{t \triangle} u_0 + \sum^\8_{n=1} \sum_{\mathbb{T} \in \mathfrak{T}_{n,1}} C_{\mathbb{T}, t} u^{\otimes^{n+1}}_0
\eeq
in the sense of distributions for small $t>0.$ As noted above, every $\mathbb{T} \in \mathfrak{T}_{n,1}$ indicates a kind of processes of two-body interaction of $n+1$ ``particles". Note that every $C_{\mathbb{T}, t}$ encodes the two-body interaction of ``particles" in a concise form. Therefore, this solution formula is a physical expression and should be of helpful implications in computing the incompressible Navier-Stokes equation.

There are extensive works on the incompressible Navier-Stokes equation \eqref{eq:NSE}, we refer to \cite{LR2002, LR2016} and references therein (also see arXiv for more recent works). However, it seems that this is the first time to introduce the Navier-Stokes hierarchy \eqref{eq:NSEhierarchy} as a framework of studying \eqref{eq:NSE}. We expect that this framework will shed some new lights on the study of the incompressible Navier-Stokes equation, such as the multi-parameter singular integral operators will play a role in this study. In fact, the hierarchy \eqref{eq:NSEhierarchy} exhibits a certain kind of interference behavior arising from linear superposition of many-mode fluids. We will explain the physical meaning of \eqref{eq:NSEhierarchy} together with \eqref{eq:NSEhierarchyTreeExpanssionIntro} and \eqref{eq:NSESolutionFormulaIntr} elsewhere.

The organization of this paper is as follows. In the next section, we introduce some function spaces that will be used throughout the paper. In Section \ref{NShierarchy} we show that the interaction operators in the Navier-Stokes hierarchy are well-defined in the sense of distributions, and introduce the notion of solution to the Navier-Stokes hierarchy. In Section \ref{Unique}, we give a uniqueness theorem for the Navier-Stokes hierarchy and show the equivalence between the Cauchy problem of \eqref{eq:NSEpressure-free} and that of \eqref{eq:NSEhierarchy} with a factorized divergence-free initial datum in $\mathcal{H}^1.$ In Sections \ref{GraphCollisionoperator} and \ref{GraphicNShierachy}, we prove the formula \eqref{eq:NSEhierarchyTreeExpanssionIntro}. The aim of Section \ref{SpacetimeEstimate} is to establish {\it a prior} space-time estimates for the interaction operator and then present the proof of the uniqueness theorem mentioned above. Finally, in Section \ref{SolutionFormulaNSE}, we prove the main result of this paper, that is the solution formula \eqref{eq:NSESolutionFormulaIntr}. We include preliminaries on binary trees and some technical inequalities in Appendix.

\

{\it Preliminary notation.}\;Throughout the paper, we denote by $x= (x^1, x^2, x^3)$ a general variable in $\mathbb{R}^3$ and by $\vec{x}_k=(x_1,\ldots, x_k)$ a point in $\mathbb{R}^{3 k} = ( \mathbb{R}^3 )^k.$ For any $x , y \in \mathbb{R}^3$ we denote by $x \cdot y = \sum^3_{i=1} x^i y^i,$ and $|x| = (x \cdot x)^{\frac{1}{2}}$ with the convenience notation $x^2 =|x|^2.$ Moreover, we use the notation $\langle x \rangle = (1 + x^2)^{\frac{1}{2}}$ for all $x \in \mathbb{R}^3.$ For any $\vec{x}_k, \vec{y}_k \in \mathbb{R}^{3 k},$ we set $\langle \vec{x}_k, \vec{y}_k \rangle = \sum^k_{j=1} x_j \cdot y_j,$ and $| \vec{x}_k | = \langle \vec{x}_k, \vec{x}_k \rangle^{\frac{1}{2}}$ with the convenience notation $\vec{x}_k^2 = x^2_1 + \cdots + x^2_k.$

The following are some notations that will be used throughout the paper.

\begin{itemize}

\item $L^2 (\mathbb{R}^n)$ -- the Hilbert space of square integrable functions in $\mathbb{R}^n.$

\item $\mathcal{D} (\mathbb{R}^n)$ and $\mathcal{D}'(\mathbb{R}^n)$ -- the space of all smooth (i.e., infinitely differentiable) functions on $\mathbb{R}^n$ with compact support and its (locally convex) topological dual space.


\item $\mathcal{S} (\mathbb{R}^n)$ and $\mathcal{S}' (\mathbb{R}^n)$ -- the Schwarz space of all smooth functions of rapid decrease and the space of tempered distributions equipped with the Schwartz topology, respectively.

\item $H^\alpha (\mathbb{R}^n)$ -- $\alpha$-order Sobolev spaces for $\alpha \in \mathbb{R},$ defined as the closure of the Schwartz functions in $\mathbb{R}^n$ under their norms
$\| f \|_{H^\alpha (\mathbb{R}^n)} : = \|(1 - \triangle)^{\frac{\alpha}{2}} f \|_{L^2 (\mathbb{R}^n)}.$

\item $\hat{f}$ -- the Fourier transform of $f,$ defined by the formula
\beq\label{eq:Fouriertrans}
\hat{f} (\xi): = \int_{\mathbb{R}^n} d x f (x) e^{- \mathrm{i} x \cdot \xi},\quad \forall \xi \in \mathbb{R}^n.
\eeq
Here and in the following, $\mathrm{i} = \sqrt{-1}.$

\item $\mathbb{L}^2 (\real^3) = [L^2 (\real^3)]^3$ -- the Hilbert space of square integrable vector fields $u =(u_1, u_2, u_3)$ in $\mathbb{R}^3.$

\item $\bigtriangledown = (\partial_{x^1}, \partial_{x^2}, \partial_{x^3})$ -- the gradient operator in $\mathbb{R}^3.$

\item $R_{x^i} = \frac{\partial_{x^i}}{(- \triangle)^\frac{1}{2}}$ ($i =1,2,3$) -- the Riesz transform, i.e., for $f \in L^2 (\real^3)$
\beq\label{eq:Riesztrans}
\widehat{R_{x^i} f} (\xi) = \frac{\mathrm{i} \xi^i }{|\xi|} \hat{f} (\xi),\quad \forall \xi \in \mathbb{R}^3.
\eeq

\item The Leray projection $P$ on $\mathbb{L}^2 (\real^3)$ is defined by $P = \mathrm{id} + R \otimes R$ with $R = (R_{x^1}, R_{x^2}, R_{x^3}),$ i.e.,
\beq\label{eq:Lerayproj}
(P u)_j = u_j +  \sum^3_{i=1} R_{x^j}R_{x^i} u_i, \quad j =1,2,3,
\eeq
for $u = (u_1, u_2, u_3) \in \mathbb{L}^2 (\real^3).$

\item $u^{(k)} = u^{(k)} (\vec{x}_k)$ -- $k$-fold tensor functions in $\mathbb{R}^{3 k}$ with the convention notation
\be
u^{(k)} = (u_{i_1, \ldots, i_k}): = \big ( u_{i_1, \ldots, i_k} (\vec{x}_k):\; 1 \le i_1, \ldots, i_k \le 3 \big ).
\ee

\item $a^i b_i = \sum^3_{i=1} a^i b_i$-- Einstein's summation notation.

\item $p, q, r$ -- Fourier (momentum) variables in $\mathbb{R}^3$ with the convenience notation $p^2 = |p|^2.$

\item $f(p), f(q),$ or $f(r)$ -- the Fourier transform of $f$ in $\mathbb{R}^3,$ i.e., the usual hat indicating the Fourier transform is omitted.

\item $\vec{p}_k=(p_1,\ldots, p_k)$ -- a point in $\mathbb{R}^{3 k} = ( \mathbb{R}^3 )^k$ with the convenience notation $\vec{p}_k^2 = p^2_1 + \cdots + p^2_k.$

\item $u^{(k)} (\vec{p}_k), u^{(k)} (\vec{q}_k),$ or $u^{(k)} (\vec{r}_k)$ -- the Fourier transform of `$k$-body' velocity $u^{(k)} (\vec{x}_k)$ in position space, i.e.,
\be
u^{(k)} (\vec{p}_k) := \int d \vec{x}_k u^{(k)} (\vec{x}_k) e^{- \mathrm{i} \langle \vec{x}_k, \vec{p}_k \rangle}
\ee
with the slight abuse of notation of omitting the hat on the left hand side.

\item $d p$ -- the integration measure for the momentum variable $p$ which is always divided by $(2 \pi)^3,$ i.e.,
\beq\label{eq:momentummeasure}
d p = \frac{1}{(2 \pi)^3} d_{\mathrm{Leb}} p\; ,
\eeq
where $d_{\mathrm{Leb}}$ denotes the usual Lebesgue measure in $\mathbb{R}^3.$ With this notation, we have the Fourier inversion formula
\be
f (x) = \int d p f(p) e^{\mathrm{i} x \cdot p}, \quad \forall x \in \mathbb{R}^3.
\ee

\item $\delta (p)$ -- the delta function in the momentum space $\mathbb{R}^3$ corresponding to the measure $d p$ above, i.e.,
\be
\int d p f (p) \delta (p -q) = f (q),\quad \forall q \in \mathbb{R}^3
\ee
for smooth functions $f$ in the momentum space. The delta function in the position space $\mathbb{R}^3,$ $\delta (x),$ remains subordinated to the usual Lebesgue measure in $\mathbb{R}^3.$

\item $\tau$ -- frequency variables (dual variables to the time $t$) with the convenience notation
\beq\label{eq:frequencemeasure}
d \tau = \frac{1}{2 \pi} d_{\mathrm{Leb}} \tau,
\eeq
to which the delta function $\delta (\tau)$ of $\tau$-variables is subordinated.

\item Without specified otherwise, the integrals are over $\mathbb{R}^3,$ $\mathbb{R}^{3 k},$ or on $\mathbb{R},$ if the measure is $d x,\; d p;\; d \vec{x}_k,\; d \vec{p}_k;$ or $d \tau$ etc.

\end{itemize}

We use $X \lesssim Y$ to denote the inequality $X \le C Y$ for an absolute constant $C>0,$ and use $X \thicksim Y$ as shorthand for $X \lesssim Y \lesssim X.$ Also, $X \lesssim_{p,s,\ldots} Y$ denotes the inequality $X \le C_{p,s,\ldots} Y$ for some constant $C_{p,s,\ldots} >0$ depending on $p, s, \ldots .$


\section{Function spaces}\label{FuncSpace}

Given $k \ge 1,$ we define
\be
\mathbb{L}^2_{(k)} (\mathbb{R}^3) : = \big \{u^{(k)} = \big ( u_{i_1, \ldots, i_k} \big ):\; u_{i_1, \ldots, i_k} \in L^2 (\mathbb{R}^{3 k}),\; 1 \leq i_1, \ldots, i_k \leq 3 \big \},
\ee
equipped with the norm
\be
\| u^{(k)} \|_{\mathbb{L}^2_{(k)}} = \Big ( \sum_{1 \leq i_1, \ldots, i_k \leq 3} \| u_{i_1, \ldots, i_k} \|^2_{L^2} \Big )^{\frac{1}{2}},
\ee
and the associated inner product is given by
\be
\langle u^{(k)}, v^{(k)} \rangle_{\mathbb{L}^2_{(k)}} = \sum_{1 \leq i_1, \ldots, i_k \leq 3} \big \langle u_{i_1, \ldots, i_k} , v_{i_1, \ldots, i_k} \big \rangle_{L^2 (\mathbb{R}^{3k})}
\ee
for any $u^{(k)} = ( u_{i_1, \ldots, i_k} ), v^{(k)} = ( v_{i_1, \ldots, i_k} ) \in \mathbb{L}^2_{(k)} (\mathbb{R}^3).$ Note that $\mathbb{L}^2_{(1)} (\real^3) = \mathbb{L}^2 (\real^3),$ and we simply write $u^{(1)} =u.$

For $u^{(k)} \in \mathbb{L}^2_{(k)} (\mathbb{R}^3)$ we let
\be
\Theta_{\sigma} u^{(k)} (\vec{x}_k) = \big ( u_{i_{\sigma (1)}, \ldots, i_{\sigma (k)} } (x_{\sigma (1)}, \ldots , x_{\sigma (k)})
\big )
\ee
for a permutation $\sigma \in \Pi_k$ ($\Pi_k$ denotes the set of permutations on $k$ elements). Each $\Theta_{\sigma}$ is a unitary operator on $\mathbb{L}^2_{(k)} (\mathbb{R}^3).$ We define
\be
\mathfrak{L}^2_{(k)} ( \mathbb{R}^3 ) : = \big \{ u^{(k)} \in \mathbb{L}^2_{(k)} ( \mathbb{R}^3):\;
\Theta_{\sigma} u^{(k)} = u^{(k)}, ~~\forall \sigma \in \Pi_k \big \}
\ee
equipped with the norm $\| \cdot \|_{\mathbb{L}^2_{(k)}}.$ Then for each $u^{(k)} = \big ( u_{i_1, \cdots, i_k} \big ) \in \mathbb{L}^2_{(k)} ( \mathbb{R}^3),$ $u^{(k)} \in \mathfrak{L}^2_{(k)} (\mathbb{R}^3)$ if and only if for every $1 \le i_1, \cdots, i_k \le 3,$
\be
u_{i_1, \cdots, i_k} (x_1, \ldots, x_k ) = u_{i_{\sigma (1)}, \cdots, i_{\sigma (k)}} (x_{\sigma (1)}, \ldots, x_{\sigma (k)} )
\ee
for all $\sigma \in \Pi_k.$

We remark that for every $k \ge 2,$ $\mathbb{L}^2_{(k)} ( \mathbb{R}^3)$ (resp., $\mathfrak{L}^2_{(k)} (\mathbb{R}^3)$) is identified with the $k$-fold Hilbert tensor (resp., symmetric tensor) product space of $\mathbb{L}^2 (\real^3).$ In the sequel, we will write $\mathbb{L}^2_{(k)} = \mathbb{L}^2_{(k)} ( \mathbb{R}^3)$ and $\mathfrak{L}^2_{(k)} = \mathfrak{L}^2_{(k)} (\mathbb{R}^3)$ as shorthand.

For $k \ge 1,$ we denote by $\mathcal{D}_{(k)} (\mathbb{R}^3)$ (resp., $\mathcal{S}_{(k)} (\mathbb{R}^3)$) the space of $k$-fold tensor smooth and compactly supported functions (resp., Schwarz functions), i.e.,
\be
\mathcal{D}_{(k)} (\mathbb{R}^3) = \big \{ \phi^{(k)} = (\phi_{i_1, \ldots, i_k})_{1\le i_1, \ldots, i_k \le 3}:\; \phi_{i_1, \ldots, i_k} \in \mathcal{D} (\mathbb{R}^{3 k}) \big \}
\ee
and
\be
\mathcal{S}_{(k)} (\mathbb{R}^3) = \big \{ \phi^{(k)} = (\phi_{i_1, \ldots, i_k})_{1\le i_1, \ldots, i_k \le 3}:\; \phi_{i_1, \ldots, i_k} \in \mathcal{S} (\mathbb{R}^{3 k}) \big \}.
\ee
We may define the generalized function space $\mathcal{D}_{(k)}' (\mathbb{R}^3)$ as the topological dual space of $\mathcal{D}_{(k)} (\mathbb{R}^3)$ and the Schwarz generalized function space $\mathcal{S}_{(k)}' (\mathbb{R}^3)$ as that of $\mathcal{S}_{(k)} (\mathbb{R}^3),$ respectively. Similarly, for any $T>0,$ we can define $\mathcal{D}_{(k)} ((0,T) \times \mathbb{R}^3)$ and $\mathcal{S}_{(k)} ((0, T) \times \mathbb{R}^3).$

Given $k \ge 1,$ we write as shorthand $S_j = (1 - \bigtriangleup_j)^{\frac{1}{2}}$ for $1 \le j \le k,$ and
\be
S^{(k)}_\alpha = \prod^k_{j=1} S^\alpha_j
\ee
for any $k \ge 1$ and for any $\alpha \in \mathbb{R},$ with the convention $S^{(k)}= S^{(k)}_1.$ Here and in the following, $\Delta_j$ refers to the usual Laplace operator with respect to the $j$-th variables $x_j \in {\mathbb R}^3.$ We then define $\mathrm{H}^\alpha_{(k)} (\mathbb{R}^3)$ for $\alpha \in \mathbb{R}$ as the closure of the Schwartz functions $u \in \mathcal{S} (\mathbb{R}^{3 k})$ under the norm
\be
\| u \|_{\mathrm{H}^\alpha_{(k)}}: = \| S^{(k)}_\alpha u \|_{L^2 (\mathbb{R}^{3 k})},
\ee
These spaces are Hilbert spaces under the natural inner products. In fact, each $\mathrm{H}^\alpha_{(k)} (\mathbb{R}^3)$ is identified with the $k$-fold Hilbert tensor product spaces of the usual Sobolev space $H^\alpha (\mathbb{R}^3).$ We will write $\mathrm{H}^\alpha_{(k)} = \mathrm{H}^\alpha_{(k)} (\mathbb{R}^3)$ as shorthand.

Next, we define $\mathbb{H}^\alpha_{(k)} (\mathbb{R}^3)$ for $\alpha \in \mathbb{R}$ to be the space
\be
\mathbb{H}^\alpha_{(k)} (\mathbb{R}^3): = \big \{ u^{(k)} = (u_{i_1, \ldots, i_k}):\; u_{i_1, \ldots, i_k} \in \mathrm{H}^\alpha_{(k)} (\mathbb{R}^3) \big \}
\ee
with the norm
\be
\| u^{(k)} \|_{\mathbb{H}^\alpha_{(k)}} = \bigg ( \sum_{1 \le i_1, \ldots, i_k \le 3} \| u_{i_1, \ldots, i_k} \|^2_{\mathrm{H}^\alpha_{(k)}} \bigg )^{\frac{1}{2}}.
\ee
Moreover, we define
\be
\mathfrak{H}^\alpha_{(k)} (\mathbb{R}^3) = \big \{ u^{(k)} = (u_{i_1, \ldots, i_k}) \in \mathbb{H}^\alpha_{(k)} (\mathbb{R}^3):\; \Theta_{\sigma} u^{(k)} = u^{(k)}, ~~\forall \sigma \in \Pi_k \big \}.
\ee
Thus $\mathbb{H}^\alpha_{(k)} (\mathbb{R}^3)$'s (resp., $\mathfrak{H}^\alpha_{(k)} (\mathbb{R}^3)$'s) generalize the spaces $\mathbb{L}^2_{(k)} (\mathbb{R}^3)$ (resp., $\mathfrak{L}^2_{(k)} (\mathbb{R}^3)$), which correspond to the cases $\alpha =0.$ It can be shown that for $\alpha > 0,$ the Banach space dual of $\mathbb{H}^\alpha_{(k)} (\mathbb{R}^3)$ (resp., $\mathfrak{H}^\alpha_{(k)} (\mathbb{R}^3)$) is identified with $\mathbb{H}^{-\alpha}_{(k)} (\mathbb{R}^3)$ (resp., $\mathfrak{H}^{-\alpha}_{(k)} (\mathbb{R}^3)$). In what follows, we will write as shorthand respectively $\mathbb{H}^\alpha_{(k)}$ for $\mathbb{H}^\alpha_{(k)} (\mathbb{R}^3),$ $\mathfrak{H}^\alpha_{(k)}$ for $\mathfrak{H}^\alpha_{(k)} (\mathbb{R}^3),$ $\mathbb{H}^\alpha = \mathbb{H}^\alpha_{(1)},$ $\mathfrak{H}^\alpha = \mathfrak{H}^\alpha_{(1)},$ etc.

\section{The Navier-Stokes hierarchy}\label{NShierarchy}

\subsection{Interaction operator}

Given $k \ge 1,$ we define
\beq\label{eq:nabla_j}
\bigtriangledown_j \cdot u^{(k+1)} : = \big ( \partial_{x^i_j} u_{i_1, \ldots, i_k, i} \big ),\quad 1 \le j \le k +1,
\eeq
with Einstein's summation notation $ \partial_{x^i_j} u_{i_1, \ldots, i_k, i} = \sum^3_{i =1} \partial_{x^i_j} u_{i_1, \ldots, i_k, i},$ and
\beq\label{eq:P_j}
P_j u^{(k)} : = u^{(k)} + \big ( R_{x^{i_j}_j} R_{x^\el_j} u_{i_1, \ldots, i_{j-1}, \el, i_{j+1}, \ldots, i_k} \big ), \quad 1 \le j \le k,
\eeq
where Einstein's summation notation is used again for the index $\el.$ For any $1 \le j \le k,$ put
\beq\label{eq:W_j+}
\big ( W^+_{j,k+1} u^{(k+1)} \big ) (\vec{x}_k ) := - \bigtriangledown_j \cdot u^{(k+1)} (\vec{x}_k, x_j) = \big ( - \partial_{x^i_j} u_{i_1, \ldots, i_k, i} (\vec{x}_k, x_j)\big )
\eeq
and
\beq\label{eq:W_j-}
\big ( W^-_{j,k +1} u^{(k+1)} \big ) (\vec{x}_k ) := \big ( - R_{x^{i_j}_j} R_{x^\el_j} \partial_{x^i_j} u_{i_1, \ldots, i_{j-1}, \el, i_{j+1}, \ldots, i_k, i} (\vec{x}_k, x_j)\big ).
\eeq
We then define
\beq\label{eq:W_j}
\big ( W_{j, k+1} u^{(k+1)} \big ) (\vec{x}_k ):= W^+_{j, k+1} u^{(k+1)} (\vec{x}_k ) + W^-_{j,k+1} u^{(k+1)} (\vec{x}_k )= - P_j \bigtriangledown_j \cdot u^{(k+1)} (\vec{x}_k, x_j),
\eeq
for any $1 \le j \le k.$

We introduce the interaction operator $W^{(k)}$ as
\beq\label{eq:Woperatordf}
W^{(k)}: = \sum^k_{j=1} W_{j, k+1}
\eeq
which describes interactions between the first $k$ `particles' and the $(k+1)$-th `particle'. The action of $W^{(k)}$ is defined through a limiting procedure. Since the expression of $W^{(k)}$ acting on smooth tensor functions $u^{(k+1)} = \big ( u_{i_1, \ldots, i_k, i_{k+1}} \big ) \in \mathcal{S}_{(k+1)} (\mathbb{R}^3)$ is
\be
\big ( W^{(k)} u^{(k+1)} \big ) (\vec{x}_k) = - \sum^k_{j=1} P_j \bigtriangledown_j \cdot u^{(k+1)} ( \vec{x}_k, x_j ),
\ee
the action of $W^{(k)}$ for general $u^{(k+1)}$ is then formally given by
\beq\label{eq:Woperator}
\big ( W^{(k)} u^{(k+1)} \big ) (\vec{x}_k) = - \sum^k_{j=1} P_j \bigtriangledown_j \cdot \int d x_{k+1} \delta (x_j - x_{k+1}) u^{(k+1)} ( \vec{x}_k, x_{k+1}).
\eeq
Since $\mathcal{S}_{(k+1)} (\mathbb{R}^3)$ is dense in $\mathbb{L}^2_{(k+1)},$ $W_{j, k+1}$ is a densely defined operator from $\mathbb{L}^2_{(k+1)}$ into $\mathbb{L}^2_{(k)}$ for any $1 \le j \le k,$ and so does $W^{(k)}.$

In the following, we show that $W^{(k)}$ is well defined in $\mathbb{H}^1_{(k+1)}$ by \eqref{eq:Woperator} in the sense of distributions. To this end, we choose a nonnegative function $h \in \mathcal{D} (\mathbb{R}^3)$ supported in the unit ball $B = \{x \in \mathbb{R}^3:\; |x| \le 1\}$ such that $\int h d x =1.$ For any $\epsilon >0,$ we set $\delta_\epsilon (x) = \epsilon^{-3} h (\epsilon^{-1} x).$ Then for $u^{(k+1)} \in \mathbb{L}^2_{(k+1)},$ we define
\beq\label{eq:k-WoperatorLim}
\big ( W^{(k)} u^{(k+1)} \big ) (\vec{x}_k) = - \lim_{\epsilon \to 0} \sum^k_{j=1} P_j \bigtriangledown_j \cdot \int d x_{k+1} \delta_\epsilon (x_j - x_{k+1}) u^{(k+1)} (\vec{x}_k, x_{k+1})
\eeq
in the sense of distributions, i.e., for every $\phi^{(k)} = (\phi_{i_1, \ldots, i_k}) \in \mathcal{D}_{(k)} (\mathbb{R}^3),$
\be\begin{split}
\big \langle & \phi_{i_1, \ldots, i_k}, \big ( W^{(k)} u^{(k+1)} \big )_{i_1, \ldots, i_k} \big \rangle_{L^2 (\mathbb{R}^{3 k})}\\
& = \lim_{\epsilon \to 0} \sum^k_{j=1} \bigg [ \sum^3_{i=1} \int d x_{k+1} \big \langle \partial_{x^i_j} \phi_{i_1, \ldots, i_k},  \delta_\epsilon (x_j - x_{k+1}) u_{i_1, \ldots, i_k, i} (\cdot, x_{k+1}) \big \rangle_{L^2 (\mathbb{R}^{3 k})}\\
& \quad + \sum^3_{\el =1} \sum^3_{i=1} \int d x_{k+1} \big \langle R_{x^{i_j}_j} R_{x^{\el}_j} \partial_{x^i_j} \phi_{i_1, \ldots, i_k}, \delta_\epsilon (x_j - x_{k+1}) u_{i_1, \ldots, i_{j-1}, \el, i_{j+1}, \ldots, i_k, i} (\cdot, x_{k+1}) \big \rangle_{L^2 (\mathbb{R}^{3 k})} \bigg ]
\end{split}
\ee
for all $1 \le i_1, \ldots, i_k \le 3.$

\begin{prop}\label{prop:Wdfwell}
For every $k \ge1,$ $W^{(k)}$ is well defined for all $u^{(k+1)} \in \mathbb{H}^1_{(k+1)}$ in the sense of distributions, such that
\beq\label{eq:Woperatorweak-estimate}
\big | \big \langle \phi^{(k)}, W^{(k)} u^{(k+1)} \big \rangle_{\mathbb{L}^2_{(k)}} \big | \le  C k^\frac{1}{2} \| \phi^{(k)} \|_{\mathbb{H}^1_{(k)}} \| u^{(k+1)} \|_{\mathbb{H}^1_{(k+1)}},
\eeq
where $C>0$ is a universal constant.

Consequently, for any $k \ge 1$ the operator $W^{(k)},$ originally defined on Schwarz functions, can be extended to a bounded operator from $\mathbb{H}^1_{(k+1)}$ into $\mathbb{H}^{-1}_{(k)}.$
\end{prop}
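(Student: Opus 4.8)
\smallskip\noindent\textit{Proof strategy.}\;\;I would argue entirely on the Fourier side. Starting from the mollified formula \eqref{eq:k-WoperatorLim}, the restriction $x_{k+1}=x_j$ built into the two-body collision $W_{j,k+1}$ becomes the integration of one momentum against a $\delta$-function, while the operator $P_j\bigtriangledown_j\cdot$ acting on the $j$-th output variable (and likewise its counterpart containing $R_{x^{i_j}_j}R_{x^{\ell}_j}$) becomes a Fourier multiplier in the $j$-th variable of operator norm $\lesssim|p_j|\le\langle p_j\rangle$. Writing $\vec r_{k+1}:=(p_1,\dots,p_{j-1},p_j-q,p_{j+1},\dots,p_k,q)$, one obtains, for $u^{(k+1)}\in\mathcal{S}_{(k+1)}(\mathbb{R}^3)$,
\[
\widehat{W_{j,k+1}u^{(k+1)}}(\vec p_k)=-\,\widehat{P}_j(p_j)\Big(\mathrm{i}p_j\cdot\!\!\int\! dq\;\widehat{u^{(k+1)}}(\vec r_{k+1})\Big),
\]
the dot denoting contraction with the last tensor slot of $u^{(k+1)}$; the mollified operator $W^{(k)}_\epsilon$ differs only by the factor $\widehat{h}(\epsilon q)$ inside the $q$-integral, which satisfies $|\widehat h(\epsilon q)|\le1$ and $\widehat h(\epsilon q)\to1$ as $\epsilon\to0$. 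The singular kernel of $W_{j,k+1}$ is thereby made explicit, and the whole matter reduces to controlling it in $\mathbb{H}^{-1}_{(k)}$ uniformly in the mollifier.

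\smallskip
The heart of the proof is the estimate for one collision. Pair $W_{j,k+1}u^{(k+1)}$ with $\phi^{(k)}\in\mathcal{D}_{(k)}(\mathbb{R}^3)$, use Plancherel, and transfer the bounded multipliers onto $\phi^{(k)}$. Putting $\widetilde\phi:=S^{(k)}\phi^{(k)}$ and $\widetilde u:=S^{(k+1)}u^{(k+1)}$ (so $\|\widetilde\phi\|_{\mathbb{L}^2_{(k)}}=\|\phi^{(k)}\|_{\mathbb{H}^1_{(k)}}$ and $\|\widetilde u\|_{\mathbb{L}^2_{(k+1)}}=\|u^{(k+1)}\|_{\mathbb{H}^1_{(k+1)}}$) and using $|p_j|\le\langle p_j\rangle$, one is reduced to a scalar integral of the form
\[
\int d\vec p_k\;\frac{|\widetilde\phi(\vec p_k)|}{\prod_{m\neq j}\langle p_m\rangle^{2}}\int dq\;\frac{|\widetilde u(\vec r_{k+1})|}{\langle p_j-q\rangle\,\langle q\rangle}\,.
\]
Cauchy--Schwarz in $q$, combined with the elementary three-dimensional bound
\[
\int_{\mathbb{R}^3}\frac{dq}{\langle p_j-q\rangle^{2}\,\langle q\rangle^{2}}\;\lesssim\;\frac{1}{\langle p_j\rangle}
\]
(proved by cutting $\mathbb{R}^3$ into $\{|q|\le|p_j|/2\}$, $\{|q-p_j|\le|p_j|/2\}$ and the complement), produces a spare half-power $\langle p_j\rangle^{-1/2}$; a further Cauchy--Schwarz in $\vec p_k$ together with the linear, measure-preserving change of variables $(\vec p_k,q)\mapsto \vec r_{k+1}$ — under which $\int d\vec p_k\,dq\,|\widetilde u(\vec r_{k+1})|^{2}=\|u^{(k+1)}\|^2_{\mathbb{H}^1_{(k+1)}}$ and all remaining weights are $\le1$ — closes the estimate and yields, uniformly in $j$, $k$ and $\epsilon$, a bound by $C\|\phi^{(k)}\|_{\mathbb{H}^1_{(k)}}\|u^{(k+1)}\|_{\mathbb{H}^1_{(k+1)}}$ in which $\widetilde\phi$ is still weighted by a spare factor $\langle p_j\rangle^{-1/2}$. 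The term $W^-_{j,k+1}$ needs no new idea: the extra $R_{x^{i_j}_j}R_{x^{\ell}_j}$ is a unimodular multiplier in $p_j$ plus a relabelling of the three tensor indices, already absorbed into $\widehat P_j$.

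\smallskip
It remains to sum over the $k$ collisions $j=1,\dots,k$ and to conclude. One writes $W^{(k)}=\sum_{j=1}^{k}W_{j,k+1}$ and applies Cauchy--Schwarz in the index $j$; the spare $\langle p_j\rangle^{-1/2}$ retained above is what turns the naive factor $k$ of a plain triangle inequality into the asserted $k^{1/2}$. Letting $\epsilon\to0$ and invoking dominated convergence — legitimate because of the $\epsilon$-uniform bound just obtained and $\widehat h(\epsilon q)\to1$ — shows that $\langle\phi^{(k)},W^{(k)}_\epsilon u^{(k+1)}\rangle$ converges to the functional obtained by deleting $\widehat h(\epsilon q)$, which therefore satisfies \eqref{eq:Woperatorweak-estimate}; this is the content of the definition \eqref{eq:k-WoperatorLim}. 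Finally, since $\mathcal{D}_{(k)}(\mathbb{R}^3)$ is dense in $\mathbb{H}^1_{(k)}$ and $\big(\mathbb{H}^1_{(k)}\big)^{*}=\mathbb{H}^{-1}_{(k)}$ (Section \ref{FuncSpace}), \eqref{eq:Woperatorweak-estimate} exhibits $W^{(k)}u^{(k+1)}$ as a bounded linear functional on $\mathbb{H}^1_{(k)}$ of norm $\le Ck^{1/2}\|u^{(k+1)}\|_{\mathbb{H}^1_{(k+1)}}$, so that $W^{(k)}$ extends uniquely to a bounded operator $\mathbb{H}^1_{(k+1)}\to\mathbb{H}^{-1}_{(k)}$ coinciding with the original one on $\mathcal{S}_{(k+1)}(\mathbb{R}^3)$.

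\smallskip
I expect the main difficulty to be twofold: (i) making rigorous sense of the genuinely singular, $\delta$-function kernel, i.e.\ showing that the mollifier limit exists in $\mathbb{H}^{-1}_{(k)}$ — which is exactly why the single-collision estimate must be carried out with the mollifier present and with a constant independent of $\epsilon$ before any limit is taken; and (ii) the precise distribution of the Sobolev weights among $\widetilde\phi$, $\widetilde u$ and the collision symbol, so that the three-dimensional integral closes and the summation over the $k$ two-body collisions is held to the order $k^{1/2}$ rather than to $k$.
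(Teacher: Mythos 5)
Your Fourier-side route is a genuine alternative to the paper's position-space argument, which pairs $\partial_{x^i_j}\phi^{(k)}$ against $\delta_\epsilon(x_j-x_{k+1})u^{(k+1)}$ and invokes the $L^1$-potential estimate of Lemma \ref{lem:Potentialestimate} together with Lemma \ref{lem:WoperatordfWell} for the $\epsilon\to0$ limit. Your per-$j$ estimate --- Cauchy--Schwarz in $q$ plus the convolution bound $\int dq\,\langle p_j-q\rangle^{-2}\langle q\rangle^{-2}\lesssim\langle p_j\rangle^{-1}$ --- is correct, and your treatment of the mollifier limit by a uniform bound and dominated convergence is somewhat cleaner than the paper's Lemma \ref{lem:WoperatordfWell} route.

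However, the step that is supposed to produce the $k^{1/2}$ has a genuine gap. Once you absorb all of $S^{(k)}$ into $\widetilde\phi=S^{(k)}\phi^{(k)}$, the per-$j$ bound reads $|\langle\phi^{(k)},W_{j,k+1}u^{(k+1)}\rangle_{\mathbb{L}^2_{(k)}}|\lesssim A_j^{1/2}\,\|u^{(k+1)}\|_{\mathbb{H}^1_{(k+1)}}$ with $A_j=\int d\vec{p}_k\,|\widetilde\phi(\vec{p}_k)|^2\,\langle p_j\rangle^{-1}\prod_{m\ne j}\langle p_m\rangle^{-4}$. The spare $\langle p_j\rangle^{-1/2}$ does \emph{not} rescue the $j$-summation: on any bounded set of $\vec{p}_k$ --- in particular near $\vec{p}_k=0$ --- the weight $\sum_{j=1}^k\langle p_j\rangle^{-1}\prod_{m\ne j}\langle p_m\rangle^{-4}$ is of order $k$, so $\sum_j A_j$ can be comparable to $k\,\|\phi^{(k)}\|^2_{\mathbb{H}^1_{(k)}}$, and Cauchy--Schwarz over $j$ returns $k^{1/2}\cdot k^{1/2}=k$, not $k^{1/2}$. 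The correct bookkeeping --- the one implicit in the paper's proof --- is to allocate \emph{only} the collision derivative to $\phi^{(k)}$: apply Cauchy--Schwarz in $\vec{p}_k$ directly between $p^i_j\,\phi^{(k)}$ and $\int dq\,u^{(k+1)}(\ldots,p_j-q,\ldots,q)$, control the latter in $L^2(d\vec{p}_k)$ by $\|S_jS_{k+1}u^{(k+1)}\|_{L^2}\le\|u^{(k+1)}\|_{\mathbb{H}^1_{(k+1)}}$ using the very same convolution estimate, and then obtain $k^{1/2}$ by Cauchy--Schwarz over the $3k$ indices $(j,i)$ together with the pointwise inequality $\sum_{j=1}^k|p_j|^2\le\prod_{m=1}^k\langle p_m\rangle^2$, which yields $\sum_{j,i}\|p^i_j\phi^{(k)}\|^2_{L^2}\le\|\phi^{(k)}\|^2_{\mathbb{H}^1_{(k)}}$. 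With that reallocation your Fourier approach closes; as written, it only gives the bound $Ck\,\|\phi^{(k)}\|_{\mathbb{H}^1_{(k)}}\|u^{(k+1)}\|_{\mathbb{H}^1_{(k+1)}}$.
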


\begin{proof} By a standard argument, it follows from Lemmas \ref{lem:WoperatordfWell} and \ref{lem:Potentialestimate} that the limit \eqref{eq:k-WoperatorLim} exists for every $u^{(k+1)} \in \mathbb{H}^1_{(k+1)}$ in the sense of distributions, and is independent of the choice of $h \in \mathcal{D} (\mathbb{R}^3).$ Hence, the operator $W^{(k)}$ is well defined for all $u^{(k+1)} \in \mathbb{H}^1_{(k+1)}.$

To prove \eqref{eq:Woperatorweak-estimate}, by Lemma \ref{lem:Potentialestimate} again, we have for each term in \eqref{eq:k-WoperatorLim},
\be\begin{split}
\big | \big \langle & \phi_{i_1, \ldots, i_k}, \big ( W^{(k)} u^{(k+1)} \big )_{i_1, \ldots, i_k} \big \rangle_{L^2 (\mathbb{R}^{3 k})} \big |\\
& \le C \sum^k_{j=1} \sum^3_{i=1} \| \partial_{x^i_j} \phi_{i_1, \ldots, i_k} \|_{L^2 (\mathbb{R}^{3k})} \big [ \| (1- \triangle_j)^{\frac{1}{2}} (1- \triangle_{k+1})^{\frac{1}{2}} u_{i_1, \ldots, i_k, i} \|_{L^2 (\mathbb{R}^{3(k+1)})}\\
& \quad \quad \quad \quad \quad \quad + \sum^3_{\el =1} \| (1- \triangle_j)^{\frac{1}{2}} (1- \triangle_{k+1})^{\frac{1}{2}} u_{i_1, \ldots, i_{j-1}, \el, i_{j+1}, \ldots, i_k, i} \|_{L^2 (\mathbb{R}^{3(k+1)})} \big ].
\end{split}\ee
This yields \eqref{eq:Woperatorweak-estimate} and completes the proof.
\end{proof}

\subsection{Definition of solution}

In terms of $W^{(k)}$'s, we can rewrite the Navier-Stokes hierarchy \eqref{eq:NSEhierarchy} as
\beq\label{eq:NSEhierarchyW}
\partial_t u^{(k)} (t) = \triangle^{(k)} u^{(k)} (t) + W^{(k)} u^{(k+1)} (t)
\eeq
for all $k \ge 1,$ where and in the following,
\be
\triangle^{(k)}: = \sum^k_{j=1} \triangle_j.
\ee

In the sequel, we define the notion of weak solution to the Navier-Stokes hierarchy \eqref{eq:NSEhierarchyW}. At first, with the help of Proposition \ref{prop:Wdfwell}, we give a restriction assumption that will be proposed on suitable solutions to \eqref{eq:NSEhierarchyW}.

\begin{defi}\label{df:Admisscond}
A sequence $(u^{(k)})_{k \ge 1} \in \prod_{k \ge 1} \mathbb{H}^1_{(k)}$ is said to be consistent if
\beq\label{eq:Admisscondition}
\langle u^{(k)}, W^+_{j, k+1} u^{(k+1)} \rangle_{\mathbb{L}^2_{(k)}} = 0
\eeq
for every $k \ge 1$ and for all $1 \le j \le k.$
\end{defi}

\begin{rk}\label{rk:ConsistentCondition}\rm
Note that if $u \in \mathbb{H}^1 (\mathbb{R}^3)$ with $\nabla \cdot u =0,$ then $(u^{\otimes k})_{k \ge 1}$ is consistent.
\end{rk}

We refer to \cite{DU1977} for the details of strongly measurable functions with values in a Banach space and the Bochner and Pettis integrals for them.

\begin{defi}\label{df:NShierarchyweaksolution}
Let $0 < T \le \8.$ A weak solution on $(0, T)$ for the Navier-Stokes hierarchy \eqref{eq:NSEhierarchyW} is defined as a sequence of strongly measurable functions $u^{(k)} (t)$ on $(0, T)$ with values in $\mathfrak{H}^1_{(k)}$ for $k \ge 1,$ satisfying the following properties:
\begin{enumerate}[{\rm $1)$}]


\item For every $k \ge 1,$ one has
\beq\label{eq:NShierarchyweaksolution}
\int^T_0 \big [ \langle \partial_t \phi + \triangle^{(k)} \phi, u^{(k)} \rangle_{\mathbb{L}^2_{(k)}} + \langle \phi, W^{(k)} u^{(k+1)} \rangle_{\mathbb{L}^2_{(k)}} \big ] d t = 0,
\eeq
for any $\phi \in \mathcal{D}_{(k)} ((0, T) \times \mathbb{R}^3)$ with the divergence free property that follows
\beq\label{eq:TextfunctDivfreeCondit}
\sum^3_{\el =1} \partial_{x^{\el}_j} \phi_{i_1, \ldots ,i_{j-1}, \el , i_{j+1}, \ldots,  i_k} (t) = 0
\eeq
for all $0 < t <T$ and for every $1 \le i_1, \ldots, i_k \le 3.$

\item  For any $k \ge 1$ the divergence free conditions
\beq\label{eq:NShierarchyDivfreeCondit}
\sum^3_{\el =1} \partial_{x^{\el}_j} u^{(k)}_{i_1, \ldots ,i_{j-1}, \el , i_{j+1}, \ldots,  i_k} (t) = 0
\eeq
hold true for every $t \in (0, T)$ and all $1 \le j \le k,$ where $1 \le i_1, \ldots, i_k \le 3.$

\item The sequences $(u^{(k)}(t))_{k \ge 1}$ are consistent for every $t \in (0, T).$

\end{enumerate}

As for $T = \8,$ the solution is called a global weak solution.
\end{defi}

The equality \eqref{eq:NShierarchyweaksolution} means that $( u^{(k)} (t) )_{k \geq 1}$ satisfies \eqref{eq:NSEhierarchyW} in the sense of distributions. Note that $\langle \phi, W^{(k)} u^{(k+1)} \rangle_{(k)}$ is well defined in \eqref{eq:NShierarchyweaksolution}, since $W^{(k)} u^{(k+1)} \in \mathbb{H}^{-1}_{(k)}$ for $u^{(k+1)} \in \mathbb{H}^1_{(k+1)}.$

\begin{rk}\label{rk:NShinvariance}\rm
Any weak solution for the Navier-Stokes hierarchy is shift-invariant, i.e., if $(u^{(k)} (t, \vec{x}_k) )_{k \ge 1} $ is a weak solution on $(0, T),$ so does $(u^{(k)} (t + t_0, \vec{x}_k- \vec{x}_{k0}) )_{k \ge 1}$ on $(0, T - t_0),$  where $t_0 \in (0, T)$ and $\vec{x}_{k0} \in (\mathbb{R}^3)^k$ are fixed for all $k \ge 1.$ Moreover, for every $\lambda >0$ putting
\beq\label{eq:scalingsolution}
u^{(k)}_\lambda (t, \vec{x}_k) = \lambda^k u^{(k)} (\lambda^2 t, \lambda \vec{x}_k), \quad \forall k \ge 1
\eeq
one has that for each $\lambda >0,$ $ (u^{(k)}_\lambda)_{k \ge 1}$ is a weak solution $(0, T)$ to the Navier-Stokes hierarchy \eqref{eq:NSEhierarchyW} when $(u^{(k)})_{k \ge 1}$ does on $(0, \lambda^2 T),$ and vice versa. Thus, the Navier-Stokes hierarchy \eqref{eq:NSEhierarchyW} has the usual space-time dilation invariance.
\end{rk}

\begin{defi}\label{df:NShierarchyCauchyprob}
Let $0 < T \le \8.$ Suppose $\mathfrak{U}_0 = (u^{(k)}_0)_{k \ge 1} \in \prod_{k \ge 1} \mathfrak{L}^2_{(k)} (\mathbb{R}^3)$ such that for every $k \ge 1,$
\beq\label{eq:NShierarchyIVdivfree}
\sum^3_{\el =1} \partial_{x^{\el}_j} (u^{(k)}_0)_{i_1, \ldots ,i_{j-1}, \el , i_{j+1}, \ldots,  i_k} = 0
\eeq
in the sense of distributions for all $j =1, \ldots, k,$ and for any $1 \le i_1, \ldots, i_k \le 3.$ A weak solution on $[0, T)$ for the Cauchy problem of the Navier-Stokes hierarchy \eqref{eq:NSEhierarchyW} with the initial data $\mathfrak{U}_0$ that is
\beq\label{eq:NSherarchyIC}
u^{(k)} (0) = u^{(k)}_0
\eeq
for any $k \ge 1,$ is by definition a weak solution $( u^{(k)} (t) )_{k \geq 1}$ on $(0,T)$ to the Navier-Stokes hierarchy \eqref{eq:NSEhierarchyW} such that for every $k \ge 1,$ $u^{(k)} \in L^\8 \big ((0, T); \mathbb{L}^2_{(k)} (\mathbb{R}^3) \big )$ and
\be
\lim_{t \to 0} u^{(k)} (t) = u^{(k)}_0
\ee
in the weak topology of $\mathbb{L}^2_{(k)} (\mathbb{R}^3).$

When $T = \8,$ the solution is said to be a global solution.
\end{defi}

We shall study the Cauchy problem of \eqref{eq:NSEhierarchyW} with the initial value \eqref{eq:NSherarchyIC} by transforming \eqref{eq:NSEhierarchyW} into the integral Navier-Stokes hierarchy that follows
\beq\label{eq:NShierarchyIntEqua}
u^{(k)} (t) = \mathcal{T}^{(k)} (t) u^{(k)}_0 + \int^t_0 d s \mathcal{T}^{(k)} ( t-s) W^{(k)} u^{(k+1)} (s),\quad \forall k \ge 1,
\eeq
where the free evolution operator $\mathcal{T}^{(k)} (t)$ is defined on $\mathbb{L}^2_{(k)}$ for every $t \ge 0$ by
\beq\label{eq:FreeEvoOpe}
\mathcal{T}^{(k)} (t) u^{(k)}: = \big ( e^{t \triangle^{(k)}} u_{i_1, \ldots, i_k} \big )
\eeq
for every $u^{(k)} = \big (u_{i_1, \ldots, i_k} \big ) \in \mathbb{L}^2_{(k)}.$

\begin{prop}\label{prop:Tcontra}
Given any fixed $k \ge 1$ and $t \ge 0,$ $\mathcal{T}^{(k)} (t)$ is a contraction on $\mathbb{H}^\alpha_{(k)}$ for any $\alpha \in \mathbb{R},$ i.e.,
\be
\| \mathcal{T}^{(k)} (t) u^{(k)} \|_{\mathbb{H}^\alpha_{(k)}} \le \| u^{(k)} \|_{\mathbb{H}^\alpha_{(k)}}
\ee
for all $u^{(k)} = \big (u_{i_1, \ldots, i_k} \big ) \in \mathbb{H}^\alpha_{(k)}.$
\end{prop}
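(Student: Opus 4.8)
The plan is to reduce the statement to a pointwise estimate in Fourier space, where the semigroup $e^{t\triangle^{(k)}}$ acts by multiplication. First I would recall that $\mathcal{T}^{(k)}(t)$ acts componentwise on $u^{(k)} = (u_{i_1,\ldots,i_k})$, so by the definition of the norm $\|\cdot\|_{\mathbb{H}^\alpha_{(k)}}$ as an $\ell^2$-sum over multi-indices of the $\mathrm{H}^\alpha_{(k)}$-norms of the components, it suffices to prove the contraction property for a single scalar component, i.e. that $\|e^{t\triangle^{(k)}} v\|_{\mathrm{H}^\alpha_{(k)}} \le \|v\|_{\mathrm{H}^\alpha_{(k)}}$ for all $v \in \mathrm{H}^\alpha_{(k)}$ and all $t \ge 0$.

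Next I would pass to the Fourier side. Writing $v \in \mathrm{H}^\alpha_{(k)}$ with Fourier transform $v(\vec{p}_k)$ on $\mathbb{R}^{3k}$, the operator $e^{t\triangle^{(k)}} = \prod_{j=1}^k e^{t\triangle_j}$ acts as multiplication by $e^{-t(p_1^2 + \cdots + p_k^2)} = e^{-t\,\vec{p}_k^2}$, while $S^{(k)}_\alpha = \prod_{j=1}^k (1-\triangle_j)^{\alpha/2}$ acts as multiplication by $\prod_{j=1}^k \langle p_j\rangle^\alpha = \prod_{j=1}^k (1+p_j^2)^{\alpha/2}$. By Plancherel,
\beq
\| e^{t\triangle^{(k)}} v \|^2_{\mathrm{H}^\alpha_{(k)}} = \int d\vec{p}_k\; \Big( \prod^k_{j=1} (1+p_j^2)^{\alpha} \Big) e^{-2 t \vec{p}_k^2} \, | v(\vec{p}_k) |^2 .
\eeq
Since $t \ge 0$ and $\vec{p}_k^2 \ge 0$, we have $e^{-2 t \vec{p}_k^2} \le 1$ pointwise, so the integrand is dominated by $\big( \prod_{j} (1+p_j^2)^{\alpha}\big) | v(\vec{p}_k)|^2$, whose integral is exactly $\|v\|^2_{\mathrm{H}^\alpha_{(k)}}$. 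This gives the claimed bound for the scalar component, and summing over the $3^k$ components yields $\| \mathcal{T}^{(k)}(t) u^{(k)} \|_{\mathbb{H}^\alpha_{(k)}} \le \| u^{(k)} \|_{\mathbb{H}^\alpha_{(k)}}$.

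A small technical point I would address is that the spaces $\mathrm{H}^\alpha_{(k)}$ are defined as the closure of Schwartz functions, so strictly one first establishes the estimate for $v \in \mathcal{S}(\mathbb{R}^{3k})$ — where all Fourier manipulations are unambiguous and $e^{t\triangle^{(k)}} v$ is again Schwartz — and then extends to all of $\mathrm{H}^\alpha_{(k)}$ by density, using that $\mathcal{T}^{(k)}(t)$ is bounded on $L^2(\mathbb{R}^{3k})$ (hence continuous) and the estimate is preserved under limits. Honestly, there is no real obstacle here: the only thing to be careful about is matching the normalization conventions (the measure $d\vec{p}_k$ is the Lebesgue measure divided by powers of $2\pi$, so Plancherel holds without extra constants), and recording that the argument works uniformly for every $\alpha \in \mathbb{R}$ and every $t \ge 0$ precisely because $e^{-2t\vec{p}_k^2} \le 1$ regardless of the sign of the exponent $\alpha$.
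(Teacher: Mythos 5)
Your proof is correct and is essentially the same argument as the paper's, just carried out explicitly in Fourier space: the paper reduces to the $L^2$-contraction of $e^{t\triangle^{(k)}}$ via the observation that it commutes with $S_j = (1-\triangle_j)^{1/2}$, and your Plancherel computation with the pointwise bound $e^{-2t\vec p_k^2}\le 1$ is precisely the unpacked version of that two-step reduction. The density remark is a reasonable extra clarification but not a substantive difference.
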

\begin{proof}
Fix $k \ge 1$ and let $t \ge 0.$ Note that for any $1 \le i_1, \ldots, i_k \le 3,$
\be
\| e^{t \triangle^{(k)}} u_{i_1, \ldots, i_k} \|_{L^2 (\mathbb{R}^{3 k})} \le \| u_{i_1, \ldots, i_k} \|_{L^2 (\mathbb{R}^{3 k})},
\ee
it follows that $\mathcal{T}^{(k)} (t)$ is a contraction on $\mathbb{L}^2_{(k)}.$ Since $e^{t \triangle^{(k)}}$ commutates with $S_j = (1 - \triangle_j)^\frac{1}{2}$ for all $1 \le j \le k,$ it follows that $\mathcal{T}^{(k)} (t)$ is a contraction on $\mathbb{H}^\alpha_{(k)}$  for every $\alpha \in \mathbb{R}.$
\end{proof}

By Propositions \ref{prop:Wdfwell} and \ref{prop:Tcontra}, for $k \ge 1$ and $t \ge s > 0$ one has
\be
\mathcal{T}^{(k)} ( t-s) W^{(k)} u^{(k+1)} (s) \in \mathbb{H}^{-1}_{(k)}
\ee
if $u^{(k+1)}(s) \in \mathbb{H}^1_{(k+1)}.$ Thus, when $t \mapsto u^{(k)} (t)$ is strongly measurable on $(0, T)$ with values in $\mathbb{H}^1_{(k)}$ for $k \ge 1,$ the equality \eqref{eq:NShierarchyIntEqua} can be expressed as that for each $t > 0,$ there is a representation of
\be
\int^t_0 d s \mathcal{T}^{(k)} ( t-s) W^{(k)} u^{(k+1)} (s)
\ee
which lies in $\mathbb{H}^1_{(k)}$ and such that \eqref{eq:NShierarchyIntEqua} holds in the sense of distributions, due to the fact that both $u^{(k)} (t)$ and $\mathcal{T}^{(k)} (t) u^{(k)}_0$ are in $\mathbb{H}^1_{(k)}$ for $t > 0,$ where $u^{(k)}_0 \in \mathbb{L}^2_{(k)}.$

\begin{defi}\label{df:NShierarchyMildsolution}
Let $0 < T \le \8.$ Suppose $\mathfrak{U}_0 = (u^{(k)}_0)_{k \ge 1} \in \prod_{k \ge 1} \mathfrak{L}^2_{(k)} (\mathbb{R}^3)$ satisfies the divergence-free condition \eqref{eq:NShierarchyIVdivfree} for all $k \ge 1.$ A mild solution on $[0, T)$ to the integral Navier-Stokes hierarchy \eqref{eq:NShierarchyIntEqua} with the prescribed initial condition $\mathfrak{U}_0$ is defined as a sequence of strongly measurable functions $u^{(k)} (t)$ on $(0, T)$ with values in $\mathfrak{H}^1_{(k)} (\mathbb{R}^3)$ for $k \ge 1,$ satisfying the following conditions:
\begin{enumerate}[{\rm $1)$}]


\item The integral equation \eqref{eq:NShierarchyIntEqua} holds for all $t \in (0, T )$ in the sense of distributions.

\item The sequences $(u^{(k)}(t))_{k \ge 1}$ are consistent for every $t \in (0, T).$

\item For every $k \ge 1,$ $u^{(k)} \in L^\8 \big ((0, T); \mathbb{L}^2_{(k)} (\mathbb{R}^3) \big )$ and
\be
\lim_{t \to 0} u^{(k)} (t) = u^{(k)}_0
\ee
in the weak topology of $\mathbb{L}^2_{(k)} (\mathbb{R}^3).$

\end{enumerate}

When $T = \8,$ the solution is called a global mild solution.
\end{defi}

Next, we show the equivalence between a weak solution to the Cauchy problem of the Navier-Stokes hierarchy \eqref{eq:NSEhierarchyW} and a mild solution to the integral Navier-Stokes hierarchy \eqref{eq:NShierarchyIntEqua}.

\begin{prop}\label{prop:solutionequivalence}
Let $u^{(k)}_0 \in \mathfrak{L}^2_{(k)} (\mathbb{R}^3)$ satisfying \eqref{eq:NShierarchyIVdivfree} for all $k \ge 1.$ A sequence of strongly measurable functions $u^{(k)} (t)$ on $(0, T)$ with values in $\mathfrak{H}^1_{(k)} (\mathbb{R}^3)$ for $k \ge 1$ is a weak solution to the Cauchy problem of the Navier-Stokes hierarchy \eqref{eq:NSEhierarchyW} with the initial data $(u^{(k)} (0))_{k \ge 1} = (u^{(k)}_0)_{k \ge 1},$ if and only if it is a mild solution to the integral Navier-Stokes hierarchy \eqref{eq:NShierarchyIntEqua} with the prescribed initial condition $(u^{(k)}_0)_{k \ge 1}.$
\end{prop}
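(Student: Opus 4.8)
\emph{Strategy.} The plan is to fix $k\ge 1$ and establish the two implications separately, using throughout that $\mathcal T^{(k)}(t)=e^{t\triangle^{(k)}}$ is a self-adjoint contraction semigroup on every $\mathbb H^\alpha_{(k)}$ (Proposition~\ref{prop:Tcontra}) with generator $\triangle^{(k)}$, commuting with $\triangle^{(k)}$ and with each $\partial_{x^\ell_j}$, and that $W^{(k)}\colon\mathbb H^1_{(k+1)}\to\mathbb H^{-1}_{(k)}$ is bounded (Proposition~\ref{prop:Wdfwell}). The algebraic backbone of both directions is the identity $\partial_s\big(\mathcal T^{(k)}(a-s)\phi(s)\big)=\mathcal T^{(k)}(a-s)\big(\partial_s\phi(s)+\triangle^{(k)}\phi(s)\big)$, valid for $s<a$, which turns the heat operator $\partial_s-\triangle^{(k)}$ into a total $s$-derivative after conjugating by the backward semigroup; this is the usual device by which one passes between a parabolic equation and its Duhamel form.

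\emph{From mild to weak.} Assume \eqref{eq:NShierarchyIntEqua} holds in the sense of distributions for all $t\in(0,T)$, together with consistency and the $L^\infty$/initial-value conditions of Definition~\ref{df:NShierarchyMildsolution}. Let $\phi\in\mathcal D_{(k)}((0,T)\times\mathbb R^3)$ be divergence free in the sense of \eqref{eq:TextfunctDivfreeCondit}. I would insert the Duhamel representation of $u^{(k)}(t)$ into $\int_0^T\langle\partial_t\phi+\triangle^{(k)}\phi,u^{(k)}\rangle\,dt$. For the free part, self-adjointness of $\mathcal T^{(k)}$ and the backbone identity give $\langle\partial_t\phi+\triangle^{(k)}\phi,\mathcal T^{(k)}(t)u^{(k)}_0\rangle=\partial_t\langle\mathcal T^{(k)}(t)\phi(t),u^{(k)}_0\rangle$, whose integral over $(0,T)$ vanishes because $\phi$ is compactly supported in time. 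For the Duhamel part, I would interchange the $t$- and $s$-integrations (by the absolute convergence built into the definition of the integral in \eqref{eq:NShierarchyIntEqua}), move $\mathcal T^{(k)}(t-s)$ onto $\phi$, and use the backbone identity with $a=s$ to recognize the $t$-integrand as $\partial_t\langle\mathcal T^{(k)}(t-s)\phi(t),W^{(k)}u^{(k+1)}(s)\rangle$; integrating in $t$ from $s$ to $T$ and using $\phi(T)=0$, $\mathcal T^{(k)}(0)=\mathrm{id}$ leaves exactly $-\int_0^T\langle\phi(s),W^{(k)}u^{(k+1)}(s)\rangle\,ds$. Adding the two contributions yields \eqref{eq:NShierarchyweaksolution}. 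There remains the pointwise divergence-free condition \eqref{eq:NShierarchyDivfreeCondit}: applying $\sum_\ell\partial_{x^\ell_j}$ to \eqref{eq:NShierarchyIntEqua} and using commutation with $\mathcal T^{(k)}$, the free term drops by \eqref{eq:NShierarchyIVdivfree}, and $W_{j,k+1}u^{(k+1)}$ is already divergence free in $x_j$ because its range lies in that of the Leray projection $P_j$; the divergences in the remaining variables $x_{j'}$, $j'\ne j$, are the residual point and are handled through consistency together with the self-consistency of the hierarchy (an auxiliary-hierarchy argument for the quantities $\nabla_{j'}\cdot u^{(k)}$, whose initial data vanish).

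\emph{From weak to mild.} Conversely, let $(u^{(k)})_{k\ge1}$ be a weak solution of the Cauchy problem. Fix $t_*\in(0,T)$ and a divergence-free $\psi^{(k)}\in\mathcal S_{(k)}(\mathbb R^3)$. I would test \eqref{eq:NShierarchyweaksolution} --- first extended from $\mathcal D_{(k)}$ to Schwartz test functions, which is legitimate since $u^{(k)}\in L^\infty((0,T);\mathbb L^2_{(k)})$ and $W^{(k)}u^{(k+1)}(s)\in\mathbb H^{-1}_{(k)}$ --- against $\phi(s,x)=\theta_\epsilon(s)\,\big(\mathcal T^{(k)}(t_*-s)\psi^{(k)}\big)(x)$, where $\theta_\epsilon\in\mathcal D((0,T))$ rises from $0$ to $1$ on $(\epsilon/2,\epsilon)$ and falls back on $(t_*-\epsilon,t_*-\epsilon/2)$; this $\phi$ is divergence free because $\mathcal T^{(k)}$ preserves divergence-free fields. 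By the backbone identity the $\triangle^{(k)}$ contribution cancels, leaving $\int_0^T\theta'_\epsilon(s)\langle\mathcal T^{(k)}(t_*-s)\psi^{(k)},u^{(k)}(s)\rangle\,ds+\int_0^T\theta_\epsilon(s)\langle\mathcal T^{(k)}(t_*-s)\psi^{(k)},W^{(k)}u^{(k+1)}(s)\rangle\,ds=0$. Since $\theta'_\epsilon$ is an approximate unit near $0$ minus one near $t_*$, and $s\mapsto\langle\mathcal T^{(k)}(t_*-s)\psi^{(k)},u^{(k)}(s)\rangle$ is bounded (by $u^{(k)}\in L^\infty(\mathbb L^2_{(k)})$) and, near $t_*$, within $o(1)$ of the $t_*$-independent bounded function $s\mapsto\langle\psi^{(k)},u^{(k)}(s)\rangle$, letting $\epsilon\to0$ sends the first integral to $\langle\mathcal T^{(k)}(t_*)\psi^{(k)},u^{(k)}_0\rangle-\langle\psi^{(k)},u^{(k)}(t_*)\rangle$ at every Lebesgue point $t_*$ of that function (using $u^{(k)}(s)\rightharpoonup u^{(k)}_0$ as $s\to0$), while the second converges to $\int_0^{t_*}\langle\mathcal T^{(k)}(t_*-s)\psi^{(k)},W^{(k)}u^{(k+1)}(s)\rangle\,ds$. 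This is \eqref{eq:NShierarchyIntEqua} tested against divergence-free $\psi^{(k)}$ for a.e. $t_*$; since $u^{(k)}(t)$, $u^{(k)}_0$ and $W^{(k)}u^{(k+1)}(s)$ are all divergence free in every variable (the last because $u^{(k+1)}$ is, being a weak solution), all three pairings are unchanged upon replacing $\psi^{(k)}$ by its projection onto fields divergence free in every variable, and the identity extends to arbitrary test functions by density; finally, after modifying $u^{(k)}$ on the null set of non-Lebesgue points --- which changes nothing, solutions being defined up to null sets --- one obtains \eqref{eq:NShierarchyIntEqua} for all $t\in(0,T)$. Consistency, the $L^\infty$ bound and the initial condition are common to both definitions, so $(u^{(k)})_{k\ge1}$ is a mild solution.

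\emph{Main obstacle.} The semigroup algebra is routine; the real work is twofold. First, one must justify the interchanges of limits and integrals, in particular that $\int_0^t\mathcal T^{(k)}(t-s)W^{(k)}u^{(k+1)}(s)\,ds$ converges absolutely in $\mathbb H^{-1}_{(k)}$ and represents an $\mathbb H^1_{(k)}$-valued function --- this rests on combining the contraction/smoothing of $\mathcal T^{(k)}$ with the merely strongly measurable, $\mathbb H^1_{(k+1)}$-valued map $s\mapsto u^{(k+1)}(s)$, i.e. on the time-integrability implicit in the solution classes. Second, and more delicate, is propagating the pointwise divergence-free conditions \eqref{eq:NShierarchyDivfreeCondit} from the datum through the nonlinearity in the mild-to-weak direction: the Leray structure of $W_{j,k+1}$ disposes of the $j$-th variable automatically, but the other variables force one to use consistency together with the self-referential nature of the hierarchy, which is where the argument has to be most careful.
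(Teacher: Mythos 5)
Your proposal follows essentially the same route as the paper's proof: both pass between the PDE form \eqref{eq:NSEhierarchyW} and the Duhamel form \eqref{eq:NShierarchyIntEqua} by conjugating with the heat semigroup (your ``backbone identity''). The paper, however, is considerably terser. It introduces the Duhamel map $F(u^{(k)}) = \mathcal{T}^{(k)}(t)u^{(k)}_0 + \int_0^t \mathcal{T}^{(k)}(t-s)W^{(k)}u^{(k+1)}(s)\,ds$ and argues in the weak-to-mild direction that $u^{(k)}-F(u^{(k)})$ solves the free heat equation with vanishing initial data, hence is zero by the standard argument from \cite{LR2002}; in the mild-to-weak direction it simply differentiates the Duhamel formula to recover $\partial_t F(u^{(k)}) = \triangle^{(k)}F(u^{(k)}) + W^{(k)}u^{(k+1)}$. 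Your explicit cutoff $\theta_\epsilon$ and back-propagated test functions $\mathcal{T}^{(k)}(t_*-s)\psi^{(k)}$ are a detailed unpacking of exactly this device; you are not doing anything structurally different, just carrying out by hand what the paper delegates to the citation.

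One issue you raise deserves comment. You correctly note that in the mild-to-weak direction one must separately verify the pointwise divergence-free conditions \eqref{eq:NShierarchyDivfreeCondit}, which appear in Definition~\ref{df:NShierarchyweaksolution} but not in Definition~\ref{df:NShierarchyMildsolution}. The paper's proof does not address this at all: it passes directly from $u^{(k)}=F(u^{(k)})$ to the conclusion ``it follows that $(u^{(k)}(t))_{k\ge1}$ is a weak solution.'' Your proposed repair --- applying $\sum_\ell\partial_{x^\ell_{j'}}$ to \eqref{eq:NShierarchyIntEqua}, observing that $W_{j',k+1}$ is annihilated (Leray projection) and that for $j\ne j'$ one picks up $\nabla_{j'}\cdot u^{(k+1)}$, so that the quantities $v^{(k)}_{j'}:=\nabla_{j'}\cdot u^{(k)}$ satisfy a homogeneous linear hierarchy with zero data --- is the natural idea, but you leave it as a sketch, and closing it requires a uniqueness statement for that auxiliary hierarchy. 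In the paper's logical order, the only available uniqueness result is Theorem~\ref{th:unique}, which comes after Proposition~\ref{prop:solutionequivalence} and itself invokes this proposition through Corollary~\ref{cor:NES=NEH}, so one must take care not to argue circularly. To be fair, this is a gap the paper's own one-sentence proof shares; for factorized solutions $(u^{\otimes k})$, which is the case actually used in Corollary~\ref{cor:NES=NEH}, the divergence-free property is inherited directly from $\nabla\cdot u=0$ and the point is moot, but for the proposition as stated a cleaner argument would be desirable.
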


\begin{proof}
The proof is straightforward by using the standard argument (cf. \cite[Theorem 11.2]{LR2002}). We include the details for the sake of convenience. Let $0 < T \le \8.$ Suppose $\mathfrak{U}_0 = (u^{(k)}_0)_{k \ge 1} \in \prod_{k \ge 1} \mathfrak{L}^2_{(k)} (\mathbb{R}^3)$ satisfying the divergence-free condition \eqref{eq:NShierarchyIVdivfree} for all $k \ge 1.$ At first, we assume that $(u^{(k)} (t))_{k \ge 1}$ is a weak solution on $[0, T)$ to the Cauchy problem for \eqref{eq:NSEhierarchyW} with the initial datum $\mathfrak{U}_0.$ By Propositions \ref{prop:Wdfwell} and \ref{prop:Tcontra}, for $k \ge 1$ and for $t \in (0,T)$ one has
\be
s \mapsto \mathcal{T}^{(k)} ( t-s) W^{(k)} u^{(k+1)} (s)
\ee
is strongly measurable on $(0, t)$ with values in $\mathbb{H}^{-1}_{(k)},$ since $s \mapsto u^{(k+1)} (s)$ is strongly measurable on $(0, T)$ with values in $\mathbb{H}^1_{(k+1)}.$ We put
\be
F ( u^{(k)}) = \mathcal{T}^{(k)} (t) u^{(k)}_0 + \int^t_0 d s \mathcal{T}^{(k)} ( t-s) W^{(k)} u^{(k+1)} (s)
\ee
in the sense of distributions. Then we have
\be
\partial_t (u^{(k)} - F ( u^{(k)})) = \triangle^{(k)} (u^{(k)} - F ( u^{(k)}) )
\ee
with $\lim_{t \to 0} (u^{(k)} - F ( u^{(k)}) ) =0$ in the sense of distributions. By the standard argument (cf. \cite[p. 113]{LR2002}), we conclude that $u^{(k)} = F ( u^{(k)})$ and, therefore $(u^{(k)} (t))_{k \ge 1}$ is a mild solution to \eqref{eq:NShierarchyIntEqua} with the prescribed initial condition $\mathfrak{U}_0.$

Conversely, let $(u^{(k)} (t))_{k \ge 1}$ be a mild solution on $[0, T)$ to \eqref{eq:NShierarchyIntEqua} with the prescribed initial condition $\mathfrak{U}_0.$ For any $k \ge 1$ we have
\be
\partial_t F ( u^{(k)}) = \triangle^{(k)} F ( u^{(k)}) + W^{(k)} u^{(k+1)} (t)
\ee
in the sense of distributions. Since $u^{(k)} = F ( u^{(k)})$ for all $k \ge 1,$ it follows that $(u^{(k)} (t))_{k \ge 1}$ is a weak solution to \eqref{eq:NSEhierarchyW} with the initial data $\mathfrak{U}_0.$
\end{proof}

\section{Uniqueness and equivalence for the Navier-Stokes hierarchy}\label{Unique}

The following is a uniqueness theorem for the Navier-Stokes hierarchy.

\begin{thm}\label{th:unique}
Let $T>0.$ Assume that $\mathfrak{U}_0 = (u^{(k)}_0)_{k \ge 1} \in \prod_{k \ge 1} \mathfrak{H}^1_{(k)}$ is consistent and for every $k \ge 1,$ $u^{(k)}_0 $ satisfies the divergence-free condition \eqref{eq:NShierarchyIVdivfree} and
\beq\label{eq:Initialvaluebound}
\| u^{(k)}_0 \|_{\mathbb{H}^1_{(k)}} \le C^k
\eeq
where $C>0$ is a constant independent of $k.$ Then the integral Navier-Stokes hierarchy 
\be
u^{(k)} (t) = \mathcal{T}^{(k)} (t) u^{(k)}_0 + \int^t_0 \mathcal{T}^{(k)} ( t-s) W^{(k)} u^{(k+1)} (s) d s,\quad \forall k \ge 1,
\ee
has at most one mild solution $\mathfrak{U}(t) = (u^{(k)} (t))_{k \ge 1}$ in $[0,T)$ with $\mathfrak{U}(0) = \mathfrak{U}_0$ such that for every $k \ge 1,$ $ u^{(k)} \in L^{\8} ([0, T), \mathfrak{H}^1_{(k)})$ and satisfies the bound
\beq\label{eq:SolutionL2bound}
\| u^{(k)} \|_{L^\8([0,T), \mathbb{H}^1_{(k)})} \le C^k.
\eeq
\end{thm}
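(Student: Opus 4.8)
The plan is to run the classical linearity-plus-iteration uniqueness scheme for infinite hierarchies, but to iterate through the binary-tree expansion \eqref{eq:NSEhierarchyTreeExpanssionIntro} rather than the raw Duhamel series \eqref{eq:DuhamelExpression}, whose $\sim n!$ summands cannot be controlled. First I would take two mild solutions $\mathfrak{U}=(u^{(k)})_{k\ge1}$ and $\mathfrak{V}=(v^{(k)})_{k\ge1}$ on $[0,T)$ with $\mathfrak{U}(0)=\mathfrak{V}(0)=\mathfrak{U}_0$, both obeying \eqref{eq:SolutionL2bound}, and set $w^{(k)}=u^{(k)}-v^{(k)}$. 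By linearity of \eqref{eq:NShierarchyIntEqua} the driving terms $\mathcal{T}^{(k)}(t)u^{(k)}_0$ cancel, so $(w^{(k)})_{k\ge1}$ solves the \emph{homogeneous} integral hierarchy
\beqn
w^{(k)}(t)=\int_0^t \mathcal{T}^{(k)}(t-s)\,W^{(k)}w^{(k+1)}(s)\,ds,\qquad k\ge1,
\eeqn
with $w^{(k)}\in L^\infty([0,T),\mathfrak{H}^1_{(k)})$, $\|w^{(k)}\|_{L^\infty([0,T),\mathbb{H}^1_{(k)})}\le 2C^k$, and with each $w^{(k)}(t)$ still symmetric and divergence-free (both being linear constraints). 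The target becomes $w^{(k)}\equiv0$ for every $k$.

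Next I would iterate the expansion \eqref{eq:NSEhierarchyTreeExpanssionIntro} applied to $(w^{(k)})$: since $w^{(k)}(0)=0$, every term $e^{t\triangle^{(k)}}w^{(k)}_0$ and $C_{\mathbb{T},t}w^{(k+j)}_0$ vanishes, leaving, for each $n\ge1$ and $t\in[0,T)$,
\beqn
w^{(k)}(t)=-\sum_{\mathbb{T}\in\mathfrak{T}_{n+1,k}}\int_0^t R_{\mathbb{T},t-s}\,w^{(k+n+1)}(s)\,ds.
\eeqn
This is the whole point of the tree reformulation of Sections \ref{GraphCollisionoperator}--\ref{GraphicNShierachy}: after $n$ iterations there are only $|\mathfrak{T}_{n+1,k}|\lesssim C_1^{n+1}$ summands in place of $\sim n!$. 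Taking $\mathbb{H}^1_{(k)}$ norms and feeding in (i) this combinatorial bound, (ii) the a priori space-time estimate for the tree operators to be proved in Section~\ref{SpacetimeEstimate}, of the schematic form
\beqn
\sup_{\mathbb{T}\in\mathfrak{T}_{n+1,k}}\ \int_0^t\|R_{\mathbb{T},t-s}\|_{\mathbb{H}^1_{(k+n+1)}\to\mathbb{H}^1_{(k)}}\,ds\ \le\ (C_2\,t^{\gamma})^{n+1}
\eeqn
for constants $C_2,\gamma>0$, and (iii) the hypothesis $\|w^{(k+n+1)}(s)\|_{\mathbb{H}^1_{(k+n+1)}}\le 2C^{k+n+1}$, I would obtain
\beqn
\|w^{(k)}(t)\|_{\mathbb{H}^1_{(k)}}\ \le\ 2C^{k}(C_1 C_2 C\,t^{\gamma})^{n+1}.
\eeqn

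For $0<t<t_0:=(C_1C_2C)^{-1/\gamma}$ the right-hand side tends to $0$ as $n\to\infty$, forcing $w^{(k)}\equiv0$ on $[0,t_0)$ for every $k$, i.e. $\mathfrak{U}=\mathfrak{V}$ on $[0,t_0)$. As $t_0$ is independent of $n$ (and, in this scheme, of $k$), one then extends to all of $[0,T)$ by finitely many repetitions: the shifted sequences $u^{(k)}(\cdot+t_0)$ and $v^{(k)}(\cdot+t_0)$ are again mild solutions on $[0,T-t_0)$ (Remark~\ref{rk:NShinvariance}) sharing the consistent, divergence-free initial datum $(u^{(k)}(t_0))_{k\ge1}$, which still satisfies \eqref{eq:Initialvaluebound}; rerunning the argument gives $\mathfrak{U}=\mathfrak{V}$ on $[t_0,2t_0)$, and iterating exhausts $[0,T)$.

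The reduction and the continuation are bookkeeping; the step I expect to be the genuine obstacle is the space-time estimate (ii) on $R_{\mathbb{T},t-s}$. One needs not merely a bound but one whose constant $C_2$ and exponent $\gamma$ are \emph{uniform} over all $\mathbb{T}\in\mathfrak{T}_{n+1,k}$, and whose overall $n$-dependence is exponential rather than factorial, so that it can be matched against the $C_1^{n+1}$ tree count and still produce an $n$-independent threshold $t_0$. As flagged in the introduction, $R_{\mathbb{T},t}$ is a multi-parameter singular integral operator from tensor-product spaces to one-parameter spaces that does not fit existing multi-parameter or multilinear Calder\'on--Zygmund theory, so this bound must be extracted by hand from the explicit Fourier kernels --- integrating out the momentum $\delta$-functions attached to the collisions, then estimating the resulting parametrized rational integrals. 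That estimate, together with the combinatorial identity of Sections~\ref{GraphCollisionoperator}--\ref{GraphicNShierachy} producing the tree expansion in the first place, is the technical core on which this theorem rests, and is carried out in Section~\ref{SpacetimeEstimate}.
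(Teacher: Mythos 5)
Your skeleton is the right one and matches the paper's in outline: rewrite the Duhamel tail for $w^{(k)}=u^{(k)}-v^{(k)}$ through the tree expansion \eqref{eq:NSEhierarchyTreeExpanssion}, observe that the $e^{t\triangle^{(k)}}w^{(k)}_0$ and $C_{\mathbb{T},t}w^{(k+j)}_0$ terms vanish because $w^{(k)}_0=0$, bound the remaining remainder term using the a priori space-time estimate of Section~\ref{SpacetimeEstimate} and the combinatorial count $|\mathfrak{T}_{n,k}|\le 2^{3n+k}$, let $n\to\infty$ on a short time interval, and then iterate in time using Remark~\ref{rk:NShinvariance}. What goes wrong is step~(ii): you posit a \emph{strong operator-norm} bound
$\int_0^t\|R_{\mathbb{T},t-s}\|_{\mathbb{H}^1_{(k+n+1)}\to\mathbb{H}^1_{(k)}}\,ds\le(C_2 t^\gamma)^{n+1}$
and then take $\mathbb{H}^1_{(k)}$-norms of the remainder. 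No such bound is proved in the paper, and it is not available at this regularity: the only genuine operator-norm estimates the paper establishes for $R_{\mathbb{T},t}$ are Propositions~\ref{prop:Collisionspace-timeSobolevbound} and~\ref{prop:Erroroperatorspace-timeSobolevbound}, which map $\mathbb{H}^\alpha_{(n+k)}\to\mathbb{H}^\beta_{(k)}$ only under the loss-of-regularity conditions $\alpha>3/2$ and $\beta<-3/2$; the solution is merely in $\mathbb{H}^1_{(n+k)}$, which sits strictly below the threshold $\alpha>3/2$, and certainly the output is not bounded back into $\mathbb{H}^1_{(k)}$ (not even into $\mathbb{H}^{-1}_{(k)}$).

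What the paper does instead, and what you need here, is the \emph{weak, tested} estimate of Theorem~\ref{th:Collisionspace-timebounderror}: for a fixed $v^{(k)}\in\mathcal{S}_{(k)}(\mathbb{R}^3)$ obeying the pointwise Fourier decay~\eqref{eq:Testfunctbound} with constant $M$,
$\big|\big\langle v^{(k)},\,R_{\mathbb{T},t}u^{(n+k)}\big\rangle_{\mathbb{L}^2_{(k)}}\big|\le M\,C^n\,t^{n/2-1}\,\|u^{(n+k)}\|_{\mathbb{H}^1_{(n+k)}}$
for $0<t\le1$. The constant $M$ is tied to the test function, not to any dual Sobolev norm, so this does not upgrade to an $\mathbb{H}^1\to\mathbb{H}^1$ or even $\mathbb{H}^1\to\mathbb{H}^{-1}$ operator bound. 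One therefore tests the remainder identity against each such $v^{(k)}$, combines $|\mathfrak{T}_{n,k}|\le 2^{3n+k}$, $\int_0^t(t-s)^{n/2-1}ds\lesssim t^{n/2}$, and the a priori bound $\|w^{(k+n)}(s)\|_{\mathbb{H}^1_{(k+n)}}\le 2C^{k+n}$ to get $\big|\langle v^{(k)},w^{(k)}(t)\rangle\big|\le M\,(C_0 t^{1/2})^{n}$ up to a fixed constant, which forces $\langle v^{(k)},w^{(k)}(t)\rangle=0$ for $t$ below a threshold independent of $n$. Since such $v^{(k)}$ are dense and $w^{(k)}(t)\in\mathbb{H}^1_{(k)}$, this yields $w^{(k)}(t)=0$; the time iteration then proceeds as you describe. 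Your proposal should be corrected to frame the key inequality as a tested estimate against Schwartz functions satisfying~\eqref{eq:Testfunctbound} and to conclude by distributional vanishing, rather than by a norm contraction that is not established (and, by the threshold $\alpha>3/2$ in Proposition~\ref{prop:Erroroperatorspace-timeSobolevbound}, should not be expected) at the $\mathbb{H}^1$ level.
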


As in \cite{Chen2013}, we define
\be
\mathfrak{H}^1_{(\8)} = \Big \{ (u^{(k)})_{k \ge 1} \in \prod_{k \ge 1} \mathfrak{H}^1_{(k)} :\; \exists \lambda >0,\; \sum_{k \ge 1} \frac{1}{\lambda^k} \| u^{(k)} \|_{\mathbb{H}^1_{(k)}} < \8 \Big \}
\ee
equipped with
\be
\| (u^{(k)})_{k \ge 1} \|_{\mathfrak{H}^1_{(\8)}} = \inf \Big \{\lambda >0:\; \sum_{k \ge 1} \frac{1}{\lambda^k} \| u^{(k)} \|_{\mathbb{H}^1_{(k)}} \le 1 \Big \}.
\ee
Note that $\| \cdot \|_{\mathfrak{H}^1_{(\8)}}$ is not actually a norm but a $\rm (F)$-norm in $\mathfrak{H}^1_{(\8)}.$ Thus, $\mathfrak{H}^1_{(\8)}$ is a $F$-space (cf. \cite[Chapter II]{DS1964}).

For any $u \in \mathbb{H}^1 (\mathbb{R}^3),$ it is easy to check that $(u^{\otimes^k})_{k \ge 1} \in \mathfrak{H}^1_{(\8)}$ and
\be
\| (u^{\otimes^k})_{k \ge 1} \|_{\mathfrak{H}^1_{(\8)}} = 2 \| u \|_{\mathbb{H}^1}.
\ee
Namely, $\| \mathfrak{U} \|_{\mathfrak{H}^1_{(\8)}}$ is compatible with the Sobolev norm $\| u \|_{\mathbb{H}^1}$ for factorized hierarchies $\mathfrak{U} = (u^{\otimes^k})_{k \ge 1}$ with $u \in \mathbb{H}^1.$

\begin{rk}\label{rk:F-space}\rm
By using this $F$-space of Sobolev-type, Theorem \ref{th:unique} can be reformulated as follows: {\it The integral Navier-Stokes hierarchy \eqref{eq:NShierarchyIntEqua} has uniqueness of mild solutions in $\mathfrak{H}^1_{(\8)}.$}
\end{rk}

Combining Theorem \ref{th:unique} and Proposition \ref{prop:solutionequivalence} yields the following result.

\begin{cor}\label{cor:NES=NEH}
Let $u_0 \in \mathbb{H}^1 (\mathbb{R}^3)$ such that $\nabla \cdot u_0 =0.$ Let $u (t)$ be the unique (mild) solution in $C ([0, T^*), \mathbb{H}^1 (\mathbb{R}^3))$ for the Navier-Stokes equation \eqref{eq:NSEpressure-free} with the initial datum $u(0) = u_0,$ where $T^*$ is the maximal life-time of $u(t).$ Let $u^{(k)} (t) = u(t)^{\otimes^k}$ for every $k \ge 1.$ Then $\mathfrak{U} (t) = (u^{(k)} (t) )_{k \ge 1}$ is a unique weak solution in $\mathfrak{H}^1_{(\8)}$ for the Cauchy problem of the Navier-Stokes hierarchy \eqref{eq:NSEhierarchyW} on $[0, T^*)$ with the initial datum $\mathfrak{U} (0) = (u^{\otimes^k}_0)_{k \ge 1}.$
\end{cor}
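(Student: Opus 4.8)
The corollary asserts two things: (1) if $u(t)$ is the unique mild solution of \eqref{eq:NSEpressure-free} in $C([0,T^*),\mathbb{H}^1)$, then $\mathfrak{U}(t)=(u(t)^{\otimes^k})_{k\ge1}$ is a weak solution of the Navier-Stokes hierarchy on $[0,T^*)$ with initial datum $(u_0^{\otimes^k})_{k\ge1}$; and (2) this weak solution is the unique one in $\mathfrak{H}^1_{(\infty)}$. The first claim is a direct computation; the second is a reduction to Theorem \ref{th:unique}. The plan is to treat these separately.

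For claim (1), I would first verify that $\mathfrak{U}(t)\in\mathfrak{H}^1_{(\infty)}$ for each $t$: since $u\in C([0,T^*),\mathbb{H}^1)$, the identity $\|(u(t)^{\otimes^k})_{k\ge1}\|_{\mathfrak{H}^1_{(\infty)}}=2\|u(t)\|_{\mathbb{H}^1}$ (established just above in the excerpt) gives this, with a locally bounded $F$-norm; in particular each $u^{(k)}(t)=u(t)^{\otimes^k}$ is strongly measurable (indeed continuous) with values in $\mathfrak{H}^1_{(k)}$. Next, membership in $L^\infty((0,T);\mathbb{L}^2_{(k)})$ and the weak-limit initial condition $u^{(k)}(t)\to u_0^{\otimes^k}$ follow from continuity of $t\mapsto u(t)$ in $\mathbb{H}^1$ (hence in $\mathbb{L}^2$) and multiplicativity of the tensor power. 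The divergence-free conditions \eqref{eq:NShierarchyDivfreeCondit} and \eqref{eq:NShierarchyIVdivfree} hold because $\nabla\cdot u(t)=0$ for all $t$ (and $\nabla\cdot u_0=0$), applied slotwise. Consistency of $(u^{(k)}(t))_{k\ge1}$ is exactly Remark \ref{rk:ConsistentCondition} applied to $u(t)$. The main content is the integral equation: I would start from the mild (Duhamel) formulation of \eqref{eq:NSEpressure-free},
\[
u(t)=e^{t\triangle}u_0+\int_0^t e^{(t-s)\triangle}W(u(s)\otimes u(s))\,ds,
\]
take the $k$-fold tensor power, and use the Leibniz-type identity for $\triangle^{(k)}$ acting on products together with the definition \eqref{eq:Woperator} of $W^{(k)}$ — namely that $W^{(k)}$ applied to a factorized tensor $v^{(1)}\otimes\cdots\otimes v^{(1)}$ contracts the $(k+1)$-th slot against each of the first $k$ via a $\delta$-function and reproduces $\sum_{j=1}^k v\otimes\cdots\otimes W(v\otimes v)_{(j)}\otimes\cdots\otimes v$. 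Combining this with the semigroup identity $e^{t\triangle^{(k)}}=(e^{t\triangle})^{\otimes k}$ yields \eqref{eq:NShierarchyIntEqua} for $u^{(k)}(t)=u(t)^{\otimes^k}$, in the sense of distributions (pairing against test tensors and moving the $\delta$-contraction onto the smooth test function, as in the definition of $W^{(k)}$). By Proposition \ref{prop:solutionequivalence} this mild solution is then also a weak solution of \eqref{eq:NSEhierarchyW}.

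For claim (2), I would invoke Theorem \ref{th:unique} on every compact subinterval $[0,T]\subset[0,T^*)$. The hypotheses are met: $\mathfrak{U}_0=(u_0^{\otimes^k})_{k\ge1}$ is consistent (Remark \ref{rk:ConsistentCondition}), each $u_0^{\otimes^k}$ is divergence-free, and $\|u_0^{\otimes^k}\|_{\mathbb{H}^1_{(k)}}=\|u_0\|_{\mathbb{H}^1}^k\le C^k$ with $C=\max(1,\|u_0\|_{\mathbb{H}^1})$. For the a priori bound \eqref{eq:SolutionL2bound} on a candidate solution: a mild solution $\mathfrak{V}(t)=(v^{(k)}(t))_{k\ge1}$ lying in $\mathfrak{H}^1_{(\infty)}$ (in the sense of Remark \ref{rk:F-space}) satisfies, for some $\lambda>0$ and all $t\in[0,T]$, $\sum_k\lambda^{-k}\|v^{(k)}(t)\|_{\mathbb{H}^1_{(k)}}<\infty$; one needs this bound to be uniform on $[0,T]$, which is part of what "in $\mathfrak{H}^1_{(\infty)}$" should mean here — I would state it as: the candidate solution satisfies $\sup_{t\in[0,T]}\|v^{(k)}(t)\|_{\mathbb{H}^1_{(k)}}\le C^k$ for a $C$ independent of $k$. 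Then Theorem \ref{th:unique} forces $\mathfrak{V}(t)=\mathfrak{U}(t)$ on $[0,T]$ for every such $T$, hence on all of $[0,T^*)$. The main obstacle I anticipate is the bookkeeping in claim (1): carefully justifying, at the level of distributions, that $(e^{t\triangle}u_0+\int_0^t\cdots)^{\otimes k}$ equals $e^{t\triangle^{(k)}}u_0^{\otimes k}+\int_0^t e^{(t-s)\triangle^{(k)}}W^{(k)}u^{(k+1)}(s)\,ds$ — this requires expanding the $k$-th power of the Duhamel formula, recognizing the leading and first-order terms, and checking that the higher-order terms telescope correctly once one iterates, or alternatively arguing directly that $t\mapsto u(t)^{\otimes k}$ solves the $k$-th equation given that $u$ solves \eqref{eq:NSEpressure-free} and that the hierarchy's uniqueness (already invoked in the Introduction's discussion and made rigorous by Theorem \ref{th:unique}) pins down the solution. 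I would present the cleaner of these two routes, namely: differentiate $u(t)^{\otimes k}$ in the distributional sense, use $\partial_t u=\triangle u-W(u\otimes u)$ slotwise to get $\partial_t u^{(k)}=\triangle^{(k)}u^{(k)}+W^{(k)}u^{(k+1)}$, then pass to the integral form via Proposition \ref{prop:solutionequivalence}.
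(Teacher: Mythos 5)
Your proposal is correct and takes essentially the same route as the paper, which simply observes that the corollary follows by combining Theorem \ref{th:unique} with Proposition \ref{prop:solutionequivalence}: the verification that $u(t)^{\otimes k}$ solves the hierarchy, which you spell out via slotwise differentiation, is left implicit in the paper (it is only sketched informally in the Introduction), and the uniqueness on $[0,T^*)$ is obtained exactly as you describe, by applying Theorem \ref{th:unique} on compact subintervals. Your point about the precise meaning of ``solution in $\mathfrak{H}^1_{(\infty)}$'' — namely that a uniform geometric bound $\sup_{t\in[0,T]}\|v^{(k)}(t)\|_{\mathbb{H}^1_{(k)}}\le C^k$ on each $[0,T]\subset[0,T^*)$ is what the hypotheses of Theorem \ref{th:unique} actually require — is a genuine clarification of a looseness in the paper's statement, and your preference for the direct differentiation argument over expanding the $k$-fold tensor power of the Duhamel formula is the right call.
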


\begin{rk}\label{rk:NSE-NSHEquivalence}\rm
This corollary shows that the initial problem for the Navier-Stokes hierarchy \eqref{eq:NSEhierarchyW} in $\mathfrak{H}^1_{(\8)}$ with a factorized divergence-free initial datum is equivalent to the Cauchy problem of the Navier-Stokes equation \eqref{eq:NSEpressure-free} in $\mathcal{H}^1.$
\end{rk}

The proof of Theorem \ref{th:unique} is based on Duhamel expansion for the solution to \eqref{eq:NShierarchyIntEqua}. Any solution $( u^{(k)} (t) )_{k \ge 1}$ to \eqref{eq:NShierarchyIntEqua} can be formally expanded in a Duhamel-type series, i.e., for any $k \ge 1,$
\beq\label{eq:DuhamelExpand}
\begin{split}
u^{(k)} (t) = \mathcal{T}^{(k)} (t) u^{(k)}_0 + \sum^{n-1}_{j=1} \int^t_0 d t_1 \int^{t_1}_0 & d t_2 \cdots  \int^{t_{j-1}}_0 d t_j \mathcal{T}^{(k)} ( t- t_1) W^{(k)}\cdots \\
& \times \mathcal{T}^{(k+j-1)} ( t_{j-1}-t_j) W^{(k+j-1)} \mathcal{T}^{(k+j)} ( t_j) u^{(k+j)}_0\\
+ \int^t_0 d t_1 \int^{t_1}_0 & d t_2 \cdots \int^{t_{n-1}}_0 d t_n \mathcal{T}^{(k)} ( t-t_1) W^{(k)} \cdots \\
& \times \mathcal{T}^{(k+n-1)} ( t_{n-1}-t_n ) W^{(k + n - 1)} u^{(k + n)} (t_n) \end{split}
\eeq
for every $n >1,$ with the convention $t_0 =t.$ Note that the terms in the summation contain only the initial data, which are said to be {\it fully expanded}, while the last error term involves the function at intermediate time $t_n.$

We want to estimate the terms on the right hand side of \eqref{eq:DuhamelExpand}, this will be mostly done in Fourier (momentum) space. To this end, we introduce, for a given $n >1$ and $1 \le j, m \le n$ with $j \not= m,$
\beq\label{eq:K+df}
K^+_{j, m} u^{(n)} (\vec{q}_n) = \big ( \mathrm{i} (q_j + q_m)^i u_{i_1,\ldots, i_{m-1}, i, i_{m+1}, \ldots, i_n} (\vec{q}_n) \big )
\eeq
and
\beq\label{eq:K-df}
K^-_{j, m} u^{(n)} (\vec{q}_n) = \Big ( - \mathrm{i} \frac{(q_j + q_m )^{i_j} (q_j + q_m)^{\el} (q_j + q_m )^i}{(q_j + q_m)^2} u_{i_1,\ldots,i_{j-1}, \el, i_{j+1}, \ldots, i_{m-1}, i, i_{m+1}, \ldots, i_n } (\vec{q}_n) \Big ).
\eeq
Recall that $a^i b_i = \sum^3_{i=1} a^i b_i$ and the convention notation $q^2 = |q|^2.$ Moreover, put
\beq\label{eq:Kdf}
K_{j,m} u^{(n)} (\vec{q}_n): = K^+_{j, m} u^{(n)} (\vec{q}_n) + K^-_{j, m} u^{(n)} (\vec{q}_n).
\eeq
Then, $K^\pm_{j,m}$ and $K_{j,m}$ are all linear operators from $\mathcal{S}_{(n)} (\mathbb{R}^3)$ into $\mathcal{S}_{(n-1)} (\mathbb{R}^3)$ for any $n >1.$ Clearly, if $\{j_1, m_1\} \cap \{j_2, m_2 \} = \emptyset$ then $K^\pm_{j_1,m_1}$ commutes with $K^\pm_{j_2,m_2},$ i.e.,
\beq\label{eq:Kcommute}
K^\pm_{j_1,m_1} K^\pm_{j_2,m_2} = K^\pm_{j_2,m_2} K^\pm_{j_1,m_1},\quad K_{j_1,m_1} K_{j_2,m_2} = K_{j_2,m_2} K_{j_1,m_1}.
\eeq

Now, in Fourier space, by \eqref{eq:W_j+} we have
\beq\label{eq:W+moment}
\begin{split}
W^+_{j, k+1} u^{(k+1)} (\vec{p}_k )= & - \mathrm{i} \int d q_{k+1} \big ( p^i_j u_{i_1,\ldots, i_k, i} (p_1, \ldots, p_{j-1}, p_j - q_{k+1}, p_{j+1}, \ldots, p_k, q_{k+1}) \big )\\
= & - \int d \vec{q}_{k+1} \bigg [ \prod^k_{\iota \neq j} \delta (p_\iota - q_\iota) \bigg ] \delta (p_j - q_j - q_{k+1}) K^+_{j, k+1} u^{(k+1)} (\vec{q}_{k+1}).
\end{split}\eeq
Similarly, by \eqref{eq:W_j-} we have
\beq\label{eq:W-moment}
\begin{split}
W^-_{j,k+1} u^{(k+1)} (\vec{p}_k )
= & \mathrm{i} \int d q_{k+1} \Big ( \frac{p^{i_j}_j p^{\el}_j p^i_j}{p_j^2} u_{i_1,\ldots,i_{j-1}, \el, i_{j+1}, \ldots, i_k, i} (p_1, \ldots, p_j - q_{k+1}, \ldots, p_k, q_{k+1}) \Big )\\
= & - \int d \vec{q}_{k+1} \bigg [ \prod^k_{\iota \neq j} \delta (p_\iota - q_\iota) \bigg ] \delta (p_j - q_j - q_{k+1}) K^-_{j, k+1} u^{(k+1)} (\vec{q}_{k+1}).
\end{split}\eeq
Therefore, by \eqref{eq:W_j} and \eqref{eq:Woperatordf}, $W^{(k)}$ acts in momentum space according to
\beq\label{eq:W-MomentumSpace}
\begin{split}
W^{(k)}  u^{(k+1)} (\vec{p}_k) = - \sum^k_{j=1} \int d \vec{q}_{k+1} \bigg [ \prod^k_{\iota \neq j} \delta (p_\iota - q_\iota) \bigg ] \delta (p_j - q_j - q_{k+1}) K_{j, k+1} u^{(k+1)} (\vec{q}_{k+1}).
\end{split}
\eeq
We will apply \eqref{eq:W-MomentumSpace} repeatedly and show that all integrals in \eqref{eq:DuhamelExpand} are absolutely convergent.

\section{Graphic expression for interaction operators}\label{GraphCollisionoperator}

In this section, we will use binary trees (see Section \ref{Binarytree} for details) to represent various terms in the Duhamel expansion \eqref{eq:DuhamelExpand}. Precisely, we will use a forest consisting of finite binary trees to indicate how the initial state evolves as the system undergoes a specific sequence of collision. First of all, we present the collision mapping from an initial state into the final state determined by such a forest.

\begin{defi}\label{df:Gamma}\rm
Given a fixed forest $\mathbb{T} \in \mathfrak{T}_{n, k},$  for every vertex $v \in V (\mathbb{T})$ we denote by $e^a_v$ the mother-edge of the vertex $v,$ by $e^b_v$ the marked daughter-edge, and by $e^c_v$ the unmarked daughter-edge (see Fig.\ref{fig:one-vertex}). Moreover, for each edge $e \in E(\mathbb{T})$ we associate a negative number $\gamma^e<0$ such that
\beq\label{eq:Gammanumbercondition}
\gamma^{e^a_v} = \gamma^{e^b_v} + \gamma^{e^c_v},
\eeq
for any vertex $v \in V(\mathbb{T}).$
\end{defi}

\begin{figure}[[htbp]
\begin{minipage}[t]{12cm}
\vspace{0pt}
 \centering
 \includegraphics[height=6cm,width=10cm]{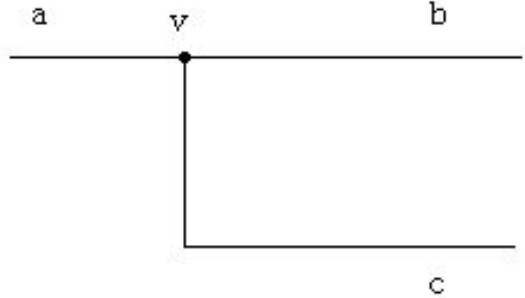}
\caption{\label{fig:one-vertex}\; One vertex with three edges}
\end{minipage}
\end{figure}

\begin{rk}\label{rk:Gamma}\rm
By definition, the values $\gamma^e$ associated with all leaves $e \in L(\mathbb{T})$ uniquely determine all others $\gamma^e$ for $e \in E(\mathbb{T}) \setminus L(\mathbb{T}).$
\end{rk}

\begin{defi}\label{df:granddaughter-edge}\rm
\begin{enumerate}[{\rm 1)}]

\item Given a fixed forest $\mathbb{T} \in \mathfrak{T}_{n,k},$ the {\it granddaughter-edge} $d(e)$ of an edge $e \in E(\mathbb{T})$ is defined as follows: If $e$ is a leaf, then $d(e) =e;$ otherwise, $d(e)$ is defined as the unique leaf $\bar{e}$ such that there is a route from $e$ to $\bar{e}$ on which all edges are marked daughter-edges.

\item Given a labelling $\pi_2$ on the leaves $L(\mathbb{T}),$ for a vertex $v \in V(\mathbb{T})$ we define an operator $K^{\pi_2}_v$ by
\beq\label{eq:K_vdf}\begin{split}
K^{\pi_2}_v & u^{(n+k)} (q_{e_v}, \vec{r}_{n+k}) = \big ( \mathrm{i} (q_{e^b_v} + q_{e^c_v})^i u_{i_1,\ldots, i_{m-1}, i, i_{m+1}, \ldots, i_{n+k}} (\vec{r}_{n+k}) \big )\\
& + \Big ( - \mathrm{i} \frac{(q_{e^b_v} + q_{e^c_v})^{i_j} (q_{e^b_v} + q_{e^c_v} )^{\el} (q_{e^b_v} + q_{e^c_v})^i}{(q_{e^b_v} + q_{e^c_v})^2} u_{i_1,\ldots,i_{j-1}, \el, i_{j+1}, \ldots, i_{m-1}, i, i_{m+1}, \ldots, i_{n+k}} (\vec{r}_{n+k}) \Big )
\end{split}\eeq
if $\pi_2 (d(e^b_v)) =j$ and $\pi_2 (d(e^c_v)) =m,$ where $1 \le j, m \le n+k$ with $j \not= m.$

\item For any two vertices $v, \bar{v} \in V(\mathbb{T}),$ the actions of $K^{\pi_2}_v$ and $K^{\pi_2}_{\bar{v}}$ follow the partial order $\prec,$ that is, if $v \prec \bar{v}$ then $K^{\pi_2}_{\bar{v}}$ first acts on $u^{(n+k)}$ and subsequently so does $K^{\pi_2}_v.$ In particular, $\prod_{v \in V(\mathbb{T})} K^{\pi_2}_v$ acts on $u^{(n+k)}$ according to the partial order $\prec$ on the vertices. Moreover, we write for any $\mathbb{T} \in \mathfrak{T}_{n,k},$
\beq\label{eq:K_Tdf}
K^{\pi_2}_{\mathbb{T}} u^{(n+k)} (\{q_{e_v}: v \in \mathbb{T} \}, \vec{r}_{n+k}) = \prod_{v \in V(\mathbb{T})} K^{\pi_2}_v u^{(n+k)} (q_{e_v}, \vec{r}_{n+k}).
\eeq
\end{enumerate}
\end{defi}

\begin{rk}\label{rk:Kaction}\rm
Note that if there is no order relation between $v$ and $\bar{v},$ then by definition $K^{\pi_2}_v$ commutes with $K^{\pi_2}_{\bar{v}}$ (e.g. \eqref{eq:Kcommute}). Thus, $K^{\pi_2}_{\mathbb{T}} u^{(n+k)} (\{q_{e_v}: v \in \mathbb{T} \}, \vec{r}_{n+k})$ is well defined.
\end{rk}

Next, we turn to the graphic representation for the fully expended terms in terms of forests defined above. For illustration, we first consider the simple case $n=1$ and $k=1.$ To this end, for a given $u^{(2)} \in \mathbb{L}^2_{(2)},$ put
\be
F^{(1)}_1 (t): = \int^t_0 d s\mathcal{T}^{(1)} (t-s) W^{(1)} \mathcal{T}^{(2)} (s) u^{(2)}.
\ee
In this case, $\mathfrak{T}_{n,k}$ with $n=k=1$ contains only a single element, i.e., the forest $\mathbb{T}_{1, 1}$ consists of the binary tree containing a vertex (cf. Fig.\ref{fig:one-vertex}). Note that in momentum space
\be
\mathcal{T}^{(2)} (s) u^{(2)} (\vec{p}_2) = e^{-s p^2_2} u^{(2)} (\vec{p}_2).
\ee
By \eqref{eq:W-MomentumSpace} we have
\be
\begin{split}
W^{(1)} \mathcal{T}^{(2)} ( s) u^{(2)} (p_1) = - \int d q_1 d q_2 e^{-s q^2_1 - s q^2_2} \delta (p_1 - q_1 - q_2) K_{1,2} u^{(2)} ( q_1, q_2).
\end{split}
\ee
Then we have
\be
\begin{split}
F^{(1)}_1 (t) (p_1) = & - \int d q_a d q_b d q_c \int^t_0 d s e^{-(t-s)q^2_a-s q^2_b - s q^2_c} \\
& \quad \times \delta (p_1- q_a) \delta (q_a - q_b - q_c) K_{b,c} u^{(2)} ( q_b, q_c),
\end{split}
\ee
where we have changed variables for corresponding to the variables in the binary tree $\mathbb{T}_{1,1}.$

Next, we consider the integral
\be
\mathrm{I} : = \int^t_0 d s e^{-(t-s) q_a^2 - s q_b^2 - s q_c^2}.
\ee
By Cauchy's integral formula, for any $s >0$ one has
\beq\label{eq:PropagatorIdent1}
e^{-s q^2} = - \int^{\8}_{- \8} \frac{ d \tau e^{-s (\gamma + \mathrm{i} \tau)} }{\gamma - q^2 + \mathrm{i} \tau},\quad \forall \gamma < 0.
\eeq
(Recall that $d \tau = d_{\mathrm{Leb}} \tau / 2 \pi.$) Hence we have
\be\begin{split}
\mathrm{I} = - \int^t_0 d s \int_{\mathbb{R}} \frac{d \tau^a d \tau^b d \tau^c e^{- (t-s) (\gamma^a + \mathrm{i} \tau^a) - s (\gamma^b + \mathrm{i} \tau^b) - s ( \gamma^c + \mathrm{i} \tau^c)} }{( \gamma^a - q_a^2 + \mathrm{i}\tau^a) ( \gamma^b - q_b^2 + \mathrm{i}\tau^b) ( \gamma^c - q^2_c + \mathrm{i} \tau^c)}.
\end{split}\ee
By Cauchy's theorem,
\beq\label{eq:PropagatorIdent2}
\int^{\8}_{- \8} \frac{d \tau e^{-s (\gamma + \mathrm{i} \tau)} }{\gamma - q^2 + \mathrm{i} \tau} = 0,
\eeq
if $s < 0$ and $\gamma < 0.$ Then the time integration in $I$ can be extended to $s \in \mathbb{R},$ and performing the $s$-integration, we have
\be
\mathrm{I} = - \int_{\mathbb{R}} \frac{d \tau^a d \tau^b d \tau^c e^{- t(\gamma^a + \mathrm{i} \tau^a)} \delta (\tau^a - \tau^b - \tau^c)}{( \gamma^a - q_a^2 + \mathrm{i}\tau^a) ( \gamma^b - q_b^2 + \mathrm{i}\tau^b) ( \gamma^c - q_c^2 + \mathrm{i} \tau^c)}
\ee
because $\gamma^a = \gamma^b + \gamma^c.$ This yields that
\be\begin{split}
F^{(1)}_1 (t) (p_1) = \int & \frac{d q_a d q_b d q_c d \tau^a d \tau^b d \tau^c}{( \gamma^a - q^2_a + \mathrm{i}\tau^a ) ( \gamma^b - q^2_b + \mathrm{i} \tau^b ) ( \gamma^c - q^2_c + \mathrm{i} \tau^c )} \delta (p_1 - q_a ) e^{- t(\gamma^a + \mathrm{i} \tau^a)} \\
& \quad \times \delta (\tau^a - \tau^b - \tau^c) \delta (q_a - q_b - q_c ) K_{b, c} u^{(2)} (q_b, q_c)\\
= \frac{1}{2} & \sum_{\pi_2 \in \Pi_2}\int d \vec{r}_2 \int \prod_{e \in E_2 (\mathbb{T}_{1,1})} d \tau^e d q_e K^{\pi_2}_{\mathbb{T}_{1,1}} u^{(2)} (\vec{r}_2 ) \prod_{e \in R_2 (\mathbb{T}_{1,1})} e^{- t (\gamma^e + \mathrm{i} \tau^e )} \delta (q_e - p_{\pi_1 (e)})\\
& \times \prod_{e \in L_2 (\mathbb{T}_{1,1})} \delta \big ( q_e - r_{\pi_2 (e)} \big ) \prod_{e \in E_2 (\mathbb{T}_{1,1})} \frac{1}{ \gamma^e - q^2_e + \mathrm{i} \tau^e}\\
& \times \prod_{v \in V(\mathbb{T}_{1,1})} \delta \big ( \tau^{e^a_v} - \tau^{e^b_v} - \tau^{e^c_v} \big ) \delta \big ( q_{e^a_v} - q_{e^b_v} - q_{e^c_v} \big ).\\
\end{split}\ee

This expression motivates us to give the definition of collision operators as follows.

\begin{defi}\label{def:collisionoperator}
Let $n \ge 0$ and $k \ge 1.$ For a given $\mathbb{T} \in \mathfrak{T}_{n,k},$ $t \ge 0,$ and a given family of strictly negative numbers $\Upsilon = \{\gamma^e\}_{e \in E(\mathbb{T})}$ such that $\gamma^{e^a_v} = \gamma^{e^b_v} + \gamma^{e^c_v}$ for all $v \in V(\mathbb{T}),$ we define the collision operator $C_{\mathbb{T}, t, \Upsilon}: \mathbb{L}^2_{(n+k)} \mapsto \mathbb{L}^2_{(k)}$ by
\beq\label{eq:collisionoerator}
\big ( C_{\mathbb{T}, t, \Upsilon} u^{(n+k)} \big ) (\vec{p}_k) = \int d \vec{r}_{n+k} \big ( G_{\mathbb{T}, t, \Upsilon} u^{(n+k)} \big ) (\vec{p}_k ; \vec{r}_{k+n})
\eeq
through its kernel
\beq\label{eq:collisionkernel}
\begin{split}
\big ( G_{\mathbb{T}, t, \Upsilon} u^{(n+k)} \big ) (\vec{p}_k ; \vec{r}_{k+n}) & = \frac{1}{(k+n)!} \sum_{\pi_2 \in \Pi_{k+n}} \prod_{e \in R_1 (\mathbb{T}) = L_1 (\mathbb{T})} e^{- t p^2_{\pi_1 (e)}} \delta (p_{\pi_1 (e)} - r_{\pi_2 (e)} )\\
\times \int \prod_{e \in E_2 (\mathbb{T})} d q_e d \tau^e K_{\mathbb{T}}^{\pi_2} & u^{(n+k)} (\vec{r}_{n+k} ) \prod_{e \in R_2 (\mathbb{T})} e^{- t (\gamma^e + \mathrm{i} \tau^e)} \delta (q_e - p_{\pi_1 (e)}) \prod_{e \in L_2 (\mathbb{T})} \delta (q_e - r_{\pi_2 (e)})\\
& \times \prod_{e \in E_2 (\mathbb{T})} \frac{1}{\gamma^e - q_e^2 + \mathrm{i} \tau^e}
\prod_{v \in V(\mathbb{T})} \delta \big ( \tau^{e^a_v} - \tau^{e^b_v} - \tau^{e^c_v} \big ) \delta \big (q_{e^a_v} - q_{e^b_v} - q_{e^c_v}\big ).
\end{split}
\eeq
\end{defi}

The collision operator $C_{\mathbb{T}, t, \Upsilon}$ will describe the terms of the summation in \eqref{eq:DuhamelExpand}.

\begin{rk}\label{rk:collisionoperator}\rm
Noticing that there are $|R_2| + |L_2| + |V| = 2k + 2 n - 2 |R_1|$ momentum delta-functions involving $q_e$-variables and $|R_1|$ delta-functions related to the roots in $R_1 (\mathbb{T}),$ but only $|E_2| = k + 2n - |R_1|$ momentum integration variables, we find that $G_{\mathbb{T}, t, \Upsilon}$ contains $k$ delta-functions among its $n + 2k$ variables, each of which corresponds to the momentum conservation in the corresponding one of the $k$ components of $\mathbb{T}.$ Also, we see that all the $q_e$ momenta are uniquely determined by the external momenta $\vec{p}_k$ and $\vec{r}_{n+k}.$ Hence, all the $d q_e$ integrations are well defined and correspond to substituting the appropriate linear combinations of the external momenta into $q_e.$
\end{rk}

\begin{prop}\label{prop:collisionoperator}
For every $\mathbb{T} \in \mathfrak{T}_{n,k}$ and a given family of negative numbers $\Upsilon = \{\gamma^e\}_{e \in E(\mathbb{T})},$ the collision kernel $G_{\mathbb{T}, t, \Upsilon}$ is well defined for all $t \ge 0.$ More precisely, all the $d \tau^e$ integrals in \eqref{eq:collisionkernel} are absolutely convergent.
\end{prop}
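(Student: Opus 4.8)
The plan is to show that the only possible source of non-integrability in the kernel \eqref{eq:collisionkernel} comes from the $d\tau^e$ integrals, and that after using the delta-functions $\delta(\tau^{e^a_v}-\tau^{e^b_v}-\tau^{e^c_v})$ to eliminate all $\tau$-variables associated with mother-edges, the remaining integral is the integral of a product of $|E_2(\mathbb{T})|$ factors of the form $(\gamma^e - q_e^2 + \mathrm{i}\tau^e)^{-1}$ in which the $\tau^e$ appearing are exactly those indexed by the leaves $L_2(\mathbb{T})$ (equivalently, by the granddaughter-edge structure). First I would invoke Remark \ref{rk:collisionoperator}: the momentum delta-functions pin down every $q_e$ as an explicit linear combination of the external momenta $\vec{p}_k,\vec{r}_{n+k}$, so in the $\tau$-integral the numbers $q_e^2 \ge 0$ are just fixed parameters, and the factors $e^{-t(\gamma^e+\mathrm{i}\tau^e)}$ are bounded in modulus by $e^{-t\gamma^e}$ since $\gamma^e<0$ and $t\ge 0$. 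Hence boundedness of the exponentials is immediate and plays no role in convergence; only the product of resolvent factors matters.

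Next I would carry out the $\tau$-reduction tree by tree. Using the relation $\gamma^{e^a_v}=\gamma^{e^b_v}+\gamma^{e^c_v}$ from Definition \ref{def:collisionoperator} together with the constraint $\tau^{e^a_v}=\tau^{e^b_v}+\tau^{e^c_v}$, I would express every internal ($=$ mother) edge variable $\tau^{e^a_v}$ in terms of the leaf variables $\{\tau^e: e\in L_2(\mathbb{T})\}$, which by Remark \ref{rk:Gamma} is the correct number of free variables. What remains is an iterated integral over $\mathbb{R}^{|L_2|}$ of $\prod_{e\in E_2}\bigl(\gamma^e - q_e^2 + \mathrm{i}\tau^e\bigr)^{-1}$, where each internal-edge $\tau^e$ is a sum of leaf variables. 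I would then integrate from the leaves upward: at each vertex, one integrates out a leaf (or an already-reduced subtree variable) against a product of resolvents. The basic one-dimensional estimate is
\[
\int_{\mathbb{R}} \frac{d\tau}{|\gamma_1 - a_1 + \mathrm{i}\tau|\,|\gamma_2 - a_2 + \mathrm{i}(c-\tau)|} \lesssim \frac{\log\bigl(2 + \tfrac{|\gamma_1|+|\gamma_2|+\dots}{\dots}\bigr)}{|\gamma_1|+|\gamma_2| + |c - a_1 - a_2|}\ \ \text{or similar},
\]
with $\gamma_1,\gamma_2<0$ fixed; more simply, $\int_{\mathbb{R}} d\tau\,|\gamma - a + \mathrm{i}\tau|^{-2} = \pi/|\gamma|<\infty$ and a Cauchy–Schwarz or convolution argument controls the mixed product. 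Iterating this over the $|V(\mathbb{T})|$ vertices, following the partial order $\prec$, shows by induction on the height of $\mathbb{T}$ that the full $d\tau$-integral converges absolutely, with a bound depending only on $\mathbb{T}$ and on $\Upsilon=\{\gamma^e\}$ (and degrading at worst logarithmically in the $q_e^2$, hence still finite for fixed external momenta).

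The main obstacle I anticipate is bookkeeping rather than analysis: one must verify that after eliminating the mother-edge $\tau$'s the surviving integrand genuinely has at least two resolvent factors depending on each remaining integration variable (so that the one-variable integrals converge — a single factor $(\gamma-q^2+\mathrm{i}\tau)^{-1}$ is not integrable in $\tau$), and that the tree structure guarantees this pairing at every stage of the leaf-to-root recursion. This is where the binary-tree combinatorics of Section \ref{Binarytree} enters: each vertex $v$ contributes the mother resolvent $(\gamma^{e^a_v}-q_{e^a_v}^2+\mathrm{i}\tau^{e^a_v})^{-1}$ together with the two daughter resolvents, and once $\tau^{e^a_v}$ is substituted by $\tau^{e^b_v}+\tau^{e^c_v}$ the three factors share variables appropriately. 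I would therefore set up the induction so that the inductive hypothesis records, for each subtree, the absolute convergence of its $\tau$-integral as a function bounded by a constant times a product of resolvent-type factors in the subtree's root variable, which is exactly the form needed to feed into the next vertex up. A secondary, purely notational, point is to handle the components of the forest separately — since $\mathbb{T}\in\mathfrak{T}_{n,k}$ is a $k$-rooted forest, the kernel factorizes over its $k$ trees (the $|R_1(\mathbb{T})|$ root delta-functions decouple them), so it suffices to treat a single tree and multiply.
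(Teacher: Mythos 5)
Your proposal is correct and takes essentially the same approach as the paper: factor the kernel over the trees of the forest $\mathbb{T}$ and induct on the tree, using the one-dimensional convolution estimate of Lemma~\ref{lem:Frequencyestimate} for pairs of resolvent factors (the paper removes a maximal vertex at each step, bounds the incremental $d\tau$-integral \emph{uniformly} in the mother-edge $\tau$, and then handles the base case $n=1$ explicitly). One caution about the ``more simply, Cauchy--Schwarz'' remark: that route yields only a bound uniform in the external $\tau$-variable, which is enough for the paper's inductive step but is inconsistent with your own inductive hypothesis of a \emph{decaying} bound in the subtree's root variable and fails for the base case, where after integrating $\tau^{e^b}$ the remaining $\tau^{e^a}$-integrand would reduce to a single resolvent times a constant and diverge; you genuinely need the decay supplied by Lemma~\ref{lem:Frequencyestimate}, which the paper applies twice for $n=1$.
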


\begin{proof}
As remarked above, all the $q_e$-integrations of $G_{\mathbb{T}, t, \Upsilon}$ are well defined. It remains to prove the absolute convergence of all the $\tau^e$ integrals in \eqref{eq:collisionkernel}. Since the delta functions relate $\tau$-variables within the same trees, the integration can be done independently in each tree of $\mathbb{T}.$ Therefore, it suffices to consider the case of a binary tree $\mathrm{T} \in \mathfrak{T}_n.$ The proof is based on induction over $n.$

For $n=0$ there is no such integration. For $n=1,$ the $\tau^e$-integrations are of the form
\be
I := \int_{\mathbb{R}} \frac{d \tau^{e^a} d \tau^{e^b} d \tau^{e^c} \delta ( \tau^{e^a}- \tau^{e^b} - \tau^{e^c})}{(\gamma^{e^a} - q^2_{e^a} + \mathrm{i} \tau^{e^a}) (\gamma^{e^b} - q^2_{e^b} + \mathrm{i} \tau^{e^b}) (\gamma^{e^c} - q^2_{e^c} + \mathrm{i} \tau^{e^c})}
\ee
where $e^b$ and $e^c$ correspond respectively to the marked and unmarked daughter-edges of the mother-edge $e^a$ in $\mathbb{T}$ (cf. Fig.\ref{fig:one-vertex}). Recall that $\gamma^{e^a} = \gamma^{e^b} + \gamma^{e^c},$ and all the $\gamma$'s are strictly negative. Let $\gamma = \max \{ \gamma^{e^b}, \gamma^{e^c}\}.$ Note that
\be\begin{split}
I \le & \int_{\mathbb{R}} \frac{d \tau^{e^a} d \tau^{e^b}}{ | \gamma - q^2_{e^a} + \mathrm{i} \tau^{e^a} | | \gamma - q^2_{e^b} + \mathrm{i} \tau^{e^b} | | \gamma - q^2_{e^c} + \mathrm{i} (\tau^{e^b} - \tau^{e^a} ) |}\\
\le & \frac{1}{|\gamma|^3} \int_{\mathbb{R}} \frac{d \tau^{e^a} d \tau^{e^b}}{ \big | \mathrm{i} + \frac{\tau^{e^a}}{|\gamma|} \big | \big | \mathrm{i} + \frac{\tau^{e^b}}{|\gamma |} \big | \big | \mathrm{i} + \frac{1}{|\gamma|} (\tau^{e^b} - \tau^{e^a} ) \big |}\\
= & \frac{1}{|\gamma|} \int_{\mathbb{R}} \frac{d t d s}{ \langle t \rangle \langle  s  \rangle \langle s - t \rangle},
\end{split}\ee
then, by Lemma \ref{lem:Frequencyestimate} (twice) we have that the integration in the last line is finite and so
\beq\label{eq:tauintest}
I \lesssim \frac{1}{|\gamma|}.
\eeq

For general $n,$ we note that any binary tree with $n+1$ vertices can be built up from a binary tree $\mathrm{T}$ with $n$ vertexes by adjoining a new vertex to a leave. Indeed, we choose a maximal vertex $v$ of $\mathrm{T},$ i.e., there is no $v'$ such that $v \prec v'.$ We add a new vertex $v'$ by splitting one of leaves $e^b_v, e^c_v,$ denoted by $e_v$ which is $e^a_{v'},$ into two daughter-edges $e^b_{v'}$ and $e^c_{v'}$ of $v'.$ Then we create two new denominators, two new $\tau$-variables and one new delta function. The additional integration is
\be
\int_{\mathbb{R}} \frac{d \tau^{e^b_{v'}} d \tau^{e^c_{v'}} \delta ( \tau^{e^a_{v'}}- \tau^{e^b_{v'}} - \tau^{e^c_{v'}})}{(\gamma^{e^b_{v'}} - q^2_{e^b_{v'}} + \mathrm{i} \tau^{e^b_{v'}}) (\gamma^{e^c_{v'}} - q^2_{e^c_{v'}} + \mathrm{i} \tau^{e^c_{v'}})}
\ee
where $\gamma$'s are chosen such that $\gamma^{e_v} = \gamma^{e^a_{v'}} = \gamma^{e^b_{v'}} + \gamma^{e^c_{v'}}.$ As done above, by Lemma \ref{lem:Frequencyestimate} this integral is absolutely convergent uniformly in $\tau^{e_v}$ and for any choice of $\gamma^{e^b_{v'}}, \gamma^{e^c_{v'}}<0.$ After this integral is done, the tree has only $n$ vertices. Therefore, by induction, all the $d \tau^e$ integrals in \eqref{eq:collisionkernel} are absolutely convergent.
\end{proof}

\begin{prop}\label{prop:collisionoperatorindend}
For any given $t \ge 0$ and every $\mathbb{T} \in \mathfrak{T}_{n,k},$ the collision kernel $G_{\mathbb{T}, t, \Upsilon}$ is independent of the family of negative numbers $\Upsilon = \{\gamma^e\}_{e \in E(\mathbb{T})}.$ In particular, $C_{\mathbb{T}, t, \Upsilon}$ is independent of $\Upsilon,$ and $C_{\mathbb{T}, t, \Upsilon} =0$ when $t=0.$
\end{prop}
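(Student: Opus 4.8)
The plan is to reverse precisely the manipulation that turned the $F^{(1)}_1$ computation into Definition \ref{def:collisionoperator}, and thereby exhibit $G_{\mathbb{T},t,\Upsilon}$ as coming from a $\gamma$-free multiple time integral. First I would isolate the $\Upsilon$-dependence. In formula \eqref{eq:collisionkernel} the only factors that contain the numbers $\{\gamma^e\}$ are the root phases $\prod_{e\in R_2(\mathbb{T})}e^{-t(\gamma^e+\mathrm{i}\tau^e)}$, the edge propagators $\prod_{e\in E_2(\mathbb{T})}(\gamma^e-q_e^2+\mathrm{i}\tau^e)^{-1}$, and the frequency delta functions $\prod_{v\in V(\mathbb{T})}\delta(\tau^{e^a_v}-\tau^{e^b_v}-\tau^{e^c_v})$; every other ingredient (the $\pi_2$-sum, the operator $K^{\pi_2}_{\mathbb{T}}$, the $\vec r$-integration, the $q_e$-delta functions, and by Remark \ref{rk:collisionoperator} every $q_e$-integration) is $\Upsilon$-independent. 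Since the frequency delta functions couple only the $\tau$-variables belonging to one and the same tree of $\mathbb{T}$, the $\tau$-integral factorizes over the $k$ trees, and a trivial tree contributes no $\tau$-integration at all. Hence it suffices to fix a single rooted binary tree $\mathrm{T}\in\mathfrak{T}_n$ with $n\ge 1$ and root edge $e_0$ and to prove that
\[
\mathcal{I}_{\mathrm{T},t,\Upsilon}:=\int\Big(\,\prod_{e\in E(\mathrm{T})}d\tau^e\Big)\,e^{-t(\gamma^{e_0}+\mathrm{i}\tau^{e_0})}\prod_{e\in E(\mathrm{T})}\frac{1}{\gamma^e-q_e^2+\mathrm{i}\tau^e}\prod_{v\in V(\mathrm{T})}\delta\big(\tau^{e^a_v}-\tau^{e^b_v}-\tau^{e^c_v}\big)
\]
is independent of $\Upsilon=\{\gamma^e\}_{e\in E(\mathrm{T})}$; note that this integral is absolutely convergent by Proposition \ref{prop:collisionoperator}.

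The key step is the identity
\[
\mathcal{I}_{\mathrm{T},t,\Upsilon}=(-1)^{|E(\mathrm{T})|}\int_{\Delta_{\mathrm{T}}(t)}\Big(\,\prod_{v\in V(\mathrm{T})}dt_v\Big)\prod_{e\in E(\mathrm{T})}e^{-\ell_e(\mathbf{t})\,q_e^2},
\]
where $\Delta_{\mathrm{T}}(t)$ is the simplex of collision times $\mathbf{t}=(t_v)$ compatible with the partial order $\prec$ on $V(\mathrm{T})$ (with $t$ sitting above $e_0$ and $0$ at the leaves) and $\ell_e(\mathbf{t})>0$ is the length of the interval on which the edge $e$ is ``alive''; the right-hand side manifestly does not involve $\Upsilon$. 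To obtain this identity I would run the $n=k=1$ computation backwards: starting from the right-hand side, insert the representation \eqref{eq:PropagatorIdent1} for each of the $|E(\mathrm{T})|$ heat factors (legitimate since $\ell_e>0$ on the interior of $\Delta_{\mathrm{T}}(t)$), extend each $t_v$-integration from $\Delta_{\mathrm{T}}(t)$ to all of $\mathbb{R}$ — which changes nothing, because by \eqref{eq:PropagatorIdent2} the $\tau^e$-integral attached to any edge whose lifetime has turned negative vanishes — and then carry out the $t_v$-integrations one at a time, from the leaves of $\mathrm{T}$ towards $e_0$. At each vertex $v$ the coefficient of $t_v$ in the exponent is $(\gamma^{e^a_v}-\gamma^{e^b_v}-\gamma^{e^c_v})+\mathrm{i}(\tau^{e^a_v}-\tau^{e^b_v}-\tau^{e^c_v})$, whose real part is zero exactly because of \eqref{eq:Gammanumbercondition}, so $\int dt_v\,e^{\mathrm{i}t_v(\tau^{e^a_v}-\tau^{e^b_v}-\tau^{e^c_v})}=\delta(\tau^{e^a_v}-\tau^{e^b_v}-\tau^{e^c_v})$; after all $|V(\mathrm{T})|$ steps the frozen phases assemble, again via \eqref{eq:Gammanumbercondition}, into $e^{-t(\gamma^{e_0}+\mathrm{i}\tau^{e_0})}$, leaving $(-1)^{|E(\mathrm{T})|}\mathcal{I}_{\mathrm{T},t,\Upsilon}$. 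Granting the identity, $G_{\mathbb{T},t,\Upsilon}$ and hence $C_{\mathbb{T},t,\Upsilon}$ is independent of $\Upsilon$; and for $t=0$ the simplex $\Delta_{\mathrm{T}}(0)$ is the single point $\{0\}$ in $\mathbb{R}^{|V(\mathrm{T})|}$ with $|V(\mathrm{T})|\ge 1$, so the time integral vanishes and $C_{\mathbb{T},t,\Upsilon}=0$ at $t=0$. (When $n=0$ there are no numbers $\gamma^e$ at all, so the first assertion is vacuous and the second is to be read for $n\ge1$, which is the only case used in the sequel.)

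The hard part will be the bookkeeping in this reversal: the $\tau$-integrals are only conditionally convergent as they stand, so the interchanges of the $t_v$- and $\tau$-integrations and the successive extensions of the $t_v$-ranges to $\mathbb{R}$ must be carried out in the stated order (innermost, leaf-side $t_v$ first), checking at each stage that the partial integrals remain meaningful — this is where the absolute convergence of Proposition \ref{prop:collisionoperator} and the frequency estimate used in its proof enter — and that the phases $e^{-\ell_e(\gamma^e+\mathrm{i}\tau^e)}$ recombine correctly through \eqref{eq:Gammanumbercondition}. The argument is elementary but involved, entirely in the spirit of the $F^{(1)}_1$ calculation preceding Definition \ref{def:collisionoperator}. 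An alternative that avoids the time integral is to fix a leaf $\bar{e}$ and observe that the admissible replacement $\gamma^{\bar{e}}\mapsto\gamma^{\bar{e}}+\epsilon$ forces $\gamma^e\mapsto\gamma^e+\epsilon$ along the path from $\bar{e}$ to $e_0$ and is absorbed by the contour translation $\tau^e\mapsto\tau^e-\mathrm{i}\epsilon$ on that path; since the poles $\tau^e=\mathrm{i}(q_e^2-\gamma^e)$ lie strictly in the upper half-plane no pole is crossed, and Remark \ref{rk:Gamma} then yields full $\Upsilon$-independence — but this route still needs the decay supplied by Proposition \ref{prop:collisionoperator} to justify the contour shift.
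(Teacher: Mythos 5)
Your proposal is sound, and it is interesting that you offer two routes, one of which coincides with the paper's and one of which is genuinely different.

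Your \emph{alternative} route at the end — reduce via Remark~\ref{rk:Gamma} to the leaf $\gamma$'s, shift one $\gamma^{\bar e}$ by a small $\beta$, and absorb the shift along the $\bar e$-to-root path by a contour translation of the corresponding $\tau^e$'s, using the estimate of Proposition~\ref{prop:collisionoperator} (specifically \eqref{eq:tauintest}) to justify analyticity and the deformation — is precisely what the paper does. Two cautions here: the poles of $(\gamma^e-q_e^2+\mathrm{i}\tau^e)^{-1}$ sit at $\tau^e=\mathrm{i}(\gamma^e-q_e^2)$, i.e.\ in the \emph{lower} half-plane (since $\gamma^e<0\le q_e^2$), not the upper one as you write; and the compensating translation of the contour must be by $+\mathrm{i}\beta$, not $-\mathrm{i}\beta$, to cancel the $+\beta$ added to $\gamma^e$. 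These sign slips do not affect the conclusion (for $|\beta|<\min_e|\mathrm{Re}\,\gamma^e|$ no pole is crossed in either direction), but you should fix them. Also, since the paper works with a complex $\beta$ in a left-half-plane neighbourhood and then appeals to analytic continuation, one should phrase the constancy as holding on a small complex neighbourhood, as the paper does.

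Your \emph{main} route — exhibiting the $\tau$-integral $\mathcal{I}_{\mathrm{T},t,\Upsilon}$ as coming, via \eqref{eq:PropagatorIdent1}--\eqref{eq:PropagatorIdent2}, from the manifestly $\Upsilon$-free simplex time-integral $(-1)^{|E(\mathrm{T})|}\int_{\Delta_{\mathrm{T}}(t)}\prod_v dt_v\,\prod_e e^{-\ell_e q_e^2}$ by reversing the $F^{(1)}_1$ calculation — is a genuinely different argument from the paper's. Both are correct. The trade-off is that your representation makes both conclusions transparent at once (no $\Upsilon$ appears, and $\Delta_{\mathrm{T}}(0)=\{0\}$ has measure zero so $C_{\mathbb{T},0}=0$ immediately), whereas the paper's route never leaves the frequency side and dispatches $t=0$ by a separate limit $\min_{e\in L}|\gamma^e|\to\infty$ together with the just-proved $\Upsilon$-independence; the price of your route is the bookkeeping you correctly flag (conditional $\tau$-convergence forces a fixed order of interchanges, innermost $t_v$ first, with absolute-convergence control from Proposition~\ref{prop:collisionoperator} at each stage). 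Essentially your approach re-derives the per-tree content of Theorem~\ref{th:Duhamelfulltree} from scratch, which is fine here since you only invoke \eqref{eq:PropagatorIdent1}--\eqref{eq:PropagatorIdent2} and not the theorem itself, so there is no circularity. Finally, your parenthetical remark that $n\ge 1$ must be assumed for the $t=0$ claim is a real and useful observation: for the forest of trivial trees in $\mathfrak{T}_{0,k}$, $C_{\mathbb{T},0}=\mathcal{T}^{(k)}(0)=\mathrm{id}\neq 0$, and indeed the paper's $|\gamma|\to\infty$ argument has no $\gamma$ to send to infinity when $n=0$; the paper implicitly restricts to $n\ge 1$ without saying so.
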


\begin{proof}
Since $\gamma^{e^a_v} = \gamma^{e^b_v} + \gamma^{e^c_v}$ for each vertex $v$ of $\mathbb{T},$ the only independent $\gamma$'s are the ones associated with the leaves of $\mathbb{T}.$ Given a fixed $\bar{e} \in L (\mathbb{T}),$ by using the estimate \eqref{eq:tauintest} in the proof of Proposition \ref{prop:collisionoperator}, $G_{\mathbb{T}, t, \Upsilon}$ has an analytic extension in the left-half plane $\{ z \in \mathbb{C}: \; \mathrm{Re} z < 0 \}$ as a function of $\gamma^{\bar{e}}.$ It suffices to show that $G_{\mathbb{T}, t, \Upsilon}$ is constant in a small neighborhood of a given value $\gamma^{\bar{e}}$ with $\mathrm{Re} \gamma^{\bar{e}} <0,$ while all the other $\gamma^e <0$ for $e \in L(\mathbb{T}) \backslash \{ \bar{e} \}$ are kept constant.

Indeed, if $\gamma^{\bar{e}}$ is replaced by $\gamma^{\bar{e}} + \beta,$ then for every $e \in E(\mathbb{T})$ on the route from $\bar{e}$ to the unique root connected to $\bar{e},$ we put $\bar{\tau}^e = \tau^e + \mathrm{i} \beta$ and keep $\bar{\tau}^e = \tau^e$ for all other $e \in E(\mathbb{T}).$ Here, we require that $|\mathrm{Re} \beta| < \min_{e \in E (\mathbb{T})} |\mathrm{Re} \gamma^e|$ in order to avoid deforming the $\tau^e$ integral contour through the pole at $\tau^e = \mathrm{i} (\gamma^e - q^2_e).$ Since
\be
\tau^{e^a_v} - \tau^{e^b_v} - \tau^{e^c_v} = \bar{\tau}^{e^a_v} - \bar{\tau}^{e^b_v} - \bar{\tau}^{e^c_v}
\ee
for all $v \in V(\mathbb{T}),$ it follows that the integral remains unchanged after the change of variable. This proves the independence of \eqref{eq:collisionkernel} from the family $\Upsilon = \{\gamma^e:\; e \in E(\mathbb{T}) \}.$

Finally, when $t =0,$  by \eqref{eq:tauintest} again, we take the limit $|\gamma| = \min_{e \in L (\mathbb{T})} |\gamma^e| \to \8$ and obtain that $C_{\mathbb{T}, 0, \Upsilon} =0.$
\end{proof}

\begin{rk}\label{rk:collisionoperatorGammafree}\rm
We will write $G_{\mathbb{T}, t, \Upsilon} = G_{\mathbb{T}, t}$ and, respectively, $C_{\mathbb{T}, t, \Upsilon} = C_{\mathbb{T}, t}$ for what follows.
\end{rk}

To describe the error term in \eqref{eq:DuhamelExpand} that involves the function $u^{(k+n)}$ at an intermediate time $t_n,$ we need to introduce a slight modification of the collision operator $C_{\mathbb{T}, t, \Upsilon}.$ For a given $\mathbb{T} \in \mathfrak{T}_{n,k},$ let $M(\mathbb{T})$ denote the set of maximal vertices of $\mathbb{T},$ i.e., $\bar{v} \in M (\mathbb{T})$ if and only if $\bar{v} \in V(\mathbb{T})$ and there is no $v \in V(\mathbb{T})$ such that $\bar{v} \prec v.$ Let $D_{\bar{v}} = \{e^b_{\bar{v}}, e^c_{\bar{v}}\}$ denote the set of daughter-edges of $\bar{v} \in M(\mathbb{T}).$

\begin{defi}\label{def:erroroperator}
Given $n,k \ge 1$ and $t \ge 0,$ for a fixed $\mathbb{T} \in \mathfrak{T}_{n,k}$ we define the error operator $R_{\mathbb{T}, t}$ by
\beq\label{eq:Errortermoperator}
\big ( R_{\mathbb{T}, t} u^{(k+n)} \big ) (\vec{p}_k) = \int d \vec{r}_{k+n} \big ( Q_{\mathbb{T}, t} u^{(k+n)} \big ) (\vec{p}_k, \vec{r}_{k+n})
\eeq
through its kernel
\beq\label{eq:erroroperatorkernel}\begin{split}
\big ( Q_{\mathbb{T}, t} u^{(k+n)} \big ) (\vec{p}_k, & \vec{r}_{k+n})
: = \frac{1}{(k+n)!} \sum_{\pi_2 \in \Pi_{k+n}} \sum_{\bar{v} \in M(\mathbb{T})} \prod_{e \in R_1 (\mathbb{T}) = L_1 (\mathbb{T})} e^{- t p^2_{\pi_1 (e)}} \delta (p_{\pi_1 (e)} - r_{\pi_2 (e)} )\\
\times \int \prod_{e \in E_2 (\mathbb{T})} & d q_e \prod_{ \substack{ e \in E_2 (\mathbb{T})\\ e \notin D_{\bar{v}} } } \frac{d \tau^e}{\gamma^e - q_e^2 + \mathrm{i} \tau^e} K_{\mathbb{T}}^{\pi_2} u^{(n+k)} (\vec{r}_{n+k} ) \prod_{ e \in R_2 (\mathbb{T})} e^{- t(\gamma^e + \mathrm{i} \tau^e)} \delta (q_e - p_{\pi_1 (e)})\\
& \times \prod_{e \in L_2 (\mathbb{T})} \delta (q_e - r_{\pi_2 (e)}) \prod_{ \substack{ v \in V(\mathbb{T})\\ v \not= \bar{v} } } \delta \big ( \tau^{e^a_v} - \tau^{e^b_v} - \tau^{e^c_v} \big ) \prod_{ v \in V(\mathbb{T}) } \delta \big (q_{e^a_v} - q_{e^b_v} - q_{e^c_v} \big )
\end{split}\eeq
where the family of strictly negative numbers $\Upsilon = \{\gamma^e:\; e \in E(\mathbb{T}) \}$ is chosen as in \eqref{eq:Gammanumbercondition}.
\end{defi}

Namely, $R_{\mathbb{T}, t}$ is defined so that there are no propagators associated with the daughter-edges of each $\bar{v} \in M (\mathbb{T}).$ Note that although $\gamma^e$ does not appear in \eqref{eq:erroroperatorkernel} for $e \in D_{\bar{v}},$ the value of the $\gamma$ associated with the mother-edge of $\bar{v}$ depends on them.

\begin{rk}\label{rk:erroroperatorGammafree}\rm
\begin{enumerate}[{\rm 1)}]

\item As similar to the collision kernel $G_{\mathbb{T}, t},$ $Q_{\mathbb{T}, t}$ is well defined, that is, all the $d \tau$-integrals in \eqref{eq:erroroperatorkernel} are absolutely convergent. This can been done as in the proof of Proposition \ref{prop:collisionoperator}.

\item It can be proved as in Proposition \ref{prop:collisionoperatorindend} that the error kernel $Q_{\mathbb{T}, t}$ is independent of the choice of $\Upsilon = \{\gamma^e:\; e \in E(\mathbb{T}) \},$ and so does the error operator $R_{\mathbb{T}, t}.$ This explain the reason that we do not include the notation $\Upsilon$ in the definitions of $Q_{\mathbb{T}, t}$ and $R_{\mathbb{T}, t}.$
\end{enumerate}\end{rk}

\section{Graphic representation for the Navier-Stokes hierarchy}\label{GraphicNShierachy}

In this section, we will show that
\beq\label{eq:NSEhierarchyTreeExpanssion}
\begin{split}
u^{(k)} (t) = e^{t \triangle^{(k)}} u^{(k)}_0 + \sum^n_{j=1} \sum_{\mathbb{T} \in \mathfrak{T}_{j,k}} C_{\mathbb{T}, t} u^{(k+j)}_0 - \sum_{\mathbb{T} \in \mathfrak{T}_{n+1,k}} \int^t_0 d s R_{\mathbb{T}, t-s} u^{(k+n+1)} (s)
\end{split}
\eeq
for any $k \ge 1$ and for every $n \ge 1.$ This will be done respectively for the fully expended terms and remainder terms in the Duhamel expansion \eqref{eq:DuhamelExpand}.

First of all, we note that for any $k \ge 1,$ by the definition of the collision operator \eqref{eq:collisionoerator},
\beq\label{eq:Duhamelfulltermtreen=0}
\mathcal{T}^{(k)}_0 (t) u^{(k)} = \sum_{\mathbb{T} \in \mathfrak{T}_{0, k}} C_{\mathbb{T}, t} u^{(k)}
\eeq
for every $u^{(k)} \in \mathfrak{L}^2_{(k)},$ where the summation on the right hand side is only for the (unique) forest consisting of $k$ trivial trees. The first aim of this section is to extend the expression \eqref{eq:Duhamelfulltermtreen=0} to the fully expended terms in \eqref{eq:DuhamelExpand}.

\begin{thm}\label{th:Duhamelfulltree}
Let $k \ge 1$ and $n \ge 1.$ For any given $u^{(k+n)} \in \mathbb{L}^2_{(k+n)},$ we have
\beq\label{eq:Duhamelfulltermtree}\begin{split}
\sum_{\mathbb{T} \in \mathfrak{T}_{n,k}} C_{\mathbb{T}, t} u^{(k+n)}= \int^t_0 d t_1 & \int^{t_1}_0 d t_2 \cdots  \int^{t_{n-1}}_0 d t_n \mathcal{T}^{(k)} ( t-t_1) W^{(k)}\cdots \\
& \times \mathcal{T}^{(k+n-1)} ( t_{n-1}-t_n) W^{(k+n-1)} \mathcal{T}^{(k+n)} ( t_n) u^{(k+n)}
\end{split}\eeq
for all $t \ge 0.$
\end{thm}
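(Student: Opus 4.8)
The plan is to proceed by induction on $n$, matching the iterated Duhamel integral on the right-hand side of \eqref{eq:Duhamelfulltermtree} term-by-term against the binary-tree sum on the left, and showing that the tree structure $\mathfrak{T}_{n,k}$ is precisely the combinatorial bookkeeping device that records the sequence of applications of $W^{(k)}, W^{(k+1)}, \dots, W^{(k+n-1)}$. First I would unfold the right-hand side of \eqref{eq:Duhamelfulltermtree} in momentum space using the free-evolution formula $\mathcal{T}^{(m)}(s)u^{(m)}(\vec p_m) = e^{-s\,\vec p_m^{\,2}}u^{(m)}(\vec p_m)$ together with the momentum-space expression \eqref{eq:W-MomentumSpace} for each $W^{(k+j)}$. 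Each $W^{(k+j)}$ splits into a sum over $j=1,\dots,k+j$ of a $\delta$-function factor pairing the $j$-th momentum with a newly created momentum plus the kernel $K_{j,\,k+j+1}$; iterating this $n$ times produces exactly $k(k+1)\cdots(k+n-1)$ terms, and the bijection between these terms and pairs $(\mathbb{T},\pi_2)$ with $\mathbb{T}\in\mathfrak{T}_{n,k}$ (modulo the $(k+n)!$ relabelling) is the content of the binary-tree formalism in Section \ref{Binarytree}: each application of a $W$-operator corresponds to adjoining one new vertex (splitting a leaf into a marked and an unmarked daughter-edge), the marked daughter carrying the ``$j$''-slot and the unmarked carrying the new particle, exactly as in Definition \ref{df:granddaughter-edge}.

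The second main step is the time-integration identity. Having written the $n$-fold time integral $\int_0^t dt_1\int_0^{t_1}dt_2\cdots\int_0^{t_{n-1}}dt_n$ of a product of exponentials $e^{-(t-t_1)\,(\cdot)}e^{-(t_1-t_2)\,(\cdot)}\cdots e^{-t_n\,(\cdot)}$, I would insert the Cauchy-integral representation \eqref{eq:PropagatorIdent1} for each propagator attached to an edge of the tree, introducing for every edge $e\in E_2(\mathbb{T})$ a frequency variable $\tau^e$ and a denominator $\gamma^e - q_e^2 + \mathrm{i}\tau^e$, with the $\gamma^e<0$ chosen to satisfy the tree constraint $\gamma^{e^a_v}=\gamma^{e^b_v}+\gamma^{e^c_v}$. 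Using \eqref{eq:PropagatorIdent2} to extend each time integration to the whole real line and performing the $s$-integrations then converts each nested time integral into a $\tau$-delta-function $\delta(\tau^{e^a_v}-\tau^{e^b_v}-\tau^{e^c_v})$ at every vertex, precisely reproducing the kernel \eqref{eq:collisionkernel} in Definition \ref{def:collisionoperator}; this is the same computation already carried out explicitly for the base case $n=k=1$ in Section \ref{GraphCollisionoperator}. The absolute convergence needed to justify Fubini at each stage is Proposition \ref{prop:collisionoperator}, and the resulting kernel is independent of $\Upsilon$ by Proposition \ref{prop:collisionoperatorindend}.

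For the inductive step I would peel off the outermost time integral $\int_0^t dt_1\,\mathcal{T}^{(k)}(t-t_1)W^{(k)}(\,\cdot\,)$ and apply the inductive hypothesis to the remaining $(n-1)$-fold integral acting on $\mathcal{T}^{(k+1)}$-evolved data in $\mathbb{L}^2_{(k+1)}$; this expresses the right-hand side as $\int_0^t dt_1\,\mathcal{T}^{(k)}(t-t_1)W^{(k)}\sum_{\mathbb{T}'\in\mathfrak{T}_{n-1,k+1}}C_{\mathbb{T}',t_1}u^{(k+n)}$. It then remains to verify that composing $\int_0^t\mathcal{T}^{(k)}(t-t_1)W^{(k)}(\cdot)\,dt_1$ with a single collision operator $C_{\mathbb{T}',t_1}$, $\mathbb{T}'\in\mathfrak{T}_{n-1,k+1}$, and summing over $\mathbb{T}'$, reproduces $\sum_{\mathbb{T}\in\mathfrak{T}_{n,k}}C_{\mathbb{T},t}$: on the tree side, $W^{(k)}=\sum_{j=1}^k W_{j,k+1}$ attaches a new vertex below the edge labelled $j$ among the $k$ roots, which is exactly the operation $\mathfrak{T}_{n-1,k+1}\times\{1,\dots,k\}\to\mathfrak{T}_{n,k}$ that generates every $n$-vertex forest with $k$ roots exactly once (after accounting for the leaf-labelling symmetry factor). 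The main obstacle I expect is precisely this bookkeeping: one must check carefully that the combinatorial correspondence between iterated $W$-applications and trees is a genuine bijection — in particular that the symmetry factors $1/(k+n)!$ and the sum over $\pi_2\in\Pi_{k+n}$ in \eqref{eq:collisionkernel} exactly absorb the over-counting coming from permuting the $k+n$ tensor slots, and that the distinction between marked and unmarked daughter-edges correctly encodes the asymmetry (the ``$j$''-slot gets contracted with the new momentum, the new particle does not) built into $K_{j,m}$ versus $K_{m,j}$. Keeping the indices aligned through the momentum-variable substitutions dictated by the $q$-delta-functions at each vertex is where the argument is technically delicate, though conceptually routine.
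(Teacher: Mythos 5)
Your proposal captures the right ingredients --- induction on $n$, the momentum-space expression of $W^{(k)}$, Cauchy-integral representations of the heat propagators, and the combinatorial map between $\mathfrak{T}_{n-1,k+1}\times\{1,\dots,k\}$ and $\mathfrak{T}_{n,k}$ --- but the inductive step as you have set it up has a genuine gap, and the paper actually structures the argument quite differently precisely to sidestep this gap.

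The paper does not peel the outermost time integral and verify a one-step composition identity. Instead it defines $\Xi^{(k)}_{n,t}:=\sum_{\mathbb{T}\in\mathfrak{T}_{n,k}}C_{\mathbb{T},t}u^{(k+n)}$ and $F^{(k)}_{n,t}$ as the iterated Duhamel integral, shows both vanish at $t=0$, computes $\partial_t$ of each, and shows that both satisfy the same first-order linear equation $\partial_t G^{(k)}_{n,t}=-\vec p_k^{\,2}\,G^{(k)}_{n,t}+W^{(k)}G^{(k+1)}_{n-1,t}$ in Fourier space; uniqueness of solutions plus induction on $n$ then gives $\Xi=F$. The crucial point is \emph{where} the sum over root edges $\bar e\in R_2(\mathbb{T})$ comes from: when you differentiate the product $\prod_{e\in R_1}e^{-tp_{\pi_1(e)}^2}\prod_{e\in R_2}e^{-t(\gamma^e+i\tau^e)}$ in $t$, the Leibniz rule produces precisely the sum $\sum_{\bar e\in R_2(\mathbb{T})}$, and in each summand the extra factor $\gamma^{\bar e}-q_{\bar e}^2+i\tau^{\bar e}$ cancels the corresponding propagator denominator, after which the $\tau^{\bar e}$-integration collapses the vertex $\bar v$ adjacent to $\bar e$. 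The paper's bijection is then between $\{(\mathbb{T},\bar e):\bar e\in R_2(\mathbb{T})\}$ and $\mathfrak{T}_{n-1,k+1}\times\{1,\dots,k\}$.

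Your inductive step, by contrast, tacitly requires the term-by-term identity
\begin{equation*}
\int_0^t dt_1\,\mathcal{T}^{(k)}(t-t_1)\,W_{j,k+1}\,C_{\tilde{\mathbb{T}},t_1}u \;=\; C_{\mathbb{T}(\tilde{\mathbb{T}},j),t}\,u,
\end{equation*}
together with the claim that the map $\mathfrak{T}_{n-1,k+1}\times\{1,\dots,k\}\to\mathfrak{T}_{n,k}$ ``generates every $n$-vertex forest with $k$ roots exactly once.'' Both are false. The map is onto, but each $\mathbb{T}$ is hit exactly $|R_2(\mathbb{T})|$ times, not once: distinct pairs $(\tilde{\mathbb{T}},j)$ correspond to distinct choices of which nontrivial root of $\mathbb{T}$ the new vertex is adjacent to. (Already for $k=n=2$, the forest with one vertex in each of its two trees has $|R_2|=2$ and is reached from two different pairs.) Consistently, the per-term time-integral identity cannot hold: a single $\int_0^t dt_1$ produces a \emph{single} frequency delta function that couples the $\tau$-variables of every nontrivial root, whereas $C_{\mathbb{T},t}$ has one $\tau$-delta per vertex and in particular \emph{independent} delta functions for vertices sitting in different trees. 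A dimension count of free $\tau$-variables already shows a mismatch of one per extra nontrivial root. The equality only emerges after summing over all $(\tilde{\mathbb{T}},j)$, and the cancellations needed to see this directly are exactly what the Leibniz-rule argument in the paper organizes for free. If you want to keep the ``peel the outer integral'' shape, you would need to argue about the whole sum $\sum_{\bar e\in R_2(\mathbb{T})}$ simultaneously, which essentially reconstructs the paper's computation of $B(\vec p_k)$ in \eqref{eq:FullytermequatreeB1}--\eqref{eq:FullytermequatreeB2}; the term-by-term route as proposed is not available.
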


\begin{proof} Assume $k \ge 1$ and $n \ge 1.$ Fix $T>0.$ For a given $u^{(k+n)} \in \mathfrak{L}^2_{(k+n)},$ put
\be\begin{split}
F^{(k)}_{n, t}: = \int^t_0 & d t_1 \int^{t_1}_0 d t_2 \cdots  \int^{t_{n-1}}_0 d t_n \mathcal{T}^{(k)} ( t-t_1) W^{(k)}\cdots \\
& \quad \times \mathcal{T}^{(k+n-1)} ( t_{n-1}-t_n) W^{(k+n-1)} \mathcal{T}^{(k+n)} ( t_n) u^{(k+n)}
\end{split}\ee
for $t \in [0, T].$ Clearly, $F^{(k)}_{n, 0} =0.$ Also, by Proposition \ref{prop:collisionoperatorindend} one has
\be
\sum_{\mathbb{T} \in \mathfrak{T}_{n,k}} C_{\mathbb{T}, 0} u^{(k+n)} =0.
\ee
Thus, \eqref{eq:Duhamelfulltermtree} holds true at $t=0.$

For $t >0,$ we compute the derivative of $F^{(k)}_{n, t}$ with respect to $t$ as follows
\be\begin{split}
\partial_t F^{(k)}_{n, t} =  \int^t_0 d t_2 \cdots  \int^{t_{n-1}}_0 d t_n  W^{(k)}\mathcal{T}^{(k+1)} ( t-t_2) & \cdots \mathcal{T}^{(k+n-1)} ( t_{n-1}-t_n) W^{(k+n-1)} \mathcal{T}^{(k+n)} ( t_n) u^{(k+n)}\\
+ \sum^k_{j=1} \triangle_j \int^t_0 d t_1 & \int^{t_1}_0 d t_2 \cdots  \int^{t_{n-1}}_0 d t_n \mathcal{T}^{(k)} ( t-t_1) W^{(k)}\cdots \\
& \times \mathcal{T}^{(k+n-1)} ( t_{n-1}-t_n) W^{(k+n-1)} \mathcal{T}^{(k+n)} ( t_n) u^{(k+n)}.
\end{split}\ee
Then in momentum space, one has
\beq\label{eq:Fullytermequa}
\partial_t F^{(k)}_{n, t} (\vec{p}_k) = - \vec{p}^2_k  F^{(k)}_{n, t} (\vec{p}_k) + W^{(k)} F^{(k+1)}_{n-1, t} (\vec{p}_k)
\eeq
for all $t \in [0, T].$ On the other hand, let
\be
\Xi^{(k)}_{n, t}: = \sum_{\mathbb{T} \in \mathfrak{T}_{n,k}} C_{\mathbb{T}, t} u^{(k+n)}.
\ee
In the sequel, we will show that $\Xi^{(k)}_{n, t}$ also satisfies \eqref{eq:Fullytermequa}, i.e.,
\beq\label{eq:Fullytermequatree}
\partial_t \Xi^{(k)}_{n, t} (\vec{p}_k) = - \vec{p}^2_k  \Xi^{(k)}_{n, t} (\vec{p}_k) + W^{(k)} \Xi^{(k+1)}_{n-1, t} (\vec{p}_k)
\eeq
for all $t \in [0, T].$ By induction over $n,$ this shows that $\Xi^{(k)}_{n, t} = F^{(k)}_{n, t}$ for all $k \ge 1, n \ge 1,$ and for all $t \in [ 0, T],$ as required.

First of all, by \eqref{eq:collisionkernel} the derivative of $\Xi^{(k)}_{n,t}$ with respect to $t$ can be computed as follows: (Note that the integral with respect to $\tau_e$ is not absolutely convergence after the differentiation, so the following calculation is formal, but we will indicate how to make this rigorous later.)
\beq\label{eq:FullytermequatreeB}\begin{split}
\partial_t \Xi^{(k)}_{n, t}(\vec{p}_k) & = - \sum_{\mathbb{T} \in \mathfrak{T}_{n,k}} \int d \vec{r}_{n+k} \frac{1}{(k+n)!} \sum_{\pi_2 \in \Pi_{k+n}} \prod_{e \in R_1 (\mathbb{T}) = L_1 (\mathbb{T})} e^{- t p^2_{\pi_1 (e)}} \delta (p_{\pi_1 (e)} - r_{\pi_2 (e)} )\\
\times \int & \prod_{e \in E_2 (\mathbb{T})} d q_e d \tau^e K_{\mathbb{T}}^{\pi_2} u^{(n+k)} (\vec{r}_{n+k} ) \prod_{e \in R_2 (\mathbb{T})} e^{- t (\gamma^e + \mathrm{i} \tau^e)} \delta (q_e - p_{\pi_1 (e)})\prod_{e \in L_2 (\mathbb{T})} \delta (q_e - r_{\pi_2 (e)}) \\
& \times \bigg [ \sum_{e \in R_1 (\mathbb{T})}p^2_{\pi_1 (e)} + \sum_{e \in R_2 (\mathbb{T})}(\gamma^e + \mathrm{i} \tau^e) \bigg ]\\
& \times \prod_{e \in E_2 (\mathbb{T})} \frac{1}{\gamma^e - q_e^2 + \mathrm{i} \tau^e}\prod_{v \in V(\mathbb{T})} \delta \big ( \tau^{e^a_v} - \tau^{e^b_v} - \tau^{e^c_v} \big ) \delta \big (q_{e^a_v} - q_{e^b_v} - q_{e^c_v}\big ).
\end{split}
\eeq
Note that
\be
\sum_{e \in R_2 (\mathbb{T})}(\gamma^e + \mathrm{i} \tau^e) = \sum_{e \in R_2 (\mathbb{T})}(\gamma^e - q^2_e + \mathrm{i} \tau^e) + \sum_{e \in R_2 (\mathbb{T})} q^2_e.
\ee
Since the delta-functions $\prod_{e \in R_2 (\mathbb{T})} \delta (q_e - p_{\pi_1 (e)})$ are involved in the integration, we have
\be
\sum_{e \in R_1 (\mathbb{T})}p^2_{\pi_1 (e)} + \sum_{e \in R_2 (\mathbb{T})} q^2_e = \sum_{e \in R(\mathbb{T})} p^2_{\pi_1 (e)} = \vec{p}^2_k.
\ee
Combing these two equations yields
\be
\partial_t \Xi^{(k)}_{n, t} (\vec{p}_k) = - \vec{p}^2_k \Xi^{(k)}_{n, t} (\vec{p}_k) + B (\vec{p}_k)
\ee
where
\beq\label{eq:FullytermequatreeB1}\begin{split}
B(\vec{p}_k) & = - \sum_{\mathbb{T} \in \mathfrak{T}_{n,k}} \int d \vec{r}_{n+k} \frac{1}{(k+n)!} \sum_{\pi_2 \in \Pi_{k+n}} \prod_{e \in R_1 (\mathbb{T}) = L_1 (\mathbb{T})} e^{- t p^2_{\pi_1 (e)}} \delta (p_{\pi_1 (e)} - r_{\pi_2 (e)} )\\
\times & \int \prod_{e \in E_2 (\mathbb{T})} d q_e d \tau^e K_{\mathbb{T}}^{\pi_2} u^{(n+k)} (\vec{r}_{n+k} ) \prod_{e \in R_2 (\mathbb{T})} e^{- t (\gamma^e + \mathrm{i} \tau^e)} \delta (q_e - p_{\pi_1 (e)}) \prod_{e \in L_2 (\mathbb{T})} \delta (q_e - r_{\pi_2 (e)})\\
\times & \sum_{e \in R_2 (\mathbb{T})}(\gamma^e - q^2_e + \mathrm{i} \tau^e) \prod_{e \in E_2 (\mathbb{T})} \frac{1}{\gamma^e - q_e^2 + \mathrm{i} \tau^e} \prod_{v \in V(\mathbb{T})} \delta \big ( \tau^{e^a_v} - \tau^{e^b_v} - \tau^{e^c_v} \big ) \delta \big (q_{e^a_v} - q_{e^b_v} - q_{e^c_v}\big ).
\end{split}
\eeq
In order to prove \eqref{eq:Fullytermequatree}, it then suffices to show that $B  (\vec{p}_k) = W^{(k)} \Xi^{(k+1)}_{n-1, t} (\vec{p}_k).$

To this end, for a given $\bar{e} \in R_2 (\mathbb{T}),$ let $\bar{v} = v (\bar{e}) \in V (\mathbb{T})$ be the only vertex such that $\bar{e} \in \bar{v}$ (there is such vertex by the definition of $R_2 (\mathbb{T})$). Then, \eqref{eq:FullytermequatreeB1} can be rewritten as (note that $L_2(\mathbb{T}) \cap R_2 (\mathbb{T}) = \emptyset$):
\beq\label{eq:FullytermequatreeB2}\begin{split}
B & (\vec{p}_k) = - \sum_{\mathbb{T} \in \mathfrak{T}_{n,k}} \sum_{\bar{e} \in R_2 (\mathbb{T})} \int d \vec{r}_{n+k} \frac{1}{(k+n)!} \sum_{\pi_2 \in \Pi_{k+n}} \prod_{e \in R_1 (\mathbb{T}) = L_1 (\mathbb{T})} e^{- t p^2_{\pi_1 (e)}} \delta (p_{\pi_1 (e)} - r_{\pi_2 (e)} )\\
& \times \int \prod_{e \in \bar{v}} d q_e d \tau^e e^{- t (\gamma^{\bar{e}} + \mathrm{i} \tau^{\bar{e}})} \delta (q_{\bar{e}} - p_{\pi_1 (\bar{e})}) \delta \big ( \tau^{e^a_{\bar{v}}} - \tau^{e^b_{\bar{v}}} - \tau^{e^c_{\bar{v}}} \big ) \delta \big (q_{e^a_{\bar{v}}} - q_{e^b_{\bar{v}}} - q_{e^c_{\bar{v}}} \big ) K^{\pi_2}_{\bar{v}}\\
& \times \int \prod_{\substack{e \in E_2 (\mathbb{T})\\ e \notin \bar{v}}} d q_e d \tau^e \prod_{ \substack{v \in V(\mathbb{T}) \\ v \not= \bar{v} }} K^{\pi_2}_v u^{(n+k)} (\vec{r}_{n+k} ) \prod_{ \substack{ e \in R_2 (\mathbb{T})\\ e \not= \bar{e} } } e^{- t (\gamma^e + \mathrm{i} \tau^e)} \delta (q_e - p_{\pi_1 (e)})\\
& \times \prod_{e \in L_2 (\mathbb{T})} \delta (q_e - r_{\pi_2 (e)}) \prod_{\substack{ e \in E_2 (\mathbb{T})\\ e \not= \bar{e}}} \frac{1}{\gamma^e - q_e^2 + \mathrm{i} \tau^e} \prod_{ \substack{ v \in V(\mathbb{T})\\ v \not= \bar{v} }} \delta \big ( \tau^{e^a_v} - \tau^{e^b_v} - \tau^{e^c_v} \big )  \delta \big (q_{e^a_v} - q_{e^b_v} - q_{e^c_v}\big ).
\end{split}
\eeq
Since $\gamma^{e^a_v} = \gamma^{e^b_v} + \gamma^{e^c_v},$ performing the $d \tau^{\bar{e}}$ integration, we have
\be
\int d \tau^{\bar{e}} e^{- t (\gamma^{\bar{e}} + \mathrm{i} \tau^{\bar{e}})} \delta \big ( \tau^{e^a_{\bar{v}}} - \tau^{e^b_{\bar{v}}} - \tau^{e^c_{\bar{v}}} \big ) = e^{- t \sum_{e \in \bar{v}, e \neq \bar{e}}(\gamma^e + \mathrm{i} \tau^e)}.
\ee
Combing this term with the factor $e^{- t\sum_{e \in R_2 (\mathbb{T}) \backslash \{ \bar{e} \}} (\gamma^e + \mathrm{i} \tau^e)},$ we obtain a factor
\be
e^{- t\sum_{e \in R} (\gamma^e + \mathrm{i} \tau^e)}
\ee
with $R = \{ e^b_{\bar{v}}, e^c_{\bar{v}} \} \cup R_2 (\mathbb{T}) \backslash \{ \bar{e} \}.$

For this $\mathbb{T}$ and $\bar{e},$ we construct a new forest $\tilde{\mathbb{T}} = \tilde{\mathbb{T}} (\mathbb{T}, \bar{e}) \in \mathfrak{T}_{n-1, k+1}$ with $R \cup R_1 (\mathbb{T})$ as being the set of $k+1$ roots of it as follows:
\begin{enumerate}

\item Remove the vertex $\bar{v}$ together with the (root) edge $\bar{e}$ (recall that $\bar{v}$ is the unique vertex to which $\bar{e}$ is adjacent).

\item Add the two daughter-edges adjacent to $\bar{v}$ to the set of roots such that the marked daughter-edge inherits the label of $\bar{e},$ while the unmarked daughter-edge becomes the $(k+1)$-th root of the new forest $\tilde{\mathbb{T}}.$

\item All the other roots keep their labels.

\end{enumerate}
The vertices and edges of $\tilde{\mathbb{T}}$ are respectively $V(\tilde{\mathbb{T}}) = V(\mathbb{T}) \backslash \{\bar{v} \}$ and $E(\tilde{\mathbb{T}}) = E (\mathbb{T}) \backslash \{ \bar{e} \},$ but the leaves of $\tilde{\mathbb{T}}$ is identical to that of $\mathbb{T},$ i.e., $L (\tilde{\mathbb{T}}) = L(\mathbb{T}).$ For illustrating this construction, see Fig.\ref{fig:foresttrans(4,3)to(3,4)} for an example of a forest $\mathbb{T} \in \mathfrak{T}_{4,3}$ together with the root of the first tree mapping to a forest $\tilde{\mathbb{T}}$ in $\mathfrak{T}_{3,4}.$
\begin{figure}[htb]
\begin{minipage}[t]{12cm}
\vspace{0pt}
 \centering
 \includegraphics[height=7cm,width=10cm]{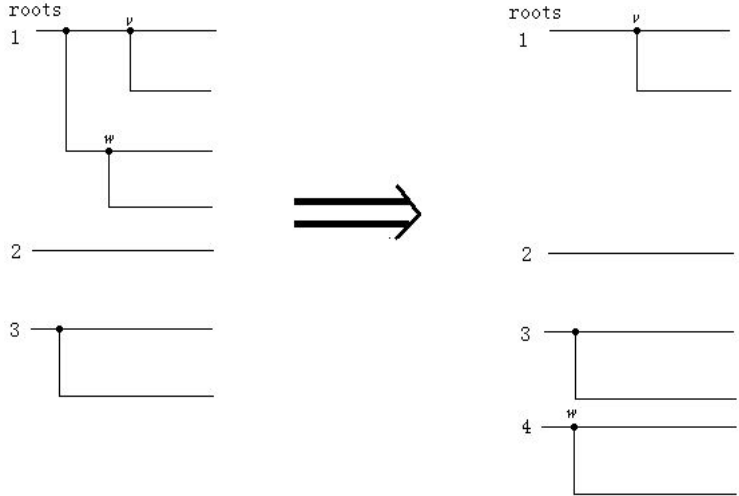}
\caption{\label{fig:foresttrans(4,3)to(3,4)} $\mathbb{T}$ together with $\bar{e}$ mapping to $\tilde{\mathbb{T}}$}
\end{minipage}
\end{figure}

Notice that the map constructed above from $\mathbb{T} \in \mathfrak{T}_{n,k}$ together with $\bar{e} \in R_2(\mathbb{T})$ to $\tilde{\mathbb{T}} \in \mathfrak{T}_{n-1, k+1}$ is surjective but not injective. This is because that for every $\tilde{\mathbb{T}} \in \mathfrak{T}_{n-1, k+1},$ its last $k+1$-th root can be attached to any of the first $k$ roots, and so there are $k$ possible choices of $\mathbb{T} \in \mathfrak{T}_{n,k}$ together with $\bar{e} \in R_2(\mathbb{T})$ mapping to $\tilde{\mathbb{T}}.$ Furthermore, this fact implies that the sum over $\mathbb{T} \in \mathfrak{T}_{n,k}$ and $\bar{e} \in R_2 (\mathbb{T})$ in \eqref{eq:FullytermequatreeB2} can be replaced by a sum over $\tilde{\mathbb{T}} \in \mathfrak{T}_{n-1, k+1}$ and a sum over the first $k$ roots of $\tilde{\mathbb{T}}.$ Therefore, using these notations we can rewrite \eqref{eq:FullytermequatreeB2} as
\be\begin{split}
B (\vec{p}_k) & = - \sum^k_{j=1} \int d \vec{\tilde{p}}_{k+1} \bigg [ \prod^k_{\iota \not= j} \delta (p_\iota - \tilde{p}_\iota) \bigg ] \delta \big (p_j - \tilde{p}_j - \tilde{p}_{k+1} \big ) K_{j, k+1}\\
\times & \sum_{\tilde{\mathbb{T}} \in \mathfrak{T}_{n-1,k+1}} \int d \vec{r}_{n+k} \frac{1}{(k+n)!} \sum_{\pi_2 \in \Pi_{k+n}} \prod_{e \in R_1 (\tilde{\mathbb{T}}) = L_1 (\tilde{\mathbb{T}})} e^{- t \tilde{p}^2_{\pi_1 (e)}} \delta (p_{\pi_1 (e)} - r_{\pi_2 (e)} )\\
& \times \int \prod_{e \in E_2 (\tilde{\mathbb{T}})} d \tau^e d q_e K_{\tilde{\mathbb{T}}}^{\pi_2} u^{(n+k)} (\vec{r}_{n+k} ) \prod_{ e \in R_2 (\tilde{\mathbb{T}})} e^{- t (\gamma^e + \mathrm{i} \tau^e)} \delta (q_e - \tilde{p}_{\pi_1 (e)}) \prod_{e \in L_2 (\tilde{\mathbb{T}})} \delta (q_e - r_{\pi_2 (e)})\\
& \times \prod_{e \in E_2 (\tilde{\mathbb{T}})} \frac{1}{\gamma^e - q_e^2 + \mathrm{i} \tau^e} \prod_{ v \in V(\tilde{\mathbb{T}}) } \delta \big ( \tau^{e^a_v} - \tau^{e^b_v} - \tau^{e^c_v} \big ) \delta \big (q_{e^a_v} - q_{e^b_v} - q_{e^c_v}\big )
=  W^{(k)} \Xi^{(k+1)}_{n-1, t} (\vec{p}_k)
\end{split}
\ee
where we have rewritten the integration over $d q_e$ with $e \in \bar{v}$ in the second line of \eqref{eq:FullytermequatreeB2}, so that it corresponds to the action of the operator $W^{(k)}$ as defined in \eqref{eq:W-MomentumSpace}. This proves \eqref{eq:Fullytermequatree}.

Finally, we turn to showing how to make the calculation in \eqref{eq:FullytermequatreeB} rigorous. To this end, we introduce a regularizing factor $\exp \{ - \epsilon \sum_{e \in E_2 (\mathbb{T})} |\tau^e | \}$ in the definition of $G_{\mathbb{T}, t}$ in \eqref{eq:collisionkernel} for any $\epsilon >0,$ denoted by $G_{\mathbb{T}, t; \epsilon}$ this new kernel, and obtain the corresponding operator $\Xi^{(k)}_{n, t; \epsilon}.$ Then the proceeding calculation for $\Xi^{(k)}_{n, t; \epsilon}$ in place of $\Xi^{(k)}_{n, t}$ can be done rigorously, and we have
\be
\Xi^{(k)}_{n, t} (\vec{p}_k) = \lim_{\epsilon \to 0} \int^t_0 d s \big [ - \vec{p}^2_k  \Xi^{(k)}_{n, s; \epsilon} (\vec{p}_k) + W^{(k)} \Xi^{(k+1)}_{n-1, s; \epsilon} (\vec{p}_k) \big ].
\ee
By Proposition \ref{prop:collisionoperator}, the integrations over $\tau$-variables in $\Xi^{(k)}_{n, t; \epsilon} $ are all absolutely convergent uniformly for every $\epsilon>0$ and for all $t \in [0,T]$ with any fixed $T>0.$ Thus, taking the limit $\epsilon \to 0$ on the right hand side of the above equation into the integral, we obtain
\be\label{eq:FullytermequatreeInt}
\Xi^{(k)}_{n, t} (\vec{p}_k) = \int^t_0 d s \big [ - \vec{p}^2_k  \Xi^{(k)}_{n, s} (\vec{p}_k) + W^{(k)} \Xi^{(k+1)}_{n-1, s} (\vec{p}_k) \big ]
\ee
which is equivalent to the equation \eqref{eq:Fullytermequatree}, since $\Xi^{(k)}_{n, 0} (\vec{p}_k) =0$ as shown above.
\end{proof}

Next, we consider the error terms in \eqref{eq:DuhamelExpand}.

\begin{thm}\label{th:Duhamelerrortree}
Let $k \ge 1$ and $n \ge 1.$ For any given $T>0,$ if $u^{(k+n)} \in L^2 ([0, T], \mathfrak{H}^1_{(k+n)})$ then
\beq\label{eq:Duhamelerrortree}
\begin{split}
- \sum_{\mathbb{T} \in \mathfrak{T}_{n,k}} \int^t_0 d s R_{\mathbb{T}, t-s} u^{(k+n)} (s) = \int^t_0 d t_1 & \int^{t_1}_0 d t_2 \cdots \int^{t_{n-1}}_0 d t_n \mathcal{T}^{(k)} ( t-t_1) W^{(k)} \cdots \\
& \times \mathcal{T}^{(k+n-1)} ( t_{n-1}-t_n ) W^{(k + n - 1)} u^{(k + n)} (t_n)
\end{split}\eeq
for all $ t \in [0, T].$
\end{thm}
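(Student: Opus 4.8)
The plan is to mirror the proof of Theorem~\ref{th:Duhamelfulltree}: I would show that both sides of \eqref{eq:Duhamelerrortree} solve the same Duhamel recursion in $t$ with vanishing data at $t=0$, and then close by an induction on $n$ together with the uniqueness of solutions of the distributional heat equation already used in the proof of Proposition~\ref{prop:solutionequivalence} (cf.\ \cite[Theorem~11.2]{LR2002}). Write $\Phi^{(k)}_{n,t}:=-\sum_{\mathbb{T}\in\mathfrak{T}_{n,k}}\int^t_0 ds\,R_{\mathbb{T},t-s}u^{(k+n)}(s)$ for the left-hand side of \eqref{eq:Duhamelerrortree} and $H^{(k)}_{n,t}$ for the right-hand side, with the convention $\Phi^{(k)}_{0,t}=H^{(k)}_{0,t}:=u^{(k)}(t)$. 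Isolating the outermost time integral in \eqref{eq:Duhamelerrortree} gives $H^{(k)}_{n,t}=\int^t_0 dt_1\,\mathcal{T}^{(k)}(t-t_1)W^{(k)}H^{(k+1)}_{n-1,t_1}$, so that $H^{(k)}_{n,t}$ is the mild solution of $\partial_t H^{(k)}_{n,t}=\triangle^{(k)}H^{(k)}_{n,t}+W^{(k)}H^{(k+1)}_{n-1,t}$ with $H^{(k)}_{n,0}=0$ (in momentum space $\triangle^{(k)}$ acts as $-\vec{p}_k^2$). Since $\Phi^{(k)}_{n,0}=0$ is immediate (the integral runs over an empty interval), it suffices to prove that $\Phi^{(k)}_{n,t}$ satisfies the same equation for all $k\ge 1$ and $n\ge 1$; the induction on $n$ (with $n-1=0$ covered by the convention) then forces $\Phi^{(k)}_{n,t}=H^{(k)}_{n,t}$, which is \eqref{eq:Duhamelerrortree}.

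For the base case $n=1$ I would unwind Definition~\ref{def:erroroperator} for the $k$ forests in $\mathfrak{T}_{1,k}$: each such $\mathbb{T}$ consists of $k-1$ trivial trees and one tree carrying a single vertex $\bar v$ whose mother-edge is a root and whose two daughter-edges are leaves bearing neither propagator nor frequency variable. Hence the only surviving $\tau$-integration in $R_{\mathbb{T},\sigma}$ is over $\tau^{e^a_{\bar v}}$, and performing it by \eqref{eq:PropagatorIdent1} turns the factor $e^{-\sigma(\gamma^{e^a_{\bar v}}+\mathrm{i}\tau^{e^a_{\bar v}})}(\gamma^{e^a_{\bar v}}-q^2_{e^a_{\bar v}}+\mathrm{i}\tau^{e^a_{\bar v}})^{-1}$ into the Gaussian $-e^{-\sigma q^2_{e^a_{\bar v}}}$ pinned by $\delta(q_{e^a_{\bar v}}-p_{\pi_1(e^a_{\bar v})})$; after the symmetrization $\frac{1}{(k+1)!}\sum_{\pi_2}$ and the momentum-conservation $\delta$'s this shows that $R_{\mathbb{T},\sigma}$ equals $\mathcal{T}^{(k)}(\sigma)$ composed with the single collision indexed by the root carrying $\bar v$, so $\partial_\sigma R_{\mathbb{T},\sigma}=\triangle^{(k)}R_{\mathbb{T},\sigma}$ and $-\sum_{\mathbb{T}\in\mathfrak{T}_{1,k}}R_{\mathbb{T},0}$ is identified with $W^{(k)}$ via \eqref{eq:W-MomentumSpace}. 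Differentiating $\Phi^{(k)}_{1,t}$ in $t$ then yields $\partial_t\Phi^{(k)}_{1,t}=\triangle^{(k)}\Phi^{(k)}_{1,t}+W^{(k)}u^{(k+1)}(t)$, i.e.\ the required equation with $\Phi^{(k+1)}_{0,t}=u^{(k+1)}(t)$.

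For the inductive step I would differentiate $\Phi^{(k)}_{n,t}$ in $t$, which produces a boundary term $-\sum_{\mathbb{T}\in\mathfrak{T}_{n,k}}R_{\mathbb{T},0}u^{(k+n)}(t)$ and an interior term $-\sum_{\mathbb{T}}\int^t_0 ds\,(\partial_\tau R_{\mathbb{T},\tau})|_{\tau=t-s}\,u^{(k+n)}(s)$. Exactly as in \eqref{eq:FullytermequatreeB}, differentiating the root exponentials $e^{-\tau p^2_{\pi_1(e)}}$ ($e\in R_1(\mathbb{T})$) and $e^{-\tau(\gamma^e+\mathrm{i}\tau^e)}$ ($e\in R_2(\mathbb{T})$) brings down $-\big[\sum_{e\in R_1}p^2_{\pi_1(e)}+\sum_{e\in R_2}(\gamma^e+\mathrm{i}\tau^e)\big]$; writing $\gamma^e+\mathrm{i}\tau^e=(\gamma^e-q_e^2+\mathrm{i}\tau^e)+q_e^2$ and using $\prod_{e\in R_2}\delta(q_e-p_{\pi_1(e)})$ together with the momentum-conservation $\delta$'s gives $\sum_{e\in R_1}p^2_{\pi_1(e)}+\sum_{e\in R_2}q_e^2=\vec{p}_k^2$, so the interior term equals $-\vec{p}_k^2\,\Phi^{(k)}_{n,t}(\vec{p}_k)$ plus a term in which, for a distinguished $\bar e\in R_2(\mathbb{T})$ at the vertex $\bar v=v(\bar e)$, the propagator $(\gamma^{\bar e}-q^2_{\bar e}+\mathrm{i}\tau^{\bar e})^{-1}$ has been cancelled. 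Carrying out the $\tau^{\bar e}$-integration against $\delta(\tau^{e^a_{\bar v}}-\tau^{e^b_{\bar v}}-\tau^{e^c_{\bar v}})$ transfers the exponentials onto the daughter-edges of $\bar v$, and the surjection $(\mathbb{T},\bar e)\mapsto\tilde{\mathbb{T}}\in\mathfrak{T}_{n-1,k+1}$ of Theorem~\ref{th:Duhamelfulltree} — remove $\bar v$ and promote its daughter-edges to roots, the marked one inheriting the label of $\bar e$ and the unmarked one becoming the $(k+1)$-st root, each $\tilde{\mathbb{T}}$ being reached by the $k$ ways of re-attaching that root, now carrying along the maximal-vertex data and the absent daughter-propagators — recasts this term as $W^{(k)}\Phi^{(k+1)}_{n-1,t}(\vec{p}_k)$ by \eqref{eq:W-MomentumSpace}, while for $n\ge 2$ the boundary term vanishes (it carries a factor issuing from an internal time integral taken over an empty interval, equivalently $H^{(k+1)}_{n-1,0}=0$). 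This gives $\partial_t\Phi^{(k)}_{n,t}=\triangle^{(k)}\Phi^{(k)}_{n,t}+W^{(k)}\Phi^{(k+1)}_{n-1,t}$ and closes the induction. As in the proof of Theorem~\ref{th:Duhamelfulltree}, since $\partial_\tau R_{\mathbb{T},\tau}$ contains a $\tau^e$-integral that is no longer absolutely convergent after the differentiation, this computation should be carried out with the regularized kernels obtained by inserting $\exp\{-\epsilon\sum_{e\in E_2(\mathbb{T})}|\tau^e|\}$ into $Q_{\mathbb{T},t}$ and passing to the limit $\epsilon\to0$ at the end with the help of Remark~\ref{rk:erroroperatorGammafree}; the hypothesis $u^{(k+n)}\in L^2([0,T],\mathfrak{H}^1_{(k+n)})$ guarantees that the intermediate expressions lie in the spaces on which $W^{(k)}$ and $\mathcal{T}^{(k)}$ act and on which the uniqueness argument applies. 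The step I expect to be the main obstacle is the combinatorial bookkeeping in this forest surjection — tracking propagators, frequency variables and index contractions consistently through the removal of $\bar v$ — and, relatedly, verifying that the absence of propagators on the daughter-edges of the maximal vertices of $\mathbb{T}$ is exactly what is needed for the cancelled interior term (together with the contribution localized at $s=t$ that appears when $n=1$) to assemble into one clean application of $W^{(k)}$.
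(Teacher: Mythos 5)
Your proposal takes a genuinely different route from the paper. The paper proves Theorem~\ref{th:Duhamelerrortree} \emph{directly} from Theorem~\ref{th:Duhamelfulltree}: it applies Fubini to pull the \emph{innermost} time variable $t_n=s$ to the outside, so that the right side becomes $\int_0^t ds\,\sum_{\tilde{\mathbb{T}}\in\mathfrak{T}_{n-1,k}}C_{\tilde{\mathbb{T}},t-s}W^{(k+n-1)}u^{(k+n)}(s)$, and then absorbs the extra $W^{(k+n-1)}$ into the forest structure by \emph{splitting a leaf} $\bar e\in L(\tilde{\mathbb{T}})$ into a new vertex, a surjection $\mathfrak{T}_{n-1,k}\times L\to\mathfrak{T}_{n,k}$ whose fibers are indexed by $M(\mathbb{T})$. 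No differentiation in $t$, no boundary terms, no singularities. You instead peel off the \emph{outermost} $t_1$, differentiate $\Phi^{(k)}_{n,t}$, and argue by uniqueness plus induction on $n$ — mirroring the proof of Theorem~\ref{th:Duhamelfulltree} rather than applying it, and using that theorem's forest map $(\mathbb{T},\bar e)\mapsto\tilde{\mathbb{T}}\in\mathfrak{T}_{n-1,k+1}$ (remove the vertex at a root), which is a different map than the one the paper uses here. This is a plausible alternative, but it exposes boundary-term delicacies that the paper's route avoids entirely.

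Two concrete gaps. First, your justification that the boundary term $-\sum_{\mathbb{T}}R_{\mathbb{T},0}u^{(k+n)}(t)$ vanishes for $n\ge 2$ — ``equivalently $H^{(k+1)}_{n-1,0}=0$'' — is circular: $H^{(k+1)}_{n-1,0}=0$ is a property of the \emph{right}-hand side, and you cannot use it to determine $R_{\mathbb{T},0^+}$ before the identity is proved. The claim is in fact \emph{true} for $n\ge 2$, but the correct reason is the $\Upsilon$-independence of $Q_{\mathbb{T},t}$ (Remark~\ref{rk:erroroperatorGammafree}) combined with the $|\gamma|\to\infty$ limit as in Proposition~\ref{prop:collisionoperatorindend}, applied to a tree of $\mathbb{T}$ \emph{not} containing the distinguished $\bar v$: for $n\ge 2$ either $\bar v$'s tree has $\ge2$ vertices (and its $\tau$-integral decays like $|\gamma|^{-1}$) or some other tree has $\ge1$ vertex (and its $\tau$-integral decays), forcing $R_{\mathbb{T},0^+}=0$; for $n=1$ neither holds and $R_{\mathbb{T},0^+}$ is nonzero — precisely the source you exploit in your base case. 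Second, your ``interior term'' manipulation performs the $\tau^{\bar e}$-integration against $\delta(\tau^{e^a_{\bar v}}-\tau^{e^b_{\bar v}}-\tau^{e^c_{\bar v}})$ with $\bar v=v(\bar e)$, but in $Q_{\mathbb{T},t}$ that $\tau$-delta is deleted precisely when $v(\bar e)$ coincides with the distinguished maximal vertex in the $\sum_{\bar v\in M(\mathbb{T})}$ — the case of a one-vertex tree at the root $\bar e$. There the cancelled-propagator integral is $\int d\tau^{\bar e}e^{-\sigma(\gamma^{\bar e}+\mathrm{i}\tau^{\bar e})}$, which is $0$ pointwise for $\sigma>0$ but singular as $\sigma\to0^+$; distinguishing this consistently from the Leibniz boundary term (so that only the $\bar v\ne v(\bar e)$ triples reassemble into $W^{(k)}\Phi^{(k+1)}_{n-1,t}$, and the endpoint pieces either vanish or account for the $n=1$ source) is precisely the bookkeeping you flag as ``the main obstacle,'' and as written the proposal does not settle it. Both gaps are fixable, but they are exactly the complications the paper's Fubini-based reduction to Theorem~\ref{th:Duhamelfulltree} was designed to sidestep.
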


\begin{proof}
Assume that $k \ge 1$ and $n \ge 1.$ Fix $T>0.$ For any $u^{(k+n)} \in L^2 ([0, T], \mathfrak{H}^1_{(k+n)})$ we put
\be\begin{split}
Q^{(k)}_n (t) u^{(k+n)} : = & \int^t_0 d t_1 \int^{t_1}_0 d t_2 \cdots \int^{t_{n-1}}_0 d t_n \mathcal{T}^{(k)} ( t-t_1) W^{(k)} \cdots \\
& \quad \times \mathcal{T}^{(k+n-1)} ( t_{n-1}-t_n ) W^{(k + n - 1)} u^{(k + n)} (t_n)
\end{split}
\ee
for all $t \in [0, T].$ By Fubini's theorem, we have
\be\begin{split}
Q^{(k)}_n (t) u^{(k+n)} = & \int^t_0 d s \bigg [ \int^{t-s}_0 d t_1 \int^{t_1}_0 d t_2 \cdots \int^{t_{n-2}}_0 d t_{n-1} \mathcal{T}^{(k)} ( t-s-t_1) W^{(k)} \cdots \\
& \quad \times \mathcal{T}^{(k+n-2)} ( t_{n-2}-t_{n-1}) W^{(k+n-2)} \mathcal{T}^{(k+n-1)} ( t_{n-1}) \bigg ] W^{(k + n - 1)} u^{(k + n)} (s).
\end{split}
\ee
By \eqref{eq:Duhamelfulltermtree}, we have
\beq\label{eq:Errortermtree}
Q^{(k)}_n (t) u^{(k+n)} = \int^t_0 d s \sum_{\tilde{\mathbb{T}} \in \mathfrak{T}_{n-1, k}} C_{\tilde{\mathbb{T}}, t-s} W^{(k + n - 1)} u^{(k + n)} (s).
\eeq
Furthermore, by \eqref{eq:collisionkernel} and \eqref{eq:W-MomentumSpace} we have, in momentum space,
\beq\label{eq:Errortermequatree}\begin{split}
Q^{(k)}_n & (t) u^{(k+n)} (\vec{p}_k)\\
= - & \int^t_0 d s \sum_{\tilde{\mathbb{T}} \in \mathfrak{T}_{n-1,k}} \int \frac{d \vec{r}_{k + n-1}}{(k+n-1)!} \sum_{\pi_2 \in \Pi_{k+n-1}} \prod_{e \in R_1 (\tilde{\mathbb{T}}) = L_1 (\tilde{\mathbb{T}})} e^{- (t-s) p^2_{\pi_1 (e)}} \delta (p_{\pi_1 (e)} - r_{\pi_2 (e)} )\\
& \times \int \prod_{e \in E_2 (\tilde{\mathbb{T}})} d q_e d \tau^e \prod_{ e \in R_2 (\tilde{\mathbb{T}})} e^{- (t-s) (\gamma^e + \mathrm{i} \tau^e)} \delta (q_e - p_{\pi_1 (e)}) \prod_{e \in L_2 (\tilde{\mathbb{T}})} \delta (q_e - r_{\pi_2 (e)})\\
& \times \prod_{e \in E_2 (\tilde{\mathbb{T}})} \frac{1}{\gamma^e - q_e^2 + \mathrm{i} \tau^e}\prod_{ v \in V(\tilde{\mathbb{T}}) } \delta \big ( \tau^{e^a_v} - \tau^{e^b_v} - \tau^{e^c_v} \big ) \delta \big (q_{e^a_v} - q_{e^b_v} - q_{e^c_v}\big )\\
& \times K^{\pi_2}_{\tilde{\mathbb{T}}} \bigg [ \sum^{k+ n -1}_{j=1} \int d \vec{\tilde{q}}_{k+1} \bigg [ \prod^{k+n-1}_{\iota \not= j} \delta (r_\iota - \tilde{q}_\iota) \bigg ] \delta \big (r_j - \tilde{q}_j - \tilde{q}_{k+n} \big ) K_{j, k+n} u^{(n+k)} (s, \vec{\tilde{q}}_{n+k}) \bigg ]
\end{split}
\eeq
where the sum over $j$ in the last line corresponds to the action of the operator $W^{(k+n-1)}$ on $u^{(n+k)} (s).$ Evidently, $j$ labels the leaves $L (\tilde{\mathbb{T}}).$ Thus, choosing $j=1, \ldots, k+n-1$ corresponds to fixing one of the leaves of $\tilde{\mathbb{T}}.$

For a fixed $\tilde{\mathbb{T}} \in \mathfrak{T}_{n-1, k}$ and $\bar{e} \in L(\tilde{\mathbb{T}}),$ we construct a new forest $\mathbb{T} \in \mathfrak{T}_{n,k}$ by splitting the edge $\bar{e}$ with a new vertex and attaching a new leaf to this vertex as its unmarked daughter-edge (e.g. Fig.\ref{fig:foresttrans(4,3)to(5,3)}).
\begin{figure}[htb]
\begin{minipage}[t]{12cm}
\vspace{0pt}
 \centering
 \includegraphics[height=7cm,width=10cm]{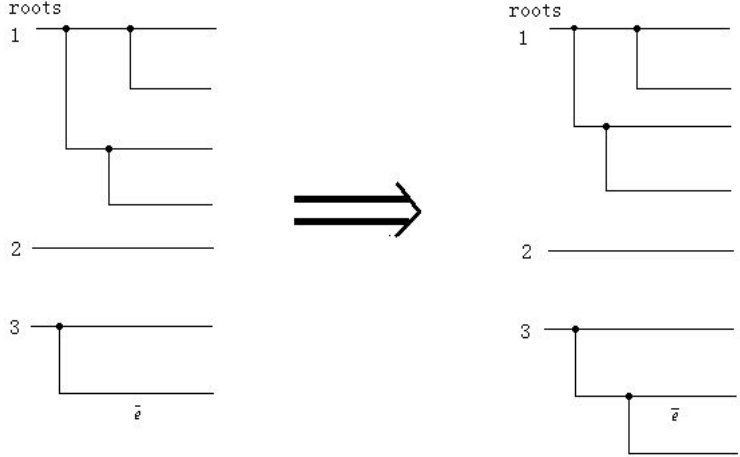}
\caption{\label{fig:foresttrans(4,3)to(5,3)} $(\tilde{\mathbb{T}}, \bar{e})$ mapping to $\mathbb{T}$}
\end{minipage}
\end{figure}
We notice that the map constructed from $\tilde{\mathbb{T}} \in \mathfrak{T}_{n-1,k}$ together with $\bar{e} \in L (\mathbb{T})$ to $\mathbb{T} \in \mathfrak{T}_{n, k}$ is surjective but not injective. Indeed, for a given $\mathbb{T} \in \mathfrak{T}_{n, k},$ by removing a $\bar{v} \in M(\mathbb{T})$ and deleting the daughter-edges of $\bar{v},$ we obtain a $\tilde{\mathbb{T}} \in \mathfrak{T}_{n-1, k}$ from which we can obtain $\mathbb{T}.$ (Recall that $M(\mathbb{T})$ denotes the set of maximal vertices of $\mathbb{T},$ i.e., $\bar{v} \in M (\mathbb{T})$ if and only if $\bar{v} \in V(\mathbb{T})$ and there is no $v \in V(\mathbb{T})$ such that $\bar{v} \prec v.$) This also yields that the sum over $\tilde{\mathbb{T}} \in \mathfrak{T}_{n-1,k}$ and $j \in \{1, 2, \ldots, k+n-1 \}$ in \eqref{eq:Errortermequatree} can be replaced by a sum over $\mathbb{T} \in \mathfrak{T}_{n, k}$ and a sum over $\bar{v} \in M(\mathbb{T}).$

Next, we want to rewrite \eqref{eq:Errortermequatree} in terms of forests $\mathbb{T} \in \mathfrak{T}_{n, k}.$ To this end, note that
\begin{enumerate}

\item From \eqref{eq:Errortermequatree} the two daughter-edges of $\bar{v}$ will have the integrations for $q$-variables as all the other edges, but they will not have any $\tau$-variable, any $\gamma$-variable, or any propagator.

\item The labelling $\pi_2$ of the leaves of $\tilde{\mathbb{T}}$ induces a labelling with $\{1, 2, \ldots, k+n-1 \}$ of the leaves of $\mathbb{T},$ except for the unmarked daughter-edge of the chosen $\bar{v} \in M (\mathbb{T})$ which is always labelled by the number $k+n.$

\item Since $u^{(n+k)} (\vec{r}_{k+n})$ is symmetric with respect to permutation on $\vec{r}_{n+k},$ we can restore a full symmetry of the leaf-variables $\vec{r}_{k+n};$ for this, we need to replace the sum over $\pi_2 \in \Pi_{k+n-1}$ by a sum over $\pi_2 \in \Pi_{k+n}$ and replace the factor $(k+n-1)!$ by $(k+n)!.$

\end{enumerate}
Thus, we can rewrite \eqref{eq:Errortermequatree} as
\be\begin{split}
Q^{(k)}_n & (t) u^{(k+n)} (\vec{p}_k)\\
= & - \int^t_0 d s \sum_{\mathbb{T} \in \mathfrak{T}_{n,k}} \sum_{\pi_2 \in \Pi_{k+n}} \sum_{\bar{v} \in M(\mathbb{T})} \int \frac{d \vec{r}_{n+k}}{(k+n)!} \prod_{e \in R_1 (\mathbb{T}) = L_1 (\mathbb{T})} e^{- (t-s) \tilde{p}^2_{\pi_1 (e)}} \delta (p_{\pi_1 (e)} - r_{\pi_2 (e)} )\\
& \times \int \prod_{e \in E_2 (\mathbb{T})} d q_e \prod_{ \substack{ e \in E_2 (\mathbb{T})\\ e \notin D_{\bar{v}}}} \frac{d \tau^e}{\gamma^e - q_e^2 + \mathrm{i} \tau^e} K^{\pi_2}_{\mathbb{T}} u^{(n+k)} (s, \vec{r}_{n+k}) \prod_{ e \in R_2 (\mathbb{T})} e^{- (t-s) (\gamma^e + \mathrm{i} \tau^e)} \delta (q_e - p_{\pi_1 (e)})\\
& \times \prod_{e \in L_2 (\mathbb{T})} \delta (q_e - r_{\pi_2 (e)}) \prod_{\substack{ v \in V(\mathbb{T})\\ v \not= \bar{v} } } \delta \big ( \tau^{e^a_v} - \tau^{e^b_v} - \tau^{e^c_v} \big ) \prod_{ v \in V(\mathbb{T}) } \delta \big (q_{e^a_v} - q_{e^b_v} - q_{e^c_v}\big )\\
=  & - \sum_{\mathbb{T} \in \mathfrak{T}_{n,k}} \int^t_0 d s R_{\mathbb{T}, t-s} u^{(k+n)} (s)
\end{split}
\ee
where $D_{\bar{v}}$ denotes the set of daughter-edges of $\bar{v}.$ The proof is complete.
\end{proof}

In conclusion, combining Theorems \ref{th:Duhamelfulltree} and \ref{th:Duhamelerrortree} yields the expression \eqref{eq:NSEhierarchyTreeExpanssion}.

\section{A prior space-time estimates}\label{SpacetimeEstimate}

In this section, we first present {\it a prior} space-time estimates for interaction operators $C_{\mathbb{T}, t}$ and $R_{\mathbb{T}, t},$ and then give the proof of Theorem \ref{th:unique}.

\subsection{Space-time estimates for collision operators}

We have {\it a prior} space-time estimates for $C_{\mathbb{T}, t}$ as follows.

\begin{thm}\label{th:Collisionspace-timebound}
Fix $\alpha > \frac{1}{2}$ and $k \ge 1.$ Let $M>0.$ Then there exists a constant $C>0$ depending only on $\alpha, k$ such that for any $v^{(k)} \in \mathcal{S}_{(k)} (\mathbb{R}^3)$ satisfying
\beq\label{eq:Testfunctbound}
\sup_{1 \le i_1, \ldots, i_k \le 3} \sup_{p_1, \ldots, p_k \in \mathbb{R}^3} \langle p_1 \rangle^3 \cdots \langle p_k \rangle^3 | v^{(k)}_{i_1, \ldots, i_k} (\vec{p}_k) | \le M,
\eeq
for any $n \ge 0,$ and for all $\mathbb{T} \in \mathfrak{T}_{n,k},$ we have
\beq\label{eq:Collisionspace-timebound}
\big | \big \langle v^{(k)}, C_{\mathbb{T}, t} u^{(n+k)} \big \rangle_{\mathbb{L}^2_{(k)}} \big | \le M C^n \big \| u^{(n+k)} \big \|_{\mathbb{H}^\alpha_{(n+k)}}
\eeq
for all $u^{(n+k)} \in \mathfrak{H}^\alpha_{(n+k)}$ and any $0< t \le 1.$
\end{thm}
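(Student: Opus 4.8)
\emph{Plan of proof.} Everything is carried out in momentum space, starting from the explicit kernel \eqref{eq:collisionkernel}. By Plancherel the pairing equals $\sum_{i_1,\dots,i_k}\int d\vec p_k\,\overline{v^{(k)}_{i_1,\dots,i_k}(\vec p_k)}\big(C_{\mathbb{T},t}u^{(n+k)}\big)_{i_1,\dots,i_k}(\vec p_k)$; inserting \eqref{eq:collisionoerator}--\eqref{eq:collisionkernel} and using $|v^{(k)}(\vec p_k)|\le M\prod_i\langle p_i\rangle^{-3}$ from \eqref{eq:Testfunctbound} extracts the factor $M$. Since the multiplier attached to a vertex in \eqref{eq:K_vdf} has modulus $\le|q_{e^b_v}+q_{e^c_v}|=|q_{e^a_v}|$, and each vertex carries at most two Einstein contractions over an index in $\{1,2,3\}$, iterating over the $n$ vertices gives the crude but sufficient bound $|(K^{\pi_2}_{\mathbb{T}}u^{(n+k)})_{i_1,\dots,i_k}(\vec r_{n+k})|\le C^n\big(\prod_{v\in V(\mathbb{T})}|q_{e^a_v}|\big)\max_{\rm comp}|u^{(n+k)}(\vec r_{n+k})|$, the $q_e$ being taken after the momentum $\delta$'s are resolved. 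Each of the $(k+n)!$ summands of the $\pi_2$-sum is estimated identically (indeed, by the symmetry of $u^{(n+k)}\in\mathfrak H^\alpha_{(n+k)}$ they coincide after relabelling $\vec r_{n+k}$), so the prefactor $1/(k+n)!$ is absorbed. Next, the vertex relations $q_{e^a_v}=q_{e^b_v}+q_{e^c_v}$ and the identifications $q_e=p_{\pi_1(e)}$ $(e\in R_2(\mathbb{T}))$, $q_e=r_{\pi_2(e)}$ $(e\in L_2(\mathbb{T}))$, $p_{\pi_1(e)}=r_{\pi_2(e)}$ $(e\in R_1(\mathbb{T}))$ fix every $q_e$ as the sum of the $r$'s at the leaves of the subtree below $e$ (cf.\ Remark \ref{rk:collisionoperator}), leaving exactly $k$ constraints $p_i=\sum_{\ell\in T_i}r_{\pi_2(\ell)}$, one per component tree; I integrate out $\vec p_k$ against these and use $|e^{-tq^2}|\le1$ for $0<t\le1$ in the $R_1$-factors. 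It thus remains to prove
\[
\int d\vec r_{n+k}\,\Big(\prod_{i=1}^{k}\langle p_i\rangle^{-3}\Big)\Big(\prod_{v\in V(\mathbb{T})}|q_{e^a_v}|\Big)\,\big|J_{\mathbb{T},\pi_2}\big|\,\max_{\rm comp}|u^{(n+k)}(\vec r_{n+k})|\ \le\ C^n\,\|u^{(n+k)}\|_{\mathbb{H}^\alpha_{(n+k)}},
\]
with $p_i$ and all $q_e$ the above linear combinations of $\vec r_{n+k}$, and $J_{\mathbb{T},\pi_2}$ the (absolutely convergent, by Proposition \ref{prop:collisionoperator}) value of the $\tau$-integrations of $\prod_{e\in E_2(\mathbb{T})}(\gamma^e-q_e^2+\mathrm i\tau^e)^{-1}$ against the vertex $\delta$'s in $\tau$ and the root factors $\prod_{e\in R_2(\mathbb{T})}e^{-t(\gamma^e+\mathrm i\tau^e)}$.

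The key step is the frequency estimate. Choosing $\gamma^e=-1$ on every leaf (permissible by Proposition \ref{prop:collisionoperatorindend}, so that $|\gamma^e|\ge1$ and $|\gamma^e-q_e^2+\mathrm i\tau^e|\ge\tfrac1{\sqrt2}(\langle q_e\rangle^2+|\tau^e|)$ on each edge), I claim $|J_{\mathbb{T},\pi_2}|\le C^n\prod_{v\in V(\mathbb{T})}\big(\langle q_{e^b_v}\rangle\langle q_{e^c_v}\rangle\big)^{-1}$. The $\tau$-integral factorises over component trees; in a nontrivial tree, integrate the vertices from a maximal one downwards, as in the proof of Proposition \ref{prop:collisionoperator}. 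At a vertex $v'$ the $\delta$ sets $\tau^{e^c_{v'}}=\tau^{e^a_{v'}}-\tau^{e^b_{v'}}$, and $e^{-\mathrm i t\tau^{\mathrm{root}}}$ together with the root-edge propagator depend on $\tau^{e^b_{v'}}$ only through the then-fixed sum $\tau^{e^b_{v'}}+\tau^{e^c_{v'}}=\tau^{e^a_{v'}}$; hence the $\tau^{e^b_{v'}}$-integration is the $\tau^{\mathrm{root}}$-independent quantity $\int_{\mathbb R}(\langle q_{e^b_{v'}}\rangle^2+|\tau|)^{-1}(\langle q_{e^c_{v'}}\rangle^2+|\tau^{e^a_{v'}}-\tau|)^{-1}d\tau\le C\langle q_{e^b_{v'}}\rangle^{-1}\langle q_{e^c_{v'}}\rangle^{-1}$ by Cauchy--Schwarz (using $\int_{\mathbb R}(\langle q\rangle^2+|\tau|)^{-2}d\tau=2\langle q\rangle^{-2}$, an instance of Lemma \ref{lem:Frequencyestimate}), after which $\tau^{e^a_{v'}}$ becomes a free leaf-frequency of the reduced tree with its propagator untouched. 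Iterating removes all $n$ vertices; for each tree one is finally left with its root-edge propagator and the factor $e^{-t(\gamma^e+\mathrm i\tau^e)}$, and that last integral is the \emph{exact} identity \eqref{eq:PropagatorIdent1}, $\int d\tau^e\,e^{-t(\gamma^e+\mathrm i\tau^e)}(\gamma^e-q_e^2+\mathrm i\tau^e)^{-1}=-e^{-tq_e^2}$, of modulus $\le1$. Each of the $n$ steps costs only a universal constant, so no factor $n!$ (and no accumulating logarithm) appears; note that the mother-edge momenta $q_{e^a_v}$ never enter $J_{\mathbb{T},\pi_2}$.

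With this bound, in the displayed integral every internal edge is simultaneously a mother- and a daughter-edge, contributing $|q_e|\langle q_e\rangle^{-1}\le1$; every nontrivial-tree root is a mother-edge carrying also $\langle p_i\rangle^{-3}$ from $v^{(k)}$, contributing $|p_i|\langle p_i\rangle^{-3}\le1$; so the integrand is $\le C^n\prod_{e\in L_2(\mathbb{T})}\langle r_{\pi_2(e)}\rangle^{-1}\cdot\prod_{e\in R_1(\mathbb{T})}\langle r_{\pi_2(e)}\rangle^{-3}\cdot\max_{\rm comp}|u^{(n+k)}(\vec r_{n+k})|$. Since $L_1(\mathbb{T})\cup L_2(\mathbb{T})$ enumerates all $n+k$ leaf-variables, $\max_{\rm comp}|u^{(n+k)}|\le(\sum_{\rm comp}|u^{(n+k)}_{\rm comp}|^2)^{1/2}$, and Cauchy--Schwarz in $\vec r_{n+k}$ against $\prod_j\langle r_j\rangle^{\alpha}(\sum_{\rm comp}|u^{(n+k)}_{\rm comp}|^2)^{1/2}$ (of $L^2$-norm $\|u^{(n+k)}\|_{\mathbb{H}^\alpha_{(n+k)}}$) reduces everything to the finiteness of $\prod_{e\in L_2}\int_{\mathbb R^3}\langle r\rangle^{-2-2\alpha}dr\cdot\prod_{e\in R_1}\int_{\mathbb R^3}\langle r\rangle^{-6-2\alpha}dr$. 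Here $\int_{\mathbb R^3}\langle r\rangle^{-2-2\alpha}dr<\infty$ precisely because $\alpha>\tfrac12$ --- this is the sole place the hypothesis is used --- and each factor is an $\alpha$-dependent constant, of which there are at most $n+k$, producing $C_\alpha^{n+k}\le C^n$ with $C=C(\alpha,k)$. This completes the estimate.

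The difficulty is concentrated in the frequency estimate of the second paragraph: handling the $\tau$-integrals by the plain bound used in Proposition \ref{prop:collisionoperator} would cost a logarithm at each vertex, and along a deep branch of $\mathbb{T}$ these pile up on one leaf momentum, integrating over $\mathbb R^3$ to a constant growing like $n!$. Pairing the two daughter-frequencies of each vertex by Cauchy--Schwarz and evaluating the root-edge frequency integral \emph{exactly} via \eqref{eq:PropagatorIdent1} removes all logarithmic losses and keeps the constant at $C^n$; matching this clean bound against the growth $\prod_{v}|q_{e^a_v}|$ inherited from $K^{\pi_2}_{\mathbb{T}}$ is what forces the propagator surplus on the leaves to be spent down to exactly $\langle r\rangle^{-1}$, square-integrable against the $\mathbb{H}^\alpha$-weight if and only if $\alpha>\tfrac12$.
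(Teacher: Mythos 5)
Your setup (extracting $M$ from \eqref{eq:Testfunctbound}, the $6^n$ bound on $K^{\pi_2}_{\mathbb{T}}$, absorbing the $\pi_2$-sum by the symmetry of $u^{(n+k)}$, and the final momentum-side Cauchy--Schwarz reducing to $\int\langle r\rangle^{-2-2\alpha}\,dr<\infty$, i.e.\ $\alpha>\frac12$) is correct and matches the paper's structure, though you run the momentum estimate ``all at once'' where the paper iterates vertex by vertex. The problem is the frequency estimate in the second paragraph, which is the load-bearing step and does not hold up.

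You bound the inner $\tau^{e^b_{v'}}$-integral by $C\langle q_{e^b_{v'}}\rangle^{-1}\langle q_{e^c_{v'}}\rangle^{-1}$ \emph{uniformly in $\tau^{e^a_{v'}}$}, and then assert that ``$\tau^{e^a_{v'}}$ becomes a free leaf-frequency with its propagator untouched,'' iterating down to an exact evaluation of the root integral via \eqref{eq:PropagatorIdent1}. But the $\tau^{e^b_{v'}}$-integral is \emph{not} a constant in $\tau^{e^a_{v'}}$; by residues one finds
\beqn
\int_{\mathbb R}\frac{d\tau^{b}}{(\gamma^{b}-q_b^2+\mathrm i\tau^{b})(\gamma^{c}-q_c^2+\mathrm i(\tau^{a}-\tau^{b}))}
=\frac{1}{\gamma^{a}-q_b^2-q_c^2+\mathrm i\tau^{a}},
\eeqn
a genuine propagator in $\tau^{a}$. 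If you replace this by its sup bound $C\langle q_b\rangle^{-1}\langle q_c\rangle^{-1}$, you have discarded the decay in $\tau^{a}$, so the only way to finish is to take absolute values in all remaining $\tau$-integrations as well --- and then the root contribution
$\int_{\mathbb R}e^{-t\gamma^{\rm root}}\big|\gamma^{\rm root}-q_{\rm root}^2+\mathrm i\tau^{\rm root}\big|^{-1}\,d\tau^{\rm root}$
diverges logarithmically. You cannot simultaneously pull a constant upper bound out of the inner integral (which forces absolute values on the outside) and apply the exact oscillatory identity \eqref{eq:PropagatorIdent1} to what is left; after the replacement, what is left is not of the form $\int e^{-t(\gamma+\mathrm i\tau)}(\gamma-q^2+\mathrm i\tau)^{-1}d\tau$. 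Already in the one-vertex tree ($n=k=1$), the remaining integral is $\int e^{-t(\gamma^a+\mathrm i\tau^a)}(\gamma^a-q_a^2+\mathrm i\tau^a)^{-1}(\gamma^a-q_b^2-q_c^2+\mathrm i\tau^a)^{-1}d\tau^a$, which is neither a constant multiple of \eqref{eq:PropagatorIdent1} nor something your Cauchy--Schwarz has bounded.

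This is precisely why the paper's Lemma \ref{lem:FrequencyestimateM} carries a fractional exponent: the paper bounds the same inner integral by a \emph{decaying} function $\lesssim_\varepsilon|\tau^{a}+(1+q_b^2+q_c^2)\mathrm i|^{-(1-\varepsilon)}$ (accepting an $\varepsilon$-loss), and propagates that decay down the branch so that at the root there are always one more denominator than integration variable and the absolute-valued $\tau$-integral converges --- yielding $\prod_v(1+q_{e^b_v}^2+q_{e^c_v}^2)^{-(1-\varepsilon)}$ rather than your $\prod_v\langle q_{e^b_v}\rangle^{-1}\langle q_{e^c_v}\rangle^{-1}$. To make your route rigorous you would have to either (i) carry the exact propagators that accumulate at each step (and bound the final multi-propagator $\tau^{\rm root}$-integral by hand), or (ii) retain a power of decay in $\tau^{e^a_{v'}}$ as the paper does; as written, the ``iterating removes all $n$ vertices'' sentence silently interchanges a sup-bound with the subsequent integration and the argument collapses at the root.
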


\begin{proof}
Fix $k \ge 1$ and $0 < t \le 1.$ Let $M>0.$ Assume that $v^{(k)} \in \mathcal{S}_{(k)} (\mathbb{R}^3)$ satisfying \eqref{eq:Testfunctbound}. Let $\mathbb{T} \in \mathfrak{T}_{n,k}$ and suppose $u^{(n+k)} \in \mathfrak{L}^2_{(n+k)}.$ We have, by the definition of $C_{\mathbb{T}, t} u^{(n+k)}$ in \eqref{eq:collisionoerator},
\be\begin{split}
\big \langle v^{(k)}, C_{\mathbb{T}, t} u^{(n+k)} \big \rangle_{\mathbb{L}^2_{(k)}} = \frac{1}{(k+n)!} & \sum_{\pi_2 \in \Pi_{k+n}} \int d \vec{p}_k d \vec{r}_{n+k} \prod_{e \in R_1 (\mathbb{T}) = L_1 (\mathbb{T})} e^{- t p^2_{\pi_1 (e)}} \delta (p_{\pi_1 (e)} - r_{\pi_2 (e)} )\\
\times & \int \prod_{e \in E_2 (\mathbb{T})} d \tau^e d q_e \bigg [ \sum_{1 \le i_1, \ldots, i_k \le 3} \overline{v^{(k)}_{i_1, \ldots, i_k} (\vec{p}_k)} K_{\mathbb{T}}^{\pi_2} u^{(n+k)}_{i_1, \ldots, i_k} (\vec{r}_{n+k} ) \bigg ]\\
& \times \prod_{e \in R_2 (\mathbb{T})} e^{- t (\gamma^e + \mathrm{i} \tau^e)} \delta (q_e - p_{\pi_1 (e)})\prod_{e \in L_2 (\mathbb{T})} \delta (q_e - r_{\pi_2 (e)})\\
& \times \prod_{e \in E_2 (\mathbb{T})} \frac{1}{\gamma^e - q_e^2 + \mathrm{i} \tau^e} \prod_{v \in V(\mathbb{T})} \delta \big ( \tau^{e^a_v} - \tau^{e^b_v} - \tau^{e^c_v} \big ) \delta \big (q_{e^a_v} - q_{e^b_v} - q_{e^c_v}\big ).
\end{split}\ee
For simplicity, we will write $E = E (\mathbb{T}), R_2 = R_2 (\mathbb{T}),$ $L_1 = L_1 (\mathbb{T}),$ etc. Since $u^{(n+k)} (\vec{r}_{n+k})$ is symmetry with respect to the permutation on $\vec{r}_{n+k},$ the integral on the right-hand side of the above equation has the same value for every $\pi_2 \in \Pi_{n+k},$ and hence, instead of averaging over $\pi_2,$ we fix one $\pi_2$ so that $\pi_2 (e) \ge |R_1| +1$ for all $e \in L_2.$ Then, using all the $\delta$-functions and integrating over the variables $\vec{p}_k$ and $\vec{r}_{n+k},$ one has
\be
\begin{split}
\big | \big \langle v^{(k)}, C_{\mathbb{T}, t} & u^{(n+k)} \big \rangle_{\mathbb{L}^2_{(k)}} \big | \le
M e^{- t \sum_{e \in R_2} \gamma^e } \int \prod_{e \in E} d q_e \prod_{e \in E_2} d \tau^e \prod_{e \in E_2} \frac{1}{| \gamma^e - q_e^2 + \mathrm{i} \tau^e |} \prod_{e \in R} \frac{1}{ \langle q_e \rangle^3}\\
\times & \prod_{v \in V} \delta \big ( \tau^{e^a_v} - \tau^{e^b_v} - \tau^{e^c_v} \big ) \delta \big (q_{e^a_v} - q_{e^b_v} - q_{e^c_v} \big )
\sum_{1 \le i_1, \ldots, i_k \le 3} | K_{\mathbb{T}}^{\pi_2} u^{(n+k)}_{i_1, \ldots, i_k} (q_e: \; e \in L ) |
\end{split}
\ee
where we have used the assumption \eqref{eq:Testfunctbound}, and the permutation symmetry of $u^{(n+k)},$ namely, $u^{(n+k)}$ depends only on the set of the variables $q_e$ associated with the leaves of $\mathbb{T},$ but not on the order of those variables.

Choosing $\gamma^e = - \frac{1}{t}$ for all $e \in L_2,$ we have that $\gamma^e \le - \frac{1}{t}$ for every $e \in E_2,$ and
\be
\sum_{e \in R_2} \gamma^e = - (n + |R_2|)\frac{1}{t}.
\ee
Moreover, by the definition of $K_{\mathbb{T}}^{\pi_2} u^{(n+k)} $ (see \eqref{eq:K_vdf} and \eqref{eq:K_Tdf}), we have
\beq\label{eq:KoperatorEst}
| K_{\mathbb{T}}^{\pi_2} u^{(n+k)}_{i_1, \ldots, i_k} (q_e) | \le 6^n \prod_{v \in V}  | q_{e^b_v} + q_{e^c_v} | \sum_{1 \le j_{|R_1| +1}, \ldots, j_{n+k} \le 3} \big | u^{(n+k)}_{i_1, \ldots, i_{|R_1|}, j_{|R_1|+1}, \ldots, j_{n+k}} (q_e) \big |.
\eeq
Then for a fixed $\alpha > \frac{1}{2},$
\beq\label{eq:Collisionspace-timebound01}
\begin{split}
\big | \big \langle v^{(k)}, C_{\mathbb{T}, t} u^{(n+k)} \big \rangle_{\mathbb{L}^2_{(k)}}  \big | \le
M C^n \int & \prod_{e \in E} d q_e \prod_{e \in E_2} d \tau^e \prod_{e \in E_2} \frac{1}{| (\frac{1}{t} + q_e^2) \mathrm{i} + \tau^e |} \prod_{e \in R} \frac{1}{ \langle q_e \rangle^3}\\
\times \prod_{v \in V} \delta \big ( \tau^{e^a_v} - \tau^{e^b_v} - \tau^{e^c_v} \big ) \delta \big (q_{e^a_v} & - q_{e^b_v} - q_{e^c_v}\big ) | q_{e^b_v} + q_{e^c_v} | \sum_{ 1 \le i_1, \ldots, i_{n+k} \le 3} \big | u^{(n+k)}_{i_1, \ldots, i_{n+k}} (q_e: \; e \in L) \big |\\
\le C^n \big \| u^{(k+n))} \big \|_{\mathbb{H}^\alpha_{(k+n)}} \bigg ( \int & \prod_{e \in L_2} \frac{d q_e}{(1+|q_e|^2)^\alpha} \bigg [ \int \prod_{e \in E \backslash L} d q_e \prod_{e \in E_2} d \tau^e \prod_{e \in R_2} \frac{1}{ \langle q_e \rangle^3}\\
\times \prod_{e \in E_2} \frac{1}{| (1 + q_e^2) \mathrm{i} + \tau^e |} & \prod_{v \in V} \delta \big ( \tau^{e^a_v} - \tau^{e^b_v} - \tau^{e^c_v} \big ) \delta \big (q_{e^a_v} - q_{e^b_v} - q_{e^c_v}\big ) | q_{e^b_v} + q_{e^c_v} | \bigg ]^2 \bigg )^\frac{1}{2}
\end{split}
\eeq
where $C>0$ in the second line is a constant depending only on $\alpha, k.$ Here, we have used the conditions $0 < t \le 1,$ the Cauchy-Schwarz inequality and the integral $\int \prod_{e \in L_1 = R_1} \frac{1}{ ( 1 +|q_e|^2)^\alpha \langle q_e \rangle^6} d q_e < \8.$

Next, we estimate the following integral (note that $E \backslash L = E_2 \backslash L_2$)
\beq\label{eq:Collisionitegralbound}\begin{split}
I = & \int \prod_{e \in E_2 \backslash L_2} d q_e \prod_{e \in E_2} d \tau^e \prod_{e \in E_2} \frac{1}{| (1 + q_e^2) \mathrm{i} + \tau^e |} \prod_{e \in R_2} \frac{1}{ \langle q_e \rangle^3}\\
& \times \prod_{v \in V} \delta \big ( \tau^{e^a_v} - \tau^{e^b_v} - \tau^{e^c_v} \big ) \delta \big (q_{e^a_v} - q_{e^b_v} - q_{e^c_v}\big ) | q_{e^b_v} + q_{e^c_v} |.\\
\end{split}
\eeq
For bounding the integral \eqref{eq:Collisionitegralbound}, we first successively integrate over all $\tau$-variables and then over all momenta except for the momenta of the leaves.

First of all, we claim that
\beq\label{eq:Intovertauvariables}
\begin{split}
\int & \prod_{e \in E_2} \frac{d \tau^e}{| (1 + q_e^2) \mathrm{i} + \tau^e |} \prod_{v \in V} \delta \big ( \tau^{e^a_v} - \tau^{e^b_v} - \tau^{e^c_v} \big )\\
& \lesssim_\varepsilon \prod_{v:\; e^a_v \notin R_2} \frac{1}{(1 + q^2_{e^b_v} + q^2_{e^c_v})^{1-\varepsilon}}
\prod_{v:\; e^a_v \in R_2} \frac{1}{(1 + q^2_{e^a_v} + q_{e^b_v}^2 + q_{e^c_v}^2 )^{1-\varepsilon}}.
\end{split}\eeq
where $\varepsilon$ is a small constant which will be specified later.

Note that the delta functions relate variables within the same trees, the integration then can be done independently in each tree of $\mathbb{T}.$ The order of integration is prescribed according to the converse order of vertices of the trees (see Appendix \ref{Binarytree} below), that is, the $\tau$-variables of a vertex $v$ will be integrated only when those of all vertices $v'$ with $v \prec v'$ have already been integrated out.

Now, we choose a vertex $v \in V$ and suppose that the $\tau$-integrations over all $v' \succ v$ have been performed. We will perform the integration over the $\tau$-variables associated with the daughter-edges of the vertex $v.$ We need to distinguish two cases according to whether the mother-edge of $v,$ $e^a_v$ (with the notation of Fig.\ref{fig:one-vertex}), is the root or not.

{$\bullet$ $e^a_v$ is not a root:}\; 1)\; Both $e^b_v$ and $e^c_v$ are leaves. By Lemma \ref{lem:FrequencyestimateM} one has
\be\begin{split}
\int \frac{d \tau^b d \tau^c \delta (\tau^a - \tau^b - \tau^c)}{| (1 + q_b^2) \mathrm{i} + \tau^b | | (1 + q_c^2) \mathrm{i} + \tau^c |} & = \int \frac{d \tau^b}{| (1 + q_b^2) \mathrm{i} + \tau^b | | (1 + q_c^2) \mathrm{i} + \tau^a - \tau^b |}\\
& \lesssim_\varepsilon \frac{1}{|\tau^a + (1 + q^2_b + q^2_c) \mathrm{i}|^{1 - \varepsilon}}
\end{split}\ee
for all $\tau^a \in \mathbb{R}$ and all $q_b, q_c \in \mathbb{R}^3.$

2) One of $e^b_v$ and $e^c_v$ is a leaf. Assuming that there exists $v'$ such that $v' \succ v$ with $e^a_{v'} = e^c_v$ and $e^b_v$ is a leaf, we have
\be\begin{split}
\int & \frac{d \tau^b d \tau^c \delta (\tau^a - \tau^b - \tau^c)}{| (1 + q_b^2) \mathrm{i} + \tau^b | | (1 + q_c^2) \mathrm{i} + \tau^c | |(1 + q^2_{b'} + q^2_{c'}) \mathrm{i} + \tau^c |^{1 - \varepsilon}} \\
& \le \frac{1}{(1 + q^2_{b'} + q^2_{c'})^{1-\varepsilon}} \int \frac{d \tau^c }{| (1 + q_b^2) \mathrm{i} + \tau^a - \tau^c | | (1 + q_c^2) \mathrm{i} + \tau^c |}\\
& \le \frac{1}{(1 + q^2_{b'} + q^2_{c'})^{1-\varepsilon}} \frac{1}{|\tau^a + (1 + q_b^2 + q_c^2 ) \mathrm{i}|^{1-\varepsilon}}.
\end{split}\ee
Similarly, the same inequality holds when $e^a_{v'} = e^b_v$ and $e^c_v$ is a leaf.

3) Neither $e^b_v$ nor $e^c_v$ is a leaf, i.e.,there are $v', v'' \succ v$ such that $e^b_v = e^a_{v'}$ and $e^c_v = e^a_{v''}.$ Then we have
\be\begin{split}
\int & \frac{d \tau^b d \tau^c \delta (\tau^a - \tau^b - \tau^c)}{| (1 + q_b^2) \mathrm{i} + \tau^b | | (1 + q_c^2) \mathrm{i} + \tau^c | |(1 + q^2_{b'} + q^2_{c'}) \mathrm{i} + \tau^b |^{1 - \varepsilon} |(1 + q^2_{b''} + q^2_{c''}) \mathrm{i} + \tau^c |^{1 - \varepsilon}} \\
& \le \frac{1}{(1 + q^2_{b'} + q^2_{c'})^{1-\varepsilon} (1 + q^2_{b''} + q^2_{c''})^{1-\varepsilon}} \int \frac{d \tau^c }{| (1 + q_b^2) \mathrm{i} + \tau^a - \tau^c | | (1 + q_c^2 ) \mathrm{i} + \tau^c | } \\
& \le \frac{1}{(1 + q^2_{b'} + q^2_{c'})^{1-\varepsilon} (1 + q^2_{b''} + q^2_{c''})^{1-\varepsilon}} \frac{1}{|\tau^a + (1 + q_b^2 + q_c^2 ) \mathrm{i}|^{1-\varepsilon}}.
\end{split}\ee

{$\bullet$ $e^a_v$ is a root:}\; In this case, we will integrate over all the $\tau$-variables associated with the edges of the vertex including the mother-edge. We have three different cases yet.

1) Both $e^b_v$ and $e^c_v$ are leaves. In this case, we have
\be\begin{split}
\int & \frac{d \tau^a d \tau^b d \tau^c \delta (\tau^a - \tau^b - \tau^c)}{| (1 + q_a^2) \mathrm{i} + \tau^a | | (1 + q_b^2) \mathrm{i} + \tau^b | | (1 + q_c^2) \mathrm{i} + \tau^c |}\\
& = \int \frac{d \tau^a d \tau^b}{| (1 + q_a^2) \mathrm{i} + \tau^a | | (1 + q_b^2) \mathrm{i} + \tau^b | | (1 + q_c^2) \mathrm{i} + \tau^a - \tau^b |}\\
& \lesssim_\varepsilon \frac{1}{(1 + q^2_a + q^2_b + q^2_c)^{1 - \varepsilon}}
\end{split}\ee
where we have used Lemma \ref{lem:FrequencyestimateM} twice.

2) One of $e^b_v$ and $e^c_v$ is a leaf. Suppose that there exists $v'$ such that $v' \succ v$ with $e^a_{v'} = e^c_v$ and $e^b_v$ is a leaf, we have
\be\begin{split}
\int & \frac{d \tau^a d \tau^b d \tau^c \delta (\tau^a - \tau^b - \tau^c)}{| (1 + q_a^2) \mathrm{i} + \tau^a | | (1 + q_b^2) \mathrm{i} + \tau^b | | (1 + q_c^2) \mathrm{i} + \tau^c | |(1 + q^2_{b'} + q^2_{c'}) \mathrm{i} + \tau^c |^{1 - \varepsilon}} \\
& \le \frac{1}{(1 + q^2_{b'} + q^2_{c'})^{1-\varepsilon}} \int \frac{d \tau^a d \tau^c }{| (1 + q_a^2) \mathrm{i} + \tau^a | | (1 + q_b^2) \mathrm{i} + \tau^a - \tau^c | | (1 + q_c^2) \mathrm{i} + \tau^c |}\\
& \lesssim_\varepsilon \frac{1}{(1 + q^2_{b'} + q^2_{c'})^{1-\varepsilon}} \frac{1}{(1 + q^2_a + q_b^2 + q_c^2 )^{1-\varepsilon}}.
\end{split}\ee
Similarly, the same inequality holds when $e^a_{v'} = e^b_v$ and $e^c_v$ is a leaf.

3) Neither $e^b_v$ nor $e^c_v$ is a leaf, i.e., there are $v', v'' \succ v$ such that $e^b_v = e^a_{v'}$ and $e^c_v = e^a_{v''}.$ Then we have
\be\begin{split}
\int & \frac{d \tau^a d \tau^b d \tau^c \delta (\tau^a - \tau^b - \tau^c)}{| (1 + q_a^2) \mathrm{i} + \tau^a | | (1 + q_b^2) \mathrm{i} + \tau^b | | (1 + q_c^2) \mathrm{i} + \tau^c | |(1 + q^2_{b'} + q^2_{c'}) \mathrm{i} + \tau^b |^{1 - \varepsilon} |(1 + q^2_{b''} + q^2_{c''}) \mathrm{i} + \tau^c |^{1 - \varepsilon}} \\
& \le \frac{1}{(1 + q^2_{b'} + q^2_{c'})^{1-\varepsilon} (1 + q^2_{b''} + q^2_{c''})^{1-\varepsilon}} \int \frac{d \tau^a d \tau^c }{| (1 + q_a^2) \mathrm{i} + \tau^a | | (1 + q_b^2) \mathrm{i} + \tau^a - \tau^c | | (1 + q_c^2 ) \mathrm{i} + \tau^c | } \\
& \lesssim_\varepsilon \frac{1}{(1 + q^2_{b'} + q^2_{c'})^{1-\varepsilon} (1 + q^2_{b''} + q^2_{c''})^{1-\varepsilon}} \frac{1}{(1 + q^2_a + q_b^2 + q_c^2 )^{1-\varepsilon}}.
\end{split}\ee

In summary, we have proven \eqref{eq:Intovertauvariables}.

Then, by \eqref{eq:Intovertauvariables} we have
\be\label{eq:CollisionitegralboundM}\begin{split}
I \lesssim_\varepsilon \int & \prod_{e \in E_2 \backslash L_2} d q_e \prod_{v \in V} \delta \big (q_{e^a_v} - q_{e^b_v} - q_{e^c_v}\big ) \prod_{e \in R_2} \frac{1}{ \langle q_e \rangle^3}\\
& \times \prod_{v:\; e^a_v \notin R_2} \frac{| q_{e^b_v} + q_{e^c_v} |}{(1 + q^2_{e^b_v} + q^2_{e^c_v})^{1-\varepsilon}}
\prod_{v:\; e^a_v \in R_2} \frac{| q_{e^b_v} + q_{e^c_v} |}{(1 + q^2_{e^a_v} + q_{e^b_v}^2 + q_{e^c_v}^2 )^{1-\varepsilon}}.
\end{split}
\ee
Thus we have
\be\begin{split}
\big | \big \langle v^{(k)}, C_{\mathbb{T}, t} u^{(n+k)} \big \rangle_{\mathbb{L}^2_{(k)}}  \big |
\le M C^n & \big \| u^{(k+n))} \big \|_{\mathbb{H}^\alpha_{(k+n)}} \bigg ( \int \prod_{e \in E_2} d q_e \prod_{v \in V} \delta \big (q_{e^a_v} - q_{e^b_v} - q_{e^c_v}\big ) \prod_{e \in L_2} \frac{1}{(1 + |q_e |^2)^\alpha} \\
\times \prod_{e \in R_2} & \frac{1}{ \langle q_e \rangle^6} \prod_{v:\; e^a_v \notin R_2} \frac{| q_{e^b_v} + q_{e^c_v} |^2}{(1 + q^2_{e^b_v} + q^2_{e^c_v})^{2 (1-\varepsilon)}} \prod_{v:\; e^a_v \in R_2} \frac{| q_{e^b_v} + q_{e^c_v} |^2}{(1 + q^2_{e^a_v} + q_{e^b_v}^2 + q_{e^c_v}^2 )^{2 (1-\varepsilon)}} \bigg )^\frac{1}{2}
\end{split}\ee
where $C>0$ is a constant depending only on $\alpha, k,$ and $\varepsilon.$

It remains to estimate the momenta integrations
\be\begin{split}
\Xi = & \int \prod_{e \in E_2} d q_e \prod_{v \in V} \delta \big (q_{e^a_v} - q_{e^b_v} - q_{e^c_v}\big ) \prod_{e \in L_2} \frac{1}{(1 + |q_e |^2)^\alpha} \prod_{e \in R_2} \frac{1}{ \langle q_e \rangle^6}\\
& \times \prod_{v:\; e^a_v \notin R_2} \frac{| q_{e^b_v} + q_{e^c_v} |^2}{(1 + q^2_{e^b_v} + q^2_{e^c_v})^{2 (1-\varepsilon)}} \prod_{v:\; e^a_v \in R_2} \frac{| q_{e^b_v} + q_{e^c_v} |^2}{(1 + q^2_{e^a_v} + q_{e^b_v}^2 + q_{e^c_v}^2 )^{2 (1-\varepsilon)}}.
\end{split}\ee
Since the delta functions relate variables within the same trees, we only need to consider the integration over all $q$-variables associated with a tree $\mathrm{T},$ that is
\beq\label{eq:Integralonetreebound}\begin{split}
\Xi (\mathrm{T}) = \int \prod_{e \in E_2 (\mathrm{T})} d q_e \prod_{v \in V (\mathrm{T})} \delta \big (q_{e^a_v} - q_{e^b_v} - q_{e^c_v}\big ) & \frac{1}{ \langle q_{R(\mathrm{T})} \rangle^6} \frac{| q_{e^b_v} + q_{e^c_v} |^2}{(1 + q^2_{e^a_v = R(\mathrm{T})} + q_{e^b_v}^2 + q_{e^c_v}^2 )^{2 (1-\varepsilon)}}\\
\times \prod_{e \in L_2 (\mathrm{T})} \frac{1}{(1 + |q_e |^2)^\alpha} & \prod_{v \in V (\mathrm{T}):\; e^a_v \not= R(\mathrm{T})} \frac{| q_{e^b_v} + q_{e^c_v} |^2}{(1 + q^2_{e^b_v} + q^2_{e^c_v})^{2 (1-\varepsilon)}}.
\end{split}\eeq
Again, we begin the integration with the $q$-variables of the leaves and proceed toward the root, and a vertex $v$ will be integrated only when all vertices $v'$ with $v \prec v'$ have already been integrated out.

Indeed, for a maximal vertex $v$ whose two daughter edges must be leaves, if $e^a_v \not= R$ then the associated integration is
\be\begin{split}
\Xi_v = & \int d q_{e^b_v} d q_{e^c_v} \delta \big (q_{e^a_v} - q_{e^b_v} - q_{e^c_v}\big ) \frac{1}{(1 + |q_{e^b_v} |^2)^\alpha} \frac{1}{(1 + |q_{e^c_v} |^2)^\alpha} \frac{| q_{e^b_v} + q_{e^c_v} |^2}{(1 + q^2_{e^b_v} + q^2_{e^c_v})^{2 (1-\varepsilon)}}\\
= & \int d q_{e^b_v} \frac{1}{(1 + |q_{e^b_v} |^2)^\alpha} \frac{1}{(1 + |q_{e^a_v} - q_{e^b_v} |^2)^\alpha} \frac{| q_{e^a_v}|^2}{(1 + q^2_{e^b_v} + (q_{e^a_v} - q_{e^b_v})^2 )^{2 (1-\varepsilon)}}.
\end{split}\ee
Taking $\varepsilon = \frac{1}{4} \min \{ 1, \alpha - \frac{1}{2} \} >0$ (noticing that $\alpha > \frac{1}{2}$), we claim that there exists a constant $C_\alpha >0$ depending only on $\alpha$ such that
\beq\label{eq:Integralonevertexbound}
\Xi_v \le C_\alpha \frac{1}{(1 + |q_{e^a_v} |^2)^\alpha}.
\eeq
For proving this inequality, noticing that $q_{e^a_v}^2 \lesssim q^2_{e^b_v} + (q_{e^a_v} - q_{e^b_v})^2,$ we have
\be
\frac{q_{e^a_v}^2}{(1 + q^2_{e^b_v} + (q_{e^a_v} - q_{e^b_v})^2 )^{2 (1-\varepsilon)}} \lesssim \frac{1}{[ q^2_{e^b_v} + (q_{e^a_v} - q_{e^b_v})^2 ]^{1- 2 \varepsilon}}.
\ee
Thus
\be\begin{split}
\Xi_v  \lesssim & \frac{1}{(1 + |q_{e^a_v} |^2)^\alpha} \int d q_{e^b_v} \frac{(1 + q^2_{e^b_v} + (q_{e^a_v} - q_{e^b_v})^2 )^\alpha}{(1 + |q_{e^b_v} |^2)^\alpha (1 + |q_{e^a_v} - q_{e^b_v} |^2)^\alpha} \frac{1}{(q^2_{e^b_v} + (q_{e^a_v} - q_{e^b_v})^2 )^{1-2\varepsilon}}\\
\le & \frac{C_\alpha}{(1 + |q_{e^a_v} |^2)^\alpha} \int d q_{e^b_v} \frac{1}{(1 + |q_{e^b_v} |^2)^\alpha |q_{e^b_v} |^{2(1-2\varepsilon)}}
\le \frac{C_\alpha}{(1 + |q_{e^a_v} |^2)^\alpha}
\end{split}\ee
provided $\alpha > \frac{1}{2}$ and $\varepsilon = \frac{1}{4} \min \{ 1, \alpha - \frac{1}{2} \} >0.$ This completes the proof of the inequality \eqref{eq:Integralonevertexbound}.

Subsequently, every vertex $v$ for which all vertices $v'$ with $v \prec v'$ have already been integrated out is associated with the integration of the form $\Xi_v$ as above when $e^a_v \not= R.$ Thus, we obtain
\be\begin{split}
\Xi (\mathrm{T}) \le C_\alpha^{|V(\mathrm{T})|-1} \int & d q_{e^a_v = R(\mathrm{T})} d q_{e^b_v} d q_{e^c_v} \delta \big (q_{e^a_v} - q_{e^b_v} - q_{e^c_v}\big ) \frac{1}{ \langle q_{R(\mathrm{T})} \rangle^6}\\
& \times \frac{| q_{e^b_v} + q_{e^c_v} |^2}{(1 + q^2_{e^a_v = R(\mathrm{T})} + q_{e^b_v}^2 + q_{e^c_v}^2 )^{2 (1-\varepsilon)}}
\frac{1}{(1 + |q_{e^b_v} |^2)^\alpha} \frac{1}{(1 + |q_{e^c_v} |^2)^\alpha} \le C_\alpha^{|V(\mathrm{T})|}.
\end{split}\ee
Therefore, we conclude that
\be
\Xi = \prod_{\mathrm{T} \in \mathbb{T}} \Xi (\mathrm{T}) \le C_\alpha^n.
\ee
This completes the proof.
\end{proof}

\subsection{Space-time estimates for error operators}

In this subsection, we prove a space-time estimate for the error term $R_{\mathbb{T}, t}.$

\begin{thm}\label{th:Collisionspace-timebounderror}
Fix $k \ge 1.$ Let $M>0.$ Then there exists a constant $C>0$ depending only on $k$ such that for any $v^{(k)} \in \mathcal{S}_{(k)} (\mathbb{R}^3)$ satisfying \eqref{eq:Testfunctbound} with the bound $M,$ for any $n \ge 0$ and any $\mathbb{T} \in \mathfrak{T}_{n,k},$ we have
\beq\label{eq:Collisionspace-timebounderror}
\big | \big \langle v^{(k)}, R_{\mathbb{T}, t} u^{(n+k)} \big \rangle_{\mathbb{L}^2_{(k)}} \big | \le M C^n t^{\frac{n}{2} -1}\big \| u^{(n+k)} \big \|_{\mathbb{H}^1_{(n+k)}}
\eeq
for all $u^{(n+k)} \in \mathfrak{H}^1_{(n+k)}$ and any $0< t \le 1.$
\end{thm}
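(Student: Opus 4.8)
The plan is to adapt, almost verbatim, the proof of Theorem~\ref{th:Collisionspace-timebound} to the error kernel $Q_{\mathbb{T},t}$ of Definition~\ref{def:erroroperator}. Recall that $Q_{\mathbb{T},t}$ differs from the collision kernel $G_{\mathbb{T},t}$ only in that it carries an additional sum over the maximal vertices $\bar v\in M(\mathbb{T})$ and, for the chosen $\bar v$, the two propagators $(\gamma^e-q_e^2+\mathrm{i}\tau^e)^{-1}$ on the daughter-edges $e\in D_{\bar v}=\{e^b_{\bar v},e^c_{\bar v}\}$ are dropped, together with the frequency constraint $\delta(\tau^{e^a_{\bar v}}-\tau^{e^b_{\bar v}}-\tau^{e^c_{\bar v}})$. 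Since $|M(\mathbb{T})|\le|V(\mathbb{T})|=n$, the outer sum over $\bar v$ only costs a factor $n\le C^n$, so I fix one $\bar v$ once and for all.

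First I would, exactly as in Theorem~\ref{th:Collisionspace-timebound}, pair $v^{(k)}$ with $R_{\mathbb{T},t}u^{(n+k)}$, insert the bound~\eqref{eq:Testfunctbound}, fix one permutation $\pi_2$ by the symmetry of $u^{(n+k)}$, and integrate out $\vec p_k$ and $\vec r_{n+k}$ against the momentum delta-functions, so that one is left with an integral in the variables $q_e$ ($e\in E_2$) and $\tau^e$ ($e\in E_2\setminus D_{\bar v}$), carrying the momentum-conservation deltas at every vertex, the factor $K^{\pi_2}_{\mathbb{T}}$ (bounded via~\eqref{eq:KoperatorEst} by $6^n\prod_v|q_{e^b_v}+q_{e^c_v}|$ times $|u^{(n+k)}|$ at the leaf momenta), the root weights $\langle q_e\rangle^{-3}$, and the exponentials. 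As in Theorem~\ref{th:Collisionspace-timebound} one may choose $\gamma^e=-1/t$ on all leaves of $\mathbb{T}$ (admissible by Remark~\ref{rk:erroroperatorGammafree}), so that $\gamma^e\le-1/t$ on every edge and $e^{-t\sum_{e\in R_2}\gamma^e}$ is bounded by $e^{n+k}$; but in contrast to Theorem~\ref{th:Collisionspace-timebound} I would \emph{not} discard the resulting powers of $t$.

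Second come the frequency integrations. For each vertex $v\ne\bar v$ one integrates its $\tau$-variables from the leaves toward the root exactly as in~\eqref{eq:Intovertauvariables}, using the rational-function lemma (Lemma~\ref{lem:FrequencyestimateM}) to produce a factor $\big(1/t+q^2_{e^b_v}+q^2_{e^c_v}(+q^2_{e^a_v})\big)^{-(1-\varepsilon)}$ per such vertex. The only new point concerns $\tau^{e^a_{\bar v}}$, which is no longer pinned by $\bar v$: if $e^a_{\bar v}$ is not a root it is still absorbed by the vertex below $\bar v$ (that vertex simply carries one propagator fewer below it, since $D_{\bar v}$ carries no $\tau$-variable); if $e^a_{\bar v}$ is a root — which can occur only when $\bar v$ is the unique vertex of its tree — then $\tau^{e^a_{\bar v}}$ appears only in its own propagator and in $e^{-t(\gamma^{e^a_{\bar v}}+\mathrm{i}\tau^{e^a_{\bar v}})}$, and one performs that integral by the contour identity~\eqref{eq:PropagatorIdent1}, producing the Gaussian $e^{-tq^2_{e^a_{\bar v}}}$ (this is the mechanism already displayed for $F^{(1)}_1$ in Section~\ref{GraphCollisionoperator}). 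In every case the outcome is that, in the momentum integrations that follow, the leaf-pair $\{e^b_{\bar v},e^c_{\bar v}\}$ must be integrated \emph{without} any accompanying decay: using Cauchy--Schwarz in the leaf momenta to extract $\|u^{(n+k)}\|_{\mathbb{H}^1_{(n+k)}}$ (whence the weights $\langle q_e\rangle^{-2}$ on the leaves), one meets
\[
\int dq_{e^b_{\bar v}}\,dq_{e^c_{\bar v}}\,\delta\big(q_{e^a_{\bar v}}-q_{e^b_{\bar v}}-q_{e^c_{\bar v}}\big)\,|q_{e^b_{\bar v}}+q_{e^c_{\bar v}}|^2\,\langle q_{e^b_{\bar v}}\rangle^{-2}\,\langle q_{e^c_{\bar v}}\rangle^{-2}\ \lesssim\ \langle q_{e^a_{\bar v}}\rangle
\]
by the momentum convolution inequality of the Appendix, in place of the decaying bound $\lesssim\langle q_{e^a_{\bar v}}\rangle^{-2}$ available in Theorem~\ref{th:Collisionspace-timebound}.

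Finally I would run the leaf-to-root momentum integration of Theorem~\ref{th:Collisionspace-timebound}: the $\langle q_{e^a_{\bar v}}\rangle$-growth just produced is absorbed by the propagator decay inherited from the ancestors of $\bar v$, and the whole integral factorises over the trees of $\mathbb{T}$ and, within the tree containing $\bar v$, is estimated vertex by vertex. Keeping the powers of $t$ (through the rescaling $q_e\mapsto t^{-1/2}q_e$ together with $\gamma^e=-1/t$), the $n$ momentum integrations together retain a factor $\sim t^{n/2}$, while the defect at $\bar v$ — replacing the decaying bound $\langle q_{e^a_{\bar v}}\rangle^{-2}$ by the growing $\langle q_{e^a_{\bar v}}\rangle$ — forces a compensating loss of at most a single power, $t^{-1}$. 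Together with the constant $C^n$ depending only on $k$, this yields~\eqref{eq:Collisionspace-timebounderror}. I expect the delicate part to be precisely this balance sheet: one has to verify that the growth created at the broken vertex is reabsorbed with a loss of \emph{no more than} one power of $t^{-1}$, uniformly in $n$, and that the small exponent losses $\varepsilon$ incurred at each vertex in Lemma~\ref{lem:FrequencyestimateM} can be arranged (by taking $\varepsilon$ small and exploiting the full $\langle\cdot\rangle^{-2}$ weight from $\mathbb{H}^1$) not to erode the power $t^{n/2-1}$.
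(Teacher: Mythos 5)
Your proposal follows essentially the same route as the paper: fix $\bar v\in M(\mathbb{T})$ (the sum costs $|M(\mathbb{T})|\le n$), handle $\tau^{e^a_{\bar v}}$ by the contour identity when $e^a_{\bar v}$ is a root, choose $\gamma^e=-1/t$, Cauchy--Schwarz in the leaf momenta to pull out $\|u^{(n+k)}\|_{\mathbb{H}^1_{(n+k)}}$, perform the $\tau$-integrals via Lemma~\ref{lem:FrequencyestimateM} and the $q$-integrals vertex by vertex from leaves toward the root (using precisely \eqref{eq:Maxvertexint} at $\bar v$). The only place your account is less sharp than the paper's is the extraction of the power $t^{n/2-1}$: the paper gets it in one step by the explicit rescaling $q_e\mapsto t^{-1/2}q_e$, $\tau^e\mapsto t^{-1}\tau^e$, with the exponent coming purely from Jacobian and delta-function scaling factors, rather than from a vertex-by-vertex ``balance sheet'' at $\bar v$; the remaining rescaled integral is then shown to be $\le C^n$ uniformly in $0<t\le 1$.
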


\begin{proof}
By \eqref{eq:Errortermoperator}, we have
\be\begin{split}
\big \langle v^{(k)}, R_{\mathbb{T}, t} u^{(n+k)} \big \rangle_{\mathbb{L}^2_{(k)}} = \frac{1}{(k+n)!} \sum_{\pi_2 \in \Pi_{k+n}} \int d \vec{p}_k d \vec{r}_{n+k}&
\sum_{\bar{v} \in M(\mathbb{T})} \prod_{e \in R_1 (\mathbb{T}) = L_1 (\mathbb{T})} e^{- t p^2_{\pi_1 (e)}} \delta (p_{\pi_1 (e)} - r_{\pi_2 (e)} )\\
\times \int \prod_{e \in E_2 (\mathbb{T})} d q_e \prod_{ \substack{ e \in E_2 (\mathbb{T})\\ e \notin D_{\bar{v}} } } \frac{d \tau^e}{\gamma^e - q_e^2 + \mathrm{i} \tau^e} & \bigg [ \sum_{1 \le i_1, \ldots, i_k \le 3} \overline{v^{(k)}_{i_1, \ldots, i_k} (\vec{p}_k)} K_{\mathbb{T}}^{\pi_2} u^{(n+k)}_{i_1, \ldots, i_k} (\vec{r}_{n+k} ) \bigg ]\\
\times \prod_{ e \in R_2 (\mathbb{T})} & e^{- t(\gamma^e + \mathrm{i} \tau^e)} \delta (q_e - p_{\pi_1 (e)}) \prod_{e \in L_2 (\mathbb{T})} \delta (q_e - r_{\pi_2 (e)})\\ \times \prod_{ \substack{ v \in V(\mathbb{T})\\ v \not= \bar{v} } } & \delta \big ( \tau^{e^a_v} - \tau^{e^b_v} - \tau^{e^c_v} \big ) \prod_{ v \in V(\mathbb{T}) } \delta \big (q_{e^a_v} - q_{e^b_v} - q_{e^c_v} \big ).
\end{split}\ee
Recall that $M (\mathbb{T})$ is the set of maximal elements in $V ( \mathbb{T})$ and $D_v$ denotes the set of daughter-edges for a vertex $v \in M ( \mathbb{T}).$ In the integration of the right hand side of the above equation, if there exists a $\bar{v} \in M (\mathbb{T})$ such that $e^a_{\bar{v}} \in R (\mathbb{T}),$ then there is only one denominator containing $\tau^e$ in the integral for this tree, and so the associated $\tau^e$-integral would not be absolutely convergent. In this case, using \eqref{eq:PropagatorIdent1}, we perform the integration over the $\tau^{e^a_{\bar{v}}}$ and then obtain a factor $e^{- t q^2_{e^a_{\bar{v}}}}.$ Denote by $E_2 (\mathbb{T}, \bar{v}) = E_2 (\mathbb{T}) \backslash \{ e^a_{\bar{v}}\}$ if $e^a_{\bar{v}} \in R (\mathbb{T}),$ and otherwise $E_2 (\mathbb{T}, \bar{v}) = E_2 (\mathbb{T}).$ After performing the integration over $\tau^e$ associated to $e \notin E_2 (\mathbb{T}, \bar{v}),$ we take the absolute value of the above integrand.

As done in the proof of Theorem \ref{th:Collisionspace-timebound}, instead of averaging over $\pi_2,$ we fix one $\pi_2$ so that $\pi_2 (e) \ge |R_1| +1$ for all $e \in L_2.$ Then by \eqref{eq:KoperatorEst}, we obtain that (noticing that $\gamma^e = - \frac{1}{t}$ for all $e \in L_2$)
\be
\begin{split}
\big | \big \langle v^{(k)}, R_{\mathbb{T}, t} u^{(n+k)} \big \rangle_{\mathbb{L}^2_{(k)}}  \big | \le
M C^n \sum_{\bar{v} \in M(\mathbb{T})} \int & \prod_{e \in E} d q_e \prod_{ \substack{ e \in E_2 (\mathbb{T}, \bar{v})\\ e \notin D_{\bar{v}} }} \frac{d \tau^e}{| (\frac{1}{t} + q_e^2) \mathrm{i} + \tau^e |} \prod_{e \in R} \frac{1}{ \langle q_e \rangle^3}\\
\times \prod_{ \substack{ v \in V(\mathbb{T})\\ v \not= \bar{v} } } \delta \big ( \tau^{e^a_v} - \tau^{e^b_v} - \tau^{e^c_v} \big ) \prod_{v \in V} \delta \big ( q_{e^a_v} - & q_{e^b_v} - q_{e^c_v}\big ) | q_{e^b_v} + q_{e^c_v} | \sum_{ 1 \le i_1, \ldots, i_{n+k} \le 3} \big | u^{(n+k)}_{i_1, \ldots, i_{n+k}} (q_e: \; e \in L) \big |\\
\le M C^n \big \| u^{(k+n))} \big \|_{\mathbb{H}^1_{(k+n)}} \sum_{\bar{v} \in M(\mathbb{T})} \bigg ( \int \prod_{e \in L_2} & \frac{d q_e}{ \langle q_e \rangle^2 } \bigg [ \int \prod_{e \in E_2 \backslash L_2} d q_e \prod_{ \substack{ e \in E_2 (\mathbb{T}, \bar{v})\\ e \notin D_{\bar{v}} }} \frac{d \tau^e}{| (\frac{1}{t} + q_e^2) \mathrm{i} + \tau^e |}\\
\times \prod_{e \in R_2} \frac{1}{ \langle q_e \rangle^3} \prod_{ \substack{ v \in V(\mathbb{T})\\ v \not= \bar{v} } } & \delta \big ( \tau^{e^a_v} - \tau^{e^b_v} - \tau^{e^c_v} \big ) \prod_{v \in V} \delta \big (q_{e^a_v} - q_{e^b_v} - q_{e^c_v}\big ) | q_{e^b_v} + q_{e^c_v} | \bigg ]^2 \bigg )^\frac{1}{2}
\end{split}
\ee
where $C>0$ depends only on $k.$

Making variable substitution as $q_e \longrightarrow t^{- \frac{1}{2}} q_e$ for all $e \in E_2 \backslash L_2,$ and $\tau^e \longrightarrow t^{-1} \tau^e$ for all $e \in E_2 (\mathbb{T}, \bar{v}) \backslash D_{\bar{v}},$ we have
\be\begin{split}
\frac{d \tau^e}{| (1/t + q_e^2) \mathrm{i} + \tau^e |} & \longrightarrow \frac{d \tau^e}{| (1 + q_e^2) \mathrm{i} + \tau^e |},\\
\delta \big ( \tau^{e^a_v} - \tau^{e^b_v} - \tau^{e^c_v} \big ) & \longrightarrow t \delta \big ( \tau^{e^a_v} - \tau^{e^b_v} - \tau^{e^c_v} \big ),\\
\delta \big (q_{e^a_v} - q_{e^b_v} - q_{e^c_v}\big ) & \longrightarrow t^{\frac{3}{2}} \delta \big (q_{e^a_v} - q_{e^b_v} - q_{e^c_v}\big ).
\end{split}\ee
Then
\be
\begin{split}
\big | \big \langle v^{(k)}, R_{\mathbb{T}, t} u^{(n+k)} & \big \rangle_{\mathbb{L}^2_{(k)}}  \big | \le
M C^n t^{\frac{n}{2} -1} \big \| u^{(k+n))} \big \|_{\mathbb{H}^1_{(k+n)}}\\
\times \sum_{\bar{v} \in M(\mathbb{T})} \bigg ( \int & \prod_{e \in L_2} \frac{d q_e}{\langle q_e \rangle^2 } \bigg [ \int \prod_{e \in E_2 \backslash L_2} d q_e \prod_{ \substack{ e \in E_2 (\mathbb{T}, \bar{v})\\ e \notin D_{\bar{v}} }} \frac{d \tau^e}{| (1 + q_e^2) \mathrm{i} + \tau^e |}  \prod_{e \in R_2} \frac{1}{ \langle t^{- \frac{1}{2}} q_e \rangle^3}\\
& \times \prod_{ \substack{ v \in V(\mathbb{T})\\ v \not= \bar{v} } } \delta \big ( \tau^{e^a_v} - \tau^{e^b_v} - \tau^{e^c_v} \big ) \prod_{v \in V} \delta \big (q_{e^a_v} - q_{e^b_v} - q_{e^c_v}\big ) | q_{e^b_v} + q_{e^c_v} | \bigg ]^2 \bigg )^\frac{1}{2}.
\end{split}
\ee

We first estimate $I_{\bar{v}},$ where
\be\begin{split}
I_{\bar{v}} = & \int \prod_{e \in E_2 \backslash L_2} d q_e \prod_{ \substack{ e \in E_2 (\mathbb{T}, \bar{v})\\ e \notin D_{\bar{v}} }} \frac{d \tau^e}{| (1 + q_e^2) \mathrm{i} + \tau^e |}  \prod_{e \in R_2} \frac{1}{ \langle t^{- \frac{1}{2}} q_e \rangle^3}\\
\times & \prod_{ \substack{ v \in V(\mathbb{T})\\ v \not= \bar{v} } } \delta \big ( \tau^{e^a_v} - \tau^{e^b_v} - \tau^{e^c_v} \big ) \prod_{v \in V} \delta \big (q_{e^a_v} - q_{e^b_v} - q_{e^c_v}\big ) | q_{e^b_v} + q_{e^c_v} |.
\end{split}\ee
Given $0< \varepsilon <1$ which will be fixed later, as done in the proof of Theorem \ref{th:Collisionspace-timebound}, we perform integrations over all $\tau$-variables and then obtain
\be\label{eq:CollisionitegralboundM}\begin{split}
I_{\bar{v}} \lesssim_\varepsilon \int & \prod_{e \in E_2 \backslash L_2} d q_e \prod_{v \in V} \delta \big (q_{e^a_v} - q_{e^b_v} - q_{e^c_v}\big ) \prod_{e \in R_2} \frac{1}{ \langle t^{- \frac{1}{2}} q_e \rangle^3}\\
& \times | q_{e^b_{\bar{v}}} + q_{e^c_{\bar{v}}} |\prod_{\substack{v:\; e^a_v \notin R_2\\ v \not= \bar{v}}} \frac{| q_{e^b_v} + q_{e^c_v} |}{(1 + q^2_{e^b_v} + q^2_{e^c_v})^{1-\varepsilon}}
\prod_{ \substack{ v:\; e^a_v \in R_2\\ v \not= \bar{v}} } \frac{| q_{e^b_v} + q_{e^c_v} |}{(1 + q^2_{e^a_v} + q_{e^b_v}^2 + q_{e^c_v}^2 )^{1-\varepsilon}}.
\end{split}
\ee
Thus,
\be
\begin{split}
\big | \big \langle v^{(k)}, R_{\mathbb{T}, t} u^{(n+k)} \big \rangle_{\mathbb{L}^2_{(k)}} & \big | \le
M C^n t^{\frac{n}{2} -1} \big \| u^{(k+n))} \big \|_{\mathbb{H}^1_{(k+n)}}\\
\times \sum_{\bar{v} \in M(\mathbb{T})} \bigg ( \int \prod_{e \in E_2} & d q_e \prod_{v \in V} \delta \big (q_{e^a_v} - q_{e^b_v} - q_{e^c_v} \big ) \prod_{e \in L_2} \frac{1}{ \langle q_e \rangle^2} \prod_{e \in R_2} \frac{1}{ \langle t^{- \frac{1}{2}} q_e \rangle^6}\\
\times | q_{e^b_{\bar{v}}} + q_{e^c_{\bar{v}}} |^2 & \prod_{\substack{v:\; e^a_v \notin R_2\\ v \not= \bar{v}}} \frac{| q_{e^b_v} + q_{e^c_v} |^2}{(1 + q^2_{e^b_v} + q^2_{e^c_v})^{ 2(1-\varepsilon) }} \prod_{ \substack{ v:\; e^a_v \in R_2\\ v \not= \bar{v}} } \frac{| q_{e^b_v} + q_{e^c_v} |^2}{(1 + q^2_{e^a_v} + q_{e^b_v}^2 + q_{e^c_v}^2 )^{2 (1-\varepsilon)}} \bigg )^\frac{1}{2}.
\end{split}
\ee

Next, we estimate the integration on the right hand side of the above inequality. Fix $\bar{v} \in M(\mathbb{T})$ and denote the integration by $\Xi (\mathbb{T}, \bar{v}),$ i.e.,
\be\begin{split}
\Xi (\mathbb{T}, \bar{v}) = & \int \prod_{e \in E_2} d q_e \prod_{v \in V} \delta \big (q_{e^a_v} - q_{e^b_v} - q_{e^c_v} \big ) \prod_{e \in L_2} \frac{1}{ \langle q_e \rangle^2} \prod_{e \in R_2} \frac{1}{ \langle t^{- \frac{1}{2}} q_e \rangle^6}\\
& \times | q_{e^b_{\bar{v}}} + q_{e^c_{\bar{v}}} |^2 \prod_{\substack{v:\; e^a_v \notin R_2\\ v \not= \bar{v}}} \frac{| q_{e^b_v} + q_{e^c_v} |^2}{(1 + q^2_{e^b_v} + q^2_{e^c_v})^{ 2(1-\varepsilon) }} \prod_{ \substack{ v:\; e^a_v \in R_2\\ v \not= \bar{v}} } \frac{| q_{e^b_v} + q_{e^c_v} |^2}{(1 + q^2_{e^a_v} + q_{e^b_v}^2 + q_{e^c_v}^2 )^{2 (1-\varepsilon)}}.
\end{split}\ee
Since the delta functions relate variables within the same trees, we may consider separately the integrations over all $q$-variables associated with each tree. Note that the integration associated with a tree $\mathrm{T}$ not including $\bar{v}$ is the same as $\Xi (\mathrm{T})$ in \eqref{eq:Integralonetreebound} in the proof of Theorem \ref{th:Collisionspace-timebound} and so we have $\Xi (\mathrm{T}) \le C^{|V(\mathrm{T})|},$ where $C$ is a absolute constant. Thus, it remains to estimate the integration over the tree $\mathrm{T}$ containing $\bar{v},$ which we denote by $\Xi (\mathrm{T}, \bar{v}).$

Note that
\beq\label{eq:Maxvertexint}
\int d q_{e^b_{\bar{v}}} d q_{e^c_{\bar{v}}} \frac{\delta ( q_{e^a_{\bar{v}}} - q_{e^b_{\bar{v}}} - q_{e^c_{\bar{v}}} ) | q_{e^b_{\bar{v}}} + q_{e^c_{\bar{v}}} |^2}{ \langle q_{e^b_{\bar{v}}} \rangle^2 \langle q_{e^c_{\bar{v}}} \rangle^2} = |q_{e^a_{\bar{v}}}|^2 \int \frac{ d q_{e^b_{\bar{v}}}}{ \langle q_{e^b_{\bar{v}}} \rangle^2 \langle q_{e^a_{\bar{v}}} - q_{e^b_{\bar{v}}} \rangle^2} \lesssim |q_{e^a_{\bar{v}}}|.
\eeq
If $e^a_{\bar{v}} = R (\mathrm{T}),$ then
\be
\Xi (\mathrm{T}, \bar{v}) = \int d q_{e^a_{\bar{v}}} d q_{e^b_{\bar{v}}} d q_{e^c_{\bar{v}}} \frac{\delta ( q_{e^a_{\bar{v}}} - q_{e^b_{\bar{v}}} - q_{e^c_{\bar{v}}} ) | q_{e^b_{\bar{v}}} + q_{e^c_{\bar{v}}} |^2}{\langle t^{-\frac{1}{2}} q_{e^a_{\bar{v}}} \rangle^6 \langle q_{e^b_{\bar{v}}} \rangle^2 \langle q_{e^c_{\bar{v}}} \rangle^2} \lesssim \int \frac{1}{\langle q_{e^a_{\bar{v}}} \rangle^5} d q_{e^a_{\bar{v}}} < \8
\ee
where we have used the fact that $\langle t^{-\frac{1}{2}} q_{e^a_{\bar{v}}} \rangle \ge \langle q_{e^a_{\bar{v}}} \rangle$ (because $0< t \le 1$). It remains to deal with the case $e^a_{\bar{v}} \not= R (\mathrm{T}),$ where
\be\begin{split}
\Xi (\mathrm{T}, \bar{v}) = \int \prod_{e \in E_2 (\mathrm{T})} d q_e \prod_{v \in V (\mathrm{T})} \delta \big (q_{e^a_v} - q_{e^b_v} - q_{e^c_v}\big ) & \frac{1}{ \langle t^{-\frac{1}{2}}q_{R(\mathrm{T})} \rangle^6} \frac{| q_{e^b_v} + q_{e^c_v} |^2}{(1 + q^2_{e^a_v = R(\mathrm{T})} + q_{e^b_v}^2 + q_{e^c_v}^2 )^{2 (1-\varepsilon)}}\\
\times | q_{e^b_{\bar{v}}} + q_{e^c_{\bar{v}}} |^2 & \prod_{e \in L_2 (\mathrm{T})} \frac{1}{ \langle q_e \rangle^2} \prod_{\substack{ v \in V (\mathrm{T}) \backslash \{ \bar{v} \}\\ e^a_v \not= R(\mathrm{T})}} \frac{| q_{e^b_v} + q_{e^c_v} |^2}{(1 + q^2_{e^b_v} + q^2_{e^c_v})^{2 (1-\varepsilon)}}.
\end{split}\ee

At first, for a maximal vertex $v \in V (\mathrm{T}) \backslash \{ \bar{v} \}$ whose two daughter edges must be leaves, the associated integration is
\be\begin{split}
\Xi_v = & \int d q_{e^b_v} d q_{e^c_v} \delta \big (q_{e^a_v} - q_{e^b_v} - q_{e^c_v}\big ) \frac{1}{ \langle q_{e^b_v} \rangle^2} \frac{1}{\langle q_{e^c_v} \rangle^2} \frac{| q_{e^b_v} + q_{e^c_v} |^2}{(1 + q^2_{e^b_v} + q^2_{e^c_v})^{2 (1-\varepsilon)}}\\
= & \int d q_{e^b_v} \frac{1}{\langle q_{e^b_v} \rangle^2} \frac{1}{\langle q_{e^a_v} - q_{e^b_v} \rangle^2} \frac{| q_{e^a_v}|^2}{(1 + q^2_{e^b_v} + (q_{e^a_v} - q_{e^b_v})^2 )^{2 (1-\varepsilon)}}.
\end{split}\ee
Taking $0 <\varepsilon \le \frac{3}{4},$ we have
\be
\Xi_v \lesssim \frac{1}{\langle q_{e^a_v} \rangle^2}.
\ee
Subsequently, every vertex $v$ with $v \nprec \bar{v}$ for which all vertices $v'$ with $v \prec v'$ have already been integrated out is associated with the integration of the form $\Xi_v$ as above. On the other hand, by \eqref{eq:Maxvertexint}, the integration associated with $\bar{v}$ is
\be\begin{split}
\Xi_{\bar{v}} = & \int d q_{e^b_{\bar{v}}} d q_{e^c_{\bar{v}}} \delta \big ( q_{e^a_{\bar{v}}} - q_{e^b_{\bar{v}}} - q_{e^c_{\bar{v}}}\big ) \frac{1}{ \langle q_{e^b_{\bar{v}}} \rangle^2} \frac{1}{\langle q_{e^c_{\bar{v}}} \rangle^2} | q_{e^b_{\bar{v}}} + q_{e^c_{\bar{v}}} |^2 \lesssim \langle q_{e^a_{\bar{v}}} \rangle.
\end{split}\ee

Secondly, suppose that $v$ is a vertex with one of the leaves being $e^a_{\bar{v}},$ for instance $e^b_v = e^a_{\bar{v}}.$ If $e^a_v \not= R(\mathrm{T}),$ then the integration is
\be\begin{split}
\Xi_v = & \int d q_{e^b_v} d q_{e^c_v} \delta \big (q_{e^a_v} - q_{e^b_v} - q_{e^c_v}\big ) \frac{\langle q_{e^b_v} \rangle}{\langle q_{e^c_v} \rangle^2} \frac{| q_{e^b_v} + q_{e^c_v} |^2}{(1 + q^2_{e^b_v} + q^2_{e^c_v})^{2 (1-\varepsilon)}}\\
\le & | q_{e^a_v}|^2 \int d q_{e^b_v} \frac{1}{\langle q_{e^a_v} - q_{e^b_v} \rangle^2} \frac{1}{\langle q_{e^b_v} \rangle^{3 - 4 \varepsilon}} \lesssim | q_{e^a_v}|
\end{split}\ee
provided $0 < \varepsilon \le \frac{1}{4}.$ Subsequently, for each $v \prec \bar{v}$ with $e^a_v \not= R(\mathrm{T}),$ the associated integration is the same as this $\Xi_v$ and so $\lesssim | q_{e^a_v}|.$

Finally, we will arrive the vertex $v \prec \bar{v}$ with $e^a_v = R(\mathrm{T}).$ Then the integration is
\be\begin{split}
\Xi_v = & \int d q_{e^a_v} d q_{e^b_v} d q_{e^c_v} \delta \big (q_{e^a_v} - q_{e^b_v} - q_{e^c_v}\big ) \frac{1}{ \langle t^{-\frac{1}{2}}q_{R(\mathrm{T})} \rangle^6} \frac{\langle q_{e^b_v} \rangle}{\langle q_{e^c_v} \rangle^2} \frac{| q_{e^b_v} + q_{e^c_v} |^2}{(1 + q^2_{e^a_v = R(\mathrm{T})} + q^2_{e^b_v} + q^2_{e^c_v})^{2 (1-\varepsilon)}}\\
\le & \int d q_{e^a_v} d q_{e^b_v} \frac{1}{\langle q_{R(\mathrm{T})} \rangle^4} \frac{\langle q_{e^b_v} \rangle}{\langle q_{e^a_v} - q_{e^b_v} \rangle^2} \frac{1}{(1 + q^2_{e^a_v = R(\mathrm{T})} + q^2_{e^b_v} + (q_{e^a_v} - q_{e^b_v})^2 )^{2 (1-\varepsilon)}} \le C
\end{split}\ee
for some absolute constant $C< \8$ uniformly for all $0 < \varepsilon \le \frac{1}{4}.$

Thus, if $0 < \varepsilon \le \frac{1}{4}$ we have
\be
\Xi (\mathrm{T}, \bar{v}) \le C^{|V (\mathrm{T})|}
\ee
and hence
\be
\Xi (\mathbb{T}, \bar{v}) = \Xi (\mathrm{T}, \bar{v}) \prod_{\bar{v} \notin \mathrm{T}} \Xi (\mathrm{T}) \le C^n
\ee
because $V ( \mathbb{T}) =n.$ Note that $|M (\mathbb{T})| \le n.$ Therefore, we conclude \eqref{eq:Collisionspace-timebounderror}.
\end{proof}

\subsection{Proof for uniqueness}

We are now ready to prove Theorem \ref{th:unique}. Let $T>0.$ For a given $C>0,$ suppose that $\mathfrak{U}_1(t) = (u^{(k)}_1 (t))_{k \ge 1}$ and $\mathfrak{U}_2 (t) = (u^{(k)}_2 (t))_{k \ge 1}$ are two mild solutions in $[0,T]$ to the hierarchy \eqref{eq:NShierarchyIntEqua} such that $\mathfrak{U}_1(0) = \mathfrak{U}_2 (0),$ and for every $i=1,2,$  $u^{(k)}_i \in L^{\8} ([0, T], \mathfrak{H}^1_{(k)})$ satisfying the bound
\be
\| u^{(k)}_i \|_{L^\8([0,T], \mathbb{H}^1_{(k)})} \le C^k
\ee
for all $k \ge 1.$ We need to prove that $\mathfrak{U}_1(t) = \mathfrak{U}_2(t)$ for every $t \in [0, T].$ In fact, it suffices to prove that for each fixed $k \ge 1,$ $u^{(k)}_1 (t) = u^{(k)}_2 (t)$ for all $t \in [0,T].$

To this end, for a given $k \ge 1,$ we can expand $u^{(k)}_i (t)$ ($i=1,2$) in a Duhamel expansion as in \eqref{eq:DuhamelExpand}. Then by Theorems \ref{th:Duhamelfulltree} and \ref{th:Duhamelerrortree}, we have
\beq\label{eq:DuhamelExpandGraph}
\begin{split}
u^{(k)}_i (t) = \mathcal{T}^{(k)} (t) u^{(k)}_i (0) + \sum^{n-1}_{m=1} \sum_{\mathbb{T} \in \mathfrak{T}_{m,k}} C_{\mathbb{T}, t} u^{(k+m)}_i (0) - \sum_{\mathbb{T} \in \mathfrak{T}_{n,k}} \int^t_0 d s R_{\mathbb{T}, t-s} u^{(k+n)}_i (s)
\end{split}
\eeq
for all $n >1,$ where $i=1,2.$ For any $v^{(k)} \in \mathcal{S}_{(k)} (\mathbb{R}^3),$ one has
\be
\begin{split}
\langle v^{(k)}, u^{(k)}_i (t) \rangle_{\mathbb{L}^2_{(k)}} = \langle v^{(k)}, \mathcal{T}^{(k)} (t) u^{(k)}_i (0) \rangle_{\mathbb{L}^2_{(k)}} + & \sum^{n-1}_{m=1} \sum_{\mathbb{T} \in \mathfrak{T}_{m,k}} \langle v^{(k)}, C_{\mathbb{T}, t} u^{(k+m)}_i (0) \rangle_{\mathbb{L}^2_{(k)}}\\
& - \sum_{\mathbb{T} \in \mathfrak{T}_{n,k}} \int^t_0 d s \langle v^{(k)}, R_{\mathbb{T}, t-s} u^{(k+n)}_i (s) \rangle_{\mathbb{L}^2_{(k)}}
\end{split}
\ee
for $i=1,2.$ By Theorem \ref{th:Collisionspace-timebound}, the terms in the sum over $m$ are all finite. Since $\mathfrak{U}_1(0) = \mathfrak{U}_2 (0),$ when taking the difference between $\langle v^{(k)}, u^{(k)}_1 (t) \rangle_{\mathbb{L}^2_{(k)}}$ and $\langle v^{(k)}, u^{(k)}_2 (t) \rangle_{\mathbb{L}^2_{(k)}},$ the free evolution terms $\mathcal{T}^{(k)} (t) u^{(k)}_i (0)$ and all the terms in the sum over $m$ disappear, and so we have
\be
\langle v^{(k)}, u^{(k)}_1 (t) - u^{(k)}_2 (t) \rangle_{\mathbb{L}^2_{(k)}} = - \sum_{\mathbb{T} \in \mathfrak{T}_{n,k}} \int^t_0 d s \big \langle v^{(k)}, R_{\mathbb{T}, t-s} [ u^{(k+n)}_1 (s) - u^{(k+n)}_2 (s) ] \big \rangle_{\mathbb{L}^2_{(k)}}
\ee
for any $n > 1.$ Then for $0< t \le 1$ and $n >1,$ by Theorem \ref{th:Collisionspace-timebounderror} and the estimation $|\mathfrak{T}_{n,k} | \le 2^{3 n +k}$ (cf. \eqref{eq:ForestNo}) we have
\be\begin{split}
\big | \langle & v^{(k)},u^{(k)}_1 (t) - u^{(k)}_2 (t) \rangle_{\mathbb{L}^2_{(k)}} \big |\\
& \le M C^n \int^t_0 d s (t -s)^{\frac{n}{2}-1} \big ( \| u^{(k+n)}_1 (s) \|_{\mathbb{H}^1_{(n + k)}} + \| u^{(k+n)}_2 (s) \|_{\mathbb{H}^1_{(n + k)}} \big ) \le M C^n t^\frac{n}{2}
\end{split}\ee
where we have used the assumption that $\| u^{(n+k)}_i \|_{L^\8([0,T], \mathbb{H}^1_{(n+k)})} \le C^{n+k}$ for $i=1,2.$ Hence, taking $t = \min \big \{ \frac{1}{2}, \frac{1}{(2 C)^2} \big \},$ we have
\be
\big | \langle v^{(k)}, u^{(k)}_1 (t) - u^{(k)}_2 (t) \rangle_{\mathbb{L}^2_{(k)}} \big | \le M \frac{1}{2^n}.
\ee
Since $n >1$ is arbitrary,
\be
\langle v^{(k)}, u^{(k)}_1 (t) - u^{(k)}_2 (t) \rangle_{\mathbb{L}^2_{(k)}} =0
\ee
for any $v^{(k)} \in \mathcal{S}_{(k)} (\mathbb{R}^3),$ provided $t \le \min \big \{ \frac{1}{2}, \frac{1}{(2 C)^2} \big \}.$ Thus, we conclude that
$u^{(k)}_1 (t) = u^{(k)}_2 (t)$ for all $t \le \min \big \{ \frac{1}{2}, \frac{1}{(2 C)^2} \big \}.$ By iteration, we can prove that $u^{(k)}_1 (t) = u^{(k)}_2 (t)$ for all $t \in [0,T].$ This completes the proof of Theorem \ref{th:unique}.

\section{A solution formula for the incompressible Navier-Stokes equation}\label{SolutionFormulaNSE}

By \eqref{eq:NSEhierarchyTreeExpanssion}, we have a formal formula for the solution to the Navier-Stokes hierarchy \eqref{eq:NSEhierarchyW} with an initial data $\mathfrak{U}_0 = (u^{(k)}_0)_{k \ge 1}$ as follows
\beq\label{eq:NSEhierarchySolutionFormula}
\begin{split}
u^{(k)} (t) = e^{t \triangle^{(k)}} u^{(k)}_0 + \sum^\8_{n=1} \sum_{\mathbb{T} \in \mathfrak{T}_{n,k}} C_{\mathbb{T}, t} u^{(n+k)}_0
\end{split}
\eeq
for every $k \ge 1,$ provided the remainder terms converge to zero as $n \to \8.$

The following is to prove such a formula for the Navier-Stokes equation \eqref{eq:NSEpressure-free} in $\mathbb{H}^1 (\mathbb{R}^3).$

\begin{thm}\label{th:NSEsolutionformulaH1}
Let $u_0 \in \mathbb{H}^1 (\mathbb{R}^3)$ with $\bigtriangledown \cdot u_0 =0.$ Let $u$ be the unique weak solution in $C ([0, T^*), \mathbb{H}^1 (\mathbb{R}^3))$ for the initial problem of the Navier-Stokes equation
\be
\left \{\begin{split}
& \partial_t u = \triangle u - W (u \otimes u), \\
& \bigtriangledown \cdot u = 0, \end{split} \right.
\ee
with the initial datum $u (0) = u_0,$ where $T^*$ is the maximal life-time of $u(t).$ Then there exits $0 <t^* < T^*$ such that
\beq\label{eq:NSESolutionFormula}
u (t) = e^{t \triangle} u_0 + \sum^\8_{n=1} \sum_{\mathbb{T} \in \mathfrak{T}_{n,1}} C_{\mathbb{T}, t} u^{\otimes^{n+1}}_0
\eeq
in the sense of distributions for every $0 < t < t^*.$
\end{thm}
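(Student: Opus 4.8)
The plan is to combine the equivalence result of Section~\ref{Unique} with the graphic Duhamel expansion of Section~\ref{GraphicNShierachy} and the space--time bound for the error operator from Section~\ref{SpacetimeEstimate}, and then to send the number of collisions to infinity.

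First I would invoke Corollary~\ref{cor:NES=NEH}: the factorized hierarchy $\mathfrak{U}(t)=(u(t)^{\otimes^k})_{k\ge1}$ is the unique weak solution on $[0,T^*)$ with datum $(u^{\otimes^k}_0)_{k\ge1}$, hence by Proposition~\ref{prop:solutionequivalence} a mild solution, so each component satisfies the integral hierarchy~\eqref{eq:NShierarchyIntEqua}. Iterating that equation $n$ times is legitimate because $W^{(k)}\colon\mathbb{H}^1_{(k+1)}\to\mathbb{H}^{-1}_{(k)}$ (Proposition~\ref{prop:Wdfwell}) and each $\mathcal{T}^{(k)}(t)$ is a contraction on every $\mathbb{H}^\alpha_{(k)}$ (Proposition~\ref{prop:Tcontra}); this gives~\eqref{eq:DuhamelExpand}. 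Fixing $T=T^*/2$ and setting $A:=\sup_{0\le s\le T}\|u(s)\|_{\mathbb{H}^1}<\8$ (finite by continuity of $u$ in $\mathbb{H}^1$), one has $\|u(s)^{\otimes^m}\|_{\mathbb{H}^1_{(m)}}=\|u(s)\|_{\mathbb{H}^1}^m\le A^m$ and $s\mapsto u(s)^{\otimes^m}$ is continuous into $\mathfrak{H}^1_{(m)}$, so the hypotheses of Theorems~\ref{th:Duhamelfulltree} and~\ref{th:Duhamelerrortree} hold. Taking $k=1$ in~\eqref{eq:NSEhierarchyTreeExpanssion} (with $u^{(1)}(t)=u(t)$ and $u^{(1+j)}_0=u^{\otimes^{1+j}}_0$) then yields, for every $n>1$,
\[
u(t)=e^{t\triangle}u_0+\sum^{n-1}_{j=1}\sum_{\mathbb{T}\in\mathfrak{T}_{j,1}}C_{\mathbb{T},t}u^{\otimes^{1+j}}_0-\sum_{\mathbb{T}\in\mathfrak{T}_{n,1}}\int^t_0 d s\,R_{\mathbb{T},t-s}\,u(s)^{\otimes^{1+n}}
\]
in $\mathcal{D}'(\mathbb{R}^3)$.

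Next I would choose $t^*$ so the error sum vanishes. Let $C_0=C_0(1)$ be the constant of Theorem~\ref{th:Collisionspace-timebounderror} (it depends only on $k=1$), and put $t^*:=\min\{1,\,T^*/2,\,(16\,C_0\,A)^{-2}\}$, so $0<t^*<T^*$ and $8\,C_0\,A\sqrt{t^*}\le\frac12$. Fix $0<t<t^*$ and a test function $\phi\in\mathcal{D}_{(1)}(\mathbb{R}^3)$; since each $\widehat{\phi_{i_1}}$ is Schwartz, $\phi$ obeys~\eqref{eq:Testfunctbound} (with $k=1$) with constant $M_\phi:=\sup_{i_1}\sup_{p}\langle p\rangle^3|\widehat{\phi_{i_1}}(p)|<\8$. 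Applying Theorem~\ref{th:Collisionspace-timebounderror} with $t-s\in(0,1]$, the bound $|\mathfrak{T}_{n,1}|\le 2^{3n+1}$ from~\eqref{eq:ForestNo}, $\|u(s)^{\otimes^{1+n}}\|_{\mathbb{H}^1_{(1+n)}}\le A^{1+n}$, and $\int^t_0(t-s)^{\frac n2-1}d s=\frac2n t^{\frac n2}$ (valid for $n\ge1$), I get
\[
\Big|\Big\langle\phi,\sum_{\mathbb{T}\in\mathfrak{T}_{n,1}}\int^t_0 d s\,R_{\mathbb{T},t-s}\,u(s)^{\otimes^{1+n}}\Big\rangle_{\mathbb{L}^2_{(1)}}\Big|\le 2^{3n+1}M_\phi C_0^{\,n}A^{1+n}\frac2n t^{\frac n2}=\frac{4M_\phi A}{n}\big(8C_0 A\sqrt{t}\big)^n\le\frac{4M_\phi A}{n\,2^n},
\]
which tends to $0$ as $n\to\8$. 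Pairing the displayed expansion with $\phi$ and letting $n\to\8$, the error sum drops out, each collision term is finite by Theorem~\ref{th:Collisionspace-timebound} (applied with $\alpha=1>\frac12$), and the partial sums $\sum^{n-1}_{j=1}\big\langle\phi,\sum_{\mathbb{T}\in\mathfrak{T}_{j,1}}C_{\mathbb{T},t}u^{\otimes^{1+j}}_0\big\rangle_{\mathbb{L}^2_{(1)}}$ converge to $\langle\phi,u(t)-e^{t\triangle}u_0\rangle_{\mathbb{L}^2_{(1)}}$. As $\phi$ is arbitrary, $\sum_{n\ge1}\sum_{\mathbb{T}\in\mathfrak{T}_{n,1}}C_{\mathbb{T},t}u^{\otimes^{n+1}}_0$ converges in $\mathcal{D}'(\mathbb{R}^3)$ to $u(t)-e^{t\triangle}u_0$, which is~\eqref{eq:NSESolutionFormula}.

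The content is mostly bookkeeping rather than new analysis. The one point to verify carefully is that the hierarchy built from the Navier--Stokes solution $u$ genuinely satisfies~\eqref{eq:NSEhierarchyTreeExpanssion}; this rests on Corollary~\ref{cor:NES=NEH}, Proposition~\ref{prop:solutionequivalence} and Theorems~\ref{th:Duhamelfulltree}--\ref{th:Duhamelerrortree}, whose strong-measurability and $L^2$-in-time hypotheses follow from continuity of $u$ in $\mathbb{H}^1$ on $[0,T^*/2]$. The feature worth stressing is that the smallness of $t^*$ is forced only by the local-in-time bound $A$ and the fixed constant $C_0$ of Theorem~\ref{th:Collisionspace-timebounderror}, with \emph{no} smallness assumed on $u_0$: one never needs absolute convergence of the collision series, only that the remainder tends to zero. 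The genuinely hard part --- the multi-parameter singular integral estimates of Theorems~\ref{th:Collisionspace-timebound} and~\ref{th:Collisionspace-timebounderror} --- is already in hand.
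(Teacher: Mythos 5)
Your proposal is correct and follows essentially the same route as the paper: invoke Corollary~\ref{cor:NES=NEH} to pass to the factorized hierarchy, apply the tree expansion \eqref{eq:NSEhierarchyTreeExpanssion} at $k=1$, bound the remainder via Theorem~\ref{th:Collisionspace-timebounderror} together with the forest-count $|\mathfrak{T}_{n,1}|\le 2^{3n+1}$, and take $t^*$ small enough so the error tends to zero while the fully expanded terms converge distributionally. Your only departures are cosmetic: you retain the factor $\tfrac{2}{n}$ from the time integral, you use a slightly more conservative constant ($16$ instead of $8$), and you insert $T^*/2$ into the minimum so that $t^*<T^*$ is explicit --- a small tightening of the paper's statement, which fixes $T<T^*$ first but does not force its $t^*$ below $T$.
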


\begin{proof}
Fix $0< T <T^*.$ Then $u (t) \in C ([0, T], \mathbb{H}^1 (\mathbb{R}^3))$ so that
\be
C_T = \sup_{t \in [0,T]} \| u(t) \|_{\mathbb{H}^1} < \8.
\ee
Put $u^{(k)} (t) = u (t)^{\otimes^k}$ for every $k \ge 1,$ then by Corollary \ref{cor:NES=NEH}, $(u^{(k)} (t) )_{k \ge 1}$ is the unique mild solution for the Navier-Stokes hierarchy \eqref{eq:NShierarchyIntEqua} with the initial datum $u^{(k)} (0) = u^{\otimes^k}_0,$ such that for every $k \ge 1,$ $ u^{(k)} \in L^{\8} ([0, T], \mathfrak{H}^1_{(k)})$ and satisfies the bound
\be
\sup_{0 \le t \le T} \| u^{(k)} (t) \|_{\mathbb{H}^1} \le C^k_T.
\ee
On the other hand, it follows from \eqref{eq:NSEhierarchyTreeExpanssion} that
for $n >1,$
\be
u (t) = e^{t \triangle} u_0 + \sum^{n-1}_{j=1} \sum_{\mathbb{T} \in \mathfrak{T}_{j,1}} C_{\mathbb{T}, t} u^{\otimes^{j+1}}_0 - \sum_{\mathbb{T} \in \mathfrak{T}_{n,1}} \int^t_0 d s R_{\mathbb{T}, t-s} u^{(n+1)} (s)
\ee
for all $0 \le t \le T.$ As shown in the proof of Theorem \ref{th:Collisionspace-timebound}, every term $C_{\mathbb{T}, t} u^{\otimes^{j+1}}_0$ is well defined in the sense of distribution. It remains to prove that
\be
\lim_{n \to \8} \sum_{\mathbb{T} \in \mathfrak{T}_{n,1}} \int^t_0 d s A_{\mathbb{T}, t-s} u^{(n+1)} (s) =0
\ee
in the sense of distributions for $0 < t < t^*,$ where $t^*$ will be fixed later.

By Theorem \ref{th:Collisionspace-timebounderror}, for any $\phi \in \mathcal{S} (\mathbb{R}^3)$ we have
\be
| \langle \phi, A_{\mathbb{T}, t-s} u^{(n+1)} (s) \rangle | \le M_\phi C^n_0 (t-s)^{\frac{n}{2} -1} \| u^{(n+1)} (s) \|_{\mathbb{H}^1_{(n+1)}}
\ee
for $t, s \in (0, T]$ so that $0 < t -s \le 1,$ where $M_\phi$ is a positive constant depending only on $\phi,$ and $C_0>0$ is an absolute constant. Then
\be\begin{split}
\Big | \Big \langle \phi, \sum_{\mathbb{T} \in \mathfrak{T}_{n,1}} \int^t_0 d s A_{\mathbb{T}, t-s} u^{(n+1)} (s) \Big \rangle \Big |
& \le M_\phi C^n_0 2^{3n+1} \int^t_0 d s (t-s)^{\frac{n}{2} -1} \| u^{(n+1)} (s) \|_{\mathbb{H}^1_{(n+1)}} \\
& \le M_\phi C^n_0 2^{3n+1} C^{n+1}_T t^\frac{n}{2}.
\end{split}\ee
Thus, choosing $t^* = \min \{ 1, \; ( 8 C_0 C_T)^{- 2} \},$ we have
\be
\lim_{n \to \8} \Big \langle \phi, \sum_{\mathbb{T} \in \mathfrak{T}_{n,1}} \int^t_0 d s A_{\mathbb{T}, t-s} u^{(n+1)} (s) \Big \rangle =0
\ee
for all $0< t < t^*.$ This proves \eqref{eq:NSESolutionFormula}.
\end{proof}

\begin{rk}\label{rk:NSESolutionFormula}\rm
\begin{enumerate}[{\rm (1)}]

\item Note that each $C_{\mathbb{T}, t}$ with $\mathbb{T} \in \mathfrak{T}_{n,1}$ is a multi-parameter integral operator with an explicit kernel \eqref{eq:collisionkernel} in momentum space, which describes a kind of processes of two-body interaction of $n+1$ ``particles". Thus, the formula \eqref{eq:NSESolutionFormula} may be regarded as an explicit expression of solution to the homogeneous, incompressible Navier-Stokes equation \eqref{eq:NSE} in $\mathbb{R}^3,$ and should be useful for computing this solution.

\item A natural question is whether $t^*$ can be taken as being $T^*,$ that is, whether the formula \eqref{eq:NSESolutionFormula} holds true for all $0 < t < T^*$ ? For checking this problem, it seems to need new ideas beyond the argument in the above proof.
\end{enumerate}
\end{rk}

Furthermore, we can prove stronger convergence of the series in \eqref{eq:NSESolutionFormula} if the initial data have higher regularity.

\begin{thm}\label{th:NSEsolutionformulaStrongConverg}
Let $\beta < - \frac{3}{2}$ and let $\alpha > \frac{3}{2}.$ Let $u_0 \in \mathbb{H}^\alpha (\mathbb{R}^3)$ with $\bigtriangledown \cdot u_0 =0,$ and let $u$ be the unique weak solution in $C ([0, T^*), \mathbb{H}^\alpha (\mathbb{R}^3))$ for the initial problem of the Navier-Stokes equation \eqref{eq:NSEpressure-free} with the initial datum $u (0) = u_0,$ where $T^*$ is the maximal life-time of $u(t).$ Then there exits $0 <t^* < T^*$ such that \eqref{eq:NSESolutionFormula} holds in $\mathbb{H}^\beta (\mathbb{R}^3)$ for all $0 < t < t^*.$
\end{thm}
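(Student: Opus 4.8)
The plan is to strengthen the two a priori space-time estimates of Section~\ref{SpacetimeEstimate} so that they produce bounds directly in $\mathbb{H}^\beta$, and then to repeat the proof of Theorem~\ref{th:NSEsolutionformulaH1} with the distributional pairing replaced by the $\mathbb{H}^\beta$ norm. The elementary point that makes this possible is that, since $\beta<-\tfrac{3}{2}$, the weight $\langle p\rangle^{2\beta}$ is Lebesgue integrable on $\mathbb{R}^3$, so any vector field $w=(w_1,w_2,w_3)$ whose Fourier transform is a bounded measurable function satisfies
\be
\|w\|_{\mathbb{H}^\beta}\;\le\;C_\beta\sup_{1\le j\le 3}\,\sup_{p\in\mathbb{R}^3}|\widehat{w_j}(p)|,\qquad C_\beta:=\Big(\int_{\mathbb{R}^3}\langle p\rangle^{2\beta}\,d p\Big)^{1/2}<\infty.
\ee
Hence it suffices to control the momentum-space sup-norms of $C_{\mathbb{T},t}u^{(n+1)}$ and of $\int_0^t R_{\mathbb{T},t-s}u^{(n+1)}(s)\,d s$ for $\mathbb{T}\in\mathfrak{T}_{n,1}$, with factorized arguments $u^{(n+1)}=u^{\otimes^{n+1}}$, $u\in\mathbb{H}^\alpha$.

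The key step is the pointwise analogue of Theorems~\ref{th:Collisionspace-timebound} and \ref{th:Collisionspace-timebounderror}: for $\alpha>\tfrac{3}{2}$, $k=1$ and every $\mathbb{T}\in\mathfrak{T}_{n,1}$,
\be
\sup_{p_1}\big|\widehat{C_{\mathbb{T},t}u^{(n+1)}}(p_1)\big|\le C^n\|u^{(n+1)}\|_{\mathbb{H}^\alpha_{(n+1)}},\qquad \sup_{p_1}\big|\widehat{R_{\mathbb{T},t}u^{(n+1)}}(p_1)\big|\le C^n t^{\frac{n}{2}-1}\|u^{(n+1)}\|_{\mathbb{H}^\alpha_{(n+1)}}
\ee
for all $0<t\le 1$, with $C$ depending only on $\alpha$. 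I would prove these by rerunning the proofs of Theorems~\ref{th:Collisionspace-timebound} and \ref{th:Collisionspace-timebounderror} essentially verbatim, the only change being that in the kernels \eqref{eq:collisionkernel} and \eqref{eq:erroroperatorkernel} one does not average against a test function $v^{(k)}$ obeying \eqref{eq:Testfunctbound}, but instead simply freezes the external momentum $p_1$ (the momentum $q_e$ of the unique root edge being thereby fixed to $p_1$ by the factor $\delta(q_e-p_{\pi_1(e)})$). One still takes $\gamma^e=-1/t$ on the leaves, performs all the $\tau^e$-integrations along the tree via Lemma~\ref{lem:FrequencyestimateM} exactly as before, applies the Cauchy--Schwarz inequality against $\|u^{(n+1)}\|_{\mathbb{H}^\alpha_{(n+1)}}$, and carries out the surviving momentum integrations from the leaves toward the root using the interior-vertex bound $\Xi_v\lesssim\langle q_{e^a_v}\rangle^{-2\alpha}$ (already valid for $\alpha>\tfrac{1}{2}$). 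The one new feature is the momentum integral over the daughter-edges of the root vertex: the decay factor $\langle p_1\rangle^{-6}$ formerly supplied by \eqref{eq:Testfunctbound} is now absent, so one is left with a convolution of the type $\int_{\mathbb{R}^3}\langle q\rangle^{-2\alpha}\langle p_1-q\rangle^{-2\alpha}\,d q$ times a bounded factor from the root propagator, and the improved regularity $\alpha>\tfrac{3}{2}$ guarantees that this convolution is bounded uniformly in $p_1$. For $R_{\mathbb{T},t}$ the extra factor $|q_{e^b_{\bar{v}}}+q_{e^c_{\bar{v}}}|$ attached to the maximal vertex $\bar{v}$ is absorbed exactly as in the proof of Theorem~\ref{th:Collisionspace-timebounderror}, and the power $t^{n/2-1}$ is produced by the same rescaling $q_e\mapsto t^{-1/2}q_e$, $\tau^e\mapsto t^{-1}\tau^e$.

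With these two estimates in hand, the proof follows that of Theorem~\ref{th:NSEsolutionformulaH1}. Fix $0<T<T^*$ and set $C_T:=\sup_{t\in[0,T]}\|u(t)\|_{\mathbb{H}^\alpha}<\infty$. Since $u_0\in\mathbb{H}^\alpha\subset\mathbb{H}^1$ is divergence-free, Corollary~\ref{cor:NES=NEH} shows that $(u(t)^{\otimes^k})_{k\ge 1}$ is the unique mild solution of the Navier--Stokes hierarchy with datum $(u_0^{\otimes^k})_{k\ge 1}$, so by Theorems~\ref{th:Duhamelfulltree} and \ref{th:Duhamelerrortree} (that is, \eqref{eq:NSEhierarchyTreeExpanssion} with $k=1$) we have, for every integer $N>1$ and all $0\le t\le T$,
\be
u(t)=e^{t\triangle}u_0+\sum_{j=1}^{N-1}\sum_{\mathbb{T}\in\mathfrak{T}_{j,1}}C_{\mathbb{T},t}u_0^{\otimes^{j+1}}-\sum_{\mathbb{T}\in\mathfrak{T}_{N,1}}\int_0^t d s\,R_{\mathbb{T},t-s}u(s)^{\otimes^{N+1}}.
\ee
By the first pointwise estimate together with the first display, every $C_{\mathbb{T},t}u_0^{\otimes^{j+1}}$ lies in $\mathbb{H}^\beta$ with $\|C_{\mathbb{T},t}u_0^{\otimes^{j+1}}\|_{\mathbb{H}^\beta}\le C_\beta C^j\|u_0\|_{\mathbb{H}^\alpha}^{j+1}$, so all partial sums belong to $\mathbb{H}^\beta$. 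Using $\|u(s)^{\otimes^{N+1}}\|_{\mathbb{H}^\alpha_{(N+1)}}=\|u(s)\|_{\mathbb{H}^\alpha}^{N+1}\le C_T^{N+1}$, the bound $|\mathfrak{T}_{N,1}|\le 2^{3N+1}$, the second pointwise estimate and $\int_0^t(t-s)^{N/2-1}\,d s=\frac{2}{N}\,t^{N/2}$, we obtain
\be
\Big\|\sum_{\mathbb{T}\in\mathfrak{T}_{N,1}}\int_0^t d s\,R_{\mathbb{T},t-s}u(s)^{\otimes^{N+1}}\Big\|_{\mathbb{H}^\beta}\le\frac{2^{3N+2}}{N}\,C_\beta\,C^N\,C_T^{N+1}\,t^{N/2},
\ee
which tends to $0$ as $N\to\infty$ whenever $8\,C\,C_T\sqrt{t}<1$. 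Taking $t^*:=\min\{T,1,(8\,C\,C_T)^{-2}\}\in(0,T^*)$, the remainder converges to $0$ in $\mathbb{H}^\beta$ for every $0<t<t^*$, hence the partial sums converge in $\mathbb{H}^\beta$ to $u(t)-e^{t\triangle}u_0$; this is \eqref{eq:NSESolutionFormula} in $\mathbb{H}^\beta(\mathbb{R}^3)$ for $0<t<t^*$.

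The main obstacle is the second step. One must verify that, after freezing the external momentum $p_1$ in the already intricate tree integrations of Theorems~\ref{th:Collisionspace-timebound} and \ref{th:Collisionspace-timebounderror}, the resulting bounds stay uniform in $p_1$ and geometric in $n$; in effect, the decay in $p_1$ previously carried by the test function must be recovered entirely from the Sobolev weights of $u^{(n+1)}$, and tracking this through the root vertex is precisely what pins down the thresholds $\alpha>\tfrac{3}{2}$ and $\beta<-\tfrac{3}{2}$. Once the pointwise estimates are established, the rest is a routine repetition of the proof of Theorem~\ref{th:NSEsolutionformulaH1}.
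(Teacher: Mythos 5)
Your approach and the paper's reach the same conclusion but by genuinely different routes. The paper's proof of this theorem consists of one sentence pointing to Propositions~\ref{prop:Collisionspace-timeSobolevbound} and~\ref{prop:Erroroperatorspace-timeSobolevbound}, which are \emph{direct} $\mathbb{H}^\beta$-norm estimates for $C_{\mathbb{T},t}$ and $R_{\mathbb{T},t}$: there the weight $\langle p_1\rangle^{2\beta}\cdots\langle p_k\rangle^{2\beta}$ is carried through the entire tree integration, and the root-vertex integral converges precisely because $\int\langle q\rangle^{2(\beta-\alpha)}dq<\infty$ when $\alpha-\beta>\frac{3}{2}$, giving bounds of the form $C^{k+n}t^{\delta n}\|u^{(n+k)}\|_{\mathbb{H}^\alpha_{(n+k)}}$ with $\delta=\frac{1}{4}\min\{1,\alpha-\frac{1}{2}\}$ (resp.\ $\frac{1}{4}\min\{1,\alpha-\frac{3}{2}\}$). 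You instead factor the problem: prove sup-norm bounds on the Fourier side and then invoke integrability of $\langle p\rangle^{2\beta}$, which is exactly where the hypothesis $\beta<-\frac{3}{2}$ enters. This is a clean decomposition and in fact mirrors the structure of the paper's proof of Theorem~\ref{th:NSEsolutionformulaH1}, where the test function $v^{(k)}$ supplied the $\langle p_1\rangle^{-3}$ decay; you are replacing that external decay by the observation that no decay is needed once $\beta<-\frac{3}{2}$. What the paper's route buys in return is more generality: Proposition~\ref{prop:Collisionspace-timeSobolevbound} holds for \emph{every} $\beta$ with $\alpha>\max\{\frac{1}{2},\beta+\frac{3}{2}\}$, whereas your reduction is intrinsically tied to $\beta<-\frac{3}{2}$, and the paper proves the estimates for all $k$ rather than just $k=1$.

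Two places deserve more care than a ``verbatim rerun'' would supply. First, after freezing the root momentum $q_R=p_1$, the rescaling $q_e\mapsto t^{-1/2}q_e$ for $e\in E_2\setminus L_2$ that produced $t^{n/2-1}$ in Theorem~\ref{th:Collisionspace-timebounderror} no longer applies to the root edge, and the test-function factor $\langle t^{-1/2}q_R\rangle^{-6}$ that balanced the books there is gone; you assert the power $t^{n/2-1}$ survives, but the paper's own $\mathbb{H}^\beta$ version (Proposition~\ref{prop:Erroroperatorspace-timeSobolevbound}) obtains a \emph{different} exponent $t^{\delta n}$ via a two-step rescaling $q_e\to t^{-1/2}q_e$ and back $q_e\to t^{1/2}q_e$, which is the bookkeeping you would actually need to redo at the frozen root. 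Any positive exponent linear in $n$ suffices for the final limit, so this is a gap in justification rather than a fatal flaw, but it should not be waved off as ``the same rescaling.'' Second, for the error operator the delicate case is $e^a_{\bar{v}}=R$ or $\bar{v}$ lying on the chain to the root: without the propagator on $D_{\bar{v}}$ one needs (as the paper's Proposition~\ref{prop:Erroroperatorspace-timeSobolevbound} checks explicitly) that $|q_{e^b_{\bar v}}+q_{e^c_{\bar v}}|^2\langle q_{e^b_{\bar v}}\rangle^{-2\alpha}\langle q_{e^c_{\bar v}}\rangle^{-2\alpha}$ integrates to something that the subsequent chain can absorb uniformly in $p_1$; your $\alpha>\frac{3}{2}$ is the right threshold, but this is exactly the part that is \emph{not} verbatim from the $\mathbb{H}^1$ argument and should be written out. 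Once those two points are spelled out, the remainder of your argument --- the bound on partial sums, the telescoping with the remainder, and the choice of $t^*$ --- tracks the proof of Theorem~\ref{th:NSEsolutionformulaH1} exactly and is fine.
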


The proof is based on {\it a prior} space-time estimates for the interaction operators in multi-parameter Sobolev spaces as follows, which are of their own interest.

\begin{prop}\label{prop:Collisionspace-timeSobolevbound}
Fix $\beta \in \mathbb{R}$ and $\alpha > \max \{ \frac{1}{2}, \beta + \frac{3}{2} \}.$ Then there exists a constant $C>0,$ depending only on $\alpha$ and $\beta,$ such that for every $k \ge 1,$ $n \ge 0,$ and any $\mathbb{T} \in \mathfrak{T}_{n,k},$ one has
\beq\label{eq:Collisionspace-spacetimeSobolevbound}
\big \| C_{\mathbb{T}, t} u^{(n+k)} \big \|_{\mathbb{H}^\beta_{(k)}} \le C^{k +n} t^{\delta n} \big \| u^{(n+k)} \big \|_{\mathbb{H}^\alpha_{(n+k)}}
\eeq
for all $u^{(n+k)} \in \mathfrak{H}^\alpha_{(n+k)}$ and any $t >0,$ where $\delta = \frac{1}{4}\min \{ 1, \alpha - \frac{1}{2} \}.$
\end{prop}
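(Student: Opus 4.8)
Proposition~\ref{prop:Collisionspace-timeSobolevbound} will be proved along the same kernel-based lines as Theorems~\ref{th:Collisionspace-timebound} and~\ref{th:Collisionspace-timebounderror}, the two new features being that the output is now measured in the Sobolev space $\mathbb{H}^\beta_{(k)}$ and that we must extract the gain $t^{\delta n}$ (which holds for \emph{all} $t>0$, not merely $0<t\le 1$). For $n=0$ the forest $\mathfrak{T}_{0,k}$ has a single trivial element and $C_{\mathbb{T},t}=\mathcal{T}^{(k)}(t)$, so the claim is immediate from Proposition~\ref{prop:Tcontra} together with $\alpha\ge\beta$ and $t^{\delta\cdot 0}=1$. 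Since $|\mathfrak{T}_{n,k}|\lesssim C^{n}$ by \eqref{eq:ForestNo}, it suffices to establish, for each fixed $n\ge 1$ and each $\mathbb{T}\in\mathfrak{T}_{n,k}$, the per-tree bound $\|C_{\mathbb{T},t}u^{(n+k)}\|_{\mathbb{H}^\beta_{(k)}}\le C^{k+n}t^{\delta n}\|u^{(n+k)}\|_{\mathbb{H}^\alpha_{(n+k)}}$, with $C$ depending only on $\alpha,\beta$.

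The first step is to expand $\|C_{\mathbb{T},t}u^{(n+k)}\|^2_{\mathbb{H}^\beta_{(k)}}$ by Plancherel and insert the explicit kernel \eqref{eq:collisionkernel}. As in the proof of Theorem~\ref{th:Collisionspace-timebound} I would use the permutation symmetry of $u^{(n+k)}$ to replace the average over $\pi_2$ by a single representative, and then use the momentum delta functions $\prod_{e\in R_2}\delta(q_e-p_{\pi_1(e)})$, $\prod_{e\in L_2}\delta(q_e-r_{\pi_2(e)})$ and $\prod_{v}\delta(q_{e^a_v}-q_{e^b_v}-q_{e^c_v})$ to write every edge momentum (hence every root momentum $\vec p_k$) as a fixed linear combination of the leaf momenta. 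Applying the pointwise bound \eqref{eq:KoperatorEst} on $K^{\pi_2}_{\mathbb{T}}u^{(n+k)}$ and a Cauchy--Schwarz split in the free leaf momenta that peels off $\|u^{(n+k)}\|_{\mathbb{H}^\alpha_{(n+k)}}$ (leaving a factor $\prod_{e\in L(\mathbb{T})}\langle q_e\rangle^{-2\alpha}$ on the kernel side), the problem is reduced to estimating a product of propagators $e^{-t(\gamma^e+\mathrm{i}\tau^e)}$, denominators $|\gamma^e-q_e^2+\mathrm{i}\tau^e|^{-1}$, gradient factors $|q_{e^b_v}+q_{e^c_v}|=|q_{e^a_v}|$ and the weights $\prod_{e\in R(\mathbb{T})}\langle p_{\pi_1(e)}\rangle^{2\beta}\cdot\prod_{e\in L(\mathbb{T})}\langle q_e\rangle^{-2\alpha}$. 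By the $\Upsilon$-independence of the kernel (Proposition~\ref{prop:collisionoperatorindend}) I would choose $\gamma^e=-1/t$ on every leaf $e\in L_2$, so that $\gamma^e\le-1/t$ on all internal edges and $e^{-t\sum_{R_2}\gamma^e}=e^{\,n+|R_2|}$ is a harmless constant.

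Next I would perform the $\tau$-integrations exactly as in the proof of Theorem~\ref{th:Collisionspace-timebound}: repeated use of Lemma~\ref{lem:FrequencyestimateM}, carried out in the reverse of the partial order on $V(\mathbb{T})$, converts $\prod_{e\in E_2}|(1/t+q_e^2)\mathrm{i}+\tau^e|^{-1}\prod_{v}\delta(\tau^{e^a_v}-\tau^{e^b_v}-\tau^{e^c_v})$ into a product of factors of the form $(1/t+q_{e^b_v}^2+q_{e^c_v}^2)^{-(1-\varepsilon)}$ (with the obvious modification for root vertices), for a small $\varepsilon>0$. Then comes the quantitative heart of the argument: rescaling the internal momenta $q_e\mapsto t^{-1/2}q_e$ as in the proof of Theorem~\ref{th:Collisionspace-timebounderror} and collecting the resulting powers of $t$ from the Jacobians, the vertex delta functions, the denominators $(1/t+q^2)^{1-\varepsilon}\mapsto t^{-(1-\varepsilon)}(1+q^2)^{1-\varepsilon}$ and the gradient factors, which produces an overall power $t^{\delta n}$ once $\varepsilon$ is taken $\le\delta=\tfrac14\min\{1,\alpha-\tfrac12\}$. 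After the rescaling the surviving momentum integral $\Xi$ over the internal and root momenta is controlled, vertex by vertex, by the same scheme of integrals $\Xi_v$, $\Xi(\mathrm T)$ appearing in Theorems~\ref{th:Collisionspace-timebound} and~\ref{th:Collisionspace-timebounderror}, giving a bound $C^{|V(\mathbb{T})|}=C^n$; the genuinely new point is that the root momenta now carry the weights $\prod_{e\in R(\mathbb{T})}\langle p_{\pi_1(e)}\rangle^{2\beta}$ coming from the $\mathbb{H}^\beta_{(k)}$ norm instead of the decaying weights supplied by a test function in Theorem~\ref{th:Collisionspace-timebound}. Since each root momentum is a signed sum of the leaf momenta of its tree and each leaf feeds exactly one root, these weights are absorbed into the leaf weights $\langle q_e\rangle^{-2\alpha}$ precisely because $\alpha>\beta+\tfrac32$, using that a weight $\langle\cdot\rangle^{-\theta}$ with $\theta>3$ is integrable on $\mathbb{R}^3$; the condition $\alpha>\tfrac12$ is what keeps the leaf-momentum integrals $\int\langle q_e\rangle^{-2\alpha}\,dq_e$-type factors finite, exactly as in Theorem~\ref{th:Collisionspace-timebound}.

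The step I expect to be the \emph{main obstacle} is this simultaneous balancing: after rescaling to produce the full power $t^{\delta n}$ one must still close the momentum integral with a tree-independent constant raised to the power $n$, while the $\mathbb{H}^\beta$ weights on the $k$ root variables must be paid for entirely by leaf weights without drawing on the decay and smoothing budget already needed for the $\tau$-integrations and the internal momentum integrations. Making the exponents line up — in particular checking that $\varepsilon=\delta=\tfrac14\min\{1,\alpha-\tfrac12\}$ simultaneously renders every vertex integral of type $\Xi_v$ absolutely convergent and leaves the exact power $t^{\delta n}$ — is the delicate bookkeeping, and I would handle it by the same case analysis on vertices (both daughter-edges leaves, one daughter-edge a leaf, neither a leaf; mother-edge a root or not) used in the proofs of Theorems~\ref{th:Collisionspace-timebound} and~\ref{th:Collisionspace-timebounderror}.
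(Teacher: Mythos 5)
Your proposal follows the paper's argument essentially step by step: Plancherel expansion of $\|C_{\mathbb{T},t}u^{(n+k)}\|^2_{\mathbb{H}^\beta_{(k)}}$, the reduction to a single $\pi_2$ by permutation symmetry, the kernel bound \eqref{eq:KoperatorEst} with Cauchy--Schwarz to peel off $\|u^{(n+k)}\|_{\mathbb{H}^\alpha_{(n+k)}}$, the choice $\gamma^e=-1/t$ on the leaves, the $\tau$-integrations, the $t^{-1/2}$-rescaling of momenta to produce $t^{\delta n}$ with $\delta=\varepsilon=\tfrac14\min\{1,\alpha-\tfrac12\}$, and the vertex-by-vertex integrals culminating in the root integral $\int(1+|q|^2)^{\beta-\alpha}\,dq$ which forces $\alpha>\beta+\tfrac32$.

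The one place where your ordering would not quite close is the claim that the $\tau$-integrations can be done ``exactly as in Theorem~\ref{th:Collisionspace-timebound}'' before the momentum rescaling. In Theorem~\ref{th:Collisionspace-timebound} the denominators are $|(\tfrac1t+q_e^2)\mathrm{i}+\tau^e|$ with $0<t\le 1$, so $\tfrac1t+q_e^2\ge 1$ and Lemma~\ref{lem:FrequencyestimateM} (which needs its parameters $a,b\ge 1$) applies directly. Here the statement must hold for all $t>0$; when $t>1$ and $|q_e|$ is small one has $\tfrac1t+q_e^2<1$, so Lemma~\ref{lem:FrequencyestimateM} as stated is not available. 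The paper sidesteps this by substituting $q_e\to t^{-1/2}q_e$, $\tau^e\to t^{-1}\tau^e$ \emph{before} the $\tau$-integrations, so that the denominators become $|(1+q_e^2)\mathrm{i}+\tau^e|$ with $1+q_e^2\ge 1$ uniformly in $t$, does the $\tau$-integrations there, and only afterwards scales $q_e$ back to collect the Jacobian powers of $t$ into $t^{2n\varepsilon}$ for $I_t$ (hence $t^{n\varepsilon}$ for the norm). Swapping to this order (rescale, integrate $\tau$, rescale back) removes the issue and otherwise your plan coincides with the paper's.
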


\begin{prop}\label{prop:Erroroperatorspace-timeSobolevbound}
Fix $\beta < - \frac{3}{2}$ and $\alpha > \frac{3}{2}.$ Then there exists a constant $C>0,$ depending only on $\alpha$ and $\beta,$ such that for every $k \ge 1,$ $n \ge 1,$ and any $\mathbb{T} \in \mathfrak{T}_{n,k},$ one has
\beq\label{eq:Erroroperatorspace-spacetimeSobolevbound}
\big \| R_{\mathbb{T}, t} u^{(n+k)} \big \|_{\mathbb{H}^\beta_{(k)}} \le C^{k +n} t^{\delta n} \big \| u^{(n+k)} \big \|_{\mathbb{H}^\alpha_{(n+k)}}
\eeq
for all $u^{(n+k)} \in \mathfrak{H}^\alpha_{(n+k)}$ and any $0< t \le 1,$ where $\delta = \frac{1}{4} \min \{ 1, \alpha - \frac{3}{2} \}.$
\end{prop}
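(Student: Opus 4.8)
The strategy is to adapt the proof of Theorem~\ref{th:Collisionspace-timebounderror} (equivalently, the proof of its collision-operator companion, Proposition~\ref{prop:Collisionspace-timeSobolevbound}), the only genuinely new ingredient being a Cauchy--Schwarz argument in the root momenta that replaces the pointwise control \eqref{eq:Testfunctbound} of the test function. Since $\beta<0$ one has $\mathbb{H}^\beta_{(k)}=(\mathbb{H}^{-\beta}_{(k)})^{*}$ and $\mathcal{S}_{(k)}(\mathbb{R}^3)$ is dense in $\mathbb{H}^{-\beta}_{(k)}$, so it suffices to prove
\[
\big|\big\langle v^{(k)},R_{\mathbb{T},t}u^{(n+k)}\big\rangle_{\mathbb{L}^2_{(k)}}\big|\le C^{k+n}\,t^{\delta n}\,\big\|v^{(k)}\big\|_{\mathbb{H}^{-\beta}_{(k)}}\big\|u^{(n+k)}\big\|_{\mathbb{H}^\alpha_{(n+k)}}
\]
for every $v^{(k)}\in\mathcal{S}_{(k)}(\mathbb{R}^3)$ and every $u^{(n+k)}\in\mathfrak{H}^\alpha_{(n+k)}$. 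The hypothesis $\beta<-\tfrac32$ will enter only through $-2\beta>3$, i.e.\ $\int_{\mathbb{R}^3}\langle p\rangle^{2\beta}\,dp<\infty$, and the hypothesis $\alpha>\tfrac32$ through the convolution bound $\int_{\mathbb{R}^3}\langle q\rangle^{-2\alpha}\langle p-q\rangle^{-2\alpha}\,dq\lesssim\langle p\rangle^{-2\alpha}$ valid since $2\alpha>3$.

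For the reduction I would follow verbatim the opening of the proof of Theorem~\ref{th:Collisionspace-timebounderror}: expand $\langle v^{(k)},R_{\mathbb{T},t}u^{(n+k)}\rangle_{\mathbb{L}^2_{(k)}}$ through the error kernel \eqref{eq:erroroperatorkernel}; for each $\bar v\in M(\mathbb{T})$ with $e^a_{\bar v}\in R(\mathbb{T})$ dispose of the single non-absolutely-convergent integral $\int d\tau^{e^a_{\bar v}}$ by \eqref{eq:PropagatorIdent1}, producing a factor $e^{-t q^2_{e^a_{\bar v}}}$; collapse the momentum $\delta$-functions; insert the bound \eqref{eq:KoperatorEst} on $K^{\pi_2}_{\mathbb{T}}u^{(n+k)}$; choose $\gamma^e=-\tfrac1t$ on the leaves; and carry out all remaining $\tau$-integrations by the iterated one-dimensional estimate \eqref{eq:Intovertauvariables} (Lemma~\ref{lem:FrequencyestimateM}), processed vertex by vertex from the maximal vertices toward the roots exactly as before, after which the rescalings $q_e\mapsto t^{-1/2}q_e$ on $E_2\setminus L_2$ and $\tau^e\mapsto t^{-1}\tau^e$ extract a power of $t$ from the Jacobians. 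The sole place where Theorem~\ref{th:Collisionspace-timebounderror} invoked \eqref{eq:Testfunctbound} was to replace $|v^{(k)}(\vec p_k)|$ by $M\prod_j\langle p_j\rangle^{-3}$; instead I keep $v^{(k)}(\vec p_k)$, use momentum conservation in each of the $k$ trees to write each $p_j$ as the sum of the leaf momenta of its tree, and apply Cauchy--Schwarz in the $n+k$ leaf momenta of $u^{(n+k)}$, extracting $\|u^{(n+k)}\|_{\mathbb{H}^\alpha_{(n+k)}}$ against the weights $\prod\langle r\rangle^{2\alpha}$. The residual factor is the square root of a momentum integral in which, after the linear change of variables splitting the leaf momenta into the $k$ root momenta $\vec p_k$ and $3n$ fibre variables, the $\vec p_k$-dependence is controlled against $\|v^{(k)}\|^2_{\mathbb{H}^{-\beta}_{(k)}}$ precisely because each $p_j$ now carries a weight $\langle p_j\rangle^{2\beta}$ with $-2\beta>3$ (note $\langle p_j\rangle^{2\beta}\le 1$, so only summability of $\langle p_j\rangle^{2\beta}$ over the fibre is used).

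It then remains to bound the residual momentum integral by $C^{k+n}$, and here the structure of Theorem~\ref{th:Collisionspace-timebound} reappears: the integral factorises over the trees of $\mathbb{T}$, and on a tree not containing the chosen $\bar v$ it is literally the integral $\Xi(\mathrm{T})$ of \eqref{eq:Integralonetreebound}, with the harmless changes that the root weight $\langle q_e\rangle^{-3}$ of Theorem~\ref{th:Collisionspace-timebound} is replaced by $\langle p_j\rangle^{2\beta}$ (still summable) and the leaf weight $\langle q_e\rangle^{2}$ by $\langle q_e\rangle^{2\alpha}$, so that $\Xi(\mathrm{T})\le C^{|V(\mathrm{T})|}$ as before. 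On the tree containing $\bar v$ the new feature, as in Theorem~\ref{th:Collisionspace-timebounderror}, is the factor $|q_{e^b_{\bar v}}+q_{e^c_{\bar v}}|^2$ attached to the propagator-free maximal vertex; by the convolution bound, $\int dq_{e^b_{\bar v}}\,dq_{e^c_{\bar v}}\,\delta(q_{e^a_{\bar v}}-q_{e^b_{\bar v}}-q_{e^c_{\bar v}})\,|q_{e^b_{\bar v}}+q_{e^c_{\bar v}}|^2\langle q_{e^b_{\bar v}}\rangle^{-2\alpha}\langle q_{e^c_{\bar v}}\rangle^{-2\alpha}\lesssim\langle q_{e^a_{\bar v}}\rangle^{2-2\alpha}$, i.e.\ this vertex effectively degrades one leaf weight from $\langle\cdot\rangle^{-2\alpha}$ to $\langle\cdot\rangle^{-2(\alpha-1)}$, so the $\alpha>\tfrac12$ threshold of Theorem~\ref{th:Collisionspace-timebound} for the subsequent $\Xi_v$-type integrations along the route from $\bar v$ to its root becomes $\alpha>\tfrac32$, and the regularization exponent $\varepsilon=\tfrac14\min\{1,\alpha-\tfrac12\}$ becomes $\delta=\tfrac14\min\{1,\alpha-\tfrac32\}$. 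Finally the root weight $\langle t^{-1/2}q_R\rangle^{-2\beta}\le\langle q_R\rangle^{-2\beta}$ (using $0<t\le1$) absorbs whatever has accumulated along that route, and summing over the at most $n$ vertices $\bar v\in M(\mathbb{T})$ costs only another factor absorbed into $C^{k+n}$.

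The main obstacle is combinatorial rather than analytic: one must check, uniformly over all forests $\mathbb{T}\in\mathfrak{T}_{n,k}$ and all $\bar v\in M(\mathbb{T})$, that the order of integration, the distribution of the weights $\langle r\rangle^{2\alpha}$ and $\langle p_j\rangle^{2\beta}$ between $u^{(n+k)}$ and $v^{(k)}$, the passage to fibre coordinates, and the single choice of $\delta$ are mutually compatible through the familiar case distinctions (mother-edge a root or not; daughter-edges leaves or not) both at $\bar v$ and at each vertex on the path from $\bar v$ to its root. Beyond Lemmas~\ref{lem:Frequencyestimate} and~\ref{lem:FrequencyestimateM} and the elementary convolution bound for $\langle\cdot\rangle^{-2\alpha}$ with $2\alpha>3$, no new input is required.
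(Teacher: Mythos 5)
Your proposal is correct and takes essentially the same approach as the paper's own proof; the only difference is that the paper expands $\big\|R_{\mathbb{T},t}u^{(n+k)}\big\|^2_{\mathbb{H}^\beta_{(k)}}$ directly (so the root weight $\langle p_j\rangle^{2\beta}$ is simply part of the norm) and then applies one Cauchy--Schwarz in the leaf momenta, which is equivalent to your pairing against $v^{(k)}\in\mathbb{H}^{-\beta}_{(k)}$. One sign slip to fix: since $2\beta<0$, the root weight is $\langle q_R\rangle^{2\beta}$ (decaying), and the monotonicity valid for $0<t\le 1$ is $\langle t^{-1/2}q_R\rangle^{2\beta}\le\langle q_R\rangle^{2\beta}$; the exponent $-2\beta$ you wrote reverses the inequality. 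Moreover, the paper does not in fact discard the $t^{-1/2}$ at the root: it retains the exact factor $(t+|q_R|^2)^{\beta/2}$ and performs a second rescaling $q_e\mapsto t^{1/2}q_e$ to bookkeep the power $t^{\delta n}$ precisely, so if you use the crude pointwise bound you should verify separately that the Jacobians alone still yield the claimed exponent. These are presentational rather than structural issues; the vertex-by-vertex case analysis you outline at $\bar{v}$ and along its route to the root, and the identification of $\delta=\tfrac14\min\{1,\alpha-\tfrac32\}$, match the paper.
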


\begin{proof}[\bf Proof of Theorem \ref{th:NSEsolutionformulaStrongConverg}]
We can proceed the same argument in the proof of Theorem \ref{th:NSEsolutionformulaH1}, with the help of Propositions \ref{prop:Collisionspace-timeSobolevbound} and \ref{prop:Erroroperatorspace-timeSobolevbound} in the case $k=1.$ The details are omitted.
\end{proof}

We next turn to the proofs of Propositions \ref{prop:Collisionspace-timeSobolevbound} and \ref{prop:Erroroperatorspace-timeSobolevbound}, which follow the argument in the ones of Theorems \ref{th:Collisionspace-timebound} and \ref{th:Collisionspace-timebounderror}. First of all, we prove \eqref{eq:Collisionspace-spacetimeSobolevbound} on {\it a prior} space-time estimates of collision operators in Sobolev spaces.

\begin{proof}[\bf Proof of Proposition \ref{prop:Collisionspace-timeSobolevbound}]
Fix $\beta \in \mathbb{R}$ and $\alpha > \max \{\frac{1}{2}, \beta + \frac{3}{2}\}.$ Let $ k \ge 1$ and $t >0.$ Let $\mathbb{T} \in \mathfrak{T}_{n,k}$ and suppose $u^{(n+k)} \in \mathfrak{L}^2_{(n+k)}.$ We have, by the definition of $C_{\mathbb{T}, t} u^{(n+k)}$ in \eqref{eq:collisionoerator},
\be\begin{split}
\big \| C_{\mathbb{T}, t} u^{(n+k)} \big \|^2_{\mathbb{H}^\beta_{(k)}} = \int & d \vec{p}_k \langle p_1 \rangle^{2 \beta} \cdots \langle p_k \rangle^{2 \beta} \\
\times \sum_{1 \le i_1, \ldots, i_k \le 3} & \bigg | \frac{1}{(k+n)!} \sum_{\pi_2 \in \Pi_{k+n}} \int d \vec{r}_{n+k} \prod_{e \in R_1 (\mathbb{T}) = L_1 (\mathbb{T})} e^{- t p^2_{\pi_1 (e)}} \delta (p_{\pi_1 (e)} - r_{\pi_2 (e)} )\\
\times \int \prod_{e \in E_2 (\mathbb{T})} d \tau^e d q_e & K_{\mathbb{T}}^{\pi_2} u^{(n+k)}_{i_1, \ldots, i_k} (\vec{r}_{n+k} ) \prod_{e \in R_2 (\mathbb{T})} e^{- t (\gamma^e + \mathrm{i} \tau^e)} \delta (q_e - p_{\pi_1 (e)})\prod_{e \in L_2 (\mathbb{T})} \delta (q_e - r_{\pi_2 (e)})\\
& \times \prod_{e \in E_2 (\mathbb{T})} \frac{1}{\gamma^e - q_e^2 + \mathrm{i} \tau^e} \prod_{v \in V(\mathbb{T})} \delta \big ( \tau^{e^a_v} - \tau^{e^b_v} - \tau^{e^c_v} \big ) \delta \big (q_{e^a_v} - q_{e^b_v} - q_{e^c_v}\big ) \bigg |^2.
\end{split}\ee
For simplicity, we will write $E = E (\mathbb{T}), R_2 = R_2 (\mathbb{T}),$ $L_1 = L_1 (\mathbb{T}),$ etc. Since $u^{(n+k)} (\vec{r}_{n+k})$ is symmetry with respect to the permutation on $\vec{r}_{n+k},$ the integral with respect to $\vec{r}_{n+k}$ of the above equation has the same value for every $\pi_2 \in \Pi_{n+k},$ and hence, instead of averaging over $\pi_2,$ we fix one $\pi_2$ so that $\pi_2 (e) \ge |R_1| +1$ for all $e \in L_2.$ Then, using all the $\delta$-functions and integrating over the variables $\vec{r}_{n+k}$ with $p_j$ being substituted by $q_e$ if $e \in R$ such that $\pi_1 (e) = j$ for $1 \le j \le k,$ one has
\be\begin{split}
\big \| C_{\mathbb{T}, t} u^{(n+k)} \big \|^2_{\mathbb{H}^\alpha_{(k)}} \le e^{- 2 t \sum_{e \in R_2} \gamma^e }& \sum_{1 \le i_1, \ldots, i_k \le 3} \int \prod_{e \in R_1} \langle q_e \rangle^{2 \beta} d q_e \bigg [
\int \prod_{e \in E_2} d q_e d \tau^e | K_{\mathbb{T}}^{\pi_2} u^{(n+k)}_{i_1, \ldots, i_k} (q_e: \; e \in L ) |\\
\times \prod_{e \in R_2} & \langle q_e \rangle^\beta \prod_{e \in E_2} \frac{1}{| \gamma^e - q_e^2 + \mathrm{i} \tau^e |} \prod_{v \in V} \delta \big ( \tau^{e^a_v} - \tau^{e^b_v} - \tau^{e^c_v} \big ) \delta \big (q_{e^a_v} - q_{e^b_v} - q_{e^c_v} \big )
\bigg ]^2
\end{split}\ee
where we have used the permutation symmetry of $u^{(n+k)},$ namely, $u^{(n+k)}$ depends only on the set of the variables $q_e$ associated with the leaves of $\mathbb{T},$ but not on the order of those variables.

We choose $\gamma^e = - \frac{1}{t}$ for all $e \in L_2.$ This yields that $\gamma^e \le - \frac{1}{t}$ for every $e \in E_2,$ and $\sum_{e \in R_2} \gamma^e = - (n + |R_2|)/t.$ Moreover, by \eqref{eq:KoperatorEst} we have
\be
| K_{\mathbb{T}}^{\pi_2} u^{(n+k)}_{i_1, \ldots, i_k} (q_e) | \le 6^n \prod_{v \in V}  | q_{e^b_v} + q_{e^c_v} | \sum_{1 \le j_{|R_1| +1}, \ldots, j_{n+k} \le 3} \big | u^{(n+k)}_{i_1, \ldots, i_{|R_1|}, j_{|R_1|+1}, \ldots, j_{n+k}} (q_e) \big |.
\ee
Then by the Cauchy-Schwarz inequality,
\be\begin{split}
\big \| C_{\mathbb{T}, t} u^{(n+k)} \big \|^2_{\mathbb{H}^\alpha_{(k)}} \le C^{k+n} \sum_{1 \le i_1, \ldots, i_{|R_1|} \le 3} & \int \prod_{e \in R_1} \langle q_e \rangle^{2 \beta} d q_e \bigg [ \int \prod_{e \in E_2 } d q_e d \tau^e \prod_{e \in R_2} \langle q_e \rangle^\beta \prod_{e \in E_2} \frac{1}{| (\frac{1}{t} + q_e^2) \mathrm{i} + \tau^e |}\\
\times \prod_{v \in V} \delta \big ( \tau^{e^a_v} - \tau^{e^b_v} - \tau^{e^c_v} \big ) \delta \big (q_{e^a_v} - & q_{e^b_v} - q_{e^c_v} \big )
| q_{e^b_v} + q_{e^c_v} | \sum_{ 1 \le i_{|R_1|+1}, \ldots, i_{n+k} \le 3} \big | u^{(n+k)}_{i_1, \ldots, i_{n+k}} (q_e: \; e \in L) \big | \bigg ]^2\\
\le C^{k +n} \sum_{1 \le i_1, \ldots, i_{n+k} \le 3} & \int \prod_{e \in L} \langle q_e \rangle^{2 \alpha} d q_e \big | u^{(n+k)}_{i_1, \ldots, i_{n+k}} (q_e: \; e \in L) \big |^2\\
\times \int \prod_{e \in L_2} d q_e \bigg [ & \prod_{e \in L_2} \frac{1}{\langle q_e \rangle^\alpha} \int \prod_{e \in E_2 \backslash L_2} d q_e \prod_{e \in E_2 } d \tau^e \prod_{e \in R_2} \langle q_e \rangle^\beta \prod_{e \in E_2} \frac{1}{| (\frac{1}{t} + q_e^2) \mathrm{i} + \tau^e |} \\
& \times \prod_{v \in V} \delta \big ( \tau^{e^a_v} - \tau^{e^b_v} - \tau^{e^c_v} \big )\delta \big (q_{e^a_v} - q_{e^b_v} - q_{e^c_v} \big )
| q_{e^b_v} + q_{e^c_v} | \bigg ]^2\\
=: C^{k+n} \big \| u^{(k+n))} \big \|^2_{\mathbb{H}^\alpha_{(k+n)}} & \times I_t ,
\end{split}\ee
where we have used the fact $\alpha > \beta$ and $C>0$ is an absolute constant, and
\be\label{eq:Collisionspace-timeintegralSobolevbound01}
\begin{split}
I_t = \int \prod_{e \in L_2} & \frac{d q_e}{(1+|q_e|^2)^\alpha} \bigg [ \int \prod_{e \in E_2 \backslash L_2} d q_e \prod_{e \in E_2} d \tau^e \prod_{e \in R_2} \langle q_e \rangle^\beta \prod_{e \in E_2} \frac{1}{| (\frac{1}{t} + q_e^2) \mathrm{i} + \tau^e |} \\
& \times \prod_{v \in V} \delta \big ( \tau^{e^a_v} - \tau^{e^b_v} - \tau^{e^c_v} \big ) \delta \big (q_{e^a_v} - q_{e^b_v} - q_{e^c_v}\big ) | q_{e^b_v} + q_{e^c_v} | \bigg ]^2.
\end{split}
\ee

Making variable substitution as $q_e \longrightarrow t^{- \frac{1}{2}} q_e$ and $\tau^e \longrightarrow t^{-1} \tau^e$ for all $e \in E_2,$ we have
\be\begin{split}
\frac{d \tau^e}{| (1/t + q_e^2) \mathrm{i} + \tau^e |} & \longrightarrow \frac{d \tau^e}{| (1 + q_e^2) \mathrm{i} + \tau^e |},\\
\delta \big ( \tau^{e^a_v} - \tau^{e^b_v} - \tau^{e^c_v} \big ) & \longrightarrow t \delta \big ( \tau^{e^a_v} - \tau^{e^b_v} - \tau^{e^c_v} \big ),\\
\delta \big (q_{e^a_v} - q_{e^b_v} - q_{e^c_v}\big ) & \longrightarrow t^{\frac{3}{2}} \delta \big (q_{e^a_v} - q_{e^b_v} - q_{e^c_v}\big ),
\end{split}\ee
and hence,
\be
\begin{split}
I_t = & t^{(\alpha - \frac{1}{2})n + (\alpha-\beta -\frac{3}{2}) |R_2|} \int \prod_{e \in L_2} \frac{d q_e}{(t +|q_e|^2)^\alpha} \bigg [ \int \prod_{e \in E_2 \backslash L_2} d q_e \prod_{e \in E_2} d \tau^e \prod_{e \in R_2} ( t + |q_e|^2 )^\frac{\beta}{2}\\
& \times \prod_{e \in E_2} \frac{1}{| (1 + q_e^2) \mathrm{i} + \tau^e |} \prod_{v \in V} \delta \big ( \tau^{e^a_v} - \tau^{e^b_v} - \tau^{e^c_v} \big ) \delta \big (q_{e^a_v} - q_{e^b_v} - q_{e^c_v}\big ) | q_{e^b_v} + q_{e^c_v} | \bigg ]^2.
\end{split}
\ee
By \eqref{eq:Intovertauvariables} in the proof of Theorem \ref{th:Collisionspace-timebound}, for a small enough $0 < \varepsilon < 1$ which will be specified later, we have
\be\begin{split}
\int & \prod_{e \in E_2} \frac{d \tau^e}{| (1 + q_e^2) \mathrm{i} + \tau^e |} \prod_{v \in V} \delta \big ( \tau^{e^a_v} - \tau^{e^b_v} - \tau^{e^c_v} \big )\\
& \lesssim_\varepsilon \prod_{v:\; e^a_v \notin R_2} \frac{1}{(1 + q^2_{e^b_v} + q^2_{e^c_v})^{1-\varepsilon}}
\prod_{v:\; e^a_v \in R_2} \frac{1}{(1 + q^2_{e^a_v} + q_{e^b_v}^2 + q_{e^c_v}^2 )^{1-\varepsilon}}.
\end{split}\ee
Thus, after all integrations over $\tau$-variables have been done, we have
\be\label{eq:CollisionitegralboundM}\begin{split}
I_t \lesssim_\varepsilon t^{(\alpha - \frac{1}{2})n + (\alpha - \beta - \frac{3}{2}) |R_2|} & \int \prod_{e \in E_2} d q_e \prod_{v \in V} \delta \big (q_{e^a_v} - q_{e^b_v} - q_{e^c_v}\big ) \prod_{e \in R_2} ( t + |q_e|^2 )^\beta \prod_{e \in L_2} \frac{1}{(t +|q_e|^2)^\alpha}\\
& \times \prod_{v:\; e^a_v \notin R_2} \frac{| q_{e^b_v} + q_{e^c_v} |^2}{(1 + q^2_{e^b_v} + q^2_{e^c_v})^{2 (1-\varepsilon)}}
\prod_{v:\; e^a_v \in R_2} \frac{| q_{e^b_v} + q_{e^c_v} |^2}{(1 + q^2_{e^a_v} + q_{e^b_v}^2 + q_{e^c_v}^2 )^{2 (1-\varepsilon)}}\\
= t^{2 n \varepsilon } & \int \prod_{e \in E_2} d q_e \prod_{v \in V} \delta \big (q_{e^a_v} - q_{e^b_v} - q_{e^c_v}\big ) \prod_{e \in R_2} (1 + |q_e|^2 )^\beta \prod_{e \in L_2} \frac{1}{(1 +|q_e|^2)^\alpha}\\
& \times \prod_{v:\; e^a_v \notin R_2} \frac{| q_{e^b_v} + q_{e^c_v} |^2}{(\frac{1}{t} + q^2_{e^b_v} + q^2_{e^c_v})^{2 (1-\varepsilon)}}
\prod_{v:\; e^a_v \in R_2} \frac{| q_{e^b_v} + q_{e^c_v} |^2}{(\frac{1}{t} + q^2_{e^a_v} + q_{e^b_v}^2 + q_{e^c_v}^2 )^{2 (1-\varepsilon)}}
\end{split}
\ee
where we have made variable substitution as $q_e \longrightarrow t^\frac{1}{2} q_e$ for all $e \in E_2$ in the last expression.

It remains to estimate the momenta integrations
\be\begin{split}
\Xi = & \int \prod_{e \in E_2} d q_e \prod_{v \in V} \delta \big (q_{e^a_v} - q_{e^b_v} - q_{e^c_v}\big ) \prod_{e \in R_2} ( 1 + |q_e|^2 )^\beta \prod_{e \in L_2} \frac{1}{(1 + |q_e |^2)^\alpha}\\
& \times \prod_{v:\; e^a_v \notin R_2} \frac{| q_{e^b_v} + q_{e^c_v} |^2}{(\frac{1}{t} + q^2_{e^b_v} + q^2_{e^c_v})^{2 (1-\varepsilon)}} \prod_{v:\; e^a_v \in R_2} \frac{| q_{e^b_v} + q_{e^c_v} |^2}{(\frac{1}{t} + q^2_{e^a_v} + q_{e^b_v}^2 + q_{e^c_v}^2 )^{2 (1-\varepsilon)}}.
\end{split}\ee
Since the delta functions relate variables within the same trees, we only need to consider the integration over all $q$-variables associated with a tree $\mathrm{T},$ that is
\beq\label{eq:IntegralonetreeSobolevSpacebound}\begin{split}
\Xi (\mathrm{T}) = \int \prod_{e \in E_2 (\mathrm{T})} d q_e \prod_{v \in V (\mathrm{T})} \delta \big (q_{e^a_v} - q_{e^b_v} - q_{e^c_v}\big ) & \frac{(1 + |q_{e^a_v = R(\mathrm{T})} |^2 )^\beta | q_{e^b_v} + q_{e^c_v} |^2}{(\frac{1}{t} + q^2_{e^a_v = R(\mathrm{T})} + q_{e^b_v}^2 + q_{e^c_v}^2 )^{2 (1-\varepsilon)}}\\
\times \prod_{e \in L_2 (\mathrm{T})} \frac{1}{(1 + |q_e |^2)^\alpha} & \prod_{v \in V (\mathrm{T}):\; e^a_v \not= R(\mathrm{T})} \frac{| q_{e^b_v} + q_{e^c_v} |^2}{(\frac{1}{t} + q^2_{e^b_v} + q^2_{e^c_v})^{2 (1-\varepsilon)}}.
\end{split}\eeq
Again, we begin the integration with the $q$-variables of the leaves and proceed toward the root, and a vertex $v$ will be integrated only when all vertices $v'$ with $v \prec v'$ have already been integrated out.

Indeed, for a maximal vertex $v$ whose two daughter edges must be leaves, if $e^a_v \not= R$ then the associated integration is
\be\begin{split}
\Xi_v = & \int d q_{e^b_v} d q_{e^c_v} \delta \big (q_{e^a_v} - q_{e^b_v} - q_{e^c_v}\big ) \frac{1}{(1 + |q_{e^b_v} |^2)^\alpha} \frac{1}{(1 + |q_{e^c_v} |^2)^\alpha} \frac{| q_{e^b_v} + q_{e^c_v} |^2}{(\frac{1}{t} + q^2_{e^b_v} + q^2_{e^c_v})^{2 (1-\varepsilon)}}\\
= & \int d q_{e^b_v} \frac{1}{(1 + |q_{e^b_v} |^2)^\alpha} \frac{1}{(1 + |q_{e^a_v} - q_{e^b_v} |^2)^\alpha} \frac{q_{e^a_v}^2}{(\frac{1}{t} + q^2_{e^b_v} + (q_{e^a_v} - q_{e^b_v})^2 )^{2 (1-\varepsilon)}}.
\end{split}\ee
As shown in \eqref{eq:Integralonevertexbound}, taking $\varepsilon = \frac{1}{4} \min \{ 1, \alpha - \frac{1}{2} \}$ (noticing that $\alpha > \frac{1}{2}$), we have a constant $C_\alpha >0$ depending only on $\alpha$ such that
\be
\Xi_v \le C_\alpha \frac{1}{(1 + |q_{e^a_v} |^2)^\alpha}.
\ee
Subsequently, every vertex $v$ for which all vertices $v'$ with $v \prec v'$ have already been integrated out is associated with the integration of the form $\Xi_v$ as above when $e^a_v \not= R.$ Thus, we obtain
\be\begin{split}
\Xi (\mathrm{T}) \le C_\alpha^{|V(\mathrm{T})|-1} & \int d q_{e^a_v = R(\mathrm{T})} d q_{e^b_v} d q_{e^c_v} \delta \big (q_{e^a_v} - q_{e^b_v} - q_{e^c_v}\big ) (1 + |q_{e^a_v} |^2 )^\beta\\
& \times \frac{1}{(1 + |q_{e^b_v} |^2)^\alpha} \frac{1}{(1 + |q_{e^c_v} |^2)^\alpha} \frac{| q_{e^b_v} + q_{e^c_v} |^2}{(\frac{1}{t} + q^2_{e^a_v = R(\mathrm{T})} + q_{e^b_v}^2 + q_{e^c_v}^2 )^{2 (1-\varepsilon)}}\\
\le C_\alpha^{|V(\mathrm{T})|-1} & \int d q_{e^a_v = R(\mathrm{T})} d q_{e^b_v} (1 + |q_{e^a_v} |^2 )^\beta \frac{1}{(1 + |q_{e^b_v} |^2)^\alpha} \frac{1}{(1 + |q_{e^a_v} - q_{e^b_v} |^2)^\alpha}\\
& \times \frac{| q_{e^a_v}|^2}{( q^2_{e^a_v = R(\mathrm{T})} + q_{e^b_v}^2 + (q_{e^a_v} - q_{e^b_v})^2 )^{2 (1-\varepsilon)}}\\
\le C_\alpha^{|V(\mathrm{T})|} \int & d q_{e^a_v} (1 + |q_{e^a_v} |^2 )^{\beta - \alpha}
\le C_{\beta, \alpha}^{|V(\mathrm{T})|}
\end{split}\ee
provided $\alpha > \max \{ \frac{1}{2}, \beta + \frac{3}{2} \}.$ Therefore, we conclude that
\be
\Xi = \prod_{\mathrm{T} \in \mathbb{T}} \Xi (\mathrm{T}) \le C_{\beta, \alpha}^n.
\ee
This completes the proof.
\end{proof}

The proof of \eqref{eq:Erroroperatorspace-spacetimeSobolevbound} is essentially the same as in that of Proposition \ref{prop:Collisionspace-timeSobolevbound}, but for the sake of completeness, we include the details.

\begin{proof}[\bf Proof of Proposition \ref{prop:Erroroperatorspace-timeSobolevbound}]
Fix $\beta < - \frac{3}{2}$ and $\alpha > \frac{3}{2}.$ Let $k \ge 1$ and $n \ge 1.$ For a fixed $\mathbb{T} \in \mathfrak{T}_{n,k},$ by \eqref{eq:Errortermoperator} we have
\be\begin{split}
\big \| R_{\mathbb{T}, t} u^{(n+k)} \big \|^2_{\mathbb{H}^\beta_{(k)}}& = \int d \vec{p}_k \langle p_1 \rangle^{2 \beta} \cdots \langle p_k \rangle^{2 \beta} \\
\times \sum_{1 \le i_1, \ldots, i_k \le 3} & \Big | \frac{1}{(k+n)!} \sum_{\pi_2 \in \Pi_{k+n}} \int d \vec{r}_{n+k}
\sum_{\bar{v} \in M(\mathbb{T})} \prod_{e \in R_1 (\mathbb{T}) = L_1 (\mathbb{T})} e^{- t p^2_{\pi_1 (e)}} \delta (p_{\pi_1 (e)} - r_{\pi_2 (e)} )\\
\times \int \prod_{e \in E_2 (\mathbb{T})} & d q_e \prod_{ \substack{ e \in E_2 (\mathbb{T})\\ e \notin D_{\bar{v}} } } \frac{d \tau^e}{\gamma^e - q_e^2 + \mathrm{i} \tau^e} K_{\mathbb{T}}^{\pi_2} u^{(n+k)}_{i_1, \ldots, i_k} (\vec{r}_{n+k} ) \prod_{ e \in R_2 (\mathbb{T})} e^{- t(\gamma^e + \mathrm{i} \tau^e)} \delta (q_e - p_{\pi_1 (e)})\\
\times & \prod_{e \in L_2 (\mathbb{T})} \delta (q_e - r_{\pi_2 (e)}) \prod_{ \substack{ v \in V(\mathbb{T})\\ v \not= \bar{v} } } \delta \big ( \tau^{e^a_v} - \tau^{e^b_v} - \tau^{e^c_v} \big ) \prod_{ v \in V(\mathbb{T}) } \delta \big (q_{e^a_v} - q_{e^b_v} - q_{e^c_v} \big ) \Big |^2.
\end{split}\ee
Recall that $M (\mathbb{T})$ is the set of maximal elements in $V ( \mathbb{T})$ and $D_v$ the set of daughter-edges for a vertex $v \in M ( \mathbb{T}).$ In the integration of the right hand side of the above equation, if there exists a $\bar{v} \in M (\mathbb{T})$ such that $e^a_{\bar{v}} \in R (\mathbb{T}),$ then there is only one denominator containing $\tau^e$ in the integral for this tree and the associated $\tau^e$-integral would not be absolutely convergent. In this case, using \eqref{eq:PropagatorIdent1}, we perform the integration over the $\tau^{e^a_{\bar{v}}}$ and obtain a factor $e^{- t q^2_{e^a_{\bar{v}}}}.$ Recall that $E_2 (\mathbb{T}, \bar{v}) = E_2 (\mathbb{T}) \backslash \{ e^a_{\bar{v}}\}$ if $e^a_{\bar{v}} \in R (\mathbb{T}),$ and otherwise $E_2 (\mathbb{T}, \bar{v}) = E_2 (\mathbb{T}).$

As done in the proof of Theorem \ref{th:Collisionspace-timebound}, instead of averaging over all $\pi_2 \in \Pi_{n+k},$ we only need to compute the integral on the right-hand side of the above equation with a permutation $\pi_2$ so that $\pi_2 (e) \ge |R_1| +1$ for all $e \in L_2.$ After performing the integration over $\tau^e$ associated to $e \notin E_2 (\mathbb{T}, \bar{v}),$ we take the absolute value of the above integrand. Subsequently, using all the $\delta$-functions and integrating over the variables $\vec{r}_{n+k}$ with $p_j$ being substituted by $q_e$ if $e \in R$ such that $\pi_1 (e) = j$ for $1 \le j \le k,$ we obtain that
\be\begin{split}
\big \| R_{\mathbb{T}, t} u^{(n+k)} \big \|^2_{\mathbb{H}^\beta_{(k)}}= e^{- 2 t \sum_{e \in R_2} \gamma^e } & \sum_{1 \le i_1, \ldots, i_k \le 3} \int \prod_{e \in R_1} \langle q_e \rangle^{2 \beta} d q_e \Big [ \sum_{\bar{v} \in M(\mathbb{T})} \int \prod_{e \in E_2 (\mathbb{T})} d q_e \\
\times & \prod_{ \substack{ e \in E_2 (\mathbb{T}, \bar{v})\\ e \notin D_{\bar{v}} } } \frac{d \tau^e}{|\gamma^e - q_e^2 + \mathrm{i} \tau^e |} | K_{\mathbb{T}}^{\pi_2} u^{(n+k)}_{i_1, \ldots, i_k} ( q_e: \; e \in L )| \prod_{ e \in R_2 (\mathbb{T})} \langle q_e \rangle^\beta\\
& \times \prod_{ \substack{ v \in V(\mathbb{T})\\ v \not= \bar{v} } } \delta \big ( \tau^{e^a_v} - \tau^{e^b_v} - \tau^{e^c_v} \big ) \prod_{ v \in V(\mathbb{T}) } \delta \big (q_{e^a_v} - q_{e^b_v} - q_{e^c_v} \big ) \Big ]^2.
\end{split}\ee
Choosing $\gamma^e = - \frac{1}{t}$ for all $e \in L_2$ and by \eqref{eq:KoperatorEst}, we have (cf. \eqref{eq:Collisionspace-timebound01})
\be
\begin{split}
\big \| R_{\mathbb{T}, t} u^{(n+k)} \big \|^2_{\mathbb{H}^\beta_{(k)}} \le
C^{k+n} \sum_{1 \le i_1, \ldots, i_{|R_1|} \le 3} & \int \prod_{e \in R_1} \langle q_e \rangle^{2 \beta} d q_e \bigg [ \sum_{\bar{v} \in M(\mathbb{T})} \int \prod_{e \in E} d q_e\\
\times \prod_{ \substack{ e \in E_2 (\mathbb{T}, \bar{v})\\ e \notin D_{\bar{v}} }} \frac{d \tau^e}{| (\frac{1}{t} + q_e^2) \mathrm{i} + \tau^e |} & \prod_{e \in R_2} \langle q_e \rangle^\beta \sum_{ 1 \le i_{|R_1| +1}, \ldots, i_{n+k} \le 3} | u^{(n+k)}_{i_1, \ldots, i_{n+k}} (q_e: \; e \in L) |\\
\times \prod_{ \substack{ v \in V(\mathbb{T})\\ v \not= \bar{v} } } \delta ( & \tau^{e^a_v} - \tau^{e^b_v} - \tau^{e^c_v} ) \prod_{v \in V} \delta ( q_{e^a_v} - q_{e^b_v} - q_{e^c_v} ) | q_{e^b_v} + q_{e^c_v} | \bigg ]^2\\
\end{split}
\ee
Then by the Cauchy-Schwarz inequality and the fact that $\alpha > \beta,$ we have
\be
\begin{split}
\big \| R_{\mathbb{T}, t} u^{(n+k)} \big \|^2_{\mathbb{H}^\beta_{(k)}}
\le C^{n+k} \big \| u^{(k+n))} \big \|^2_{\mathbb{H}^\alpha_{(k+n)}} \int \prod_{e \in L_2} & d q_e \bigg [ \sum_{\bar{v} \in M(\mathbb{T})} \int \prod_{e \in E_2 \backslash L_2} d q_e \prod_{ \substack{ e \in E_2 (\mathbb{T}, \bar{v})\\ e \notin D_{\bar{v}} }} \frac{d \tau^e}{| (\frac{1}{t} + q_e^2) \mathrm{i} + \tau^e |}\\
\times \prod_{e \in L_2} \frac{1}{\langle q_e \rangle^\alpha} \prod_{e \in R_2} \langle q_e \rangle^\beta \prod_{ \substack{ v \in V(\mathbb{T})\\ v \not= \bar{v} } } & \delta \big ( \tau^{e^a_v} - \tau^{e^b_v} - \tau^{e^c_v} \big ) \prod_{v \in V} \delta \big (q_{e^a_v} - q_{e^b_v} - q_{e^c_v}\big ) | q_{e^b_v} + q_{e^c_v} | \bigg ]^2\\
\le C^{n+k} \big \| u^{(k+n))} \big \|^2_{\mathbb{H}^\alpha_{(k+n)}} \sum_{\bar{v} \in M(\mathbb{T})} & \int \prod_{e \in L_2} d q_e \bigg [ \int \prod_{e \in E_2 \backslash L_2} d q_e \prod_{ \substack{ e \in E_2 (\mathbb{T}, \bar{v})\\ e \notin D_{\bar{v}} }} \frac{d \tau^e}{| (\frac{1}{t} + q_e^2) \mathrm{i} + \tau^e |}\\
\times \prod_{e \in L_2} \frac{1}{\langle q_e \rangle^\alpha} \prod_{e \in R_2} \langle q_e \rangle^\beta \prod_{ \substack{ v \in V(\mathbb{T})\\ v \not= \bar{v} } } & \delta \big ( \tau^{e^a_v} - \tau^{e^b_v} - \tau^{e^c_v} \big ) \prod_{v \in V} \delta \big (q_{e^a_v} - q_{e^b_v} - q_{e^c_v}\big ) | q_{e^b_v} + q_{e^c_v} | \bigg ]^2
\end{split}
\ee
where $C>0$ is an absolute constant which may vary in different lines, in the second inequality we have used the Minkowski inequality and that $|M(\mathbb{T})| \le n.$ Putting
\be
\begin{split}
 J(t) = \int \prod_{e \in L_2} & d q_e \bigg [ \int \prod_{e \in E_2 \backslash L_2} d q_e \prod_{ \substack{ e \in E_2 (\mathbb{T}, \bar{v})\\ e \notin D_{\bar{v}} }} \frac{d \tau^e}{| (\frac{1}{t} + q_e^2) \mathrm{i} + \tau^e |} \prod_{e \in L_2} \frac{1}{\langle q_e \rangle^\alpha}\\
\times \prod_{e \in R_2} & \langle q_e \rangle^\beta \prod_{ \substack{ v \in V(\mathbb{T})\\ v \not= \bar{v} } } \delta \big ( \tau^{e^a_v} - \tau^{e^b_v} - \tau^{e^c_v} \big ) \prod_{v \in V} \delta \big (q_{e^a_v} - q_{e^b_v} - q_{e^c_v}\big ) | q_{e^b_v} + q_{e^c_v} | \bigg ]^2
\end{split}
\ee
we need to estimate $J(t).$

To this end, making variable substitution as $q_e \longrightarrow t^{- \frac{1}{2}} q_e$ for all $e \in E_2 \backslash L_2,$ and $\tau^e \longrightarrow t^{-1} \tau^e$ for all $e \in E_2 (\mathbb{T}, \bar{v}) \backslash D_{\bar{v}},$ we have
\be\begin{split}
\frac{d \tau^e}{| (1/t + q_e^2) \mathrm{i} + \tau^e |} & \longrightarrow \frac{d \tau^e}{| (1 + q_e^2) \mathrm{i} + \tau^e |},\\
\delta \big ( \tau^{e^a_v} - \tau^{e^b_v} - \tau^{e^c_v} \big ) & \longrightarrow t \delta \big ( \tau^{e^a_v} - \tau^{e^b_v} - \tau^{e^c_v} \big ),\\
\delta \big (q_{e^a_v} - q_{e^b_v} - q_{e^c_v}\big ) & \longrightarrow t^{\frac{3}{2}} \delta \big (q_{e^a_v} - q_{e^b_v} - q_{e^c_v}\big ).
\end{split}\ee
Then
\be
\begin{split}
J(t) = t^{(\alpha - \frac{1}{2})n + (\alpha - \beta - \frac{3}{2}) |R_2| - 2} \int \prod_{e \in L_2} & d q_e \bigg [ \int \prod_{e \in E_2 \backslash L_2} d q_e \prod_{ \substack{ e \in E_2 (\mathbb{T}, \bar{v})\\ e \notin D_{\bar{v}} }} \frac{d \tau^e}{| (1 + q_e^2) \mathrm{i} + \tau^e |} \prod_{e \in L_2} \frac{1}{( t + |q_e|^2 )^\frac{\alpha}{2}}\\
\times \prod_{e \in R_2} ( t + |q_e|^2 )^\frac{\beta}{2} & \prod_{ \substack{ v \in V(\mathbb{T})\\ v \not= \bar{v} } } \delta \big ( \tau^{e^a_v} - \tau^{e^b_v} - \tau^{e^c_v} \big ) \prod_{v \in V} \delta \big (q_{e^a_v} - q_{e^b_v} - q_{e^c_v}\big ) | q_{e^b_v} + q_{e^c_v} | \bigg ]^2.
\end{split}
\ee
As shown in \eqref{eq:Intovertauvariables}, we have
\be\begin{split}
\int & \prod_{ \substack{ e \in E_2 (\mathbb{T}, \bar{v})\\ e \notin D_{\bar{v}} }} \frac{d \tau^e}{| (1 + q_e^2) \mathrm{i} + \tau^e |} \prod_{ \substack{ v \in V(\mathbb{T})\\ v \not= \bar{v} } } \delta ( \tau^{e^a_v} - \tau^{e^b_v} - \tau^{e^c_v} )\\
& \lesssim_\varepsilon \prod_{\substack{ v:\; e^a_v \notin R_2\\ v \not= \bar{v}}} \frac{1}{(1 + q^2_{e^b_v} + q^2_{e^c_v})^{1-\varepsilon}}
\prod_{\substack{ v :\; e^a_v \in R_2\\ v \not= \bar{v} }}\frac{1}{(1 + q^2_{e^a_v} + q_{e^b_v}^2 + q_{e^c_v}^2 )^{1-\varepsilon}}
\end{split}
\ee
where $0< \varepsilon <1$ will be fixed later. Thus, we have
\be\label{eq:CollisionitegralboundM}\begin{split}
J(t) \lesssim_\varepsilon & t^{(\alpha - \frac{1}{2})n + (\alpha - \beta - \frac{3}{2}) |R_2| - 2} \int \prod_{e \in E_2} d q_e \prod_{v \in V} \delta \big (q_{e^a_v} - q_{e^b_v} - q_{e^c_v}\big ) \prod_{e \in L_2} \frac{1}{( t + |q_e|^2 )^\alpha} \prod_{e \in R_2} ( t + |q_e|^2 )^\beta\\
& \times | q_{e^b_{\bar{v}}} + q_{e^c_{\bar{v}}} |^2 \prod_{\substack{v:\; e^a_v \notin R_2\\ v \not= \bar{v}}} \frac{| q_{e^b_v} + q_{e^c_v} |^2}{(1 + q^2_{e^b_v} + q^2_{e^c_v})^{2 (1-\varepsilon) }}
\prod_{ \substack{ v:\; e^a_v \in R_2\\ v \not= \bar{v}} } \frac{| q_{e^b_v} + q_{e^c_v} |^2}{(1 + q^2_{e^a_v} + q_{e^b_v}^2 + q_{e^c_v}^2 )^{2(1-\varepsilon) }}\\
\le & t^{2 (n-1) \varepsilon} \int \prod_{e \in E_2} d q_e \prod_{v \in V} \delta \big (q_{e^a_v} - q_{e^b_v} - q_{e^c_v}\big ) \prod_{e \in L_2} \frac{1}{( 1 + |q_e|^2 )^\alpha} \prod_{e \in R_2} ( 1 + |q_e|^2 )^\beta\\
& \times | q_{e^b_{\bar{v}}} + q_{e^c_{\bar{v}}} |^2 \prod_{\substack{v:\; e^a_v \notin R_2\\ v \not= \bar{v}}} \frac{| q_{e^b_v} + q_{e^c_v} |^2}{(1 + q^2_{e^b_v} + q^2_{e^c_v})^{2 (1-\varepsilon) }}
\prod_{ \substack{ v:\; e^a_v \in R_2\\ v \not= \bar{v}} } \frac{| q_{e^b_v} + q_{e^c_v} |^2}{(1 + q^2_{e^a_v} + q_{e^b_v}^2 + q_{e^c_v}^2 )^{2(1-\varepsilon) }}
\end{split}
\ee
for $0< t \le 1,$ where we have made variable substitution as $q_e \longrightarrow t^\frac{1}{2} q_e$ for all $e \in E_2$ in the last expression.

Next, we estimate the integration on the right hand side of the above inequality. Fix $\bar{v} \in M(\mathbb{T})$ and denote the integration by $\Xi (\mathbb{T}, \bar{v}),$ i.e.,
\be\begin{split}
\Xi (\mathbb{T}, \bar{v}) = & \int \prod_{e \in E_2} d q_e \prod_{v \in V} \delta \big (q_{e^a_v} - q_{e^b_v} - q_{e^c_v}\big ) \prod_{e \in L_2} \frac{1}{( 1 + |q_e|^2 )^\alpha} \prod_{e \in R_2} ( 1 + |q_e|^2 )^\beta\\
& \times | q_{e^b_{\bar{v}}} + q_{e^c_{\bar{v}}} |^2 \prod_{\substack{v:\; e^a_v \notin R_2\\ v \not= \bar{v}}} \frac{| q_{e^b_v} + q_{e^c_v} |^2}{(1 + q^2_{e^b_v} + q^2_{e^c_v})^{2 (1-\varepsilon) }}
\prod_{ \substack{ v:\; e^a_v \in R_2\\ v \not= \bar{v}} } \frac{| q_{e^b_v} + q_{e^c_v} |^2}{(1 + q^2_{e^a_v} + q_{e^b_v}^2 + q_{e^c_v}^2 )^{2(1-\varepsilon) }}
\end{split}\ee
Since the delta functions relate variables within the same trees, we may consider separately the integrations over all $q$-variables associated with each tree. Note that the integration associated with a tree $\mathrm{T}$ not including $\bar{v}$ is
\be\begin{split}
\Xi (\mathrm{T}) = \int \prod_{e \in E_2 (\mathrm{T})} d q_e \prod_{v \in V (\mathrm{T})} \delta \big (q_{e^a_v} - q_{e^b_v} - q_{e^c_v}\big ) & \frac{(1 + |q_{e^a_v = R(\mathrm{T})} |^2 )^\beta | q_{e^b_v} + q_{e^c_v} |^2}{(1 + q^2_{e^a_v = R(\mathrm{T})} + q_{e^b_v}^2 + q_{e^c_v}^2 )^{2 (1-\varepsilon)}}\\
\times \prod_{e \in L_2 (\mathrm{T})} \frac{1}{(1 + |q_e |^2)^\alpha} & \prod_{v \in V (\mathrm{T}):\; e^a_v \not= R(\mathrm{T})} \frac{| q_{e^b_v} + q_{e^c_v} |^2}{(1 + q^2_{e^b_v} + q^2_{e^c_v})^{2 (1-\varepsilon)}}.
\end{split}\ee
Again, we begin the integration with the $q$-variables of the leaves and proceed toward the root, and a vertex $v$ will be integrated only when all vertices $v'$ with $v \prec v'$ have already been integrated out.

Indeed, for a maximal vertex $v$ whose two daughter edges must be leaves, if $e^a_v \not= R$ then the associated integration is
\be\begin{split}
\Xi_v = & \int d q_{e^b_v} d q_{e^c_v} \delta \big (q_{e^a_v} - q_{e^b_v} - q_{e^c_v}\big ) \frac{1}{(1 + |q_{e^b_v} |^2)^\alpha} \frac{1}{(1 + |q_{e^c_v} |^2)^\alpha} \frac{| q_{e^b_v} + q_{e^c_v} |^2}{(1 + q^2_{e^b_v} + q^2_{e^c_v})^{2 (1-\varepsilon)}}\\
= & \int d q_{e^b_v} \frac{1}{(1 + |q_{e^b_v} |^2)^\alpha} \frac{1}{(1 + |q_{e^a_v} - q_{e^b_v} |^2)^\alpha} \frac{q_{e^a_v}^2}{(1 + q^2_{e^b_v} + (q_{e^a_v} - q_{e^b_v})^2 )^{2 (1-\varepsilon)}}.
\end{split}\ee
As shown in \eqref{eq:Integralonevertexbound}, taking $\varepsilon = \frac{1}{4} \min \{ 1, \alpha - \frac{3}{2} \}$ (noticing that $\alpha > \frac{3}{2}$), we have a constant $C_\alpha >0$ depending only on $\alpha$ such that
\be
\Xi_v \le C_\alpha \frac{1}{(1 + |q_{e^a_v} |^2)^\alpha}.
\ee
Subsequently, every vertex $v$ for which all vertices $v'$ with $v \prec v'$ have already been integrated out is associated with the integration of the form $\Xi_v$ as above when $e^a_v \not= R.$ Thus, we obtain
\be\begin{split}
\Xi (\mathrm{T}) \le C_\alpha^{|V(\mathrm{T})|-1} & \int d q_{e^a_v = R(\mathrm{T})} d q_{e^b_v} d q_{e^c_v} \delta \big (q_{e^a_v} - q_{e^b_v} - q_{e^c_v}\big ) (1 + |q_{e^a_v} |^2 )^\beta\\
& \times \frac{1}{(1 + |q_{e^b_v} |^2)^\alpha} \frac{1}{(1 + |q_{e^c_v} |^2)^\alpha} \frac{| q_{e^b_v} + q_{e^c_v} |^2}{(1 + q^2_{e^a_v = R(\mathrm{T})} + q_{e^b_v}^2 + q_{e^c_v}^2 )^{2 (1-\varepsilon)}}\\
\le C_\alpha^{|V(\mathrm{T})|-1} & \int d q_{e^a_v = R(\mathrm{T})} d q_{e^b_v} (1 + |q_{e^a_v} |^2 )^\beta \frac{1}{(1 + |q_{e^b_v} |^2)^\alpha} \frac{1}{(1 + |q_{e^a_v} - q_{e^b_v} |^2)^\alpha}\\
& \times \frac{| q_{e^a_v}|^2}{( q^2_{e^a_v = R(\mathrm{T})} + q_{e^b_v}^2 + (q_{e^a_v} - q_{e^b_v})^2 )^{2 (1-\varepsilon)}}\\
\le C_\alpha^{|V(\mathrm{T})|} \int & d q_{e^a_v} (1 + |q_{e^a_v} |^2 )^{\beta - \alpha}
\le C_{\beta, \alpha}^{|V(\mathrm{T})|}
\end{split}\ee
since $\alpha > \frac{1}{2}$ and $\beta < - \frac{3}{2}.$

Therefore, it remains to estimate the integration over the tree $\mathrm{T}$ containing $\bar{v},$ which we denote by $\Xi (\mathrm{T}, \bar{v}).$ If $e^a_{\bar{v}} = R (\mathrm{T}),$ then for $\alpha > \frac{3}{2}$ and $\beta < - \frac{3}{2}$ one has
\be\begin{split}
\Xi (\mathrm{T}, \bar{v}) = & \int d q_{e^a_{\bar{v}}} d q_{e^b_{\bar{v}}} d q_{e^c_{\bar{v}}} \frac{\delta ( q_{e^a_{\bar{v}}} - q_{e^b_{\bar{v}}} - q_{e^c_{\bar{v}}} ) | q_{e^b_{\bar{v}}} + q_{e^c_{\bar{v}}} |^2 (1 + |q_{e^a_{\bar{v}}}|^2)^\beta}{( 1 + |q_{e^b_{\bar{v}}}|^2 )^\alpha ( 1 + |q_{e^c_{\bar{v}}}|^2 )^\alpha}\\
& \le \int \frac{ |q_{e^a_{\bar{v}}}|^2 (1 + |q_{e^a_{\bar{v}}}|^2)^\beta d q_{e^a_{\bar{v}}}}{(1 + |q_{e^a_{\bar{v}}}|^2)^\alpha} \sup_{q_{e^a_{\bar{v}}} \in \mathbb{R}^3} \int d q_{e^b_{\bar{v}}} \frac{( 1 + |q_{e^a_{\bar{v}}}|^2 )^\alpha}{( 1 + |q_{e^b_{\bar{v}}}|^2 )^\alpha ( 1 + | q_{e^a_{\bar{v}}} - q_{e^b_{\bar{v}}} |^2 )^\alpha}\\
& \lesssim \int \frac{ |q_{e^a_{\bar{v}}}|^2 (1 + |q_{e^a_{\bar{v}}}|^2)^\beta d q_{e^a_{\bar{v}}}}{(1 + |q_{e^a_{\bar{v}}}|^2)^\alpha} < \8.
\end{split}\ee

It remains to deal with the case $e^a_{\bar{v}} \not= R (\mathrm{T}),$ where
\be\begin{split}
\Xi (\mathrm{T}) = \int \prod_{e \in E_2 (\mathrm{T})} d q_e \prod_{v \in V (\mathrm{T})} \delta \big (q_{e^a_v} - q_{e^b_v} - q_{e^c_v}\big ) & \frac{(1 + |q_{e^a_v = R(\mathrm{T})} |^2 )^\beta | q_{e^b_v} + q_{e^c_v} |^2}{(1 + q^2_{e^a_v = R(\mathrm{T})} + q_{e^b_v}^2 + q_{e^c_v}^2 )^{2 (1-\varepsilon)}}\\
\times | q_{e^b_{\bar{v}}} + q_{e^c_{\bar{v}}} |^2 \prod_{e \in L_2 (\mathrm{T})} \frac{1}{(1 + |q_e |^2)^\alpha} & \prod_{\substack{v \in V (\mathrm{T}):e^a_v \not= R(\mathrm{T})\\ v \not= \bar{v}} } \frac{| q_{e^b_v} + q_{e^c_v} |^2}{(1 + q^2_{e^b_v} + q^2_{e^c_v})^{2 (1-\varepsilon)}}.
\end{split}\ee
At first, the integration associated with $\bar{v}$ is
\be\begin{split}
\Xi_{\bar{v}} = & \int d q_{e^b_{\bar{v}}} d q_{e^c_{\bar{v}}} \delta \big ( q_{e^a_{\bar{v}}} - q_{e^b_{\bar{v}}} - q_{e^c_{\bar{v}}}\big ) \frac{ | q_{e^b_{\bar{v}}} + q_{e^c_{\bar{v}}} |^2}{ (1+ | q_{e^b_{\bar{v}}}|^2)^\alpha  (1+ | q_{e^c_{\bar{v}}}|^2)^\alpha} \lesssim C_\alpha < \8
\end{split}\ee
whenever $\alpha > \frac{3}{2}.$

Secondly, suppose that $v$ is a vertex with one of the leaves being $e^a_{\bar{v}},$ for instance $e^b_v = e^a_{\bar{v}}.$ If $e^a_v \not= R(\mathrm{T}),$ then for $\alpha > \frac{3}{2},$ the associated integration
\be\begin{split}
\Xi_v \le & C_\alpha \int d q_{e^b_v} d q_{e^c_v} \delta \big (q_{e^a_v} - q_{e^b_v} - q_{e^c_v}\big ) \frac{1}{(1 + |q_{e^c_v}|^2 )^\alpha} \frac{| q_{e^b_v} + q_{e^c_v} |^2}{(1 + q^2_{e^b_v} + q^2_{e^c_v})^{2 (1-\varepsilon)}}\\
\le & C_\alpha \int d q_{e^b_v} \frac{1}{(1 + |q_{e^a_v} - q_{e^b_v}|^2)^\alpha} \frac{| q_{e^b_v}|^2 + |q_{e^a_v} - q_{e^b_v}|^2}{(1 + q^2_{e^b_v} + (q_{e^a_v} - q_{e^b_v})^2 )^{2 (1-\varepsilon)}}\\
\lesssim & C_\alpha \int d q_{e^b_v} \frac{1}{(1 + |q_{e^a_v} - q_{e^b_v}|^2)^\alpha} \frac{1}{(1+ |q_{e^b_v}|^2 + |q_{e^a_v} - q_{e^b_v}|^2)^{1 - 2 \varepsilon}}\\
\lesssim & C_\alpha < \8
\end{split}\ee
provided $\varepsilon = \frac{1}{4} \min \{1, \alpha - \frac{3}{2} \}.$ Subsequently, for each $v \prec \bar{v}$ with $e^a_v \not= R(\mathrm{T}),$ the associated integration is the same as this $\Xi_v$ and so $\lesssim C_\alpha.$

Thirdly, for a maximal vertex $v \in V (\mathrm{T}) \backslash \{ \bar{v} \}$ whose two daughter edges must be leaves, the associated integration is
\be\begin{split}
\Xi_v = & \int d q_{e^b_v} d q_{e^c_v} \delta \big (q_{e^a_v} - q_{e^b_v} - q_{e^c_v}\big ) \frac{1}{(1 + | q_{e^b_v} |^2 )^\alpha (1 + | q_{e^c_v} |^2 )^\alpha} \frac{| q_{e^b_v} + q_{e^c_v} |^2}{(1 + q^2_{e^b_v} + q^2_{e^c_v})^{2 (1-\varepsilon)}}\\
\le & \frac{C_\alpha}{(1 + | q_{e^a_v} |^2)^\alpha}
\end{split}\ee
provided $\alpha > \frac{3}{2}$ and $\varepsilon = \frac{1}{4} \min \{ 1, \alpha - \frac{3}{2} \} >0,$ as shown in \eqref{eq:Integralonevertexbound}. Subsequently, every vertex $v$ with $v \nprec \bar{v}$ for which all vertices $v'$ with $v \prec v'$ have already been integrated out is associated with the integration of the form $\Xi_v$ as above.

Finally, we will arrive the vertex $v \prec \bar{v}$ with $e^a_v = R(\mathrm{T}).$ Then the associated integration (without loss of generality, we assume that $e^b_v$ is at the route towards $\bar{v}$)
\be\begin{split}
\Xi_v \le & C_\alpha^{|V(\mathrm{T})|-1} \int d q_{e^a_v = R(\mathrm{T})} d q_{e^b_v} d q_{e^c_v} \delta \big (q_{e^a_v} - q_{e^b_v} - q_{e^c_v}\big )  \frac{1}{( 1 + |q_{e^c_v} |^2 )^\alpha} \frac{(1 + |q_{e^a_v}|^2)^\beta | q_{e^b_v} + q_{e^c_v} |^2}{(1 + q^2_{e^a_v} + q^2_{e^b_v} + q^2_{e^c_v})^{2 (1-\varepsilon)}}\\
\le & \int d q_{e^a_v} d q_{e^b_v} (1 + |q_{e^a_v}|^2)^\beta \frac{1}{( 1 + |q_{e^a_v} - q_{e^b_v}|^2 )^\alpha (1 + q^2_{e^a_v} + q^2_{e^b_v} + (q_{e^a_v} - q_{e^b_v})^2 )^{1-2 \varepsilon}} \le  C_{\beta, \alpha}^{|V(\mathrm{T})|}
\end{split}\ee
when $\beta < - \frac{3}{2}$ and $\alpha > \frac{3}{2}$ with $\varepsilon = \frac{1}{4} \min \{1, \alpha - \frac{3}{2} \}.$

In conclusion, if $\beta < - \frac{3}{2}$ and $\alpha > \frac{3}{2}$ with $\varepsilon = \frac{1}{4} \min \{1, \alpha - \frac{3}{2} \},$ we have
\be
\Xi (\mathrm{T}, \bar{v}) \le C^{|V (\mathrm{T})|}
\ee
and hence
\be
\Xi (\mathbb{T}, \bar{v}) = \Xi (\mathrm{T}, \bar{v}) \prod_{\bar{v} \notin \mathrm{T}} \Xi (\mathrm{T}) \le C^n
\ee
because $V ( \mathbb{T}) =n.$ Note that $|M (\mathbb{T})| \le n.$ Therefore, we conclude \eqref{eq:Erroroperatorspace-spacetimeSobolevbound}.
\end{proof}

\section{Appendix}\label{App}

\subsection{Binary trees with marked edges}\label{Binarytree}

In Section \ref{GraphicNShierachy} above, we have used binary trees to express the Duhamel expansion of the Navier-Stokes hierarchy. For the sake of convenience, we present the details of the binary trees following \cite{ESY2007}.

We begin with the definition of a binary tree.

\begin{defi}\label{df:Binarytree}\rm
For a nonnegative integer $n \ge 0,$ a {\it binary tree} of $n$ order $\mathrm{T}$ consist of a root, $n$ vertices, and $n+1$ leaves such that each vertex is adjacent to three edges. The root and the leaves are not regards as vertices, which are instead identified with the unique edge they are adjacent to. We denote by $V(\mathrm{T})$ the set of vertices and by $E(\mathrm{T})$ the set of edges. The root is denoted by $R = R(\mathrm{T}),$ the set of $n+1$ leaves is denoted by $L= L(\mathrm{T}).$ They are called external edges and denoted by $E_{\mathrm{ext}} (\mathrm{T}).$ We denote by $E_{\mathrm{in}} (\mathrm{T})= E(\mathrm{T}) \setminus E_{\mathrm{ext}} (\mathrm{T})$ the internal edges.
\end{defi}

\begin{rk}\label{rk:trivialtree}\rm
For $n=0,$ namely there is no vertex, there is only one single edge, that is the root and the single leaf at the same time; but we count this edge twice when counting the external edges. This tree is called {\it trivial}.
\end{rk}

At every vertex, the one of the three edges that is closest to the root is called {\it mother-edge}, the other two are called {\it daughter-edges} of this vertex with one of which being marked. For illustration, we draw such a tree so that the marked daughter-edge goes straight through, and the unmarked daughter-edge joins from below, i.e., the root is on the left, and the leaves are on the right of the graph (see e.g. Fig.\ref{fig:binarytree5vertices}).
\begin{figure}[htb]
\begin{minipage}[t]{15cm}
 \centering
 \includegraphics[height=6cm,width=12cm]{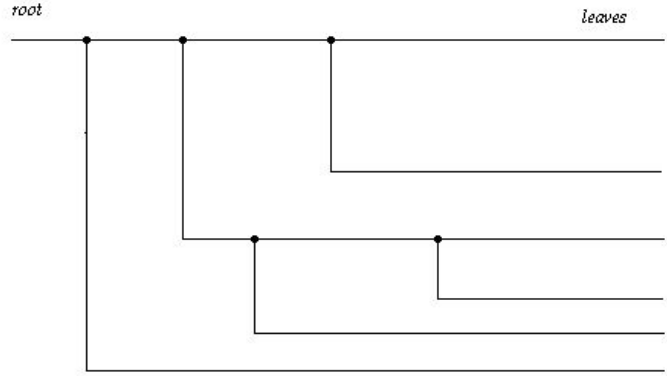}
\caption{\label{fig:binarytree5vertices} Example of a rooted, marked binary tree with $n=5$ vertices.}
\end{minipage}
\end{figure}

The set of all marked binary trees of $n$ order is denote by $\mathfrak{T}_n.$ Two trees $\mathrm{T}_1$ and $\mathrm{T}_2$ in $\mathfrak{T}_n$ are said to be equivalent if there exists a one-to-one map between the edges and the vertices of $\mathrm{T}_1$ and $\mathrm{T}_2$ such that all adjacency relations and all marked edges of the vertices are preserved. In what follows, any element $\mathrm{T}$ in $\mathfrak{T}_n$ is simply called a binary tree (of $n$ order). As shown in \cite[\S 9.1.1]{ESY2007}, the number of (inequivalent) marked binary trees of $n$ order, $C_n = |\mathfrak{T}_n|,$ is equal to the so-called $n$-th Catalan number
\be
C_n = \frac{1}{n+1} \binom{2n}{n}
\ee
and can be estimated by $C_n \le 4^n.$

\begin{defi}\label{df:Order}\rm
For a given tree $\mathrm{T} \in \mathfrak{T}_n,$ we define a partial order $\prec$ on the vertices $V(\mathrm{T})$ as follows: for any $v, v' \in V(\mathrm{T}),$ we have $v \prec v'$ if $v$ lies on the (unique) route from $v'$ to $R (\mathrm{T}).$
\end{defi}

\subsection{Preliminary estimates}\label{PreEst}

For the sake of convenience, we collect some estimates used in the body of the paper.

\begin{lem}\label{lem:WoperatordfWell}
Suppose that $h$ is a nonnegative function in $\mathcal{D} (\mathbb{R}^3)$ supported in the unit ball of $\mathbb{R}^3$ such that $\int h d x =1.$ For any $\epsilon >0,$ we let $\delta_\epsilon (x) = \epsilon^{-3} h ( \epsilon^{-1} x).$ Given $k \ge 1,$ for $\phi \in \mathcal{D} (\mathbb{R}^{3 k})$ and $u \in \mathcal{S} (\mathbb{R}^{3(k+1)})$ we have, for every $1 \le j \le k,$
\beq\label{eq:Woperatorappr1}
\begin{split}
\Big | \int d x_{k+1} \big \langle \partial_{x^i_j} \phi, [ & \delta_\epsilon (x_j - x_{k+1}) - \delta (x_j - x_{k+1}) ] u (\cdot, x_{k+1}) \big \rangle_{L^2 (\mathbb{R}^{3 k})} \Big |\\
& \le C \epsilon^{\frac{1}{2}} \| \phi \|_{H^2 (\mathbb{R}^{3k})} \| (1 - \triangle_j)^{\frac{1}{2}} (1 - \triangle_{k+1})^{\frac{1}{2}} u \|_{L^2 (\mathbb{R}^{3(k+1)})}
\end{split}
\eeq
and
\beq\label{eq:Woperatorappr2}
\begin{split}
\Big | \int d x_{k+1} \big \langle R_{x^{\iota}_j} R_{x^{\el}_j} \partial_{x^i_j} \phi, & [ \delta_\epsilon (x_j - x_{k+1}) - \delta (x_j - x_{k+1}) ] u (\cdot, x_{k+1}) \big \rangle_{L^2 (\mathbb{R}^{3 k})} \Big |\\
& \le C \epsilon^{\frac{1}{2}} \| \phi \|_{H^2 (\mathbb{R}^{3k})} \| (1 - \triangle_j)^{\frac{1}{2}} (1 - \triangle_{k+1})^{\frac{1}{2}} u \|_{L^2 (\mathbb{R}^{3(k+1)})}
\end{split}
\eeq
for all $1 \le i, \el, \iota \le 3,$ where $C>0$ is a universal constant.
\end{lem}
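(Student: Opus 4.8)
Looking at Lemma \ref{lem:WoperatordfWell}, I need to prove a quantitative convergence rate $\epsilon^{1/2}$ for the mollified delta functions tested against smooth functions.

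\textbf{Proof proposal.} The plan is to work on the Fourier side, where the Riesz transforms become bounded Fourier multipliers and the convolution with $\delta_\epsilon$ becomes multiplication by $\hat h(\epsilon\,\cdot)$. First I would reduce \eqref{eq:Woperatorappr2} to \eqref{eq:Woperatorappr1}: since $R_{x^\iota_j}$ and $R_{x^\el_j}$ are bounded on $L^2(\mathbb{R}^{3k})$ with norm $1$ and commute with all the operations involved, replacing $\partial_{x^i_j}\phi$ by $R_{x^\iota_j}R_{x^\el_j}\partial_{x^i_j}\phi$ only changes constants; moreover $\|R_{x^\iota_j}R_{x^\el_j}\partial_{x^i_j}\phi\|_{H^2}\lesssim\|\phi\|_{H^2}$ because the Riesz multipliers are bounded and $\partial_{x^i_j}$ gains one derivative which is absorbed in $H^2$. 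So it suffices to establish \eqref{eq:Woperatorappr1}.

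For \eqref{eq:Woperatorappr1}, I would write the integral $\int dx_{k+1}\,\langle\partial_{x^i_j}\phi,[\delta_\epsilon(x_j-x_{k+1})-\delta(x_j-x_{k+1})]u(\cdot,x_{k+1})\rangle_{L^2(\mathbb{R}^{3k})}$ as a duality pairing of $\partial_{x^i_j}\phi$ with the function $G_\epsilon(\vec x_k):=\int dx_{k+1}[\delta_\epsilon-\delta](x_j-x_{k+1})u(\vec x_k,x_{k+1})$, i.e.\ the "contraction" of $u$ on the diagonal $x_j=x_{k+1}$ with a mollified kernel. Passing to Fourier variables $p_1,\dots,p_k$ for $\vec x_k$, this diagonal contraction is $G_\epsilon(\vec p_k)=\int dq\,[\hat h(\epsilon q)-1]\,u(p_1,\dots,p_{j-1},p_j-q,p_{j+1},\dots,p_k,q)$. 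The key pointwise bound is $|\hat h(\epsilon q)-1|\le C\min(1,\epsilon|q|)\le C(\epsilon|q|)^{1/2}$, valid because $h\in\mathcal D$ so $\hat h$ is Lipschitz with $\hat h(0)=\int h=1$. Then by Cauchy–Schwarz in $q$, inserting and removing a weight $\langle q\rangle$, I get $|G_\epsilon(\vec p_k)|\le C\epsilon^{1/2}\big(\int dq\,\langle q\rangle^{-2+1}\,\langle q\rangle |u(\cdots,p_j-q,\cdots,q)|^2\big)^{1/2}\cdot(\text{a convergent }q\text{-integral})$—wait, I must be careful: the factor $(\epsilon|q|)^{1/2}$ only gains $\langle q\rangle^{1/2}$, so I pair it against one of the two half-derivatives $(1-\triangle_{k+1})^{1/2}$ in $u$ and use the remaining weight-$3$ decay of... actually $u$ has no decay assumption, only Sobolev regularity. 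The correct split: $|\hat h(\epsilon q)-1|\,|u|\le C\epsilon^{1/2}\langle q\rangle^{1/2}|u| = C\epsilon^{1/2}\langle q\rangle^{-1/2}\cdot\langle q\rangle|u|$, and $\langle q\rangle^{-1/2}$ is not square-integrable in $\mathbb{R}^3$, so instead I keep $|\hat h(\epsilon q)-1|\le C\epsilon|q|\wedge 1$ and interpolate: use $|\hat h(\epsilon q)-1|\le C(\epsilon|q|)^{1/2}(\epsilon|q|\wedge 1)^{1/2}\le C\epsilon^{1/2}\langle q\rangle^{1/2}\cdot\min(1,\epsilon^{1/2}|q|^{1/2})$; the extra bounded factor $\min(1,\cdots)$ combined with a small power of $\langle q\rangle^{-\eta}$ from trading $H^2$ regularity in $\phi$ against the $q$-shift restores integrability. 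Concretely, after Cauchy–Schwarz in $q$ weighted by $\langle q\rangle^{-2}$ (integrable in $\mathbb R^3$), the $\phi$-side absorbs two extra derivatives (hence $\|\phi\|_{H^2}$) by shifting $\langle p_j\rangle\lesssim\langle p_j-q\rangle\langle q\rangle$, and the $u$-side is controlled by $\langle q\rangle\langle p_j-q\rangle\,|u|$ in $L^2$, which is exactly $\|(1-\triangle_j)^{1/2}(1-\triangle_{k+1})^{1/2}u\|_{L^2}$ after integrating in the remaining variables and using Plancherel. The $\epsilon^{1/2}$ survives from the estimate $|\hat h(\epsilon q)-1|\lesssim\epsilon^{1/2}\langle q\rangle^{1/2}$ used on the $L^2$-part.

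\textbf{Main obstacle.} The delicate point is the bookkeeping of weights: I have exactly two half-derivatives available on $u$ (one in $x_j$, one in $x_{k+1}$), two full derivatives on $\phi$ (from $H^2$), and one derivative already spent writing $\partial_{x^i_j}\phi$; I need to distribute $\langle p_j\rangle\lesssim\langle p_j-q\rangle\langle q\rangle$ and extract an $\epsilon^{1/2}$ while keeping every $q$-integral convergent in three dimensions. The clean way is to prove the elementary inequality $|\hat h(\xi)-1|\le C_h\,|\xi|^{1/2}\langle\xi\rangle^{-N}$ for any $N$ (true since $\hat h$ is Schwartz and Lipschitz near $0$), after which one picks $N$ large enough that all $q$-integrals converge trivially and $\|\phi\|_{H^2}$ suffices; I expect getting this distribution of derivatives exactly right, rather than any deep idea, to be the only real work.
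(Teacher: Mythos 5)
Your proposal takes a genuinely different route from the paper. The paper stays in physical space: it writes the difference as the deviation of $\Phi\,u$ (with $\Phi=\partial_{x^i_j}\phi$) from a weighted average over the ball $\{|x_j-x_{k+1}|\le\epsilon\}$, controls this by the Poincar\'e-type inequality \cite[Lemma 7.16]{GT2001} with the kernel $|x_j-x_{k+1}|^{-2}$, and then splits by Cauchy--Schwarz, treating the $\Phi$-terms by integrating $|x|^{-2}\chi_{\{|x|\le\epsilon\}}$ (which gives exactly $\epsilon^{1/2}$) and the $u$-terms by Hardy's inequality. Your Fourier-side strategy is a legitimate alternative, but the write-up contains two concrete errors that would derail it as written.

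First, the inequality you propose as the \emph{clean way} is false: since $h\in\mathcal{D}(\mathbb{R}^3)$, $\hat{h}$ is Schwartz, hence $\hat{h}(\xi)\to 0$ as $|\xi|\to\infty$, and therefore $|\hat{h}(\xi)-1|\to 1$; there is no $N>0$ (indeed no $N>1/2$) for which $|\hat{h}(\xi)-1|\le C_h|\xi|^{1/2}\langle\xi\rangle^{-N}$ can hold at infinity. The correct and usable bound is $|\hat{h}(\xi)-1|\le C\min(1,|\xi|)$, which encodes both Lipschitz continuity at $0$ and mere boundedness at infinity; the factor $\min(1,\cdot)$ is essential and cannot be traded for decay in $\langle\xi\rangle$. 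Second, the claim that $\langle q\rangle^{-2}$ is integrable in $\mathbb{R}^3$ is wrong: $\int_{\mathbb{R}^3}\langle q\rangle^{-s}\,dq<\infty$ requires $s>3$, and $s=2$ fails. Both errors are load-bearing, since you invoke them precisely to rescue the otherwise divergent $q$-integral; this is the real work of the lemma, so the proposal as stated has a genuine gap.

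That said, the Fourier approach is recoverable with a cleaner bookkeeping. Since $\|\phi\|_{H^2(\mathbb{R}^{3k})}$ controls $\|\langle p_j\rangle^2\hat\phi\|_{L^2}$ and $\partial_{x^i_j}$ only spends one power of $\langle p_j\rangle$, you can write $|p_j|\le\langle p_j\rangle^{2}/\langle p_j\rangle\lesssim\langle p_j\rangle^{2}\langle p_j-q\rangle/\langle q\rangle$ (from $\langle q\rangle\lesssim\langle p_j\rangle\langle p_j-q\rangle$). Pair $\langle p_j\rangle^2$ with $\hat\phi$, pair $\langle p_j-q\rangle\langle q\rangle$ with $\hat u$, and absorb the remaining factors into the kernel $|\hat h(\epsilon q)-1|/\langle q\rangle^{2}$. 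Cauchy--Schwarz then reduces everything to showing
$\int_{\mathbb{R}^3}\min(1,\epsilon^2|q|^2)\,\langle q\rangle^{-4}\,dq\lesssim\epsilon$,
which follows by splitting at $|q|=1/\epsilon$: the near region gives $\epsilon^2\int_{|q|\le 1/\epsilon}|q|^2\langle q\rangle^{-4}\,dq\lesssim\epsilon^2\cdot\epsilon^{-1}=\epsilon$ and the far region gives $\int_{|q|>1/\epsilon}\langle q\rangle^{-4}\,dq\lesssim\epsilon$. Note that the naive bound $|\hat h(\epsilon q)-1|\lesssim(\epsilon|q|)^{1/2}$ alone does \emph{not} suffice here (it produces a logarithmically divergent $q$-integral), which is why the sharper $\min(1,\epsilon|q|)$ and the explicit split are needed. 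This fix is close in spirit to what you were attempting with the interpolation, but the false inequality about $\hat h$ must be abandoned and the exponent accounting in $\mathbb{R}^3$ corrected.
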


\begin{proof}
The proof follows the argument of \cite[Lemma 8.2]{ESY2007}. We first prove \eqref{eq:Woperatorappr1}. Given $\phi \in \mathcal{S} (\mathbb{R}^{3 k})$ and $u \in \mathcal{S} (\mathbb{R}^{3(k+1)}),$ note that for any $\epsilon >0,$
\be\begin{split}
\int d x_{k+1} & \Big \langle \partial_{x^i_j} \phi, u (\cdot, x_{k+1}) \big [ \delta_\epsilon (x_j - x_{k+1}) - \delta (x_j - x_{k+1}) \big ] \Big \rangle_{L^2 (\mathbb{R}^{3 k})}\\
= & \int \big [ \Phi ( \vec{x}_{k,j \to k+1} ) u (\vec{x}_{k, j \to k+1}, x_{k+1}) - \Phi (\vec{x}_k) u (\vec{x}_{k+1}) \big ] \delta_\epsilon (x_j - x_{k+1}) d \vec{x}_{k+1}
\end{split}\ee
where $\Phi = \partial_{x^i_j}\phi$ and $\vec{x}_{k, j \to k+1} = (x_1, \ldots, x_{j-1}, x_{k+1}, x_{j+1}, \ldots, x_k).$ Since $\delta_\epsilon (x) \le \frac{C}{|B|} \chi_B (x)$ with $B = \{ x \in \mathbb{R}^3 :\; |x| \le \epsilon \},$ by a Poincar\'{e}-type inequality (cf. \cite[Lemma 7.16]{GT2001}) we have
\be\begin{split}
\bigg | \int \big [ \Phi ( \vec{x}_{k, j \to k+1} ) u (\vec{x}_{k, j \to k+1}, x_{k+1}) & - \Phi (\vec{x}_k) u (\vec{x}_k, x_{k+1}) \big ] \delta_\epsilon (x_j - x_{k+1}) d x_j \bigg |\\
\le & C \int_{|x_j - x_{k+1}| \le \epsilon} \frac{\big | \bigtriangledown_j \big [ \Phi ( \vec{x}_k ) u (\vec{x}_k, x_{k+1}) \big ] \big |}{|x_j - x_{k+1}|^2} d x_j
\end{split}\ee
for all $\vec{x}_{k, j \to k+1},$ where $C$ is a universal constant. Inserting this inequality on the right hand side of the proceeding equality and applying the Schwarz inequality, we have
\be\begin{split}
\bigg | \int & d x_{k+1} \Big \langle \partial_{x^i_j} \phi, u (\cdot, x_{k+1}) \big [ \delta_\epsilon (x_j - x_{k+1}) - \delta (x_j - x_{k+1}) \big ] \Big \rangle_{L^2 (\mathbb{R}^{3 k})} \bigg | \\
\lesssim & \int \Big [ | \bigtriangledown_j \Phi( \vec{x}_k ) | |u (\vec{x}_{k+1})| + | \Phi( \vec{x}_k ) | | \bigtriangledown_j u (\vec{x}_{k+1})| \Big ] \frac{\chi_{\{|x_j - x_{k+1}| \le \epsilon \}} (\vec{x}_{k+1})d \vec{x}_{k+1}}{|x_j - x_{k+1}|^2}\\
\lesssim & \bigg ( \int | \bigtriangledown_j \Phi( \vec{x}_k ) |^2 \frac{\chi_{\{|x_j - x_{k+1}| \le \epsilon \}} (\vec{x}_{k+1}) d \vec{x}_{k+1}}{|x_j - x_{k+1}|^2} \bigg )^{\frac{1}{2}} \bigg ( \int |u (\vec{x}_{k+1})|^2 \frac{\chi_{\{|x_j - x_{k+1}| \le \epsilon \}} (\vec{x}_{k+1}) d \vec{x}_{k+1} }{|x_j - x_{k+1}|^2} \bigg )^{\frac{1}{2}}\\
& \; + \bigg ( \int | \Phi( \vec{x}_k ) |^2 \frac{\chi_{\{|x_j - x_{k+1}| \le \epsilon \}} (\vec{x}_{k+1}) d \vec{x}_{k+1} }{|x_j - x_{k+1}|^2} \bigg )^{\frac{1}{2}} \bigg ( \int |\bigtriangledown_j  u (\vec{x}_{k+1})|^2 \frac{\chi_{\{|x_j - x_{k+1}| \le \epsilon \}} (\vec{x}_{k+1})d \vec{x}_{k+1}}{|x_j - x_{k+1}|^2} \bigg )^{\frac{1}{2}}.
\end{split}\ee
In the terms containing $\Phi$ we perform the $x_{k+1}$ integration and obtain
\be
\bigg ( \int \frac{\chi_{\{|x_j - x_{k+1}| \le \epsilon \}} (\vec{x}_{k+1})}{|x_j - x_{k+1}|^2} | \bigtriangledown_j \Phi( \vec{x}_k ) |^2 d \vec{x}_{k+1} \bigg )^{\frac{1}{2}} \le \epsilon^{\frac{1}{2}} \| \phi \|_{H^2 (\mathbb{R}^{3k})}
\ee
and
\be
\bigg ( \int \frac{\chi_{\{|x_j - x_{k+1}| \le \epsilon \}} (\vec{x}_{k+1})}{|x_j - x_{k+1}|^2} | \Phi( \vec{x}_k ) |^2 d \vec{x}_{k+1} \bigg )^{\frac{1}{2}} \le \epsilon^{\frac{1}{2}} \| \phi \|_{H^1 (\mathbb{R}^{3k})}.
\ee
In the terms containing $u,$ dropping the restriction $\chi_{\{|x_j - x_{k+1}| \le \epsilon \}}$ and applying the Hardy inequality to the $x_{k+1}$-integration, we obtain
\be\begin{split}
\bigg ( \int & \frac{\chi_{\{|x_j - x_{k+1}| \le \epsilon \}} (\vec{x}_{k+1})}{|x_j - x_{k+1}|^2} |u (\vec{x}_{k+1})|^2 d \vec{x}_{k+1} \bigg )^{\frac{1}{2}}\\
& \le \bigg ( \int |\bigtriangledown_{k+1} u (\vec{x}_{k+1})|^2 d \vec{x}_{k+1} \bigg )^{\frac{1}{2}} \le  \| (1- \triangle_j)^{\frac{1}{2}} (1- \triangle_{k+1})^{\frac{1}{2}} u \|_{L^2 (\mathbb{R}^{3 (k+1)})}
\end{split}\ee
and
\be\begin{split}
\bigg ( \int & \frac{\chi_{\{|x_j - x_{k+1}| \le \epsilon \}} (\vec{x}_{k+1})}{|x_j - x_{k+1}|^2} |\bigtriangledown_j u (\vec{x}_{k+1})|^2 d \vec{x}_{k+1} \bigg )^{\frac{1}{2}}\\
& \le \bigg ( \int |\bigtriangledown_j \bigtriangledown_{k+1} u (\vec{x}_{k+1})|^2 d \vec{x}_{k+1} \bigg )^{\frac{1}{2}} \le  \| (1- \triangle_j)^{\frac{1}{2}} (1- \triangle_{k+1})^{\frac{1}{2}} u \|_{L^2 (\mathbb{R}^{3 (k +1)})}.
\end{split}\ee
In summary, we get the inequality \eqref{eq:Woperatorappr1}.

Thanks to the fact that Riesz transforms $R_{x^i}$ are bounded on $L^2 (\mathbb{R}^3),$ replacing $\phi$ by $R_{x^{\iota}_j} R_{x^{\el}_j} \phi$ in \eqref{eq:Woperatorappr1} we obtain \eqref{eq:Woperatorappr2}.
\end{proof}

\begin{lem}\label{lem:Potentialestimate} {\rm (cf. \cite[Lemma A.3]{ESY2007})}
Let $V$ be a nonnegative function in $L^1 (\mathbb{R}^3).$ Then there exists a universal constant $C>0$ such that
\be
\int d x dy V(x-y) | u (x, y) |^2 \le C \| V \|_1 \| u \|^2_{\mathrm{H}^1_{(2)}}
\ee
for all $ u \in \mathrm{H}^1_{(2)} (\mathbb{R}^3).$
\end{lem}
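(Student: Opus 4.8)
The plan is to reduce the inequality, by a measure-preserving change of variables, to a uniform trace estimate for $u$ on shifted diagonals, and then to prove that estimate on the Fourier side, where it comes down to the elementary fact that $\langle\cdot\rangle^{-2}\in L^{2}(\mathbb{R}^{3})$.

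First I would substitute $z=x-y$ (keeping $y$) in $\int dx\,dy\,V(x-y)|u(x,y)|^{2}$; the Jacobian is $1$, so the integral becomes $\int dz\,V(z)\,\big(\int dy\,|u(y+z,y)|^{2}\big)$. Since $V\ge 0$, this is at most $\|V\|_{1}\,\sup_{z\in\mathbb{R}^{3}}\|w_{z}\|_{L^{2}(\mathbb{R}^{3})}^{2}$, where $w_{z}(y):=u(y+z,y)$. Hence it suffices to produce a universal constant $C$ with $\|w_{z}\|_{L^{2}}^{2}\le C\|u\|_{\mathrm{H}^{1}_{(2)}}^{2}$, uniformly in $z$. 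Because $\mathcal{S}(\mathbb{R}^{6})$ is dense in $\mathrm{H}^{1}_{(2)}(\mathbb{R}^{3})$ and $\mathrm{H}^{1}_{(2)}\hookrightarrow L^{2}(\mathbb{R}^{6})$, I may assume $u\in\mathcal{S}(\mathbb{R}^{6})$: for general $u$ pick Schwartz functions $u_{n}\to u$ in $\mathrm{H}^{1}_{(2)}$, pass to a subsequence with $u_{n}\to u$ a.e., and apply Fatou's lemma to $\int V|u_{n}|^{2}$.

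For $u\in\mathcal{S}$, write $u(x,y)=\int d\xi\,d\eta\,\hat u(\xi,\eta)e^{\mathrm{i}(x\cdot\xi+y\cdot\eta)}$ and set $x=y+z$, so that $w_{z}(y)=\int dk\,e^{\mathrm{i}y\cdot k}\,g_{z}(k)$ with $g_{z}(k):=\int d\xi\,\hat u(\xi,k-\xi)e^{\mathrm{i}z\cdot\xi}$; thus $g_z$ is the Fourier transform of $w_{z}$ in the $y$ variable. The Cauchy--Schwarz inequality in the $\xi$-integral, with weight $\langle\xi\rangle^{2}\langle k-\xi\rangle^{2}$, gives
\[
|g_{z}(k)|^{2}\ \le\ \Big(\int\frac{d\xi}{\langle\xi\rangle^{2}\langle k-\xi\rangle^{2}}\Big)\Big(\int\langle\xi\rangle^{2}\langle k-\xi\rangle^{2}|\hat u(\xi,k-\xi)|^{2}\,d\xi\Big).
\]
The first factor is the value at $k$ of the convolution $\langle\cdot\rangle^{-2}*\langle\cdot\rangle^{-2}$; since $\langle\cdot\rangle^{-2}\in L^{2}(\mathbb{R}^{3})$ (this is where the spatial dimension $3$ enters), it is bounded by $\|\langle\cdot\rangle^{-2}\|_{L^{2}(\mathbb{R}^{3})}^{2}$, a finite constant independent of $k$. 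Integrating the displayed bound in $k$, invoking Plancherel, and changing back to $(\xi,\eta)=(\xi,k-\xi)$ yields $\|w_{z}\|_{L^{2}}^{2}\le C\int\langle\xi\rangle^{2}\langle\eta\rangle^{2}|\hat u(\xi,\eta)|^{2}\,d\xi\,d\eta=C\|u\|_{\mathrm{H}^{1}_{(2)}}^{2}$, uniformly in $z$, which is the required estimate; combining with the reduction above completes the proof.

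I do not anticipate a real obstacle here: the computation is of the "integrate a rational function in the momenta" type mentioned in the introduction, and is essentially \cite[Lemma A.3]{ESY2007}. The one point deserving attention is the uniform-in-$k$ bound for $\int\langle\xi\rangle^{-2}\langle k-\xi\rangle^{-2}\,d\xi$, which relies on there being exactly one factor $\langle\cdot\rangle^{-1}$ for each tensor slot of $\mathrm{H}^{1}_{(2)}$ together with the ambient dimension being $3$; in higher dimension one would need more regularity on $u$. The density and Fatou step is routine.
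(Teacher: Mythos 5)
Your proof is correct and matches the standard argument from \cite[Lemma A.3]{ESY2007}, which the paper itself cites without reproving: the reduction $z=x-y$, the uniform-in-$z$ trace bound for $u(y+z,y)$, and the Fourier-side Cauchy--Schwarz with the observation that $\langle\cdot\rangle^{-2}\ast\langle\cdot\rangle^{-2}\in L^{\infty}(\mathbb{R}^{3})$ because $\langle\cdot\rangle^{-2}\in L^{2}(\mathbb{R}^{3})$ is precisely the line of reasoning in that reference.
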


\begin{lem}\label{lem:Frequencyestimate} {\rm (cf. \cite[Lemma 10.1]{ESY2007})}
For three nonnegative numbers $\epsilon, \lambda, \eta$ satisfying $0 \le \epsilon< \lambda < 1$ and $0 < \eta < \lambda - \epsilon,$ there exists a constant $C_{\epsilon, \lambda, \eta}>0$ such that
\be
\int^\8_{- \8} \frac{d \beta}{\langle \alpha - \beta \rangle^{1 - \epsilon} |\beta|^\lambda} \le C_{\epsilon, \lambda, \eta} \frac{1}{\langle \alpha \rangle^{\lambda - \epsilon - \eta}}
\ee
for all $ \alpha \in \mathbb{R}.$
\end{lem}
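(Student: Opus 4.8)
The plan is to prove the bound by the standard device of splitting the $\beta$-line according to whether $\beta$ is near $0$, near $\alpha$, or in the complementary ``large'' region, after a preliminary reduction to large $|\alpha|$.

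\textbf{Reduction to $\alpha$ large.} By the substitution $\beta \mapsto -\beta$ we may assume $\alpha \ge 0$. For $0 \le \alpha \le 1$ the integrand behaves like $|\beta|^{-\lambda}$ near $\beta = 0$, which is locally integrable because $\lambda < 1$; it is bounded near $\beta = \alpha$ because $1 - \epsilon < 1$; and it decays like $|\beta|^{-(1-\epsilon+\lambda)}$ at infinity, which is integrable because $1 - \epsilon + \lambda > 1$ (here one uses $\lambda > \epsilon \ge 0$). Hence the integral is finite and, by dominated convergence, depends continuously on $\alpha$, so it is bounded on $[0,1]$; since $\langle\alpha\rangle \sim 1$ there, the claimed inequality holds on $[0,1]$ with a suitable constant. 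It therefore suffices to treat $\alpha \ge 1$, where $\langle\alpha\rangle \sim \alpha$.

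\textbf{The three regions.} Fix $\alpha \ge 1$ and write $\mathbb{R} = A \cup B \cup C$ with $A = \{|\beta| \le \alpha/2\}$, $B = \{|\beta - \alpha| \le \alpha/2\}$, and $C = \mathbb{R} \setminus (A \cup B)$. On $A$ one has $|\alpha - \beta| \ge \alpha/2$, so $\langle\alpha - \beta\rangle^{-(1-\epsilon)} \lesssim \alpha^{-(1-\epsilon)}$, while $\int_{|\beta|\le\alpha/2}|\beta|^{-\lambda}\,d\beta \sim \alpha^{1-\lambda}$ (using $\lambda < 1$); multiplying, $\int_A \lesssim \alpha^{-(\lambda-\epsilon)}$. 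On $B$ one has $|\beta| \ge \alpha/2$, so $|\beta|^{-\lambda} \lesssim \alpha^{-\lambda}$, while $\int_{|\gamma|\le\alpha/2}\langle\gamma\rangle^{-(1-\epsilon)}\,d\gamma \lesssim_{\epsilon,\eta} \alpha^{\epsilon+\eta}$ (the loss $\eta$ being needed only to absorb the logarithm arising when $\epsilon = 0$); multiplying, $\int_B \lesssim_{\epsilon,\eta} \alpha^{-(\lambda-\epsilon-\eta)}$. On $C$ both $|\beta| > \alpha/2$ and $|\beta - \alpha| > \alpha/2$ hold, which (treating the subcases $\beta \le -\alpha/2$ and $\beta \ge 3\alpha/2$ separately) forces $|\alpha - \beta| \gtrsim |\beta|$ there, so the integrand is $\lesssim |\beta|^{-(1-\epsilon+\lambda)}$, and since $1 - \epsilon + \lambda > 1$ one gets $\int_C \lesssim \int_{|\beta|>\alpha/2}|\beta|^{-(1-\epsilon+\lambda)}\,d\beta \sim \alpha^{-(\lambda-\epsilon)}$. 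Summing the three estimates and using $\eta > 0$ yields $\int_{\mathbb{R}} \lesssim_{\epsilon,\lambda,\eta}\alpha^{-(\lambda-\epsilon-\eta)} \sim \langle\alpha\rangle^{-(\lambda-\epsilon-\eta)}$, which is the assertion.

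\textbf{Main point.} There is no substantial obstacle; the computation is elementary. The only subtlety lies in the borderline case $\epsilon = 0$ within region $B$, where $\int_{|\gamma|\le\alpha/2}\langle\gamma\rangle^{-1}\,d\gamma$ is of size $\log\alpha$ rather than a power of $\alpha$; this is precisely why the lemma must concede an arbitrarily small $\eta$ and only asserts decay of order $\lambda - \epsilon - \eta$ rather than the sharp $\lambda - \epsilon$. One should also record that the hypothesis $\eta < \lambda - \epsilon$ keeps $\lambda - \epsilon - \eta > 0$, so that the displayed bound is a genuine decay estimate.
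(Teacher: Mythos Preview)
Your proof is correct. The paper does not actually give a proof of this lemma: it simply records the statement with a citation to \cite[Lemma~10.1]{ESY2007} and moves on, so there is no ``paper's own proof'' to compare against. Your three-region decomposition is the standard argument and each estimate is clean; the one delicate point, as you correctly isolate, is the borderline $\epsilon = 0$ case in region $B$, where the logarithmic integral $\int_{|\gamma|\le \alpha/2}\langle\gamma\rangle^{-1}\,d\gamma$ forces the concession of an $\eta$-loss.
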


\begin{lem}\label{lem:FrequencyestimateM}
For three nonnegative numbers $\epsilon, \lambda, \eta$ satisfying $0 \le \epsilon< \lambda < 1$ and $0 < \eta < \lambda - \epsilon,$ there exists a constant $C_{\epsilon, \lambda, \eta}>0$ such that
\be
\int^\8_{- \8} \frac{d t}{| s - t + a \mathrm{i}|^{1 - \epsilon} |t + b \mathrm{i} |^\lambda} \le C_{\epsilon, \lambda, \eta} \frac{1}{| s + (a + b) \mathrm{i}|^{\lambda - \epsilon - \eta}}
\ee
for all $a, b \ge 1$ and $ s \in \mathbb{R}.$
\end{lem}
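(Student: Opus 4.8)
The plan is to adapt the proof of the real-variable estimate Lemma~\ref{lem:Frequencyestimate}, the one new ingredient being a reverse triangle inequality in the complex plane. Write $A := |s+(a+b)\mathrm{i}| = \sqrt{s^2+(a+b)^2}$; since $a,b\ge 1$ we have $A\ge 2$, and moreover $|s-t+a\mathrm{i}|\ge\langle s-t\rangle$ and $|t+b\mathrm{i}|\ge\langle t\rangle$ for every $t\in\mathbb{R}$. The decisive observation is the identity $(s-t+a\mathrm{i})+(t+b\mathrm{i})=s+(a+b)\mathrm{i}$, which gives $|s-t+a\mathrm{i}|\ge A-|t+b\mathrm{i}|$; thus on the region where $|t+b\mathrm{i}|$ is small the factor $|s-t+a\mathrm{i}|$ is automatically of size $\gtrsim A$. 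A purely ``real'' bound (replacing the complex moduli by the corresponding $\langle\cdot\rangle$'s and quoting Lemma~\ref{lem:Frequencyestimate} directly) would only produce decay in $\langle s\rangle$, which is weaker than the required decay in $A$ when $a+b$ is large; supplying the missing $A$-decay is exactly what this reverse triangle inequality achieves.

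First I would split the integral as $I=I_1+I_2$, according to whether $|t+b\mathrm{i}|<A/2$ or $|t+b\mathrm{i}|\ge A/2$. On the first region $|s-t+a\mathrm{i}|\ge A-|t+b\mathrm{i}|>A/2$, so I bound that factor by $(A/2)^{-(1-\epsilon)}$, use $|t+b\mathrm{i}|^{-\lambda}=(t^2+b^2)^{-\lambda/2}\le|t|^{-\lambda}$, and integrate over $\{|t|<A/2\}$ (which contains $\{t^2+b^2<A^2/4\}$); since $\lambda<1$ this gives $\int_{|t|<A/2}|t|^{-\lambda}\,dt\lesssim A^{1-\lambda}$, whence
\[
I_1\lesssim A^{-(1-\epsilon)}A^{1-\lambda}=A^{-(\lambda-\epsilon)}\le A^{-(\lambda-\epsilon-\eta)},
\]
using $A\ge 1$.

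For $I_2$ I would argue as in the proof of Lemma~\ref{lem:Frequencyestimate}: on $\{|t+b\mathrm{i}|\ge A/2\}$ peel off a factor $|t+b\mathrm{i}|^{-(\lambda-\mu)}\le(A/2)^{-(\lambda-\mu)}$ with $\mu:=\epsilon+\tfrac{\eta}{2}$ (note $0\le\epsilon<\mu<\lambda<1$), which leaves
\[
I_2\le (A/2)^{-(\lambda-\mu)}\int_{-\infty}^{\infty}\frac{dt}{|s-t+a\mathrm{i}|^{1-\epsilon}\,|t+b\mathrm{i}|^{\mu}}\le (A/2)^{-(\lambda-\mu)}\int_{-\infty}^{\infty}\frac{dt}{\langle s-t\rangle^{1-\epsilon}\,\langle t\rangle^{\mu}}.
\]
The last integral is bounded uniformly in $s$: since $\langle t\rangle^{-\mu}\le|t|^{-\mu}$, Lemma~\ref{lem:Frequencyestimate} applied with $(\epsilon,\lambda,\eta)$ replaced by $(\epsilon,\mu,\tfrac{\eta}{4})$ bounds it by $C\langle s\rangle^{-(\mu-\epsilon-\eta/4)}=C\langle s\rangle^{-\eta/4}\le C$. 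Hence $I_2\lesssim A^{-(\lambda-\mu)}=A^{-(\lambda-\epsilon-\eta/2)}\le A^{-(\lambda-\epsilon-\eta)}$, and adding the two contributions yields the assertion.

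I do not anticipate a genuine difficulty; the only points requiring care are (i) invoking the reverse triangle inequality on the correct region, so that the gain is measured in $A=|s+(a+b)\mathrm{i}|$ rather than merely in $\langle s\rangle$, and (ii) the bookkeeping of exponents in the estimate of $I_2$ --- choosing $\mu$ just above $\epsilon$ so that the residual $t$-integral still converges while the loss $\lambda-\mu$ remains at least $\lambda-\epsilon-\eta$. Everything else reduces to one-dimensional calculus together with the already available Lemma~\ref{lem:Frequencyestimate}.
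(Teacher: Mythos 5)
Your proof is correct, and it takes a genuinely different route from the paper's. The paper first declares ``without loss of generality $a\ge b$'', rescales the integration variable by $a$, bounds $|s/a-t+\mathrm{i}|\ge\langle s/a-t\rangle$ and $|t+(b/a)\mathrm{i}|\ge|t|$ so as to reduce directly to Lemma~\ref{lem:Frequencyestimate} with the given $\eta$, and finally uses $a+b\le 2a$ to convert $a\langle s/a\rangle=\sqrt{s^2+a^2}$ back into $|s+(a+b)\mathrm{i}|$. You instead split the $t$-line at the threshold $|t+b\mathrm{i}|=A/2$ with $A=|s+(a+b)\mathrm{i}|$, extract the $A$-decay on the near region from the reverse triangle inequality applied to $(s-t+a\mathrm{i})+(t+b\mathrm{i})=s+(a+b)\mathrm{i}$, and on the far region peel off a small power $\lambda-\mu$ before invoking Lemma~\ref{lem:Frequencyestimate} with the reduced exponent $\mu=\epsilon+\eta/2$; your $\eta$-bookkeeping (into $\eta/2$ for the peeling and $\eta/4$ inside Lemma~\ref{lem:Frequencyestimate}) checks out, and all hypotheses $0\le\epsilon<\mu<1$, $0<\eta/4<\mu-\epsilon$ are satisfied. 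A genuine advantage of your decomposition is that it is manifestly symmetric in $a$ and $b$: the two factors in the integrand carry \emph{different} exponents $1-\epsilon$ and $\lambda$, so the paper's reduction to $a\ge b$ is not an innocent symmetry (the substitution $t\mapsto s-t$ that exchanges $a$ and $b$ also exchanges the exponents), and when $b\gg a$ the quantity $a\langle s/a\rangle$ produced by the rescaling no longer dominates $|s+(a+b)\mathrm{i}|$, so that closing the case $b>a$ along the paper's lines would require an additional variant of Lemma~\ref{lem:Frequencyestimate} with the roles of the two exponents interchanged. Your approach sidesteps this subtlety at the modest cost of a somewhat longer argument.
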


\begin{proof}
Without loss of generality, we can assume that $a \ge b \ge 1.$ By Lemma \ref{lem:Frequencyestimate}, we have
\be\begin{split}
\int^\8_{- \8} \frac{d t}{| s - t + a \mathrm{i}|^{1 - \epsilon} |t + b \mathrm{i} |^\lambda}
= & \frac{1}{a^{\lambda - \epsilon}} \int^\8_{- \8} \frac{d t}{| s/a - t + \mathrm{i}|^{1 - \epsilon} |t + \mathrm{i} b/a |^\lambda}\\
\le & \frac{1}{a^{\lambda - \epsilon}} \int^\8_{- \8} \frac{d t}{\langle s/a - t \rangle | t|^\lambda} \le C_{\epsilon, \lambda, \eta} \frac{1}{| s + (a + b) \mathrm{i}|^{\lambda - \epsilon - \eta}}
\end{split}\ee
where we use in the last inequality the conditions $a \ge b \ge 1$ and $\eta \ge 0.$
\end{proof}

\

{\it Acknowledgment.}\; The author is grateful to Chuangye Liu for making helpful suggestions on the draft of the manuscript. This work was partially supported by the Natural Science Foundation of China under Grant No.11431011.


\begin{thebibliography}{99}

\bibitem{Chen2013} Z. Chen,
Local well-posedness for Gross-Pitaevskii hierarchies,
{\it Acta Anal. Funct. Appl.} {\bf 15} (2013), 291-305.

\bibitem{DS1964} N. Dunford, J. T. Schwartz,
{\it Linear Operators, Part I: General Theory,}
Interscience Publishers, INC., New York, 1964.

\bibitem{DU1977} J. Diestel, J.J. Uhl,Jr., {\it Vector Measures,} American Mathematical Society,
Providence, Rhode Island, 1977.

\bibitem{ESY2007} L. Erd\"os, B. Schlein, H. T. Yau,
Derivation of the cubic non-linear Schr\"odinger equation from quantum dynamics of many-body systems,
{\it Invent. Math.} {\bf 167} (2007), 515-614.

\bibitem {GT2001} D. Gilbarg, N. S. Trudinger,
{\it Elliptic Partial Differential Equations of Second Order {\rm (Second edition)},}
Springer-Verlag, Berlin, 2001.





\bibitem{Leray1934} J. Leray,
Sur le mouvement d'un fluide visqueux emplissant l'espace,
{\it Acta Math.} {\bf 63} (1934), 193-248.

\bibitem{LR2002} P. G. Lemari\'{e}-Rieusset,
{\it Recent Developments in the Navier-Stokes Problem,}
Chapman \& Hall/CRC, Boca Raton, 2002.

\bibitem{LR2016} P. G. Lemari\'{e}-Rieusset,
{\it The Navier-Stokes Problem in the 21st Century,}
CRC Press, Boca Raton, 2016.






\end{thebibliography}
\end{document}